\documentclass[%
 reprint,
superscriptaddress,
 amsmath,amssymb,
 aps,
]{revtex4-2}

\usepackage[utf8]{inputenc}
\usepackage{graphicx}
\usepackage{algorithm}
\usepackage{algorithmic}
\usepackage{color}
\usepackage{physics}
\usepackage{braket}
\usepackage{comment}
\usepackage{mathrsfs}
\usepackage{enumerate}

\usepackage[version=4]{mhchem}
\usepackage[breaklinks=true]{hyperref}
\usepackage{lipsum}

\newcommand{\mr}[1]{\mathrm{#1}}

\newcommand{\mcl}[1]{\mathcal{#1}}
\newcommand{\bbC}{\mathbb{C}}
\newcommand{\bbR}{\mathbb{R}}

\newcommand{\bbN}{\mathbb{N}}
\newcommand{\bbE}{\mathbb{E}}

\newcommand{\ad}{\mathrm{ad}}

\newcommand{\supp}{\mathrm{supp}}

\newcommand{\poly}[1]{\mathrm{poly} \left( #1 \right)}
\newcommand{\polylog}[1]{\mathrm{polylog} \left( #1 \right)}
\newcommand{\Otilde}[1]{\tilde{\mathcal{O}} \left( #1 \right)}
\date{\today}
\usepackage{amsthm}
\theoremstyle{definition}
\newtheorem{theorem}{Theorem}[]
\newtheorem{definition}[theorem]{Definition}

\newtheorem{lemma}[theorem]{Lemma}
\newtheorem{corollary}[theorem]{Corollary}

\newtheorem*{theorem*}{Theorem}
\newtheorem*{lemma*}{Lemma}
\newtheorem*{corollary*}{Corollary}
\newtheorem*{proposition*}{Proposition}

\begin{document}
\title{Lattice Lindbladian simulation by patching and merging}

\author{Kaoru Mizuta}
\email{mizuta.kaoru.qiqb@osaka-u.ac.jp}
\affiliation{Center for Quantum Information and Quantum Biology, The University of Osaka, 1-2 Machikaneyama, Toyonaka, Osaka 560-0043, Japan}
\affiliation{Department of Applied Physics, Graduate School of Engineering, The University of Tokyo, Hongo 7-3-1, Bunkyo, Tokyo 113-8656, Japan}
\affiliation{RIKEN Center for Quantum Computing (RQC), Hirosawa 2-1, Wako, Saitama 351-0198, Japan}

\begin{abstract}

Simulating the dynamics of dissipative quantum many-body systems governed by local Lindbladians is a fundamental task in quantum computation. 
While the near-optimal gate count has been achieved for Hamiltonian simulation, comparable results for Lindbladian simulation remain elusive. 
In this work, we develop quantum algorithms for lattice Lindbladians by exploiting their locality.
First, we consider sparsely dissipative systems, in which the dissipation is sparsely located, including boundary-driven systems.
We establish a near-optimal quantum algorithm for their dynamics with gate count $\mathcal{O}(Nt \, \mathrm{polylog}(Nt/\varepsilon))$, where $N$ is the system size, $t$ is the evolution time, and $\varepsilon$ is the allowable error.
We then consider generic lattice Lindbladians with finite-range interactions and dissipation, and develop an algorithm for simulating time-evolved observables with gate count $\mathcal{O}((Nt)^{4/3} \, \mathrm{polylog}(Nt/\varepsilon))$ per sample with the sampling complexity $\Theta(\varepsilon^{-2})$.
The gate count has the smallest known dependence on the system size among algorithms retaining polylogarithmic dependence on $1/\varepsilon$.
Our algorithms are based on two techniques that exploit locality: patching and merging. 
Patching decomposes dissipative dynamics into dynamics on subsystems with exponentially small error, generalizing a key idea underlying the Haah-Hastings-Kothari-Low algorithm for near-optimal Hamiltonian simulation. 
Merging absorbs reversed dissipative dynamics into other parts of the evolution, substantially reducing the overhead associated with quasi-probabilistic sampling.
These results demonstrate that locality can be fully exploited to achieve fast quantum simulation of dissipative many-body dynamics, opening the way to applications such as predicting nonequilibrium phenomena and preparing desirable quantum states.

\end{abstract}
\maketitle

\section{Introduction}

Simulating quantum many-body dynamics is one of the central problems in quantum physics and quantum chemistry, for which quantum computers are expected to offer an exponential speedup over classical computers.
A fundamental goal in this field is to develop optimal quantum algorithms whose cost scales as favorably as possible with the system size $N$, evolution time $t$, and allowable error $\varepsilon$.
Over the past several decades, various quantum algorithms have been established for the simulation of Hamiltonian dynamics, i.e., Hamiltonian simulation.
Prominent examples include product formulas (PFs, also known as Trotterization) \cite{Lloyd1996-ko,childs-prl2019-pf,childs2021-trotter} and post-Trotter methods, such as linear combinations of unitaries (LCU) \cite{Berry-prl2015-LCU} and the quantum singular value transformation (QSVT) \cite{Low2017-qsp,Low2019-qubitization,Gilyen2019-qsvt}.
Some advanced algorithms such as multi-product formulas (MPF) \cite{low2019-mpf,aftab2024-mpf,Mizuta_2026_mpf} and the Haah-Hastings-Kothari-Low (HHKL) algorithm \cite{Haah2021-hhkl} have achieved gate counts nearly matching the theoretical lower bound for local lattice Hamiltonians.

In parallel, open quantum many-body systems described by the Gorini-Kossakowski-Sudarshan-Lindblad (GKSL) master equation \cite{Gorini1976,Lindblad1976},
\begin{equation}\label{Eq_Intro:GKSL}
    \frac{d}{dt} \rho (t) = \mathcal{L} \rho (t),
\end{equation}
\begin{equation}\label{Eq_Intro:Lindbladian}
    \mathcal{L}\rho = - i [H,\rho] + \sum_{m=1}^M \left( 2 L_m \rho L_m^\dagger - \{ L_m^\dagger L_m, \rho \}\right),
\end{equation}
have attracted significant interest, where the superoperator $\mathcal{L}$ is called a Lindbladian.
Simulating such dissipative dynamics (Lindbladian simulation) is crucial for modeling realistic noisy quantum processors, exploring nonequilibrium phenomena under dissipation \cite{Fazio_2025-open-review}, and preparing ground states or thermal states as steady states \cite{Ding2024SingleAncilla,Zhan2026Rapid,Li2025Dissipative,Chen2025Efficient,Chen2023Noncommutative,Ding2026Simple,Lin-2025-open-review}.
Consequently, developing fast and efficient quantum algorithms for Lindbladian simulation is of paramount importance \cite{Kliesch-prl2011-open,cleve-2016-open,ChildsLi2017Sparse,Li-Wang-2022-open,Ding-Li-Lin-2024-open,kato-wada-2024_open,Yu-Yuan-2025-open,BorrasMarvian2025Lindblad,Peng-prxq-2025-open,Chen2025RandomizedLindblad,Pocrnic2025Repeated,David_2026_qdrift,Yu-Cirac-2025-open,mohammadipour2026-open,Wang-2026-open,wang2026-open-optimal,chen2026-open-optimal}.

In contrast, Lindbladian simulation presents distinct challenges absent in Hamiltonian simulation.
For generic local Hamiltonians with finite-range interactions, the gate count can reach the near-optimal scaling $\order{Nt \, \polylog{Nt/\varepsilon}}$ \cite{Haah2021-hhkl}.
In sharp contrast, for generic local Lindbladians with finite-range interactions and dissipation:
The PF-based approach yields the gate count $\order{(Nt)^{1+1/p}\varepsilon^{-1/p}}$ for the orders $p=1,2$ \cite{Kliesch-prl2011-open,Wang-2026-open}.
It fails to attain the higher-order scaling with $p \geq 3$ \cite{Sheng-1989-splitting,Suzuki1991-za}, which is available in Hamiltonian simulation.
The LCU-based approaches yield the gate count $\order{N^2t \, \polylog{Nt/\varepsilon}}$ for both Hamiltonian simulation \cite{Berry-prl2015-LCU} and Lindbladian simulation \cite{cleve-2016-open,Li-Wang-2022-open}.
Although these approaches achieve the near-optimal query complexity within the block-encoding framework, their gate counts have worse $N$-dependency.
To date, the near-optimal gate count $\order{Nt \, \polylog{Nt/\varepsilon}}$ has only been achieved for restricted models where the jump operators $\{L_m\}$ are mutually commuting and Hermitian assuming efficient access to a specified QRAM \cite{Yu-Cirac-2025-open}.
For generic lattice Lindbladians, extrapolation applied to the second-order PF has been the leading strategy for observable estimation, achieving the gate count $\order{(Nt)^{3/2} \, \mathrm{polylog}(Nt/\varepsilon)}$ \cite{Wang-2026-open}.
Whether one can construct quantum algorithms for generic dissipative systems that closely approach the fundamental lower bound $\Omega(Nt)$ remains a major open problem.

In this work, we make progress toward this goal by establishing efficient quantum algorithms for Lindbladian simulation, which fully exploit the locality.
First, we prove a dissipative counterpart of the patching lemma used in the HHKL algorithm \cite{Osborne-2006-patching,michalakis-2012-patching,Haah2021-hhkl}, though this algorithm itself is unavailable for Lindbladian dynamics due to the exponential overhead.
Using this framework, we construct two quantum algorithms by developing two techniques, \textit{patching} and \textit{merging}, based on the locality:

\begin{enumerate}
    \item \textbf{Sparsely-dissipative Lindbladians:}
    We consider lattice Lindbladians where dissipative domains have the size $o(\log N)$ and are separated by a distance of $\omega(\log N)$. 
    Boundary-driven systems are included as a significant class in nonequilibrium physics \cite{Landi-2022-boundary}.
    We develop a patching strategy that can decompose the evolution into blocks respecting complete positivity by adjusting the patch size.
    This algorithm achieves the near-optimal gate count $\order{Nt \, \mathrm{polylog}(Nt/\varepsilon)}$.
    
    \item \textbf{Generic lattice Lindbladians:} For general systems with finite-range interactions and dissipation, we introduce a \emph{merging} technique that recombines partitioned evolution operators, substantially suppressing the sampling overhead in observable estimation.
    Combined with optimal patch sizing, this yields the gate count $\order{(Nt)^{4/3} \, \mathrm{polylog}(Nt/\varepsilon)}$ in one dimension (the extension to higher dimensions is provided in Appendix \ref{SecA:High_dim}).
    As far as we know, it achieves the best $N$-dependence among the existing algorithms while retaining polylogarithmic error scaling.
\end{enumerate}

These results provide a partial resolution to the optimality problem of Lindbladian simulation and significantly narrow the gap toward the theoretical limit for generic dissipative many-body dynamics.
They will shed light on Lindbladian simulation and broader physics with various applications such as nonequilibrium phenomena and state preparation.

The remainder of this paper is organized as follows.
In Section \ref{Sec:Summary}, we formalize the problem setup and summarize our main theoretical results.
In Section \ref{Sec:Patching}, we establish the dissipative patching lemma.
Sections \ref{Sec:Sparse_algorithm} and \ref{Sec:Generic_algorithm} detail our quantum algorithms for sparsely-dissipative and globally-dissipative systems, respectively.
Finally, Section \ref{Sec:Conclusion} concludes with a discussion of future directions.

\section{Summary of results}\label{Sec:Summary}

\begin{table*}[t]
  \centering
  \label{tab:two_tables}
  
  \begin{minipage}{0.48\textwidth}
    \centering
    \begin{tabular}{ccc}
    \hline \hline
       & Gate count & Remark \\ \hline \\[-2ex]
     PF \cite{childs-prl2019-pf,childs2021-trotter}& \(\displaystyle Nt \left( \frac{Nt}{\varepsilon}\right)^{\frac1p}\) & $\quad p = 1,2,3,\cdots \quad$ \\ \\[-2ex]
      LCU \cite{Berry-prl2015-LCU} & \( \displaystyle N^2t \, \polylog{Nt/\varepsilon} \) \\ \\[-2ex]
      QSVT \cite{Low2019-qubitization,Gilyen2019-qsvt} & \( \displaystyle N(Nt+\log(1/\varepsilon))\) \\ \\[-2ex]
      MPF \cite{low2019-mpf,Mizuta_2026_mpf} & \( \displaystyle \quad N^{1+\frac1{p+1}}t \, \polylog{Nt/\varepsilon} \quad \) & $p = 1,2,3,\cdots$ \\ \\[-2ex]
      HHKL \cite{Haah2021-hhkl} & \( \displaystyle Nt \, \polylog{Nt/\varepsilon} \) &  \\ \\[-2ex] \hline
    \end{tabular} \\
    \vspace{10pt}
    (a) Hamiltonian simulation 
  \end{minipage}
  \hfill 
  \begin{minipage}{0.48\textwidth}
    \centering
    \begin{tabular}{ccc}
    \hline \hline
       & Gate count & Remark \\ \hline \\[-2ex]
     PF \cite{Kliesch-prl2011-open}& \(\displaystyle Nt \left( \frac{Nt}{\varepsilon}\right)^{\frac1p}\) & $\quad p = 1,2 \quad$ \\ \\[-2ex]
      LCU \cite{Li-Wang-2022-open} & \( \displaystyle N^2t 
      \, \polylog{Nt/\varepsilon}\) \\ \\[-2ex]
      \begin{tabular}{c} Extrapolation \\[-0.5ex] with PFs \cite{Wang-2026-open} \end{tabular}& \(\quad \displaystyle (Nt)^{1+\frac1p} \, \polylog{Nt/\varepsilon} \quad \) & \begin{tabular}{c}
        $p=1,2$ \\[-0.5ex]
        Observables
      \end{tabular}\\ \\[-2ex]
      \begin{tabular}{c} Algorithm \ref{Algorithm_Sparse} \\[-0.5ex] (Theorem \ref{Thm_Setup:sparse}) \end{tabular} & \( \displaystyle \quad Nt \, \polylog{Nt/\varepsilon} \quad \) & \begin{tabular}{c}
        Sparsely \\[-0.5ex] dissipative
      \end{tabular} \\ \\[-2ex]
      \begin{tabular}{c} Algorithm \ref{Algorithm_Generic} \\[-0.5ex] (Theorem \ref{Thm_Sparse:generic}) \end{tabular} & \( \displaystyle (Nt)^{\frac43} \, \polylog{Nt/\varepsilon} \) & Observables
      \\ \\[-2ex] \hline
    \end{tabular} \\ 
    \vspace{10pt}
    (b) Lindbladian simulation
  \end{minipage}
  \caption{The cost of simulating quantum dynamics under (a) lattice Hamiltonians and (b) lattice Lindbladians. The label ``Observables" means that the algorithm estimates time-evolved observables by sampling, for which the gate counts per sample. Algorithm \ref{Algorithm_Generic} has the sampling complexity $\Theta(\varepsilon^{-2})$. We achieve the near-optimal gate count for sparsely dissipative systems (See Definition \ref{Def_Setup:sparse_dissipation}) by Algorithm \ref{Algorithm_Sparse}, and also improve the gate count for generic lattice Lindbladians by Algorithm \ref{Algorithm_Generic}.}
  \label{Table:Gate_counts}
\end{table*}

In this section, we describe the setup and the summary of our quantum algorithms efficiently simulating Lindbladian dynamics.
We also provide a brief review of some existing algorithms for Hamiltonian or Lindbladian simulation to clarify the context of our results.

\subsection{Setup and problem}\label{Subsec:Setup}

We specify the setup and the problem throughout this paper here.
We begin by introducing some mathematical notation below.

\begin{itemize}
    \item Landau symbols: We use the Landau symbols, $\order{\cdot}$, $o(\cdot)$, $\Omega (\cdot)$, $\omega (\cdot)$, and $\Theta(\cdot)$.
    The symbol $\tilde{\cdot}$ denotes polylogarithmic corrections in $N$, $t$, and $1/\varepsilon$.
    For instance, $\Otilde{f(N,t,1/\varepsilon)}$ for some function $f$ means $\order{f(N,t,1/\varepsilon) \, \polylog{N,t,1/\varepsilon}}$.
    
    \item Lattice and domains: We consider an $N$-qubit lattice $\Lambda=\{1,2,\cdots,N\}$.
    For a domain $X \subset \Lambda$, the symbol $|X|$ means the number of sites in $X$.
    The range of a domain $X$, denoted by $r(X)$, is defined by
    \begin{equation}\label{Eq_Setup:domain_size_def}
        r(X) = \max_{i,j \in X} \mr{dist}(i,j),
    \end{equation}
    with some distance measure $\mr{dist}(i,j)$ ($i,j \in \Lambda$) on the lattice $\Lambda$.
    The distance between domains $X,Y \subset \Lambda$ is defined by
    \begin{equation}\label{Eq_Setup:domain_dist_def}
        \mr{dist}(X,Y) = \min_{i \in X, j \in Y} \mr{dist}(i,j).
    \end{equation}
    Throughout the main text, we will suppose that $\Lambda$ is one-dimensional and hence we have $\mr{dist}(i,j)=|i-j|$.

    \item Pauli matrices: 
    We denote the set of $N$-qubit Pauli matrices on the lattice $\Lambda$ by $\{P_\mu\}$.
    The symbol $\supp (P_\mu)$ ($\subset \Lambda$) means the support of $P_\mu$.

    \item Linear map and its norm: A generic linear map $\mcl{A}$ on an $N$-qubit state $\rho$ can be written as
    \begin{equation}\label{Eq_Setup:Map_Pauli_expansion}
        \mcl{A} \rho = \sum_{\mu,\nu} \gamma_{\mu\nu} P_\mu \rho P_\nu, \quad \gamma_{\mu\nu} \in \bbC. 
    \end{equation}
    We denote the diamond norm of $\mcl{A}$ by $\norm{A}_\Diamond$.
    We also define the Pauli norm of $\mcl{A}$ by
    \begin{equation}
        \norm{\mcl{A}}_\mr{Pauli} = \sum_{\mu,\nu} |\gamma_{\mu\nu}|.
    \end{equation}
    Note that these norms are related by the inequality,
    \begin{eqnarray}
        \norm{\mcl{A}}_\Diamond &\leq& \sum_{\mu,\nu} |\gamma_{\mu\nu}| \norm{P_\mu (\cdot) P_\nu}_\Diamond \nonumber \\
        &\leq& \norm{\mcl{A}}_\mr{Pauli}. \label{Eq_Setup:Norm_relation}
    \end{eqnarray}

    \item Product and commutator: For operators $A_1,A_2,\cdots,A_n$, we denote their products by
    \begin{eqnarray}
        \prod_{n'=1}^n A_{n'} &=& A_n \cdots A_2 A_1, \\
        \prod_{n'=n}^1 A_{n'} &=& A_1 A_2\cdots A_n.
    \end{eqnarray}
    Their commutator is denoted by
    \begin{equation}
        \ad_{A_2} A_1 = [A_2,A_1] = A_2 A_1 -A_1 A_2.
    \end{equation}
    We also define products and commutators of linear maps $\mcl{A}_1,\cdots,\mcl{A}_n$ in the same way.

    \item Locality and extensiveness: Let $\hat{\mcl{A}}_X$ be a map supported on a domain $X \subset \Lambda$, given by

    \begin{equation}
        \qquad \quad \hat{\mcl{A}}_X \rho = \sum_{\substack{\mu,\nu: \\ \supp(P_\mu) \subset X, \\ \supp(P_\nu) \subset X}} \gamma_{X,\mu\nu} P_\mu \rho P_\nu, \quad \gamma_{X,\mu\nu} \in \bbC. 
    \end{equation}
    We consider a map $\mcl{A}$ in the form of
    \begin{equation}
        \mcl{A} = \sum_{X \subset \Lambda} \hat{\mcl{A}}_X.
    \end{equation}
    We define the support of the map $\mcl{A}$ by
    \begin{equation}
        \supp (\mcl{A}) = \bigcup_{\substack{X \subset \Lambda: \\ \hat{A}_X \neq 0}} X.
    \end{equation}
    We define the locality of the map $\mcl{A}$ by a quantity $k(\mcl{A})$ such that
    \begin{equation}\label{Eq_Setup:k_def}
        \qquad \hat{\mcl{A}}_X = 0, \quad \text{if} \quad  |X| > k(\mcl{A}).
    \end{equation}
    We define the extensiveness $g(\mcl{A})$ by a quantity such that
    \begin{equation}\label{Eq_Setup:g_def}
        \qquad \max_{i \in \supp(\mcl{A})} \left( \sum_{X: X \ni i} \norm{\hat{\mcl{A}}_X}_\mr{Pauli} \right) \leq g(\mcl{A})
    \end{equation}
    is satisfied.
    Throughout, we set $\hat{A}_\emptyset =0$. 
    The support, locality, and extensiveness are understood with respect to the specified local decomposition.
    
    The extensiveness gives an upper bound on the norm of $\mcl{A}$ by
    \begin{eqnarray}
        \qquad \norm{\mcl{A}}_\mr{Pauli} &\leq& \sum_{i \in \supp(\mcl{A})} \sum_{X: X \ni i} \norm{\hat{\mcl{A}}_X}_\mr{Pauli}\nonumber \\
        &\leq& |\supp(\mcl{A})| \, g(\mcl{A}). \label{Eq_Setup:Pauli_norm_bound}
    \end{eqnarray}
    Thus, it means the energy scale per site under $\mcl{A}$.

    \item Hermiticity-preserving (HP) and complete positivity (CP):
    We often consider a Hermiticity-preserving (HP) map such that $(\mcl{A}(\rho))^\dagger = \mcl{A}(\rho)$ for any Hermitian matrix $\rho$.
    The coefficient $\gamma_{\mu\nu}$ in Eq. (\ref{Eq_Setup:Map_Pauli_expansion}) satisfies $\gamma_{\mu\nu} = \gamma_{\nu\mu}^\ast$ when the map is HP.
    In addition, $\mcl{A}$ is completely positive (CP) if and only if the matrix $(\gamma_{\mu\nu})$ is positive semidefinite.

    A non-CP map cannot be implemented directly by quantum channels, but observables of its output can be estimated by the quasi-probabilistic sampling.
    For an HP map $\mcl{A}$, the expectation value $\mr{Tr}[O(1+\mcl{A})\rho]$ ($\norm{O} \leq 1$) under the non-CP map $1+\mcl{A}$ can be estimated with the sampling of quantum circuits and the classical postprocessing.
    The estimation within an additive error $\varepsilon$ with constant success probability can be executed with the sampling complexity, 
    \begin{equation}
        \order{\frac{(1+\norm{\mcl{A}}_\text{Pauli})^2}{\varepsilon^2}},
    \end{equation}
    where each sampled circuit can be reproduced by at most $\order{k(\mcl{A})}$ quantum gates.
    See Lemma \ref{LemmaA:quasiprobabilistic} in Appendix \ref{SecA:Basic} for details.

\end{itemize}

We next discuss the setup for simulation.
We suppose that the $N$-qubit lattice $\Lambda$ is one-dimensional, but many parts of our results can be extended to higher-dimensional cases as discussed later.
We consider a local Lindbladian with finite-range interactions and dissipation.
To be precise, we suppose that the Lindbladian $\mcl{L}$ is given by
\begin{equation}\label{Eq_Basic:L_x}
    \mcl{L} = \sum_{X \subset \Lambda} \hat{\mcl{L}}_X,
\end{equation}
where each $\hat{\mcl{L}}_X$ is a Lindbladian having the support $X=\supp(\hat{\mcl{L}}_X)$.
We assume finite-range interactions and dissipation with range $\xi \in \bbN$, where $\xi \in \order{1}$, in the sense that
\begin{equation}\label{Eq_Basic:range_xi}
    \hat{\mcl{L}}_X = 0, \quad \text{if} \quad r(X) \geq \xi
\end{equation}
is satisfied.
This means that each term involves sites within the distance $\xi$.
We schematically illustrate such generic dissipative systems subject to local interactions and dissipation in Fig. \ref{Fig_systems} (a).

We denote the locality $k(\mcl{L})$ and the extensiveness $g(\mcl{L})$ of the Lindbladian $\mcl{L}$, which are defined by Eqs. (\ref{Eq_Setup:k_def}) and (\ref{Eq_Setup:g_def}), simply by $k$ and $g$.
When we expand the Hamiltonian part $H_X$ and the Lindblad operators $\{L_{X,m}\}$ of each local Lindbladian $\hat{\mcl{L}}_X$ by Pauli operators as
\begin{eqnarray}
    H_X &=& \sum_\mu h_\mu^X P_\mu, \quad h_\mu^X \in \bbR, \label{Eq_Setup:H_Pauli_expansion}\\
    L_{X,m} &=& \sum_\mu l_{m\mu}^X P_\mu, \quad l_{m\mu}^X \in \bbC, \label{Eq_Setup:Lm_Pauli_expansion}
\end{eqnarray}
the locality $k$ implies that $H_X$ and $L_{X,m}$ are supported on at most $k$ sites.
The range $\xi$ immediately implies the relation,
\begin{equation}\label{Eq_Setup:locality_range_relation}
    k \leq \xi.
\end{equation}
Substituting Eqs. (\ref{Eq_Setup:H_Pauli_expansion}) and (\ref{Eq_Setup:Lm_Pauli_expansion}), each local Lindbladian $\hat{\mcl{L}}_X$ is expressed as
\begin{eqnarray}
    \hat{\mcl{L}}_X \rho &=& -i\sum_\mu h_\mu^X [P_\mu,\rho] + 2\sum_{\mu,\nu} \left( \sum_m l_{m\mu}^X l_{m\nu}^{X\ast} \right) P_\mu \rho P_\nu \nonumber \\
    && \quad - \sum_{\mu,\nu} \left( \sum_m l_{m\mu}^X l_{m\nu}^{X\ast} \right) \left\{ P_\nu P_\mu, \rho \right\}. \label{Eq_Setup:Lindbladian_expansion}
\end{eqnarray}
Its Pauli norm can be bounded by
\begin{equation}
    \norm{\hat{\mcl{L}}_X}_\mr{Pauli} \leq 2 \sum_\mu |h_\mu^X| + 4 \sum_m \left( \sum_\mu |l_{m\mu}^X| \right)^2.
\end{equation}
The Lindbladian $\mcl{L}$ has the extensiveness $g$ that can be bounded by
\begin{equation}
    g \leq \max_{i \in \Lambda} \left( \sum_{X: X \ni i} \left[ 2 \sum_\mu |h_\mu^X| + 4 \sum_m \left( \sum_\mu |l_{m\mu}^X| \right)^2 \right]\right).
\end{equation}
The right-hand side can be calculated efficiently by classical computers.
For lattice Lindbladians with finite-range interactions and dissipation, where each local term $\hat{\mcl{L}}_X$ has an upper bound independent of the system size $N$, we have $k \in \order{1}$ and $g \in \order{1}$.

Based on the relations, Eqs. (\ref{Eq_Setup:Norm_relation}) and (\ref{Eq_Setup:Pauli_norm_bound}), we have $\norm{\mcl{L}}_\Diamond, \norm{\mcl{L}}_\mr{Pauli} \leq Ng \in \order{N}$. 

We next describe the problem.
The simulation of Lindbladian dynamics has two goals.
The first one is the simulation of the time-evolved state $e^{\mcl{L}t}\rho$ for the time $t$ and the allowable error $\varepsilon$, in which we realize a quantum state $\rho'$ such that
\begin{equation}
    \norm{\rho'-e^{\mcl{L}t}\rho}_1 \leq \varepsilon
\end{equation}
from any initial state $\rho$.
The symbol $\norm{\cdot}_1$ represents the trace norm.
It is sufficient to construct a quantum channel $\mcl{C}$ such that
\begin{equation}
    \norm{\mcl{C}-e^{\mcl{L}t}}_\Diamond \leq \varepsilon.
\end{equation}
When the algorithm works deterministically, the computational cost for this problem is measured by the number of $\order{1}$-qubit gates in the channel $\mcl{C}$.
The other task is the simulation of the time-evolved observable $\mr{Tr}[O e^{\mcl{L}t}\rho]$ for an observable $O$.
In this case, we aim to obtain an estimate $O_\rho(t)$ satisfying
\begin{equation}
    \left| O_\rho (t) - \mr{Tr} \left[ O e^{\mcl{L}t}(\rho) \right]\right| \leq \varepsilon,
\end{equation}
for any observable $O$ such that $\norm{O}=1$.
In the standard estimation of expectation values, we repeat measurement on the output $\mcl{C}\rho$ generated by some quantum channel $\mcl{C}$.
The computational cost is measured by the cost per experiment, i.e., the number of $\order{1}$-qubit gates in $\mcl{C}$, and the sampling complexity.
Note that the simulation of the time-evolved observable is reproduced by that of the time-evolved state.
Some of our algorithms are available for the simulation of both the time-evolved states and observables, while the others are limited to the time-evolved observables.
We will specify them when each algorithm is established.
Throughout this paper, we exclude the cost for the state preparation, i.e., the gate counts for preparing an initial state $\rho$, and the one for measuring in the basis of an observable $O$.

\subsection{Brief review of existing algorithms}

In this section, we briefly review existing quantum algorithms for Lindbladian simulation.
We will use some of them  as a subroutine of our algorithms, and also compare their computational costs.

The most standard algorithm may be the product formula (PF), which is often called Trotterization \cite{Kliesch-prl2011-open}.
Supposing that the Lindbladian $\mcl{L}$ is decomposed into several terms by $\mcl{L}=\sum_{\gamma=1}^\Gamma \mcl{L}_\gamma$, it relies on the product formulas
\begin{eqnarray}
    \mcl{T}_1(\tau) &=& e^{\mcl{L}_\Gamma\tau} \cdots e^{\mcl{L}_2\tau} e^{\mcl{L}_1\tau} = \prod_{\gamma=1}^\Gamma e^{\mcl{L}_\gamma \tau}, \\
    \mcl{T}_2(\tau) &=& \prod_{\gamma=\Gamma}^1 e^{\mcl{L}_\gamma \tau/2} \prod_{\gamma=1}^\Gamma e^{\mcl{L}_\gamma \tau/2}. \label{Eq_Gen:2nd_PF}
\end{eqnarray}
They approximate Lindbladian dynamics under small time $\tau \to 0$ by $\mcl{T}_p(\tau)=e^{\mcl{L}\tau}+\order{\tau^{p+1}}$ ($p=1,2$).
Choosing each local term in $\mcl{L}$ as $\mcl{L}_\gamma$, each completely-positive and trace-preserving (CPTP) map $e^{\mcl{L}_\gamma \tau}$ can be implemented by Stinespring dilation \cite{Kliesch-prl2011-open}.
The simulation for large evolution time $t$ is executed by implementing $[\mcl{T}_p(\tau)]^{r_t}$, where the repetition number $r_t=t/\tau$ is large enough to achieve the error $\varepsilon$.
Owing to the commutator scaling \cite{childs2021-trotter,Wang-2026-open}, the gate count for simulating lattice Lindbladians with finite-range interactions amounts to
\begin{equation}\label{Eq_Setup:PF_cost}
    \order{\frac{(Nt)^{1+\frac1p}}{\varepsilon^{\frac1p}}},
\end{equation}
where the order $p$ can be either $1$ or $2$.
We note that the higher-order PFs with $p \geq 3$ are unavailable in contrast to Hamiltonian simulation.
This comes from the no-go theorem \cite{Sheng-1989-splitting,Suzuki1991-za}, which states that higher-order PFs cannot be composed solely by forward time evolution operators \cite{Note_Zassenhaus}.
Namely, higher-order PFs for Lindbladian dynamics inevitably involve non-CPTP maps like $e^{-\mcl{L}_\gamma \tau}$, which cannot be implemented.

Another promising quantum algorithm is the extension of the LCU-based approach to Lindbladians \cite{cleve-2016-open,Li-Wang-2022-open}.
It employs the series expansion of $e^{\mcl{L}t}$ and realizes it with queries to block-encodings.
For instance, Li and Wang (2022) \cite{Li-Wang-2022-open} develop an algorithm based on the series expansion by Duhamel's principle,
\begin{equation}
    e^{\mcl{L}t} = e^{\mcl{L}_\mr{D}t}\sum_{q=0}^\infty \int_0^t \dd t_q \cdots \int_0^{t_2} \dd t_1 \prod_{q'=1}^q \left[ e^{-t_{q'} \ad_{\mcl{L}_\mr{D}}} \mcl{L}_\mr{J} \right],
\end{equation}
where $\mcl{L}_\mr{J}\rho = 2 \sum_{m=1}^M L_m \rho L_m^\dagger$ is the jump term and $\mcl{L}_\mr{D} = \mcl{L} -\mcl{L}_\mr{J}$ is the dynamical term.
The algorithm runs with $\order{\norm{\mcl{L}}_\mr{BE}t \log (\norm{\mcl{L}}_\mr{BE}t/\varepsilon)}$ queries to the block-encodings of $H$ and $L_m$, and $\order{M\norm{\mcl{L}}_\mr{BE}t [\log (\norm{\mcl{L}}_\mr{BE}t/\varepsilon)]^2}$ additional 1- or 2-qubit gates.
The symbol $\norm{\cdot}_\mr{BE}$ is a kind of norm determined by the block-encoding, which shares the scaling with $\norm{\cdot}_\mr{Pauli}$ for generic lattice Lindbladians.
Although it achieves the near-optimal query complexity in $t$ and $1/\varepsilon$, it does not mean the optimality in gate count.
Indeed, when considering generic lattice Lindbladians with finite-range interactions and dissipation, the block-encodings require $\order{N}$ local gates, and we have the number of Lindblad operators $M \in \order{N}$ due to the number of terms proportional to the system size.
The gate count for the LCU-based approach amounts to
\begin{equation}\label{Eq_Setup:Gate_LCU}
    \order{N^2 t \, \polylog{Nt/\varepsilon}}.
\end{equation}
It has worse dependency on $N$ than the second-order PF [See Eq. (\ref{Eq_Setup:PF_cost}) for $p=2$].
The number of ancilla qubits for this algorithm is $ \order{\polylog{Nt/\varepsilon}}$.

Various quantum algorithms have appeared for Lindbladian dynamics in the past decade \cite{Ding-Li-Lin-2024-open,kato-wada-2024_open,Yu-Yuan-2025-open,BorrasMarvian2025Lindblad,Peng-prxq-2025-open,Chen2025RandomizedLindblad,Pocrnic2025Repeated,David_2026_qdrift,Yu-Cirac-2025-open,mohammadipour2026-open,Wang-2026-open,wang2026-open-optimal,chen2026-open-optimal}.
Some of them \cite{kato-wada-2024_open,Yu-Yuan-2025-open} employ quasi-probabilistic sampling for simulating time-evolved observables and have inherent advantages like the smaller number of ancilla qubits, but their gate counts are at least as large as $\order{(Nt)^2 \, \polylog{Nt/\varepsilon}}$.
To the best of our knowledge, a quantum algorithm using extrapolation of the second-order PF \cite{Wang-2026-open} has achieved the best size dependency, whose gate count is as large as $\order{(Nt)^{3/2} \, \polylog{Nt/\varepsilon}}$ for observable estimation.

We compare the costs of Lindbladian simulation with those of Hamiltonian simulation in Table \ref{Table:Gate_counts}.
The lower bound on the gate count for Hamiltonian simulation is known to be $\tilde{\Omega}(Nt)$ \cite{Haah2021-hhkl}.
The higher order PFs with the order $p \in \order{1}$ can achieve the scaling close to this lower bound both in $N$ and $t$ \cite{childs-prl2019-pf,childs2021-trotter}.
In addition, the scaling simultaneously good in $N$, $t$, and $1/\varepsilon$ has been recently achieved for Hamiltonian simulation.
The multi-product formula (MPF) combining PF and LCU \cite{low2019-mpf} achieves the gate count $\order{N^{1+1/(p+1)}t \, \polylog{Nt/\varepsilon}}$, which exploits the commutator scaling of PF \cite{Mizuta_2026_mpf}.
The HHKL algorithm utilizing the Lieb-Robinson bound achieves the near-optimal gate count $\order{Nt \, \polylog{Nt/\varepsilon}}$ \cite{Haah2021-hhkl}.
The scaling $\tilde{\Omega}(Nt)$ works as the lower bound on the gate count also for Lindbladian simulation.
The current status of Lindbladian simulation is totally different from that of Hamiltonian simulation.
The current best size-dependency is $\order{N^{3/2}}$, achieved by the second-order PF and its extrapolation \cite{Kliesch-prl2011-open,Wang-2026-open}, which is far from the optimal scaling $\tilde{\Omega}(N)$.
While the optimal dependency in $t$ is saturated by the LCU-based approaches \cite{cleve-2016-open,Li-Wang-2022-open}, their dependence in size $N$ is rather worse.
All the above algorithms for Hamiltonian simulation, i.e., the higher-order PFs, the higher-order MPFs, and the HHKL algorithms, contain backward time evolutions, and cannot be extended to Lindbladian dynamics due to the breakdown of the CP property.
As far as we know, the near-optimal gate count $\order{Nt \, \polylog{Nt/\varepsilon}}$ is achieved for the very limited case, where all the Lindblad operators $\{ L_m \}$ in Eq. (\ref{Eq_Intro:Lindbladian}) are Hermitian and commute with one another under an additional assumption on QRAM queries \cite{Yu-Cirac-2025-open}.
It has been a long-standing open problem whether or how we can achieve the gate count close to the lower bound $\tilde{\Omega}(Nt)$ for simulating broad classes of Lindbladians.

\subsection{Brief summary of our results}

\begin{figure}
    \centering
    \includegraphics[width=0.95\linewidth]{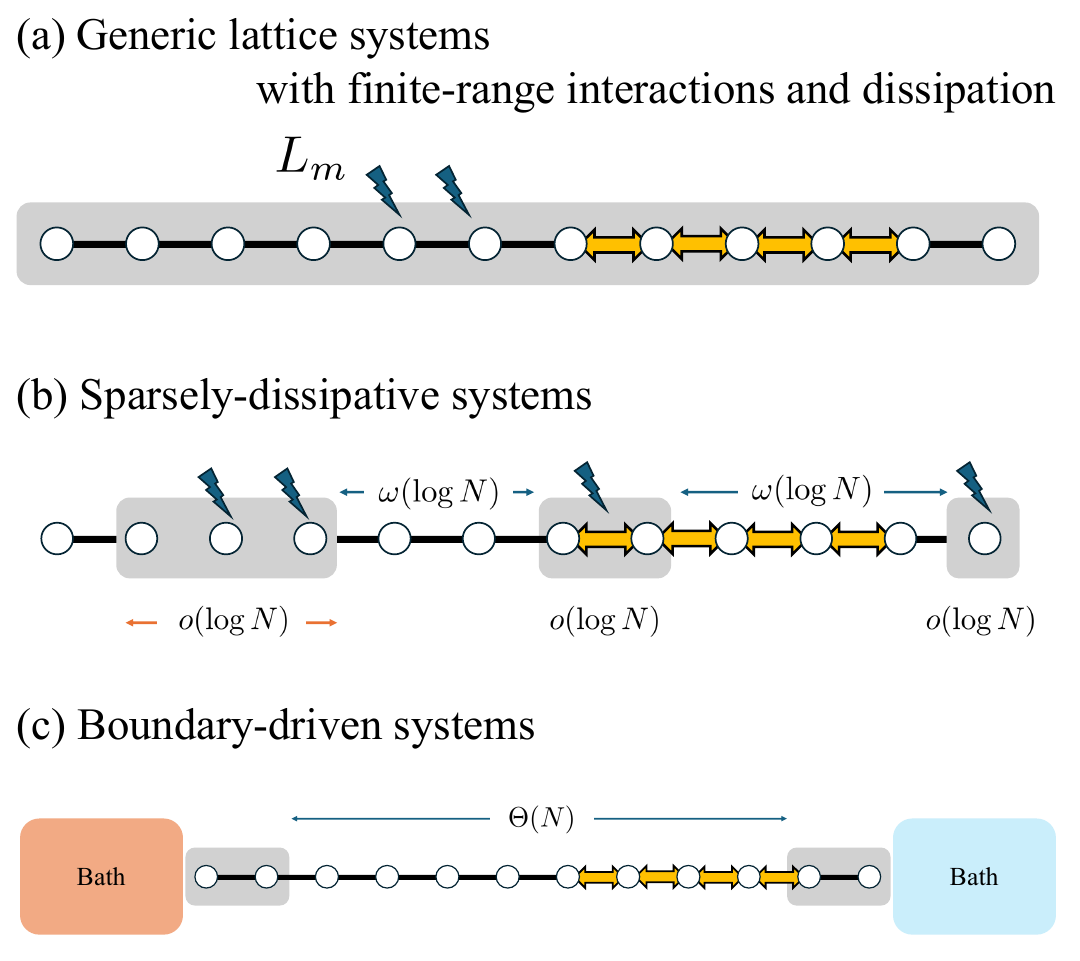}
    \caption{Schematic picture of the system. (a) Generic systems addressed by Algorithm \ref{Algorithm_Generic}. (b) Sparsely dissipative systems satisfying Definition \ref{Def_Setup:sparse_dissipation}, and addressed by Algorithm \ref{Algorithm_Sparse}. (c) Boundary-driven systems as a typical example of sparsely dissipative systems. We consider high-dimensional systems in Appendix \ref{SecA:High_dim}.}
    \label{Fig_systems}
\end{figure}

We briefly summarize our results.
We establish two kinds of quantum algorithms for Lindbladian dynamics.
The first algorithm achieves the near-optimal gate counts for simulating a certain class of Lindbladians, in which the dissipation is sparsely located.
The second algorithm simulates generic Lindbladians with finite-ranged interaction and dissipation, whose gate count achieves better scaling than existing algorithms.

We describe the first algorithm.
In addition to the assumptions in Section \ref{Subsec:Setup}, we assume that the dissipation is sparsely located.
Let us define the support of the dissipation $\Lambda^\mr{diss} \subset \Lambda$ by
\begin{equation}
    \Lambda^\mr{diss} = \bigcup_{X,m} \supp (L_{X,m}),
\end{equation}
where $L_{X,m}$ is the Lindblad operator included in the local term $\hat{\mcl{L}}_X$.
We define sparsely dissipative systems as follows.

\begin{definition}\label{Def_Setup:sparse_dissipation}
\textbf{(Sparsely dissipative systems)}

We call the Lindbladian $\mcl{L}$ sparsely dissipative when the support of its dissipation $\Lambda^\mr{diss}$ is composed of disjoint domains $\{\Lambda^\mr{diss}_\alpha \}$ as
\begin{equation}
    \Lambda^\mr{diss} = \bigcup_\alpha \Lambda^\mr{diss}_\alpha, \quad \Lambda^\mr{diss}_\alpha \cap \Lambda^\mr{diss}_{\alpha'} = \emptyset, \quad (\alpha \neq \alpha'),
\end{equation}
and satisfies the following conditions:
\begin{itemize}
    \item Domain size: Every domain $\Lambda^\mr{diss}_\alpha$ has the size bounded by
    \begin{equation}\label{Eq_Setup:dissipative_domain_size}
        r(\Lambda^\mr{diss}_\alpha) \in o(\log N),
    \end{equation}
    where $r(X)$ is defined by Eq. (\ref{Eq_Setup:domain_size_def}).

    \item Domain distance: Every pair of distinct domains $\Lambda^\mr{diss}_\alpha$, $\Lambda^\mr{diss}_{\alpha'}$ satisfies
    \begin{equation}\label{Eq_Setup:dissipative_domain_dist}
        \mr{dist}(\Lambda^\mr{diss}_\alpha,\Lambda^\mr{diss}_{\alpha'}) \in \omega (\log N), \quad (\alpha \neq \alpha'),
    \end{equation}
    where $\mr{dist}(X,Y)$ is defined by Eq. (\ref{Eq_Setup:domain_dist_def}).
\end{itemize}
\end{definition}

Figure \ref{Fig_systems} (b) shows the schematic picture of sparsely dissipative systems satisfying the above definition.
Importantly, they include boundary-driven systems, where the dissipation is located around the left and right boundaries like Fig. \ref{Fig_systems} (c).
Boundary-driven systems are typical targets in open quantum many-body systems \cite{Landi-2022-boundary}, and hence the first algorithm has broad utility in nonequilibrium physics.
We develop a near-optimal quantum algorithm for sparsely dissipative systems, whose cost is given by the following theorem.

\begin{theorem}\label{Thm_Setup:sparse}
\textbf{(Near-optimal simulation of sparsely dissipative Lindbladians)}

Let $\mcl{L}$ be a Lindbladian for a sparsely dissipative system given by Definition \ref{Def_Setup:sparse_dissipation}.
We also suppose that the time $t$ and the inverse error $1/\varepsilon$ are at most $\poly{N}$.
There exists a quantum algorithm outputting the time-evolved state $e^{\mcl{L}t}\rho$ within an additive error $\varepsilon$, which can be executed by the following cost:
\begin{itemize}
    \item Number of $\order{1}$-qubit gates:
    \begin{equation}
        \order{Nt \, \polylog{Nt/\varepsilon}}.
    \end{equation}

    \item Ancilla qubits and circuit depth:
    The algorithm runs with $\Theta (\polylog{Nt/\varepsilon})$ ancilla qubits, and then it yields the circuit depth $\order{Nt \, \polylog{Nt/\varepsilon}}$.
    When $\tilde{\Theta}(N)$ ancilla qubits are available, the circuit depth can be $\order{t \, \polylog{Nt/\varepsilon}}$.
\end{itemize}
\end{theorem}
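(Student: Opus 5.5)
The plan is to adapt the HHKL patching strategy to the Lindbladian setting while keeping every implemented block CPTP. The only non-CP ingredient in HHKL is the backward evolution on the overlap region, so the idea is to place the overlaps inside purely Hamiltonian regions, where backward evolution is simply a unitary $e^{+iH_Y \tau}$.

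First, split $t$ into $t/\tau$ steps with $\tau\in\Theta(1)$, small enough that a Lieb--Robinson bound for the local Lindbladian applies with velocity $v\in\order{1}$. For one step I will use the dissipative patching lemma (the Lindbladian analogue of the HHKL patching identity, which Section \ref{Sec:Patching} establishes). Its content is: for a partition $\Lambda=A\cup B\cup C$ with $B$ separating $A$ from $C$ and of width $\ell$,
\begin{equation}
\left\| e^{\mcl{L}\tau} - e^{\mcl{L}_{AB}\tau}\, e^{-\mcl{L}_{B}\tau}\, e^{\mcl{L}_{BC}\tau}\right\|_\Diamond \leq \order{e^{-(\ell - v\tau)/\xi'}} .
\end{equation}
Here $\mcl{L}_Y$ collects the local terms supported in $Y$, and $\xi'$ is the decay length of the Lieb--Robinson bound. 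Applying this recursively as in HHKL, I cut the chain into blocks of length $\ell\in\Theta(\log(Nt/\varepsilon))$. Each step then becomes a product of $\order{N/\ell}$ forward blocks and backward overlap evolutions, with total error $\order{(Nt/\ell) e^{-c\ell}}\le\varepsilon/2$.

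The crucial point is the choice of cut locations. By Definition \ref{Def_Setup:sparse_dissipation}, consecutive dissipative domains are separated by $\omega(\log N)$ sites. Since $t,1/\varepsilon\le\poly{N}$, we have $\ell\in\Theta(\log N)$, so for large $N$ I can choose every overlap region $B$ to lie entirely in the dissipation-free gaps, at distance $\ge\xi$ from every $\Lambda^\mr{diss}_\alpha$. On such $B$ the generator $\mcl{L}_B$ is purely Hamiltonian, so $e^{-\mcl{L}_B\tau}=e^{iH_B\tau}(\cdot)e^{-iH_B\tau}$ is a unitary channel. The forward blocks $e^{\mcl{L}_{AB}\tau}$ are genuine Lindbladian evolutions, hence CPTP. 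Each block containing dissipation has size $\order{\ell}$, since $r(\Lambda^\mr{diss}_\alpha)\in o(\log N)$ lets a block of length $\Theta(\ell)$ contain it entirely. Whole regions can also be grouped so that block sizes vary; the requirement is only that the seams avoid dissipation. The whole circuit is therefore a composition of CPTP maps, with no quasi-probability overhead.

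Next I implement each block. The dissipation-free blocks and all backward pieces are Hamiltonian simulations on $\order{\ell}$ qubits for time $\order{1}$. Each block containing dissipation is a Lindbladian simulation on $\order{\log(Nt/\varepsilon)}$ qubits, for which I use the LCU algorithm of Li and Wang \cite{Li-Wang-2022-open} with per-block precision $\varepsilon\ell/(2Nt)$. The block cost (\ref{Eq_Setup:Gate_LCU}) with $N\to\ell$ is $\order{\ell^2\,\polylog{Nt/\varepsilon}}=\polylog{Nt/\varepsilon}$. Summing over $\order{N/\ell}$ blocks and $t/\tau$ steps gives $\order{Nt\,\polylog{Nt/\varepsilon}}$ gates, and the diamond-norm errors add by subadditivity.

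For the ancillas, running the blocks sequentially and reusing the $\polylog{}$ LCU ancillas gives depth equal to the gate count. Running all blocks of one layer in parallel needs $\tilde\Theta(N)$ ancillas, and the depth becomes $\order{t\,\polylog{}}$, since each step has $\order{1}$ layers of the patching pattern.

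I expect the main obstacle to be proving the patching lemma itself with a Lindbladian Lieb--Robinson bound. The difficulty is that HHKL's argument uses unitarity through an interaction-picture identity, so the proof must instead be redone with Duhamel expansions of $e^{\mcl{L}\tau}$ and diamond-norm estimates on commutators $[\mcl{L}_{\partial}, e^{\mcl{L}_{A}s}]$. My first attempt will be to write $e^{\mcl{L}\tau}e^{-\mcl{L}_{BC}\tau}$ as a time-ordered evolution generated by $e^{\mcl{L}_{BC}s}(\mcl{L}-\mcl{L}_{BC})e^{-\mcl{L}_{BC}s}$. I will use that $\mcl{L}_{BC}$ is Hamiltonian near the cut, which makes $e^{-\mcl{L}_{BC}s}$ harmless there. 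The Lieb--Robinson bound then shows this generator is quasi-local near $A\cup\partial$, up to $e^{-c\ell}$ tails.
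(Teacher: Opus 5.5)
Your proposal is correct and rests on the paper's key idea. You place every backward (overlap) evolution in a dissipation-free stretch so that it is a unitary channel, keep each dissipative block of size $\order{\log(Nt/\varepsilon)}$ and simulate it with the Li--Wang LCU algorithm, and use Hamiltonian simulation everywhere else.

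Where you differ is in how the multi-block decomposition is assembled. You apply the three-region lemma recursively. That is legitimate here only because each prefactor $e^{\mcl{L}_{AB}\tau}e^{-\mcl{L}_B\tau}$ that multiplies a later patching error is a CPTP map composed with a unitary channel, so its diamond norm is $1$. You should state this explicitly: for dissipative overlaps this norm is not bounded by $1$ in general, and the recursion breaks. The paper does not rely on this. It proves the multi-block statement (Corollary \ref{Cor_Sparse:Patching_strategy}) directly by the same $\mcl{N}(\tau)$ integral-equation argument as Theorem \ref{Thm_Patch:Patching_lemma}. That argument needs no Hamiltonian assumption on the overlaps and is reused for generic Lindbladians.

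The layouts also differ. The paper uses a four-periodic partition in which each entire dissipation-free gap becomes one large unitary block $e^{\mcl{L}_{A_{4\alpha-2}A_{4\alpha-1}A_{4\alpha}}\tau}$, handed to HHKL as a black box. You effectively inline HHKL, using $\Theta(\ell)$-size blocks whose seams avoid the dissipation. Both routes give the same gate count and the same depth/ancilla trade-off.

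One correction to your sketch of the lemma itself. It is $\mcl{L}_B$, not $\mcl{L}_{BC}$, that is Hamiltonian. Since $C$ contains dissipative domains, $e^{-\mcl{L}_{BC}s}$ may have norm exponential in $|C|$. Moreover, the conjugation $e^{s\,\ad_{\mcl{L}_{BC}}}$ involves a backward evolution, which is not what a standard Lindbladian Lieb--Robinson bound controls. What works is the short-time nested-commutator series bound (Lemma \ref{Lem_Patch:Commutator_bound}), combined with either of two arguments:
\begin{enumerate}
\item a Gr\"onwall argument that uses the unitarity of $e^{\pm\mcl{L}_B s}$; or
\item the paper's trick of writing the error as $e^{\mcl{L}\tau}(1-\mcl{N}(\tau))$, so that only $\norm{e^{\mcl{L}\tau}}_\Diamond\le 1$ is needed.
\end{enumerate}
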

The above gate count matches the lower bound $\tilde{\Omega}(Nt)$.
In addition, when $\tilde{\Theta}(N)$ ancilla qubits are available, the circuit depth is also near-optimal.

As the second algorithm, we establish the way to simulate generic local Lindbladians with finite-ranged interactions and dissipation.
This algorithm allows us to reproduce the time-evolved observable $\mr{Tr}[Oe^{\mcl{L}t}\rho]$ by repeating the execution of quantum gates and measurements and classically post-processing the measurement outcomes.
Its cost is summarized in the following theorem.

\begin{theorem}\label{Thm_Sparse:generic}
\textbf{(Observable simulation of generic lattice Lindbladians)}

Let $\mcl{L}$ be a one-dimensional lattice Lindbladian with finite-ranged interactions and dissipation.
There exists a quantum algorithm outputting the time-evolved observable $\mr{Tr}[O e^{\mcl{L}t}\rho]$ within an additive error $\varepsilon$, which can be executed by the following cost:
\begin{itemize}
    \item Number of $\order{1}$-qubit gates per experiment:
    \begin{equation}
        \order{(Nt)^{\frac43} \, \polylog{Nt/\varepsilon}}.
    \end{equation}

    \item Ancilla qubits and circuit depth: The algorithm requires $\Theta (\polylog {Nt/\varepsilon})$ ancilla qubits, and then, the circuit depth amounts to $(Nt)^{4/3} \, \polylog{Nt/\varepsilon}$.
    When $\tilde{\Theta} (N^{2/3})$ ancilla qubits are available, the circuit depth amounts to $t (Nt)^{2/3} \, \polylog{Nt/\varepsilon}$.

    \item Sampling complexity: $\Theta (\varepsilon^{-2})$.
\end{itemize}
\end{theorem}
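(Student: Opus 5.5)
The plan is to combine a dissipative patching lemma with a quasi-probabilistic merging step and then optimize the patch size. We split the time $t$ into $r=t/T$ segments of duration $T\in\order{1}$ (a constant fraction of the inverse Lieb--Robinson velocity of $\mcl{L}$). Within one segment we partition the chain into blocks of length $\ell$. The first step is a Lindbladian version of the HHKL patching identity. For neighbouring regions $A\cup B$ and $B\cup C$ with overlap $B$ of width $w\sim\log(Nt/\varepsilon)$,
\begin{equation}
    e^{\mcl{L}_{ABC}T} \approx e^{\mcl{L}_{BC}T}\, e^{-\mcl{L}_{B}T}\, e^{\mcl{L}_{AB}T},
\end{equation}
with error $e^{-\Omega(w)}$ in diamond norm. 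We prove this by the interaction-picture/Duhamel argument together with the Lieb--Robinson bound for local Lindbladians in the Heisenberg picture. Iterating over the blocks writes one segment as a brickwork of forward evolutions on blocks of size $\ell+w$ interleaved with reversed evolutions $e^{-\mcl{L}_BT}$ on overlaps.

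The forward blocks are CPTP. Each is implemented with the LCU/Duhamel algorithm of Li--Wang at gate cost $\tilde{\mcl{O}}(\ell^2 T)$. This gives $\tilde{\mcl{O}}(N\ell)$ gates per segment and $\tilde{\mcl{O}}(Nt\,\ell)$ overall.

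The reversed maps $e^{-\mcl{L}_BT}$ are not CP and must be sampled quasi-probabilistically. This is where merging enters. Naively each reversed map has Pauli-norm overhead $e^{\Theta(w)}$, and there are $Nt/\ell$ of them, so the total overhead would be exponential. Instead, we absorb the dissipative part of $e^{-\mcl{L}_BT}$ into the adjacent forward block. We write $e^{\mcl{L}_{BC}T}e^{-\mcl{L}_BT}$ as a single forward evolution, for instance by pulling $e^{-\mcl{L}_BT}$ through via Duhamel to obtain $\mcl{T}e^{\int \tilde{\mcl{L}}_{C}(s)\,ds}$ with a time-dependent generator. The remaining non-CP residue is confined to a region of size $\order{1}$ and has strength of order $\order{T}$ per unit boundary.

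What we need from merging is a bound on the residual non-CP correction per boundary of the form $\norm{\cdot}_\mr{Pauli}\le 1+\order{\delta}$, with $\delta$ controlled by the local time step. Then the total quasi-probability $\gamma=\prod(1+\order{\delta})\le\exp(\order{\delta\, Nt/\ell})$ is $\order{1}$ provided $\delta\lesssim \ell/(Nt)$. This keeps the sampling complexity at $\Theta(\varepsilon^{-2})$, as in Lemma \ref{LemmaA:quasiprobabilistic}.

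Reaching $\delta\sim\ell/(Nt)$ requires subdividing each segment near boundaries into finer steps of size $\tau$, with $\delta\sim\tau^2$ per step when a second-order splitting is used. This costs $\order{T/\tau}$ local operations per boundary per segment, with $\tau^2\cdot(\text{number of boundary steps})\sim\ell/(Nt)$. Balancing the bulk cost $Nt\,\ell$ against the boundary cost $(Nt/\ell)\cdot(Nt/\ell)^{1/2}\cdot(Nt)^{0}$ yields $\ell\sim (Nt)^{1/3}$. The gate count is then $\tilde{\mcl{O}}((Nt)^{4/3})$. The circuit-depth statement follows because, with $\tilde{\Theta}(N^{2/3})$ ancillas, the $N/\ell\sim N^{2/3}$ blocks within each layer run in parallel.

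The main obstacle is the merging step: showing that the non-CP remainder after absorbing $e^{-\mcl{L}_BT}$ has Pauli norm $1+\order{\tau^2}$ rather than $1+\order{\tau}$, uniformly in $w$. The approach I would try first is to perform the reversed evolution only on the Hamiltonian part, which is CPTP when reversed. I would then rearrange the dissipative part so that its backward and forward contributions cancel to first order, by a symmetric second-order splitting at each boundary. The exponentially small patching errors are summed over $Nt$ patches and bounded by choosing $w=\Theta(\log(Nt/\varepsilon))$.
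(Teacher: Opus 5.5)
Your outline has the right ingredients: patching, a symmetric second-order merge at each boundary, quasi-probabilistic sampling, and optimization of $\ell$. There is a genuine gap, though. The step that carries the theorem is only conjectured, and the cost model you build around it is wrong.

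\textbf{The missing merging bound.} The bound you flag as the main obstacle is the core of the result, and you leave it unproved. Reversing only the Hamiltonian part does not by itself preserve the patching identity, which needs the full $e^{-\mathcal{L}_B\tau}$ to remove the doubly counted evolution on $B$. The paper closes this step as follows.
\begin{itemize}
\item It composes the patched step with its reversed-order version, $\overline{\mathcal{U}}_{\rm Patch}(\tau/2)\,\mathcal{U}_{\rm Patch}(\tau/2)$.
\item At each boundary the inner layers then become $e^{-(\mathcal{L}_B+\mathcal{L}_{B'})\tau/2}e^{\mathcal{L}_{BB'}\tau}e^{-(\mathcal{L}_B+\mathcal{L}_{B'})\tau/2}$. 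This is a Strang splitting of the cross term $\mathcal{L}_{B:B'}=\mathcal{L}_{BB'}-\mathcal{L}_B-\mathcal{L}_{B'}$.
\item The splitting error is written as a Dyson series in $\Delta(\tau)=\mathcal{U}^{-1}\dot{\mathcal{U}}-\mathcal{U}^{-1}\mathcal{L}_{B:B'}\mathcal{U}$. Every term of this series is a nested commutator ending on $\mathcal{L}_{B:B'}$, whose support has at most $2\xi$ sites.
\item This gives $\mathcal{U}_{\rm Merge}=e^{\mathcal{L}_{B:B'}\tau}(1+\mathcal{A})$ with $\|\mathcal{A}\|_{\rm Pauli}\le 43(\xi g\tau)^3$. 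The bound is uniform in $R$ and carries no exponential prefactor.
\item The factor $e^{\mathcal{L}_{B:B'}\tau}$ is a CPTP map on $\mathcal{O}(1)$ sites and is run by LCU. Only $1+\mathcal{A}$ is sampled.
\end{itemize}

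\textbf{The wrong cost model.} You evolve the bulk blocks at step $T=\mathcal{O}(1)$ and refine only near boundaries. Refining inside a fixed map changes neither the map nor its quasi-probability cost. The $\mathcal{O}(\tau^3)$ residue appears only when the whole patched step is taken at step $\tau$, including the bulk evolutions $e^{\mathcal{L}_{B_{3\alpha}B_{3\alpha+1}B_{3\alpha+2}}\tau/2}$ that sandwich every merged block.

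Consequently, every bulk block is re-applied $r_t=t/\tau$ times. Each application costs $\tilde\Omega(\ell)$ gates however small $\tau$ is, because at least one block-encoding query is needed. The correct count is $\tilde{\mathcal{O}}\bigl(r_t\cdot\tfrac{N}{\ell}\max(\ell,\ell^2\tau)\bigr)=\tilde{\mathcal{O}}(Nr_t+Nt\ell)$ with $r_t\sim t\sqrt{Nt/\ell}$. This balances at $\ell\sim(Nt)^{1/3}$ and gives $(Nt)^{4/3}$.

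Your own two terms, $Nt\ell$ and $(Nt/\ell)^{3/2}$, balance instead at $\ell\sim(Nt)^{1/5}$ and would give $(Nt)^{6/5}$. So the claimed exponent does not follow from your accounting; the omitted $Nr_t$ term is what fixes it.

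\textbf{Iterating the patching identity.} The multi-block decomposition cannot be obtained by iterating the two-region identity. The prefactor $e^{\mathcal{L}_{AB}T}e^{-\mathcal{L}_BT}$ is not a contraction, so the accumulated prefactor is exponential in $N/\ell$. The paper instead proves the multi-block version directly, from the integral equation for $e^{-\mathcal{L}\tau}\mathcal{U}_{\rm Patch}(\tau)$. That proof uses only $\|e^{\mathcal{L}\tau}\|_\Diamond\le 1$ and nested-commutator bounds, not Lieb--Robinson bounds. Existing non-unitary versions of those bounds carry exponentially large factors.
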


As far as we know, this algorithm achieves the best size-$N$ dependency in the gate counts among the existing algorithms, which is close to the lower bound $\tilde{\Omega}(N)$.
We also note that our gate count becomes the best among the known results in the time regime $t \in \order{N^2}$.

The strategies for constructing these quantum algorithms are patching and merging, which are based on the locality and the range of interactions and dissipation.
First, we prove the so-called patching lemma for Lindbladian dynamics, which allows us to decompose the time-evolution operators $e^{\mcl{L}t}$ into those for small dissipative systems, as we will show in Section \ref{Sec:Patching}.
Such decomposition was originally developed for Hamiltonian dynamics, which led to the near-optimal quantum algorithm for Hamiltonian simulation \cite{Haah2021-hhkl}.
However, if the Lindbladian dynamics is decomposed in the same way as the Hamiltonian dynamics, such an algorithm fails to be efficient due to the existence of inverse time-evolution operators, which violates complete positivity (CP).
We develop techniques for avoiding this problem, i.e., patching and merging.
To be precise, we organize a way to decompose the Lindbladian dynamics into those for the optimized-size blocks with merging some of them like Fig. \ref{Fig_algorithm_generic} in Section \ref{Sec:Generic_algorithm}.
This deletes or suppresses the violation of the CP property respectively for the first or second algorithms, which makes them the most efficient among the existing algorithms.
We will discuss the construction of these algorithms with the patching and merging strategy in Sections \ref{Sec:Sparse_algorithm} and \ref{Sec:Generic_algorithm}.

\section{Patching lemma for Lindbladians}\label{Sec:Patching}

In this section, we prove the patching lemma for Lindbladian dynamics, which decomposes the time-evolution operator $e^{\mcl{L}t}$ into those for small systems for the algorithms.

\begin{figure}
    \centering
    \includegraphics[width=\linewidth]{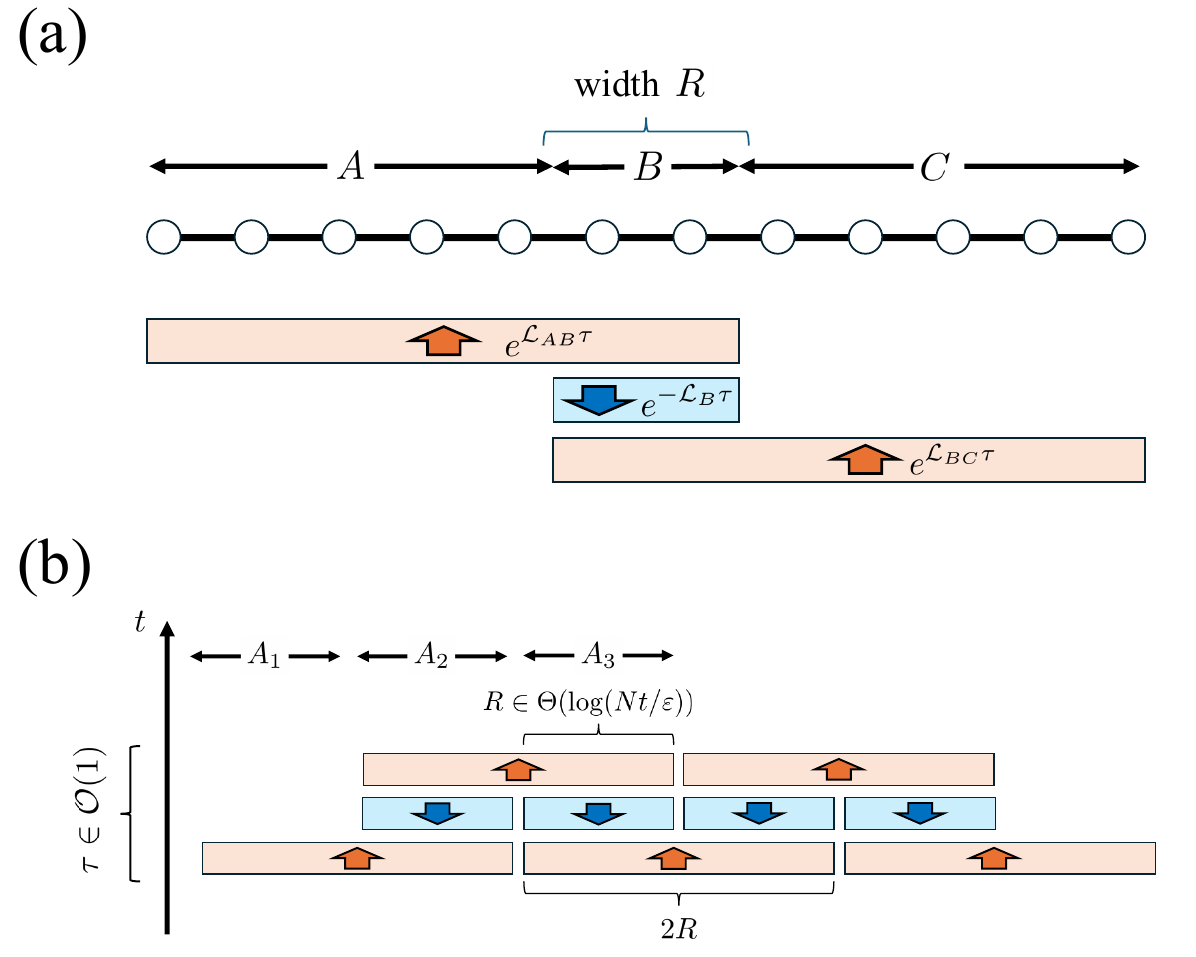}
    \caption{(a) Patching lemma for the Lindbladian system. (b) The decomposition used in the HHKL algorithm.}
    \label{Fig_patching}
\end{figure}

We first split the system $\Lambda$ into $\Lambda = A \cup B \cup C$ as shown in Fig. \ref{Fig_patching} (a).
We define the subsystem Lindbladian $\mcl{L}_{A_1 A_2 \cdots}$ for some domains $\{A_\alpha\}$ by
\begin{equation}\label{Eq_Patch:Subsys_Lindbladian}
    \mcl{L}_{A_1A_2\cdots} = \sum_{X \subset (A_1 \cup A_2 \cup \cdots)} \hat{\mcl{L}}_X,
\end{equation}
where each $\hat{\mcl{L}}_X$ denotes a local Lindbladian having the support $X=\supp(\hat{\mcl{L}}_X)$, as shown in Eq. (\ref{Eq_Basic:L_x}).
For instance, the subsystem Lindbladians $\mcl{L}_A$ and $\mcl{L}_{AB}$ respectively given by
\begin{equation}
    \mcl{L}_A = \sum_{X \subset A} \hat{\mcl{L}}_X, \quad 
    \mcl{L}_{AB} = \sum_{X \subset (A \cup B)} \hat{\mcl{L}}_X
\end{equation}
mean the collection of the terms whose supports are included in $A$ or $A \cup B$.
We also use $\mcl{L}_B$, $\mcl{L}_C$, and $\mcl{L}_{BC}$ defined in the same way.

\begin{theorem}\label{Thm_Patch:Patching_lemma}
\textbf{(Patching lemma for Lindbladians)}

Let $R = |B|$ be the size of the domain $B$ and satisfy $R > \xi$.
When the time $\tau$ is small enough to satisfy
\begin{equation}\label{Eq_Patch:time_assumption}
    0 \leq \tau \leq \frac{1}{6e \xi g} \in \order{1},
\end{equation}
the time-evolution operator $e^{\mcl{L}\tau}$ is approximated by
\begin{equation}\label{Eq_Patch:error}
    \norm{e^{\mcl{L}\tau} - e^{\mcl{L}_{AB}\tau} e^{-\mcl{L}_{B}\tau} e^{\mcl{L}_{BC}\tau}}_\Diamond \leq e^{-\frac{R}{\xi}}.
\end{equation}
\end{theorem}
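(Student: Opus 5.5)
The plan is a Duhamel-type comparison that reduces the patching error to the tail of a nested-commutator series. That tail is small because commutators can only spread support by about $\xi$ per step. The first step is purely combinatorial. Since $R=|B|>\xi$ and every nonzero $\hat{\mcl{L}}_X$ has $r(X)<\xi$, no local term can touch both $A$ and $C$. Hence each $X$ lies in $A\cup B$ or in $B\cup C$, and
\begin{equation*}
\mcl{L}=\mcl{L}_{AB}+\mcl{L}_{BC}-\mcl{L}_B .
\end{equation*}
Set $\mcl{D}=\mcl{L}_{BC}-\mcl{L}_B$; this collects $\mcl{L}_C$ and the terms straddling the $B$--$C$ boundary. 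Define $\mcl{W}(s)=e^{\mcl{L}_{AB}s}e^{-\mcl{L}_Bs}e^{\mcl{L}_{BC}s}$ and differentiate:
\begin{equation*}
\mcl{W}'(s)=\mcl{L}_{AB}\mcl{W}(s)+e^{\mcl{L}_{AB}s}\bigl(e^{-s\,\ad_{\mcl{L}_B}}\mcl{D}\bigr)e^{-\mcl{L}_Bs}e^{\mcl{L}_{BC}s}.
\end{equation*}
Now compare with $\mcl{L}\mcl{W}=(\mcl{L}_{AB}+\mcl{D})\mcl{W}$. Using $e^{\mcl{L}_{AB}s}\bigl(e^{-s\,\ad_{\mcl{L}_{AB}}}\mcl{D}\bigr)=\mcl{D}\,e^{\mcl{L}_{AB}s}$, we get
\begin{equation*}
\mcl{W}'(s)=\mcl{L}\mcl{W}(s)+\mcl{E}(s),\qquad \mcl{E}(s)=e^{\mcl{L}_{AB}s}\bigl(e^{-s\,\ad_{\mcl{L}_B}}\mcl{D}-e^{-s\,\ad_{\mcl{L}_{AB}}}\mcl{D}\bigr)e^{-\mcl{L}_Bs}e^{\mcl{L}_{BC}s}.
\end{equation*}
Duhamel's formula then gives $e^{\mcl{L}\tau}-\mcl{W}(\tau)=-\int_0^\tau e^{\mcl{L}(\tau-s)}\mcl{E}(s)\,ds$. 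Since $e^{\mcl{L}(\tau-s)}$ is CPTP, it is contractive in the diamond norm.

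The second step bounds the bracket in $\mcl{E}$. Expand both exponentials as series in nested adjoints. The difference $\mcl{L}_{AB}-\mcl{L}_B$ consists only of terms touching $A$, i.e.\ terms supported within distance $\xi$ of the $A$--$B$ boundary. Each application of $\ad_{\mcl{L}_B}$ enlarges the support of a term of $\mcl{D}$ by less than $\xi$. Therefore $\ad^{n}_{\mcl{L}_B}\mcl{D}=\ad^{n}_{\mcl{L}_{AB}}\mcl{D}$ for all $n<n_0\approx R/\xi-1$. In the $\mcl{L}_C$ part this holds exactly, because $\mcl{L}_C$ commutes with both $\mcl{L}_B$ and $\mcl{L}_{AB}$. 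So only orders $n\ge n_0$ survive.

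For those orders I will use the standard cluster counting bound. Only the terms of $\mcl{D}$ straddling the $B$--$C$ boundary matter, and they have Pauli norm $\order{\xi g}$. At each step, at most $\order{\xi}$ sites of the growing support are newly exposed, and each carries weight at most $g$. Together with $\norm{[\mcl{A},\mcl{B}]}_\mr{Pauli}\le2\norm{\mcl{A}}_\mr{Pauli}\norm{\mcl{B}}_\mr{Pauli}$, this should give
\begin{equation*}
\norm{\ad^n_{\mcl{L}_{B}}\mcl{D}_{\partial}}_\mr{Pauli}\le \order{\xi g}\, n!\,(2\xi g)^n,
\end{equation*}
and the same for $\mcl{L}_{AB}$. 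The tail is then bounded by $\order{\xi g}\sum_{n\ge n_0}(2\xi g s)^n$. Under Eq.~(\ref{Eq_Patch:time_assumption}) this is geometric with ratio at most $1/(3e)$, hence of order $(3e)^{-R/\xi}$.

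The main obstacle is the non-contractive factor $e^{-\mcl{L}_Bs}$ inside $\mcl{E}$. Its diamond norm is only bounded by $e^{\norm{\mcl{L}_B}_\mr{Pauli}s}\le e^{Rgs}\le e^{R/(6e\xi)}$. This must be beaten by the commutator tail, and the constant $6e$ in the time assumption is presumably tuned so that
\begin{equation*}
\tau\,\order{\xi g}\,e^{R/(6e\xi)}(3e)^{-R/\xi+1}\le e^{-R/\xi}.
\end{equation*}
This works because $\ln(3e)-1/(6e)>1$. I will therefore do the counting carefully, with exact constants rather than $\order{\cdot}$, and check that the $n!$ factors cancel exactly against $1/n!$. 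If the constants turn out too tight, I would instead apply the Duhamel argument to $e^{\mcl{L}_Bs}$-conjugated quantities so that the inverse evolution is absorbed, or split $\mcl{E}$ so that $e^{-\mcl{L}_Bs}$ only acts after the near-boundary commutator part.
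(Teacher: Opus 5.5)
Your plan is correct and gives exactly the stated bound, but it takes a different route from the paper.

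\textbf{The paper's route.} The paper never lets the backward evolution appear as a bare operator. It sets $\mcl{N}(\tau)=e^{-\mcl{L}\tau}e^{\mcl{L}_{AB}\tau}e^{-\mcl{L}_B\tau}e^{\mcl{L}_{BC}\tau}$ and derives $\mcl{N}(\tau)=1+\int_0^\tau\mcl{K}(\tau')\mcl{N}(\tau')\,d\tau'$, with $\mcl{K}(\tau')=e^{-\tau'\ad_{\mcl{L}}}\bigl(e^{\tau'\ad_{\mcl{L}_{AB}}}e^{-\tau'\ad_{\mcl{L}_B}}-1\bigr)\mcl{L}_{B:C}$. It then bounds $\norm{\mcl{K}}_\Diamond$ through a triple nested-commutator series (multinomial factor $3^q$, ratio $6\xi g\tau'\le e^{-1}$) and sums the Dyson series for $\mcl{N}$. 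Contractivity enters only through $\norm{e^{\mcl{L}\tau}}_\Diamond\le 1$.

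\textbf{Your route.} You keep $e^{-\mcl{L}_Bs}$ bare and pay $e^{Rgs}\le e^{R/(6e\xi)}$. You can afford this because your bracket is a single conjugation series with ratio $2\xi g s\le 1/(3e)$. Using $\norm{\mcl{L}_{B:C}}_{\mr{Pauli}}\le 2\xi g$ and Lemma~\ref{Lem_Patch:Commutator_bound} on each of your two series, the constants close:
\begin{equation*}
\norm{e^{\mcl{L}\tau}-\mcl{W}(\tau)}_\Diamond\le \tau\,4\xi g\,\frac{3e}{3e-1}\,(3e)^{-\lceil R/\xi-1\rceil}e^{\frac{R}{6e\xi}}\le\frac{6e}{3e-1}\,e^{-\left(\ln(3e)-\frac{1}{6e}\right)\frac{R}{\xi}}\approx 2.28\,e^{-2.04R/\xi}.
\end{equation*}
This is below $e^{-R/\xi}$ for all $R>\xi$, since the ratio is at most $2.28\,e^{-1.037}\approx 0.81$. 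So no fallback is needed.

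\textbf{What each buys.} Your version is shorter: one commutator series and no Dyson expansion of $\mcl{N}$. However, it depends essentially on the backward region having only $|B|=R$ sites. The paper's interaction-picture formulation never pays for the norm of the backward region, whatever its size. That is what lets it reuse ``the same calculation'' for Corollaries~\ref{Cor_Patch:HHKL_Lindbladian} and \ref{Cor_Sparse:Patching_strategy} and for the $d$-dimensional lemma. There the backward layer $\prod_\alpha e^{-\mcl{L}_{A_{2\alpha}}\tau}$, or a slab $B$ of $\Theta(RL^{d-1})$ sites, grows with $N$. Your crude factor would then become, e.g., $e^{\Theta(Ng\tau)}$, which $e^{-R/\xi}$ with $R\in\Theta(\log N)$ cannot absorb. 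Your stated fallback of conjugating the inverse evolution away is essentially the paper's proof.
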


This theorem is an extension of the so-called patching lemma to Lindbladian dynamics.
In Hamiltonian dynamics governed by a lattice Hamiltonian $H$ with finite-range interactions, the patching lemma yields
\begin{equation}\label{Eq_Patch:Patching_Lemma_Hamiltonian}
    \norm{e^{-iH\tau}-e^{-iH_{AB}\tau} e^{iH_B\tau} e^{-iH_{BC}\tau}} \leq c e^{-\frac{R}\xi},
\end{equation}
under $\tau \in \order{(\xi g)^{-1}}$, where $c > 0$ denotes a constant \cite{Osborne-2006-patching,michalakis-2012-patching,Haah2021-hhkl,Babbush-2018-encoding}.
The operators $H_{AB}$, $H_B$, and $H_{BC}$ are subsystem Hamiltonians defined in a similar manner to Eq. (\ref{Eq_Patch:Subsys_Lindbladian}).
It relies on the Lieb-Robinson bound for Hamiltonian dynamics \cite{Lieb1972-uo}.
The error bound in the existing extension to generic non-unitary time evolutions contains a factor exponentially large in the norm of the generator (i.e., it is exponentially large in the system size $N$) \cite{Kuwahara-2021-patching}.
We derive the patching lemma for Lindbladian dynamics, which is free from exponentially-large factors, by explicitly using the norm restriction $\norm{e^{\mcl{L}\tau}}_\Diamond \leq 1$ ($\tau \geq 0$).
Our derivation relies on the following lemma, which comes from the locality of Lindbladians.

\begin{lemma}\label{Lem_Patch:Commutator_bound}
\textbf{(Bound on nested commutators)}

Let $\mcl{A}_0,\mcl{A}_1,\cdots,\mcl{A}_q$ be HP maps whose locality and extensiveness are respectively $(k_0,g_0),(k_1,g_1), \cdots, (k_q,g_q)$.
The nested commutator $\prod_{q'=1}^q (\ad_{\mcl{A}_{q'}}) \mcl{A}_0$ is HP and at most $(\sum_{q'=0}^q k_{q'})$-local.
Its Pauli norm is bounded by
\begin{eqnarray}
    && \norm{\left( \prod_{q'=1}^q \ad_{\mcl{A}_{q'}}\right) \mcl{A}_0}_\mr{Pauli} \nonumber \\
    && \qquad \leq \norm{\mcl{A}_0}_\mr{Pauli} \prod_{q'=1}^q \left[ \left( \sum_{q'' =0}^{q'-1} k_{q''} \right) 2g_{q'} \right].
\end{eqnarray}
In particular, if $\mcl{A}_0,\mcl{A}_1,\cdots,\mcl{A}_q$ share the same locality $k$ and extensiveness $g$, the nested commutator $\prod_{q'=1}^q (\ad_{\mcl{A}_{q'}}) \mcl{A}_0$ is at most $(q+1)k$-local and has the Pauli norm bounded by $q! (2kg)^q g \, |\supp(\mcl{A}_0)|$.
\end{lemma}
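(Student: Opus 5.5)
We argue by induction on $q$, working with the local decompositions $\mcl{A}_{q'} = \sum_{X} \hat{\mcl{A}}_{q',X}$. The case $q=0$ is trivial. Write $\mcl{B}_{q'} = (\prod_{q''=1}^{q'} \ad_{\mcl{A}_{q''}}) \mcl{A}_0$, so that $\mcl{B}_{q'} = \ad_{\mcl{A}_{q'}} \mcl{B}_{q'-1}$. We equip $\mcl{B}_{q'}$ with the decomposition inherited from the expansion of the commutator:
\begin{equation*}
\mcl{B}_{q'} = \sum_{Y} \sum_{X: X\cap Y \neq \emptyset} [\hat{\mcl{A}}_{q',X}, \hat{\mcl{B}}_{q'-1,Y}],
\end{equation*}
where we assign the term $[\hat{\mcl{A}}_{q',X}, \hat{\mcl{B}}_{q'-1,Y}]$ to the domain $X\cup Y$. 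The key observation is that only pairs with $X\cap Y\neq\emptyset$ survive. Indeed, if $X\cap Y=\emptyset$, each term in $\hat{\mcl{A}}_{q',X}$ acts as $\rho\mapsto P_\mu\rho P_\nu$ and each term in $\hat{\mcl{B}}_{q'-1,Y}$ acts as $\rho\mapsto P_{\mu'}\rho P_{\nu'}$, with disjoint supports. Pauli operators with disjoint supports commute, so the two compositions $P_\mu P_{\mu'}\rho P_{\nu'}P_\nu$ and $P_{\mu'}P_\mu\rho P_\nu P_{\nu'}$ coincide and the commutator vanishes.

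From this decomposition, locality follows directly: $|X\cup Y|\le k_{q'} + \sum_{q''<q'} k_{q''}$. HP is preserved because compositions of HP maps are HP and real linear combinations of HP maps are HP, so the commutator $\mcl{A}\mcl{B}-\mcl{B}\mcl{A}$ is HP. (A small point: for the norm estimate we only need that the Pauli norm is submultiplicative under composition, $\norm{\mcl{A}\mcl{B}}_\mr{Pauli} \le \norm{\mcl{A}}_\mr{Pauli}\norm{\mcl{B}}_\mr{Pauli}$. This follows because composing $P_\mu(\cdot)P_\nu$ with $P_{\mu'}(\cdot)P_{\nu'}$ gives a phase times a single Pauli-sandwich map.)

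For the norm, we bound
\begin{equation*}
\norm{\mcl{B}_{q'}}_\mr{Pauli} \le \sum_Y \sum_{X\cap Y\neq\emptyset} 2\norm{\hat{\mcl{A}}_{q',X}}_\mr{Pauli}\norm{\hat{\mcl{B}}_{q'-1,Y}}_\mr{Pauli} \le \sum_Y 2\,|Y|\, g_{q'}\, \norm{\hat{\mcl{B}}_{q'-1,Y}}_\mr{Pauli}.
\end{equation*}
Here we sum over the sites $i\in Y$ and over $X\ni i$, using the definition of $g_{q'}$. Note that this requires $i\in\supp(\mcl{A}_{q'})$; if $i$ is not in the support, no such $X$ exists and the inner sum is zero. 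Then $|Y|\le \sum_{q''=0}^{q'-1}k_{q''}$ by the inductive locality bound, which gives
\begin{equation*}
\norm{\mcl{B}_{q'}}_\mr{Pauli} \le \Big(\sum_{q''<q'}k_{q''}\Big)\, 2g_{q'}\, \sum_Y\norm{\hat{\mcl{B}}_{q'-1,Y}}_\mr{Pauli}.
\end{equation*}

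The main obstacle is that the induction must run on $\sum_Y \norm{\hat{\mcl{B}}_Y}_\mr{Pauli}$, not on $\norm{\mcl{B}}_\mr{Pauli}$ itself, since the latter can be smaller due to cancellations. So we strengthen the inductive hypothesis to bound $\sum_Y\norm{\hat{\mcl{B}}_{q',Y}}_\mr{Pauli}$; the estimate above already does exactly this. The base case then requires $\sum_X\norm{\hat{\mcl{A}}_{0,X}}_\mr{Pauli}$ in place of $\norm{\mcl{A}_0}_\mr{Pauli}$. This matches the stated bound provided $\norm{\mcl{A}_0}_\mr{Pauli}$ is read with respect to the specified local decomposition, consistent with the paper's convention. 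Alternatively, we can use the bound $\sum_X\norm{\hat{\mcl{A}}_{0,X}}_\mr{Pauli}\le |\supp(\mcl{A}_0)|\,g_0$ from Eq.~(\ref{Eq_Setup:Pauli_norm_bound}), which is what the special case uses.

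For the special case, setting $k_{q''}=k$ gives $\prod_{q'=1}^q (q'k)(2g)=q!(2kg)^q$, multiplied by $g\,|\supp(\mcl{A}_0)|$, and the locality is $(q+1)k$.
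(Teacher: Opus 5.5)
Your proof is correct and is essentially the argument the paper invokes: the standard local-Hamiltonian induction of \cite{Kuwahara2016-yn} carried over to the Pauli norm. That argument likewise keeps only overlapping local terms, sums over the sites of the current support using $g_{q'}$, and inducts on the $\ell^1$-sum of local Pauli norms. Your base-case caveat can be dropped: decompose $\mcl{A}_0$ into its individual Pauli terms $\gamma_{\mu\nu}P_\mu(\cdot)P_\nu$, each assigned to $\supp(P_\mu)\cup\supp(P_\nu)$, which is still $k_0$-local and whose $\ell^1$-sum is exactly $\norm{\mcl{A}_0}_\mr{Pauli}$; since your induction uses only the locality (not the extensiveness) of $\mcl{A}_0$'s decomposition, this gives the bound literally as stated.
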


The proof of Lemma \ref{Lem_Patch:Commutator_bound} follows the same argument as the proof for local Hamiltonians \cite{Kuwahara2016-yn}, where the norm is replaced by the Pauli norm.
We proceed to the proof of Theorem \ref{Thm_Patch:Patching_lemma} as follows.

\textbf{Proof of Theorem \ref{Thm_Patch:Patching_lemma}.---}
We define a map $\mcl{N}(\tau)$ by
\begin{equation}\label{Eq_Patch:N_tau}
    \mcl{N}(\tau) = e^{-\mcl{L}\tau} e^{\mcl{L}_{AB}\tau} e^{-\mcl{L}_{B}\tau} e^{\mcl{L}_{BC}\tau}.
\end{equation}
It gives the error bound by
\begin{eqnarray}
    \norm{e^{\mcl{L}\tau}-e^{\mcl{L}_{AB}\tau} e^{-\mcl{L}_{B}\tau} e^{\mcl{L}_{BC}\tau}}_\Diamond &=& \norm{e^{\mcl{L}\tau}(1-\mcl{N}(\tau))}_\Diamond \nonumber \\
    &\leq& \norm{\mcl{N}(\tau)-1}_\Diamond. \label{Eq_Patch:N_tau_1_bound}
\end{eqnarray}
We calculate the map $\mcl{N}(\tau)$ as follows,
\begin{eqnarray}
    && \mcl{N}(\tau) \nonumber \\
    && \quad =\mcl{N}(0)+\int_0^\tau \dd \tau' \dv{\tau'} \mcl{N}(\tau') \nonumber \\
    && \quad = 1 - \int_0^\tau \dd \tau' e^{-\mcl{L}\tau'} (\mcl{L}-\mcl{L}_{AB}) e^{\mcl{L}_{AB}\tau'} e^{-\mcl{L}_{B}\tau'} e^{\mcl{L}_{BC}\tau'} \nonumber \\
    && \qquad + \int_0^\tau \dd \tau' e^{-\mcl{L}\tau'}  e^{\mcl{L}_{AB}\tau'} e^{-\mcl{L}_{B}\tau'} (\mcl{L}_{BC}-\mcl{L}_B) e^{\mcl{L}_{BC}\tau'}. \nonumber \\
    && \label{Eq_Patch:N_tau_integral}
\end{eqnarray}
Let $\mcl{L}_{B:C}$ denote the inter-block interactions, defined by
\begin{equation}\label{Eq_Patch:Lindbladian_boundary}
    \mcl{L}_{B:C} = \sum_{\substack{X \subset \Lambda: \\ X \cap B \neq \emptyset, X \cap C \neq \emptyset}} \hat{\mcl{L}}_X,
\end{equation}
when $\mcl{L}$ is expressed by Eq. (\ref{Eq_Basic:L_x}).
Since the size $|B| = R$ is larger than the range $\xi$, it satisfies
\begin{eqnarray}
    \mcl{L}_{B:C}  &=& \mcl{L}-\mcl{L}_{AB} - \mcl{L}_C \label{Eq_Patch:AB_boundary_1} \\
    &=& \mcl{L}_{BC} - \mcl{L}_B - \mcl{L}_C. \label{Eq_Patch:AB_boundary_2}
\end{eqnarray}
We substitute the above relations into Eq. (\ref{Eq_Patch:N_tau_integral}).
Considering that $\mcl{L}_C$ commutes with $e^{\mcl{L}_{AB}\tau'}$ and $e^{-\mcl{L}_B\tau'}$, we obtain
\begin{eqnarray}
    \mcl{N}(\tau) &=& 1 + \int_0^\tau \dd \tau' e^{-\mcl{L}\tau'} \left[ e^{\mcl{L}_{AB}\tau'} e^{-\mcl{L}_B\tau'}, \mcl{L}_{B:C}\right] e^{\mcl{L}_{BC}\tau'} \nonumber \\
    &=& 1 +  \int_0^\tau \dd \tau' \mcl{K}(\tau') \mcl{N}(\tau'), \label{Eq_Patch:N_tau_integral_eq}
\end{eqnarray}
where we define the map $\mcl{K}(\tau')$ by
\begin{equation}\label{Eq_Patch:K_s}
    \mcl{K}(\tau') = e^{-\tau'\ad_{\mcl{L}}} \left( e^{\tau' \ad_{\mcl{L}_{AB}}} e^{-\tau'\ad_{\mcl{L}_B}}-1 \right) \mcl{L}_{B:C}.
\end{equation}
Since the map $\mcl{K}(\tau')$ is bounded, we can solve the integral equation Eq. (\ref{Eq_Patch:N_tau_integral_eq}) and arrive at the expression,
\begin{equation}
    \mcl{N}(\tau) = \sum_{n=0}^\infty \int_0^\tau \dd \tau_n \cdots \int_0^{\tau_2} \dd \tau_1 \mcl{K}(\tau_n) \cdots \mcl{K}(\tau_1).
\end{equation}
We obtain
\begin{eqnarray}
    \norm{\mcl{N}(\tau)-1}_\Diamond &\leq& \sum_{n=1}^\infty \int_0^\tau \dd \tau_n \cdots \int_0^{\tau_2} \dd \tau_1 \prod_{n'=1}^n \norm{\mcl{K}(\tau_{n'})} \nonumber \\
    &\leq& \sum_{n=1}^\infty \frac{1}{n!} \left( \tau \, \sup_{\tau' \in [0,\tau]} (\norm{\mcl{K}(\tau')}_\Diamond)\right)^n, \label{Eq_Patch:Ns_1_bound}
\end{eqnarray}
which gives the upper bound on Eq. (\ref{Eq_Patch:N_tau_1_bound}).

We next evaluate the upper bound on $\norm{\mcl{K}(\tau')}_\Diamond$ based on the locality and the extensiveness.
Considering the series expansion of Eq. (\ref{Eq_Patch:K_s}), the map $\mcl{K}(\tau')$ is expressed by
\begin{eqnarray}
    && \mcl{K}(\tau') \nonumber \\
    && \, = \sum_{l=0}^\infty \sum_{\substack{m,n \geq 0: \\ 1 \leq m+n}} \frac{(-\tau' \ad_{\mcl{L}})^l(\tau' \ad_{\mcl{L}_{AB}})^{m} (-\tau'\ad_{\mcl{L}_B})^{n}}{l!m!n!}  \mcl{L}_{B:C}. \nonumber \\
    &&
\end{eqnarray}
Then, we use the fact that the interactions and the dissipation are finite-ranged.
Equations (\ref{Eq_Patch:Subsys_Lindbladian}) and (\ref{Eq_Patch:Lindbladian_boundary}) give the nested commutator in the above equation as follows,
\begin{eqnarray}
    &&  (\ad_{\mcl{L}_{AB}})^{m} (\ad_{\mcl{L}_B})^{n} \mcl{L}_{B:C} \nonumber \\
    && = \sum_{\substack{Z \subset \Lambda: \\ Z \cap B \neq \emptyset, \\ Z \cap C \neq \emptyset}}\sum_{\substack{X_1,\cdots,X_m\\ \subset (A \cup B)}} \sum_{\substack{Y_1,\cdots,Y_n \\ \subset B}} \prod_{m'=1}^m \ad_{\hat{\mcl{L}}_{X_{m'}}} \prod_{n'=1}^n \ad_{\hat{\mcl{L}}_{Y_{n'}}} \hat{\mcl{L}}_Z. \nonumber \\
    && \label{Eq_Patch:Commutator_expansion}
\end{eqnarray}
When the commutator with $\hat{\mcl{L}}_{X_{m'}}$ such that $X_{m'} \cap A \neq \emptyset$ can give nontrivial contributions, the domain $X_{m'-1} \cup \cdots \cup X_1 \cup Y_n \cup \cdots \cup Y_1 \cup Z$ should be connected and include a site within the distance $\xi$ from the boundary of $A$ and $B$.
In other words, each of $\hat{\mcl{L}}_{X_{m'}}$ such that $X_{m'} \cap A \neq \emptyset$ gives no contribution if we have
\begin{equation}
    r(X_{m} \cup \cdots \cup X_1 \cup Y_n \cup \cdots \cup Y_1 \cup Z) < \mr{dist}(A,C).
\end{equation}
Since we have $r(X_{m} \cup \cdots \cup X_1 \cup Y_n \cup \cdots \cup Y_1 \cup Z) \leq (m+n+1)\xi$ and $\mr{dist}(A,C) \geq R$, this implies the relation,
\begin{equation}\label{Eq_Patch:commutator_boundary}
    (\ad_{\mcl{L}_{AB}})^{m} (\ad_{\mcl{L}_B})^{n} \mcl{L}_{B:C} = (\ad_{\mcl{L}_B})^{m+n} \mcl{L}_{B:C}
\end{equation}
under $m+n < R/\xi - 1$.
As a result, the terms in Eq. (\ref{Eq_Patch:Commutator_expansion}) with $m+n < R/\xi - 1$ vanish as
\begin{equation}\label{Eq_Patch:commutator_delete}
    \sum_{\substack{m,n \geq 0: \\ 1 \leq m+n < R/\xi-1}} \frac{(\ad_{\mcl{L}_{AB}})^{m} (-\ad_{\mcl{L}_B})^{n}}{m!n!}  \mcl{L}_{B:C} = 0.
\end{equation}
The norm of the map $\mcl{K}(\tau')$ is bounded by
\begin{eqnarray}
    \norm{\mcl{K}(\tau')}_\Diamond &\leq& \sum_{l=0}^\infty \sum_{\substack{m,n \geq 0: \\ \lceil \frac{R}\xi-1 \rceil \leq m+n}} \frac{(\tau')^{l+m+n}}{l!m!n!} \nonumber \\
    && \qquad \times \norm{(\ad_{\mcl{L}})^l(\ad_{\mcl{L}_{AB}})^m (\ad_{\mcl{L}_B})^n \mcl{L}_{B:C}}_\Diamond \nonumber \\
    &\leq& \sum_{l=0}^\infty \sum_{\substack{m,n \geq 0: \\ \lceil \frac{R}\xi-1 \rceil \leq m+n}} \frac{(\tau')^{l+m+n}}{l!m!n!} \nonumber \\
    && \qquad \times (l+m+n)! (2kg)^{l+m+n}\norm{\mcl{L}_{B:C}}_\mr{Pauli} \nonumber \\
    &\leq& \sum_{q=\lceil \frac{R}\xi -1 \rceil}^\infty \sum_{\substack{l,m,n \geq 0: \\ l+m+n = q}} \frac{q!}{l!m!n!} \nonumber \\
    && \qquad \qquad \times (2kg\tau')^q g \, |\supp (\mcl{L}_{B:C}) |  \nonumber \\
    &\leq& 2 \xi g \sum_{q=\lceil \frac{R}\xi -1 \rceil}^\infty (6\xi g\tau')^q \nonumber \\
    &\leq& \frac{2e^2}{e-1} e^{-\frac{R}\xi} \xi g . \label{Eq_Patch:K_s_bound}
\end{eqnarray}
In the second inequality, we use the fact that the diamond norm is smaller than the Pauli norm as Eq. (\ref{Eq_Setup:Norm_relation}), and apply Lemma \ref{Lem_Patch:Commutator_bound}, which comes from the locality and the extensiveness.
We use Eqs. (\ref{Eq_Setup:Pauli_norm_bound}) and (\ref{Eq_Setup:locality_range_relation}) respectively for the third and fourth inequalities.
We also use $\sum_{l,m,n \geq 0: l+m+n=q} q!/(l!m!n!) = 3^q$.
The last inequality relies on the assumption Eq. (\ref{Eq_Patch:time_assumption}), which is applicable for $\tau' \in [0,\tau]$.

Using the above upper bound and the assumption Eq. (\ref{Eq_Patch:time_assumption}), the quantity appearing in Eq. (\ref{Eq_Patch:Ns_1_bound}) is bounded by
\begin{equation}
    \tau \, \sup_{\tau' \in [0,\tau]} (\norm{\mcl{K}(\tau')}_\Diamond) \leq \frac{e}{3(e-1)} e^{-\frac{R}\xi} \leq 1.
\end{equation}
This immediately results in the relation,
\begin{equation}\label{Eq_Patch:N_tau_1_bound_result}
     \norm{\mcl{N}(\tau)-1}_\Diamond \leq \sum_{n=1}^\infty \frac1{n!} \frac{e}{3(e-1)} e^{-\frac{R}\xi} \leq e^{-\frac{R}\xi}.
\end{equation}
Since it gives the error bound as we discussed in Eq. (\ref{Eq_Patch:N_tau_1_bound}), we complete the proof of Eq. (\ref{Eq_Patch:error}). $\quad \square$

Theorem \ref{Thm_Patch:Patching_lemma} implies that the time-evolution operator $e^{\mcl{L}\tau}$ can be decomposed based on the locality and the finite range of the interactions like that for Hamiltonian dynamics \cite{Osborne-2006-patching}, as shown in Fig. \ref{Fig_patching} (a).
In the HHKL algorithm \cite{Haah2021-hhkl}, the patching lemma is repeated and the time evolution $e^{-iH\tau}$ is decomposed like Fig. \ref{Fig_patching} (b).
We show that the time evolution $e^{\mcl{L}\tau}$ can be decomposed into those for small blocks in the same way as follows.

\begin{corollary}\label{Cor_Patch:HHKL_Lindbladian}
\textbf{}

Let $N/R$ be an even integer for simplicity.
We assume $R \geq \xi \max (1,\log N)$.
We set a domain $A_\alpha$ by $A_\alpha = \{ (\alpha-1)R+1, (\alpha-1)R+2, \cdots, \alpha R\}$, and the HHKL decomposition of $e^{\mcl{L}\tau}$ by
\begin{eqnarray}
    && \mcl{U}_\mr{HHKL}(\tau) \nonumber \\
    && \quad = \prod_{\alpha=1}^{\frac{N}{2R}-1} e^{\mcl{L}_{A_{2\alpha}A_{2\alpha+1}}\tau} \prod_{\alpha=2}^{\frac{N}{R}-1} e^{-\mcl{L}_{A_\alpha}\tau} \prod_{\alpha=1}^{\frac{N}{2R}} e^{\mcl{L}_{A_{2\alpha-1}A_{2\alpha}}\tau}, \nonumber \\
    && \label{Eq_Patch:HHKL_Lindblad}
\end{eqnarray}
like Fig. \ref{Fig_patching} (b).
When the time $\tau$ is small enough to satisfy $\tau \in \order{(\xi g)^{-1}} = \order{1}$, it approximates the Lindbladian dynamics as 
\begin{equation}
    \norm{e^{\mcl{L}\tau}-\mcl{U_\mr{HHKL}(\tau)}}_\Diamond \leq N e^{-\frac{R}\xi}.
\end{equation}
\end{corollary}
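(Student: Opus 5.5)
I will prove the corollary by applying Theorem \ref{Thm_Patch:Patching_lemma} repeatedly, peeling off one block boundary at a time. A telescoping argument will accumulate the errors. The key tool that keeps the errors additive is contractivity: every factor placed in front of a patching error must have diamond norm at most one.

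The only ingredient besides Theorem \ref{Thm_Patch:Patching_lemma} is the following pair of facts. For any subsystem Lindbladian, $e^{\mcl{L}_{Y}\tau}$ is CPTP, so $\norm{e^{\mcl{L}_{Y}\tau}}_\Diamond \leq 1$. Moreover, maps on disjoint supports commute, so $e^{\mcl{L}_{Y\cup Y'}\tau} = e^{\mcl{L}_{Y}\tau}e^{\mcl{L}_{Y'}\tau}$ whenever no term of $\mcl{L}$ couples $Y$ and $Y'$. This holds when $\mr{dist}(Y,Y')\geq\xi$, and in the product formula it holds only after the relevant inverse maps have been cancelled.

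I will proceed from left to right. Write $S_j = A_1\cup\cdots\cup A_j$.

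\emph{Step 1.} Apply Theorem \ref{Thm_Patch:Patching_lemma} to the full system with $A=S_{2}\setminus A_2$, $B=A_2$, $C=\Lambda\setminus S_2$. Since $R>\xi$ and $\tau$ is below the threshold (\ref{Eq_Patch:time_assumption}), this gives
\begin{equation}
e^{\mcl{L}\tau}\approx e^{\mcl{L}_{A_1A_2}\tau}e^{-\mcl{L}_{A_2}\tau}e^{\mcl{L}_{A_2\cdots A_{N/R}}\tau},
\end{equation}
with error $e^{-R/\xi}$.

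\emph{Step 2.} Apply the theorem again to the remaining rightmost factor $e^{\mcl{L}_{A_2\cdots A_{N/R}}\tau}$. Theorem \ref{Thm_Patch:Patching_lemma} holds verbatim for any subsystem Lindbladian, since it is itself a local Lindbladian with the same $\xi,g$. This time take $A=A_2$, $B=A_3$ and $C=A_4\cup\cdots$. Iterating in this way yields a staircase product
\begin{equation}
\prod e^{\mcl{L}_{A_jA_{j+1}}\tau}e^{-\mcl{L}_{A_{j+1}}\tau}\cdots
\end{equation}
after $N/R-2$ patchings.

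\emph{Bounding each error.} At each stage the new error is multiplied on the left by a product of the form $e^{\mcl{L}_{A_1A_2}\tau}e^{-\mcl{L}_{A_2}\tau}\cdots$. This prefactor contains inverse maps, so it is not a priori contractive. This is the main obstacle. To handle it, I would instead bound the difference between consecutive staircases in the form
\begin{equation}
\norm{e^{\mcl{L}_{S_{j}}\tau}\,\mcl{E}_j}_\Diamond \leq \norm{\mcl{E}_j}_\Diamond.
\end{equation}
Here the idea is to reorganise the telescoping so that the left factor is always a genuine forward evolution. Concretely, I would reverse the order of patching and rewrite the partial staircase as $e^{\mcl{L}_{S_j}\tau}$ times a remainder, using the patching identity inductively from the right.

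\emph{Fallback.} If this reorganisation fails, I would return to the proof of Theorem \ref{Thm_Patch:Patching_lemma}. There the error was bounded via $\norm{e^{\mcl{L}\tau}(1-\mcl{N})}_\Diamond\leq\norm{\mcl{N}-1}_\Diamond$. In the staircase setting, the prefix can be written as $e^{\mcl{L}_{S_j}\tau}$ times an $\mcl{N}$-type correction, which is close to the identity. Its norm would then be at most $1+O(je^{-R/\xi})$, and $R\geq\xi\log N$ keeps this $O(1)$. Summing the $O(N/R)\leq N$ individual errors then gives the bound $Ne^{-R/\xi}$.

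\emph{Rearranging into the HHKL form.} Finally, I will convert the staircase into the brick form (\ref{Eq_Patch:HHKL_Lindblad}). Factors supported on blocks at distance at least $R>\xi$ commute with each other. Also, $e^{\mcl{L}_{A_{2\alpha}A_{2\alpha+1}}\tau}$ is supported on $A_{2\alpha}\cup A_{2\alpha+1}$, so it can be moved past factors on non-adjacent blocks. This yields exactly $\mcl{U}_\mr{HHKL}(\tau)$ with no additional error.
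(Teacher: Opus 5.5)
Your proposal has two genuine gaps.

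\textbf{The final rearrangement is false.} You iterate Theorem~\ref{Thm_Patch:Patching_lemma} always in the orientation $e^{\mcl{L}_{AB}\tau}e^{-\mcl{L}_{B}\tau}e^{\mcl{L}_{BC}\tau}$ (your Step 2 takes $A=A_2$, $B=A_3$). This yields the staircase $e^{\mcl{L}_{A_1A_2}\tau}e^{-\mcl{L}_{A_2}\tau}e^{\mcl{L}_{A_2A_3}\tau}e^{-\mcl{L}_{A_3}\tau}\cdots$. There $e^{\mcl{L}_{A_1A_2}\tau}$ acts \emph{after} $e^{-\mcl{L}_{A_2}\tau}$, whereas in $\mcl{U}_\mr{HHKL}(\tau)$ it lies in the rightmost layer and acts first. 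These factors overlap on $A_2$, so commuting disjointly supported factors cannot turn one product into the other. Already for $N/R=4$ you would need
\begin{equation*}
e^{\mcl{L}_{A_1A_2}\tau}e^{-\mcl{L}_{A_2}\tau}e^{\mcl{L}_{A_2A_3}\tau}e^{-\mcl{L}_{A_3}\tau}e^{\mcl{L}_{A_3A_4}\tau}=e^{\mcl{L}_{A_2A_3}\tau}e^{-\mcl{L}_{A_2}\tau}e^{-\mcl{L}_{A_3}\tau}e^{\mcl{L}_{A_1A_2}\tau}e^{\mcl{L}_{A_3A_4}\tau},
\end{equation*}
which fails in general. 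Both sides are close to $e^{\mcl{L}\tau}$, which is what you are trying to prove, but they are not equal. To reach the brick you must alternate, using the mirrored form $e^{\mcl{L}_{BC}\tau}e^{-\mcl{L}_{B}\tau}e^{\mcl{L}_{AB}\tau}$ at every other step. The remaining forward evolution then sits alternately at the left and at the right end of the product, so an accounting that only controls left prefixes no longer suffices.

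\textbf{The contractivity obstacle is never resolved.} You identify it correctly, but ``reorganise the telescoping'' is not specified, and the concrete claim of your fallback is false. The prefix $X_j=e^{\mcl{L}_{A_1A_2}\tau}e^{-\mcl{L}_{A_2}\tau}\cdots e^{-\mcl{L}_{A_{j+1}}\tau}$ ends in an uncompensated backward evolution. Already $e^{-\mcl{L}_{A_1}\tau}X_1=1+\tau\mcl{L}_{A_1:A_2}+\order{\tau^2}$ differs from the identity at order $\tau$, not at order $e^{-R/\xi}$. Moreover, $X_1$ is trace preserving but generally not positive, so $\norm{X_1}_\Diamond$ exceeds $1$ by a boundary effect that does not shrink with $R$.

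\textbf{The missing idea} is never to detach the prefix from the forward remainder. The proof of Theorem~\ref{Thm_Patch:Patching_lemma} actually shows $\norm{e^{-\mcl{L}\tau}e^{\mcl{L}_{AB}\tau}e^{-\mcl{L}_{B}\tau}e^{\mcl{L}_{BC}\tau}-1}_\Diamond\le e^{-R/\xi}$ without using contractivity. The same argument bounds $\norm{e^{\mcl{L}_{AB}\tau}e^{-\mcl{L}_{B}\tau}e^{\mcl{L}_{BC}\tau}e^{-\mcl{L}\tau}-1}_\Diamond$. With these, the iteration can be repaired:
\begin{enumerate}
\item Each patching replaces the current remainder $e^{\mcl{L}_Y\tau}$ by $e^{\mcl{L}_Y\tau}\mcl{N}$ or $\mcl{N}'e^{\mcl{L}_Y\tau}$, where $\mcl{N},\mcl{N}'$ are such near-identity maps.
\item The new remainder is disjoint from every factor on its other side, so it can be commuted to the outer end of the product.
\item Inductively, $\mcl{U}_\mr{HHKL}(\tau)=\mcl{M}\,e^{\mcl{L}\tau}\mcl{M}'$, where $\mcl{M},\mcl{M}'$ together contain $N/R-2$ such factors.
\item This gives the error $(1+e^{-R/\xi})^{N/R}-1\le Ne^{-R/\xi}$ for $R\ge\xi\log N$.
\end{enumerate}

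\textbf{The paper's route} avoids iteration altogether. It proves the more general Corollary~\ref{Cor_Sparse:Patching_strategy}, of which this statement is a special case, directly. It runs the integral-equation argument of Theorem~\ref{Thm_Patch:Patching_lemma} on the whole three-layer product:
\begin{equation*}
e^{-\mcl{L}\tau}\mcl{U}_\mr{Patch}(\tau)=1+\int_0^\tau\mcl{K}_\mr{Patch}(\tau')\,e^{-\mcl{L}\tau'}\mcl{U}_\mr{Patch}(\tau')\,d\tau'.
\end{equation*}
Here $\mcl{K}_\mr{Patch}$ is a sum of at most $N/4$ boundary contributions $\mcl{J}_\alpha$, whose low-order nested commutators cancel by locality. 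Only $\norm{e^{\mcl{L}\tau}}_\Diamond\le 1$ is ever needed.
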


This corollary is derived by the same calculation as the proof of Theorem \ref{Thm_Patch:Patching_lemma}, rather than following directly from Theorem \ref{Thm_Patch:Patching_lemma}, and hence we provide its proof in Appendix \ref{SubsecA:Patching_corollary}.
In spite of the validity of the HHKL decomposition $\mcl{U}_\mr{HHKL}(\tau)$, it does not give any efficient algorithm for Lindbladian dynamics.
This comes from the backward evolutions $e^{-\mcl{L}_{A_\alpha}\tau}$, which break the CP property.
Non-CP maps cannot be simulated directly by quantum channels, and we cannot implement the map, Eq. (\ref{Eq_Patch:HHKL_Lindblad}), by any quantum circuit.
On the other hand, observables of their outputs can be reproduced by quasi-probabilistic sampling, composed of sampling of quantum circuits and classical post-processing of measurement outcomes.
However, even if we use the quasi-probabilistic sampling, the HHKL algorithm suffers from the exponential sampling overhead.
The sampling overhead for reproducing the non-CP map $e^{-\mcl{L}_{A_\alpha}\tau}$ generally amounts to $e^{\order{\norm{\mcl{L}_{A_\alpha}}_\text{Pauli} \tau}}$ (See also Appendix \ref{SecA:Basic}).
When we wish to reproduce $e^{\mcl{L}t}$ by $(t/\tau)$-times application of the HHKL decomposition $\mcl{U}_\mr{HHKL}(\tau)$, the sampling overhead in total is as large as
\begin{equation}\label{Eq_Patch:Overhead_HHKL}
    \left( \prod_{\alpha=2}^{\frac{N}R-1} e^{\order{\norm{\mcl{L}_{A_\alpha}}_\text{Pauli} \tau}} \right)^{t/\tau}  \subset e^{\order{Ngt}}. 
\end{equation}
We use the relation $\norm{\mcl{L}_{A_\alpha}}_\text{Pauli} \leq g \, |A_\alpha| \in \order{Rg}$ for the second line.
Thus, the overhead is exponentially large in spacetime when we execute quasi-probabilistic sampling independently for the backward time evolutions, and the HHKL algorithm is not available to Lindbladian dynamics.
Our algorithms in Sections \ref{Sec:Sparse_algorithm} and \ref{Sec:Generic_algorithm} completely avoid or suppress this problem by elaborating the patching and merging strategies.
\section{Algorithm for sparsely dissipative cases}\label{Sec:Sparse_algorithm}

In this section, we develop a near-optimal quantum algorithm for simulating a sparsely dissipative system defined by Definition \ref{Def_Setup:sparse_dissipation}.
Throughout this section, we assume $t \in \poly{N}$ and $1/\varepsilon \in \poly{N}$, which are natural for efficient computation.

\subsection{Patching strategy}

We introduce the patching strategy for the fast simulation.
In the HHKL algorithm, achieving the near-optimality for Hamiltonian dynamics, the time evolution $e^{-iH\tau}$ is decomposed into those for blocks having the sizes $R, 2R \in \Theta (\log N)$.
As we discussed at the end of Section \ref{Sec:Patching}, this fails for Lindbladian dynamics.
To overcome this difficulty, we introduce the decomposition with different block sizes like Fig. \ref{Fig_patch_sparse} (a) as follows.
We split the lattice $\Lambda$ into $N_\mr{p}$ domains $\{ A_\alpha\}$, where each domain $A_\alpha$ has the flexible size $R_\alpha$.
Each domain can be explicitly given by
\begin{equation}
    A_\alpha = \Set{ \sum_{\alpha' < \alpha} R_{\alpha'} + 1, \sum_{\alpha' < \alpha} R_{\alpha'} + 2, \cdots, \sum_{\alpha' \leq \alpha} R_{\alpha'} }
\end{equation}
for $\alpha=1,2,\cdots,N_\mr{p}$.
We assume that the block number $N_\mr{p}$ is an even integer without loss of generality below.
We organize the decomposed time-evolution operator $\mcl{U}_\mr{Patch}(\tau)$ by
\begin{eqnarray}
    \mcl{U}_\mr{Patch}(\tau) &=& \prod_{\alpha} e^{\mcl{L}_{A_{4\alpha-2}A_{4\alpha-1}A_{4\alpha}} \tau }\nonumber \\
    && \quad \times \prod_{\alpha} e^{-\mcl{L}_{A_{2\alpha}}\tau} \prod_{\alpha} e^{\mcl{L}_{A_{4\alpha}A_{4\alpha+1}A_{4\alpha+2}}\tau}, \nonumber \\
    && \label{Eq_Sparse:Patch_op}
\end{eqnarray}
where a domain $A_\alpha$ for $\alpha \notin \{1,2,\cdots,N_\mr{p} \}$ denotes the empty set $\emptyset$.
This decomposition is obtained by repeating the one in the patching lemma, Theorem \ref{Thm_Patch:Patching_lemma}:
First, we regard $A_1$, $A_2$, and $A_3 \cup \cdots A_{N_\mr{p}}$ respectively as $A$, $B$, and $C$ in Eq. (\ref{Eq_Patch:error}).
We next split the system $A_3 \cup \cdots A_{N_\mr{p}}$ into $A_5 \cup \cdots \cup A_{N_\mr{p}}$, $A_4$, and $A_3$.
Repeating this procedure results in Eq. (\ref{Eq_Sparse:Patch_op}) like the HHKL decomposition in Corollary \ref{Cor_Patch:HHKL_Lindbladian}.
Indeed, the decomposed time-evolution $\mcl{U}_\mr{Patch}(\tau)$ approximates $e^{\mcl{L}\tau}$ by the following corollary.

\begin{figure}
    \centering
    \includegraphics[width=\linewidth]{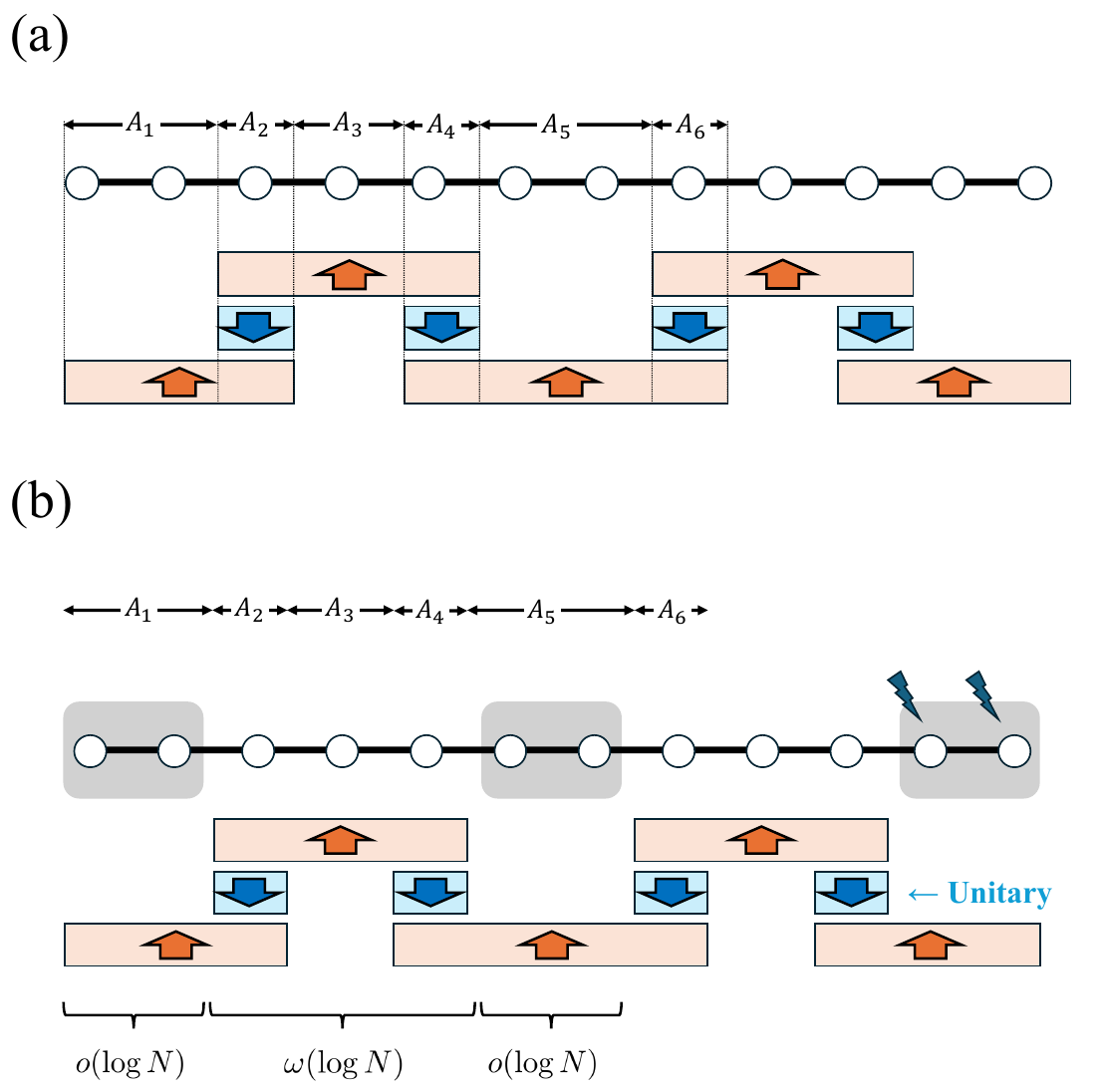}
    \caption{(a) The patching lemma with the flexible block sizes, given by Corollary \ref{Cor_Sparse:Patching_strategy}. (b) The patching strategy for the sparsely dissipative systems. We set the domains $A_2,A_4,\cdots$ so that they can avoid the sites subject to the dissipation.}
    \label{Fig_patch_sparse}
\end{figure}

\begin{corollary}\label{Cor_Sparse:Patching_strategy}
\textbf{}

We assume that the size of the even-indexed blocks $R_{2\alpha}$ is at least $R$, satisfying
\begin{equation}
    R \geq \xi \max (1, \log N).
\end{equation}
When the time $\tau \in \order{1}$ is small enough to satisfy Eq. (\ref{Eq_Patch:time_assumption}), the map $\mcl{U}_\mr{Patch}(\tau)$ defined by Eq. (\ref{Eq_Sparse:Patch_op}) approximates the exact time evolution with an error bounded by
\begin{equation}\label{Eq_Sparse:error_strategy}
    \norm{e^{\mcl{L}\tau} - \mcl{U}_\mr{Patch}(\tau)}_\Diamond \leq N e^{-\frac{R}\xi}.
\end{equation}

\end{corollary}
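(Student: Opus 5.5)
The plan is to apply Theorem \ref{Thm_Patch:Patching_lemma} repeatedly, peeling off one block at a time from the left, as sketched just before the statement. Write $S_j = A_{j} \cup A_{j+1} \cup \cdots \cup A_{N_\mr{p}}$.

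\emph{First step.} Take $A = A_1$, $B = A_2$ and $C = S_3$ in Theorem \ref{Thm_Patch:Patching_lemma}. Since $|A_2| = R_2 \geq R > \xi$ (we may assume $\log N > 1$), this gives
\begin{equation*}
e^{\mcl{L}\tau} \approx e^{\mcl{L}_{A_1A_2}\tau} e^{-\mcl{L}_{A_2}\tau} e^{\mcl{L}_{S_2}\tau}
\end{equation*}
with diamond error at most $e^{-R/\xi}$.

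\emph{Second step.} Apply the theorem to the Lindbladian $\mcl{L}_{S_2}$, restricted to the lattice $S_2$. Its range and extensiveness are bounded by $\xi$ and $g$, so the hypothesis on $\tau$ is unchanged. Now take $C = A_2 \cup A_3$, $B = A_4$, $A = S_5$. Since the patching lemma is symmetric under reflection, the roles of $A$ and $C$ may be swapped. This yields
\begin{equation*}
e^{\mcl{L}_{S_2}\tau} \approx e^{\mcl{L}_{S_4}\tau} e^{-\mcl{L}_{A_4}\tau} e^{\mcl{L}_{A_2A_3A_4}\tau}.
\end{equation*}
The factor $e^{\mcl{L}_{S_4}\tau}$ is then decomposed again with $B = A_6$, and so on.

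\emph{Reordering.} Iterating produces a product of forward evolutions on $A_{4\alpha-2}A_{4\alpha-1}A_{4\alpha}$ and $A_{4\alpha}A_{4\alpha+1}A_{4\alpha+2}$, together with backward evolutions $e^{-\mcl{L}_{A_{2\alpha}}\tau}$. These factors must be put into the layered order of Eq.~(\ref{Eq_Sparse:Patch_op}). Factors whose supports are disjoint commute, because subsystem Lindbladians on disjoint domains act on different tensor factors. The bookkeeping shows that in each substitution the newly produced factors lie to the left or right of blocks that are disjoint from them, so they can be moved into the three layers without changing the map. I would check this by induction on the number of steps, tracking which blocks the still-undecomposed factor $e^{\mcl{L}_{S_{2j}}\tau}$ touches.

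\emph{Error accumulation.} At each step, the error is contained in one factor sandwiched between maps. These maps are either CPTP forward evolutions, with diamond norm $1$, or the backward factors $e^{-\mcl{L}_{A_{2\alpha}}\tau}$, which are \emph{not} contractive. This is the main obstacle: a naive telescoping would pick up factors $\norm{e^{-\mcl{L}_{A_{2\alpha}}\tau}}_\Diamond$, which can be as large as $e^{O(Rg\tau)}$.

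To avoid this, I would telescope in the order of substitution. The error introduced at step $j$ is $e^{\mcl{L}_{S_{2j}}\tau} - (\text{its patched form})$. On the left it is multiplied only by factors already produced, and on the right by nothing. I would then rearrange those left factors into a channel. Concretely, the already produced left factors are $e^{\mcl{L}_{A_1A_2}\tau} e^{-\mcl{L}_{A_2}\tau}$ composed with earlier pieces. By the same $\mcl{N}$-trick as in the proof of Theorem \ref{Thm_Patch:Patching_lemma}, the product $e^{\mcl{L}_{A_1A_2}\tau}e^{-\mcl{L}_{A_2}\tau}e^{\mcl{L}_{S_2}\tau}$ equals $e^{\mcl{L}\tau}\mcl{N}$ with $\norm{\mcl{N}-1}_\Diamond$ small, which effectively bounds the left factors by $1 + e^{-R/\xi}$.

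Alternatively, and more cleanly, I would redo the integral-equation argument directly for the whole product, as in the proof of Corollary \ref{Cor_Patch:HHKL_Lindbladian} in Appendix \ref{SubsecA:Patching_corollary}. Define $\mcl{N}(\tau) = e^{-\mcl{L}\tau}\mcl{U}_\mr{Patch}(\tau)$. Differentiating gives $\mcl{N}(\tau) = 1 + \int_0^\tau \mcl{K}(\tau')\mcl{N}(\tau')\,\dd\tau'$, where $\mcl{K}$ is a sum over the $N_\mr{p}/2 \leq N$ cuts of terms of the form of Eq.~(\ref{Eq_Patch:K_s}), each acting on the boundary term $\mcl{L}_{A_{2\alpha}:\cdot}$. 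By the light-cone cancellation of Eq.~(\ref{Eq_Patch:commutator_delete}), which uses $R_{2\alpha} \geq R$, each such term is bounded as in Eq.~(\ref{Eq_Patch:K_s_bound}). This gives $\tau\sup\norm{\mcl{K}}_\Diamond \lesssim N e^{-R/\xi}$, which is at most $\order{1}$ because $R \geq \xi\log N$. The Dyson series bound of Eq.~(\ref{Eq_Patch:Ns_1_bound}) and $\norm{e^{\mcl{L}\tau}}_\Diamond \leq 1$ then give the bound $Ne^{-R/\xi}$, after adjusting constants through the condition on $\tau$.
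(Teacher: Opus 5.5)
Your final route is exactly the paper's proof in Appendix~\ref{SubsecA:Patching_corollary}: set $\mathcal{N}(\tau)=e^{-\mathcal{L}\tau}\mathcal{U}_\mathrm{Patch}(\tau)$, derive $\mathcal{N}=1+\int\mathcal{K}\mathcal{N}$ with a generator that splits into boundary contributions at the cuts, kill the low orders by light-cone cancellation (using $R_{2\alpha}\ge R$), and bound $\tau\sup\|\mathcal{K}\|_\Diamond\le \frac{e}{6(e-1)}Ne^{-R/\xi}\le 1$ before summing the Dyson series; the paper likewise declines to iterate Theorem~\ref{Thm_Patch:Patching_lemma} because the backward factors are not contractive. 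Drop the telescoping sketch that precedes it: it is not needed, and as written it is wrong, because after the second substitution the undecomposed factor no longer sits at the right end, and the ``left factors'' alone are not bounded by $1+e^{-R/\xi}$.
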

The proof of this corollary is completely parallel to the one for the patching lemma, Theorem \ref{Thm_Patch:Patching_lemma}, in Section \ref{Sec:Patching}.
We provide it in Appendix \ref{SubsecA:Patching_corollary}.

Next, we choose the domain sizes $\{R_\alpha\}$ for sparsely dissipative systems.
Based on Definition \ref{Def_Setup:sparse_dissipation}, let $N_\mr{d}$ be the number of the disjoint domains $\{ \Lambda_\alpha^\mr{diss} \}$, which are the supports of the dissipation.
Without loss of generality, we assume that the supports of the dissipation begin with the left edge, i.e., we assume $\Lambda_1^\mr{diss} \ni 1$.
The strategy for sparsely dissipative systems is to organize the patching so that the folding domains $\{A_{2\alpha}\}$, which have potentials to bring the non-CP maps in Eq. (\ref{Eq_Sparse:Patch_op}), can avoid the supports of the dissipation $\{ \Lambda_\alpha^\mr{diss} \}$.
To be concrete, we set the domains $\{A_\alpha\}$ as follows:
We first set the domains $A_{4\alpha-3}$ by
\begin{equation}\label{Eq_Sparse:even_block_0}
    A_{4\alpha-3} = \Lambda_\alpha^\mr{diss}
\end{equation}
for $\alpha=1,2,\cdots,N_\mr{d}$.
We next determine the adjacent domains with the size $R$ as
\begin{eqnarray}
    A_{4\alpha-4} &=& \Set{ \min_{j \in \Lambda_\alpha^\mr{diss}} (j) - R, \cdots, \min_{j \in \Lambda_\alpha^\mr{diss}} (j) -1} \cap \Lambda, \nonumber \\
    && \label{Eq_Sparse:even_block_1}\\
    A_{4\alpha-2} &=& \Set{ \max_{j \in \Lambda_\alpha^\mr{diss}} (j) + 1, \cdots, \max_{j \in \Lambda_\alpha^\mr{diss}} (j) + R } \cap \Lambda. \nonumber \\
    &&\label{Eq_Sparse:even_block_2}
\end{eqnarray}
The domains $\{A_{4\alpha-1} \}$ are located between them, explicitly given by
\begin{equation}\label{Eq_Sparse:even_block_3}
    A_{4\alpha-1} = \Set{ j \in \Lambda| \max_{j' \in A_{4\alpha-2}} (j') < j < \min_{j' \in A_{4\alpha}} (j')}.
\end{equation}
See Fig. \ref{Fig_patch_sparse} (b) for the schematic picture of this decomposition.
Importantly, each even-indexed domain $A_{2\alpha}$ does not contain any site in the supports of the dissipation $\{ \Lambda_\alpha^\mr{diss} \}$.
As a result, the inverse time-evolution $e^{-\mcl{L}_{A_{2\alpha}} \tau}$ in Eq. (\ref{Eq_Sparse:Patch_op}) becomes a unitary map.
This is why we can construct a near-optimal quantum algorithm, as we will explicitly construct and evaluate it in the next section.

\subsection{Algorithm and Cost}

Here, we construct the algorithm and show its cost.
For the evolution time $t$, we set the time $\tau$ by
\begin{equation}
    \tau = \frac{t}{\lceil 6e \xi g t\rceil} \in \order{1},
\end{equation}
which satisfies the assumption Eq. (\ref{Eq_Patch:time_assumption}) required for Corollary \ref{Cor_Sparse:Patching_strategy}.
We organize a quantum channel $\mcl{C}(\tau)$, which is composed of the application of $\order{1}$-local qubit gates and tracing out of the ancilla systems, so that its repetition for
\begin{equation}
    r_t = \lceil 6 e \xi g t \rceil \in \bbN
\end{equation}
times can approximate the time evolution $e^{\mcl{L}t}$ as
\begin{equation}
    \norm{e^{\mcl{L}t} - [\mcl{C}(\tau)]^{r_t}}_\Diamond \leq \varepsilon.
\end{equation}
According to the patching lemma by Corollary \ref{Cor_Sparse:Patching_strategy}, it is sufficient to construct $\mcl{C}(\tau)$ such that
\begin{equation}\label{Eq_Sparse:requirement_patching}
    \norm{e^{\mcl{L}\tau}-\mcl{U}_\mr{Patch}(\tau)}_\Diamond \leq \frac{\varepsilon}{2r_t}
\end{equation}
and
\begin{equation}\label{Eq_Sparse:requirement_channel}
    \norm{\mcl{C}(\tau)-\mcl{U}_\mr{Patch}(\tau)}_\Diamond \leq \frac{\varepsilon}{2r_t}
\end{equation}
can be satisfied.
The sufficiency of the above condition is easily confirmed by $\norm{e^{\mcl{L}t} - [\mcl{C}(\tau)]^{r_t}}_\Diamond \leq r_t \norm{e^{\mcl{L}\tau}-\mcl{C(\tau)}}_\Diamond$ for the quantum channels $e^{\mcl{L}\tau}, \mcl{C}(\tau)$ and the triangle inequality.

We consider the requirement Eq. (\ref{Eq_Sparse:requirement_patching}).
Based on the patching lemma by Corollary \ref{Cor_Sparse:Patching_strategy}, we set the number $R$, which is the size of the even-indexed blocks $A_{2\alpha}$ in Eqs. (\ref{Eq_Sparse:even_block_1}) and (\ref{Eq_Sparse:even_block_2}), by
\begin{equation}\label{Eq_Sparse:block_size_R}
    R = \lceil \xi \log (4 Nr_t/\varepsilon) \rceil \in \order{\log (Nt/\varepsilon)}.
\end{equation}
This choice is possible due to the assumption of sparsely dissipative systems in Definition \ref{Def_Setup:sparse_dissipation}, which ensures that $\mr{dist}(\Lambda_\alpha^\mr{diss},\Lambda_{\alpha+1}^\mr{diss}) \in \omega (\log N)$ is larger than $2R$ under $t, 1/\varepsilon \in \poly{N}$.
Equation (\ref{Eq_Sparse:error_strategy}) in Corollary \ref{Cor_Sparse:Patching_strategy} immediately ensures the satisfaction of Eq. (\ref{Eq_Sparse:requirement_patching}).
The remaining task is to construct the quantum channel $\mcl{C}(\tau)$ approximating $\mcl{U}_\mr{Patch}(\tau)$ by Eq. (\ref{Eq_Sparse:requirement_channel}).
We implement a set of quantum channels that approximate each component in $\mcl{U}_\mr{Patch}(\tau)$ described by Eq. (\ref{Eq_Sparse:Patch_op}).
We use different existing quantum algorithms depending on the blocks in the following way.
\begin{itemize}
    \item Implementation of $e^{\mcl{L}_{A_{4\alpha}A_{4\alpha+1}A_{4\alpha+2}}\tau}$: Since $A_{4\alpha+1}=\Lambda_{\alpha+1}^\mr{diss}$ is a support of dissipation, this part is dissipative dynamics.
    We run the LCU-based quantum algorithm for Lindbladian dynamics \cite{Li-Wang-2022-open} so that each $e^{\mcl{L}_{A_{4\alpha}A_{4\alpha+1}A_{4\alpha+2}}\tau}$ can be approximated within an error $\varepsilon/(6Nr_t)$.
    
    \item Implementation of $e^{-\mcl{L}_{A_{2\alpha}} \tau}$: This part is unitary dynamics.
    We employ the quantum algorithms for Hamiltonian dynamics whose gate count can be poly-logarithmic in $1/\varepsilon$ (i.e., LCU \cite{Berry-prl2015-LCU}, QSVT \cite{Low2019-qubitization,Gilyen2019-qsvt}, MPF \cite{low2019-mpf,Mizuta_2026_mpf}, or HHKL \cite{Haah2021-hhkl}).
    Each time evolution $e^{-\mcl{L}_{A_{2\alpha}} \tau}$ is reproduced by quantum circuits within an error $\varepsilon/(6Nr_t)$.
    
    \item Implementation of $e^{\mcl{L}_{A_{4\alpha-2}A_{4\alpha-1}A_{4\alpha}}\tau}$: This part is unitary dynamics.
    We run the HHKL algorithm \cite{Haah2021-hhkl}, achieving the near-optimal gate count for Hamiltonian dynamics.
    Each time evolution $e^{\mcl{L}_{A_{4\alpha-2}A_{4\alpha-1}A_{4\alpha}}\tau}$ is reproduced within an error $\varepsilon/(6Nr_t)$.
    
\end{itemize}
We set the quantum channel $\mcl{C}(\tau)$ by the set of the quantum operations above.
Since $\mcl{U}_\mr{Patch}(\tau)$ contains at most $3N$ time-evolution operators for the blocks, the quantum channel $\mcl{C}(\tau)$ can approximate $\mcl{U}_\mr{Patch}(\tau)$ within an error $\varepsilon/(2r_t)$, indicating the satisfaction of the requirement Eq. (\ref{Eq_Sparse:requirement_channel}).
We summarize the protocol in Algorithm \ref{Algorithm_Sparse}.
The cost of this quantum algorithm is given by the following theorem.

\begin{figure}[t]
\begin{algorithm}[H] 
	\caption{Near-Optimal Simulation of Sparsely Dissipative Systems}
	\label{Algorithm_Sparse}
	\begin{algorithmic}[1]
	\REQUIRE Initial state $\rho$, evolution time $t$, target error $\varepsilon$, the set of local operators that make up the Lindbladian $\mcl{L}$, the support of the sparse dissipation $\Lambda^\mathrm{diss} = \bigcup_\alpha \Lambda_\alpha^\mathrm{diss}$.
\ENSURE Quantum state $\rho'$ satisfying $\|\rho' - e^{\mathcal{L}t}\rho\|_1 \le \varepsilon$.
\STATE Set step count $r_t \leftarrow \lceil 6\mathrm{e}\xi g t \rceil$, step size $\tau \leftarrow t / r_t$, and block size $R \leftarrow \lceil \xi \log (4Nr_t/\varepsilon) \rceil$.
\STATE Construct partition $\{A_\alpha\}$ according to Eqs.~(\ref{Eq_Sparse:even_block_0})--(\ref{Eq_Sparse:even_block_3}).
\FOR{step $r' = 1$ \TO $r_t$}
    \STATE Apply dissipative channel $\bigotimes_\alpha e^{\mathcal{L}_{A_{4\alpha}A_{4\alpha+1}A_{4\alpha+2}}\tau}$ via the LCU-based Lindbladian simulation.
    \STATE Apply unitary backward time evolution $\bigotimes_\alpha e^{-\mcl{L}_{A_{2\alpha}} \tau}$ via QSVT.
    \STATE Apply unitary time evolution $\bigotimes_\alpha e^{\mcl{L}_{A_{4\alpha-2}A_{4\alpha-1}A_{4\alpha}}\tau}$ via the HHKL algorithm.
\ENDFOR
\STATE \textbf{return} Final output state $\rho'$.
	\end{algorithmic}
\end{algorithm}
\end{figure}

\begin{theorem*}
\textbf{(Restatement of Theorem \ref{Thm_Setup:sparse})}

Let $\mcl{L}$ be a Lindbladian for a sparsely dissipative system given by Definition \ref{Def_Setup:sparse_dissipation}.
Algorithm \ref{Algorithm_Sparse} outputs a quantum state $\rho' = [\mcl{C}(\tau)]^{r_t} \rho$ such that the time-evolved state $e^{\mcl{L}t}\rho$ can be approximated as $\norm{\rho'-e^{\mcl{L}t}\rho}_1 \leq \varepsilon$.
The cost of Algorithm \ref{Algorithm_Sparse}, i.e., that for implementing the quantum channel $[\mcl{C}(\tau)]^{r_t}$, is composed of the following resources:
\begin{itemize}
    \item Number of $\order{1}$-qubit gates: It amounts to
    \begin{equation}\label{Eq_Sparse:optimal_gate_count}
        \order{ Nt \, \polylog{Nt/\varepsilon}},
    \end{equation}
    which is near-optimal.

    \item Ancilla qubit number and circuit depth:
    The algorithm requires $\Theta (\polylog{Nt/\varepsilon})$ ancilla qubits, and then, the circuit depth amounts to
    \begin{equation}
        \order{Nt \, \polylog{Nt/\varepsilon}}.
    \end{equation}
    When $\tilde{\Theta}(N)$ ancilla qubits are available, the circuit depth amounts to
    \begin{equation}\label{Eq_Sparse:optimal_gate_depth}
        \order{t \, \polylog{Nt/\varepsilon}},
    \end{equation}
    which is near-optimal.
    
\end{itemize}
We note that the above cost holds even when we are only allowed to use geometrically local $\order{1}$-qubit gates.
\end{theorem*}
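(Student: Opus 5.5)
The proof splits into a correctness part and a cost part.

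\emph{Correctness.} The accuracy has essentially been set up above. We take $\tau=t/r_t$ with $r_t=\lceil 6e\xi g t\rceil$, so Eq.~(\ref{Eq_Patch:time_assumption}) holds. With $R=\lceil \xi\log(4Nr_t/\varepsilon)\rceil$, Corollary~\ref{Cor_Sparse:Patching_strategy} gives $\norm{e^{\mcl{L}\tau}-\mcl{U}_\mr{Patch}(\tau)}_\Diamond\le Ne^{-R/\xi}\le \varepsilon/(4r_t)$, which implies Eq.~(\ref{Eq_Sparse:requirement_patching}).

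Before invoking the corollary, I will check that the partition of Eqs.~(\ref{Eq_Sparse:even_block_0})--(\ref{Eq_Sparse:even_block_3}) is well defined. Since $t,1/\varepsilon\in\poly{N}$, we have $R\in\order{\log N}$. Definition~\ref{Def_Setup:sparse_dissipation} gives consecutive dissipative domains at distance $\omega(\log N)$, which exceeds $2R$ for large $N$, so the blocks $A_{4\alpha-2}$, $A_{4\alpha-1}$, $A_{4\alpha}$ are disjoint and each even block has size $R$. Any boundary blocks truncated by $\cap\Lambda$ only touch the lattice edge, where the patching error term is absent.

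The even blocks avoid $\Lambda^\mr{diss}$. Hence $\mcl{L}_{A_{2\alpha}}$ contains only Hamiltonian terms, and $e^{-\mcl{L}_{A_{2\alpha}}\tau}$ is a unitary channel; the same holds for the blocks $A_{4\alpha-2}A_{4\alpha-1}A_{4\alpha}$. Each of the at most $3N$ factors is implemented to accuracy $\varepsilon/(6Nr_t)$ in diamond norm. Since every factor is a channel (CPTP, diamond norm $1$), a telescoping triangle inequality gives Eq.~(\ref{Eq_Sparse:requirement_channel}). Combining the two requirements and using contractivity again over $r_t$ steps yields $\norm{[\mcl{C}(\tau)]^{r_t}-e^{\mcl{L}t}}_\Diamond\le\varepsilon$.

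\emph{Cost per step.} I count gates block by block, with target precision $\delta=\varepsilon/(6Nr_t)$, so $\log(1/\delta)\in\order{\log(Nt/\varepsilon)}$.

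\begin{itemize}
\item \textbf{Dissipative blocks.} $A_{4\alpha}A_{4\alpha+1}A_{4\alpha+2}$ has size $n_\alpha=2R+r(\Lambda^\mr{diss}_\alpha)\in\order{\log(Nt/\varepsilon)}$, using $r(\Lambda^\mr{diss}_\alpha)\in o(\log N)$. The Li--Wang LCU cost Eq.~(\ref{Eq_Setup:Gate_LCU}) with $N\to n_\alpha$ and $t\to\tau=\order{1}$ gives $\order{n_\alpha^2\,\polylog{n_\alpha/\delta}}=\polylog{Nt/\varepsilon}$ gates and $\order{\polylog{Nt/\varepsilon}}$ ancillas. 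There are at most $N_\mr{d}\le N/R$ such blocks.
\item \textbf{Backward blocks.} Each $A_{2\alpha}$ has size $R$, so QSVT costs $\order{R(R+\log(1/\delta))}=\polylog{Nt/\varepsilon}$ gates, for $\order{N/R}$ blocks.
\item \textbf{Large unitary blocks.} The HHKL algorithm on a block $A_{4\alpha-2}A_{4\alpha-1}A_{4\alpha}$ of size $m_\alpha$, for time $\tau$, costs $\order{m_\alpha\,\polylog{m_\alpha/\delta}}$ gates, and $\sum_\alpha m_\alpha\le N$.
\end{itemize}

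Summing gives $\order{N\,\polylog{Nt/\varepsilon}}$ gates per step, and multiplying by $r_t\in\order{t}$ gives Eq.~(\ref{Eq_Sparse:optimal_gate_count}).

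\emph{Ancillas and depth.} If the blocks are executed sequentially and one $\polylog{Nt/\varepsilon}$-size ancilla register is reused, the depth is bounded by the gate count. Within each of the three layers the blocks have disjoint supports, so they can run in parallel with separate ancilla registers. That uses $\tilde\Theta(N)$ ancillas, and each layer then has depth $\polylog{Nt/\varepsilon}$, because HHKL on a block has depth $\order{\tau\,\polylog{}}$. This gives Eq.~(\ref{Eq_Sparse:optimal_gate_depth}).

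For geometric locality, HHKL is already geometrically local. The LCU and QSVT subroutines act on $\order{\log}$-size blocks, so their nonlocal gates can be routed by SWAP networks at $\polylog{}$ overhead.

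The main obstacle is justifying the polylogarithmic cost of the LCU Lindbladian subroutine on a block of size $\order{\log(Nt/\varepsilon)}$. This includes building the block-encodings of $H$ and $\{L_m\}$ restricted to the block, with $M\in\order{n_\alpha}$ jump operators and $\norm{\mcl{L}_{\cdot}}_\mr{BE}\in\order{n_\alpha}$. I would first verify that every factor in the Li--Wang gate count, Eq.~(\ref{Eq_Setup:Gate_LCU}), is polynomial in $n_\alpha$, and that the ancilla-register reuse across blocks is compatible with their construction.
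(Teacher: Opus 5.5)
Your proposal is correct and follows essentially the same route as the paper: the patching corollary with $\tau=t/r_t$ and $R=\lceil \xi\log(4Nr_t/\varepsilon)\rceil$ controls the per-step error, and each of the at most $3N$ block factors is compiled to accuracy $\varepsilon/(6Nr_t)$ (Li--Wang LCU for the dissipative blocks, QSVT for the backward blocks, HHKL for the large unitary blocks). The per-step costs $\polylog{Nt/\varepsilon}$, $\polylog{Nt/\varepsilon}$ and $\order{m_\alpha\,\polylog{Nt/\varepsilon}}$ with $\sum_\alpha m_\alpha\le N$ are then summed and multiplied by $r_t\in\order{t}$. Your treatment of ancilla reuse versus layer-wise parallelization with $\tilde{\Theta}(N)$ ancillas, and of SWAP routing inside the $\order{\log(Nt/\varepsilon)}$-size LCU/QSVT blocks, also matches the paper. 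Your explicit remark that every factor is CPTP, since the even blocks avoid $\Lambda^\mr{diss}$, spells out what the paper's telescoping step uses implicitly.
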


\textbf{Proof.---}
We evaluate the gate count for implementing each component in $\mcl{U}_\mr{Patch}(\tau)$ as follows.
\begin{itemize}
    \item Implementation of $e^{\mcl{L}_{A_{4\alpha}A_{4\alpha+1}A_{4\alpha+2}}\tau}$: The block size is as large as
    \begin{eqnarray}
        |A_{4\alpha} \cup A_{4\alpha+1} \cup A_{4\alpha+2}| &\leq& |\Lambda_{\alpha+1}^\mr{diss}|+2R \nonumber \\
        &\in& \order{\log (Nt/\varepsilon)},
    \end{eqnarray}
    which comes from the assumption Eq. (\ref{Eq_Setup:dissipative_domain_size}) and the choice of $R$ by Eq. (\ref{Eq_Sparse:block_size_R}).
    The LCU-based implementation of Lindbladian dynamics $e^{\mcl{L}_{A_{4\alpha}A_{4\alpha+1}A_{4\alpha+2}}\tau}$ within an error $\varepsilon/(6Nr_t)$ requires 
    \begin{equation}
        \Otilde{[\log(Nt/\varepsilon)]^2 \tau} \in \order{\polylog{Nt/\varepsilon}}
    \end{equation}
    $\order{1}$-qubit gates according to Eq. (\ref{Eq_Setup:Gate_LCU}) \cite{Li-Wang-2022-open}.
    We use $\order{\polylog {[(|\Lambda_{\alpha+1}^\mr{diss}|+2R)\tau]/[\varepsilon/(Nr_t)]}} \subset \order{\polylog {Nt/\varepsilon}}$ ancilla qubits.
    
    \item Implementation of $e^{-\mcl{L}_{A_{2\alpha}} \tau}$: The block size $|A_{2\alpha}|=R$ scales as $\order{\log (Nt/\varepsilon)}$.
    When we employ QSVT for Hamiltonian dynamics \cite{Low2019-qubitization,Gilyen2019-qsvt}, the gate count amounts to
    \begin{equation}
        \qquad \order{R(R\tau + \log (R\tau / [\varepsilon/(Nr_t)]))} \subset \order{[\log (Nt/\varepsilon)]^2}
    \end{equation}
    for each $e^{-\mcl{L}_{A_{2\alpha}} \tau}$.
    We use $\order{\polylog{R}} \subset \order{\poly{\log \log (Nt/\varepsilon)}}$ ancilla qubits.
    
    \item Implementation of $e^{\mcl{L}_{A_{4\alpha-2}A_{4\alpha-1}A_{4\alpha}}\tau}$: 
    We run the HHKL algorithm for Hamiltonian dynamics \cite{Haah2021-hhkl}.
    Each time evolution $e^{\mcl{L}_{A_{4\alpha-2}A_{4\alpha-1}A_{4\alpha}}\tau}$ on the block size $|A_{4\alpha-2} \cup A_{4\alpha-1} \cup A_{4\alpha}| \leq |A_{4\alpha-1}|+2R$ ($\leq N$) can be reproduced with the gate count,
    \begin{eqnarray}
        && \order{(|A_{4\alpha-1}|+2R) \tau \, \polylog{N\tau /[\varepsilon/(Nr_t)]}} \nonumber \\
        && \quad \subset \order{(|A_{4\alpha-1}|+2R) \, \polylog{Nt/\varepsilon}},
    \end{eqnarray}
    within an error $\varepsilon/(6Nr_t)$.
    The number of ancilla qubits is $\order{\log \log ([|A_{4\alpha-1}|+2R]/[\varepsilon/(Nr_t)])} \subset \order{\log \log (Nt/\varepsilon)}$ when we do not parallelize.
    When we parallelize with $\Theta (|A_{4\alpha-1}|+2R)=\Theta(|A_{4\alpha-1}|)$ ancilla qubits, the circuit depth becomes $\order{\tau \, \polylog{Nt/\varepsilon}} = \order{\polylog{Nt/\varepsilon}}$.
\end{itemize}
When we repeat the above implementation for every block $r_t$ times, the total gate count is as large as
\begin{eqnarray}
    && \order{ r_t \left( N_\mr{d}+ \sum_{\alpha=1}^{N_\mr{d}} |A_{4\alpha-1}| \right) \, \polylog{Nt/\varepsilon}}\nonumber \\
    && \quad \subset \order{Nt \, \polylog{Nt/\varepsilon}}.
\end{eqnarray}
This completes the proof of the near-optimal gate count by Eq. (\ref{Eq_Sparse:optimal_gate_count}).

We next evaluate the number of ancilla qubits and the circuit depth.
In the LCU-based algorithm or the HHKL algorithm for each block, each ancilla qubit begins with the state $\ket{0}$, and returns to $\ket{0}$ or is discarded at the end.
The ancilla qubits can be reused over the blocks and the $r_t$ steps.
As a result, their number amounts to $\order{\polylog{Nt/\varepsilon}}$, which is the maximal number among the subroutines.
The circuit depth is as large as the gate count, given by Eq. (\ref{Eq_Sparse:optimal_gate_count}).
On the other hand, when we prepare ancilla qubits respectively for different blocks, we can parallelize the algorithm.
Then, the number of ancilla qubits is as large as
\begin{equation}
    \order{N_\mr{d} \, \polylog{Nt/\varepsilon} + \sum_{\alpha=1}^{N_\mr{d}} (|A_{4\alpha-1}|+R)}.
\end{equation}
The number $N_\mr{d}$ is at most $\order{N/\log (Nt/\varepsilon)}$ since the supports $\{\Lambda_\alpha^\mr{diss}\}$ are located with the distance $\omega (\log (Nt/\varepsilon))$.
Thus, $\tilde{\Theta} (N)$ ancilla qubits suffice for parallelization, and then, the circuit depth becomes $\order{t \, \polylog{Nt/\varepsilon}}$.
Finally, concerning the geometrical locality of $\order{1}$-qubit gates, we note that every quantum algorithm used as a subroutine is closed within each block.
Geometrically nonlocal gates used for $e^{\mcl{L}_{A_{4\alpha}A_{4\alpha+1}A_{4\alpha+2}}\tau}$ and $e^{-\mcl{L}_{A_{2\alpha}} \tau}$ act on distant qubits whose distance is at most $\order{\log (Nt/\varepsilon)}$.
Each of them can be replaced by $\order{\log (Nt/\varepsilon)}$ geometrically local gates using SWAP operations.
In the implementation of $e^{\mcl{L}_{A_{4\alpha-2}A_{4\alpha-1}A_{4\alpha}}\tau}$, the HHKL algorithm for Hamiltonian dynamics runs also with geometrically local gates while keeping the cost.
Thus, the cost when we use geometrically local $\order{1}$-qubit gate is essentially the same as Eqs. (\ref{Eq_Sparse:optimal_gate_count})-(\ref{Eq_Sparse:optimal_gate_depth}).
This completes the proof of Theorem \ref{Thm_Setup:sparse}. $\quad \square$

The gate count $\Otilde{Nt}$ is near-optimal.
This is immediately confirmed by the fact that Lindbladian dynamics of sparsely dissipative systems contain generic Hamiltonian dynamics, whose simulation including time-dependent cases requires $\tilde{\Omega}(Nt)$ gates \cite{Haah2021-hhkl}.
The optimality of the circuit depth $\Otilde{t}$, achieved with the parallelization, follows from the same reason.
Whether and how we can achieve the optimal cost in the gate number has been the central problem, while that for Hamiltonian dynamics under finite-ranged interaction was resolved by HHKL algorithm \cite{Haah2021-hhkl}.

Recently, Yu et al. (2025) \cite{Yu-Cirac-2025-open} have found the possibility of achieving the optimal cost in a very limited case.
They consider dissipative systems in which all the Lindblad operators $\{L_m\}$ in Eq. (\ref{Eq_Intro:Lindbladian}) are Hermitian and commute with one another as
\begin{equation}\label{Eq_Sparse:commuting_Lindblad}
    L_m^\dagger = L_m, \quad [L_m,L_{m'}] = 0, \, ^\forall m,m'.
\end{equation}
They express such Lindblad dynamics as a stochastic Hamiltonian dynamics, and simulate it in the interaction picture.
Their algorithm employs QRAM access to the coefficients of the trajectory-dependent interaction-picture Hamiltonian, having the dimension $\order{(Nt/\varepsilon)^4}$.
Under the assumption that the QRAM for each trajectory can be efficiently implemented by $\order{\polylog{Nt/\varepsilon}}$ gates, it can achieve the near-optimal gate count $\Otilde{Nt}$.
Our results have advantages in broad application and feasible implementation.
Sparsely dissipative systems in our setup do not contain the above case nor vice versa.
Our class can deal with damping noise or particle loss, which are described by non-Hermitian jump operators, as long as they are sparsely located.
Importantly, our class contains boundary-driven systems like Fig. \ref{Fig_systems} (c), which have been vigorously explored in nonequilibrium condensed matter physics \cite{Landi-2022-boundary}.
In addition, our algorithm achieves the near-optimal gate count $\Otilde{Nt}$ without any assumption on the QRAM access.
It will be a significant step toward the construction of an optimal algorithm for generic Lindbladian dynamics.

\section{Algorithm for generic dissipative cases}\label{Sec:Generic_algorithm}

\begin{figure*}[t]
    \centering
    \includegraphics[width=0.95\linewidth]{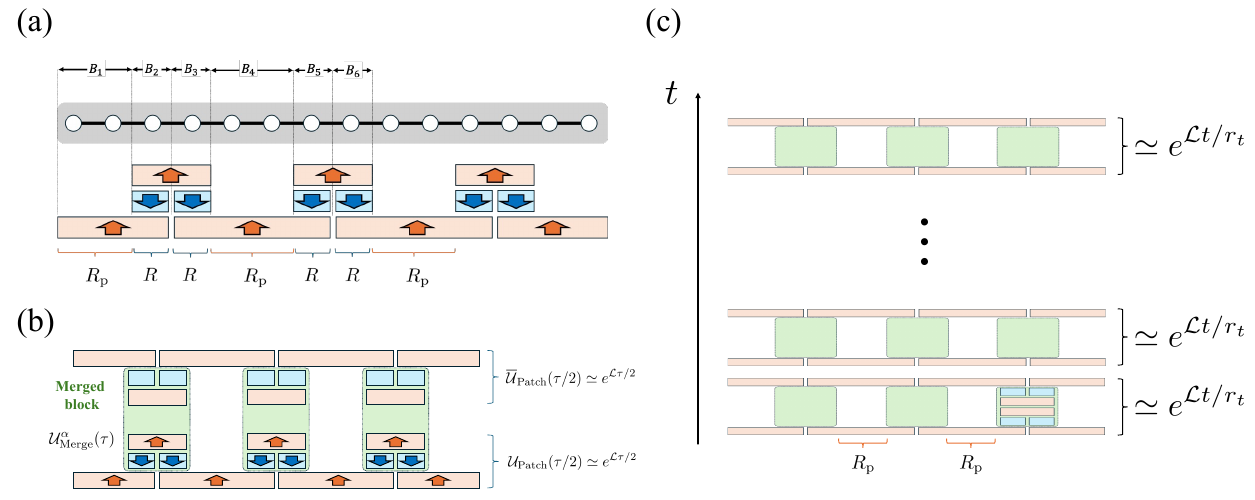}
    \caption{(a) The patching strategy for generic local Lindbladians, giving $\mcl{U}_\mr{Patch}(\tau)$ by Eq. (\ref{Eq_Gen:Patch_op}). We optimize the block size $R_\mr{p}$ to minimize the gate count while we set $R \in \Theta (\log (Nt/\varepsilon))$. (b) The merging strategy based on Eqs. (\ref{Eq_Gen:Merge_op}) and (\ref{Eq_Gen:Patch_w_Merge}). The internal layers are merged and regarded as a set of disjoint green blocks with the size $\order{R}$. (c) The outline of Algorithm \ref{Algorithm_Generic}. We repeat the operation with the time $\tau=t/r_t$. The orange blocks with the size $\order{R_\mr{p}}$ are reproduced by the LCU based approach for Lindbladian dynamics, while we use quasi-probabilistic sampling for the merged green blocks. The repetition number $r_t$ is determined so that the sampling overhead can be suppressed.}
    \label{Fig_algorithm_generic}
\end{figure*}

Throughout this section, we consider quantum many-body systems with finite-ranged interactions and dissipation, satisfying all the assumptions in Section \ref{Sec:Summary}.
We establish a way to efficiently compute arbitrary time-evolved observables under such generic lattice Lindbladians as the second algorithm, using the quasi-probabilistic sampling.
In addition to the patching strategy in Section \ref{Sec:Sparse_algorithm}, we develop the merging strategy, in which we merge some non-CP time-evolution operators on blocks for suppressing the sampling overhead and the resulting gate count.
As a result, our algorithm achieves the best dependency in the size $N$ among the existing algorithms, while retaining the polylogarithmic dependency in $1/\varepsilon$.
We develop the patching and merging strategies in Section \ref{Subsec:Outline_algo}, and discuss the details of the algorithm and its cost in Section \ref{Subsec:Cost_algo}.

\subsection{Algorithm outline: patching and merging}\label{Subsec:Outline_algo}

Here, we discuss the ideas of the patching and merging strategies, and show the outline of the algorithm.
Let us consider generic dissipative systems, whose dissipation can appear everywhere.
In contrast to sparsely-dissipative systems in Section \ref{Sec:Sparse_algorithm}, the decomposition of $e^{\mcl{L}\tau}$ by the patching lemma, Theorem \ref{Thm_Patch:Patching_lemma}, inevitably contains non-CP maps like $e^{-\mcl{L}_{A_{2\alpha}}\tau}$.
We cannot directly implement such non-CP maps by quantum channels, but instead we can reproduce their outputs by the quasi-probabilistic sampling.
Thus, we focus on the problem of reproducing the time-evolved observable $\mr{Tr}[Oe^{\mcl{L}t}\rho]$.
As mentioned in the impossibility of extending the HHKL algorithm in Section \ref{Sec:Patching}, the sampling overhead is the central issue.
We develop an efficient algorithm for the sampling and the classical post-processing based on the patching and merging strategies below, which substantially suppress the sampling overhead.

\subsubsection{Patching strategy}

In the patching strategy, we decompose the time evolution $e^{\mcl{L}\tau}$ by Corollary \ref{Cor_Sparse:Patching_strategy} with the adjustable block sizes.
While we adjust the block sizes for avoiding backward dissipative evolutions for sparsely dissipative systems, we hereby do so to minimize the cost for the sampling complexity and the gate count per sample in total.

We introduce two characteristic scales of length for the blocks $R$ and $R_\mr{p}$, and set the block size as follows,
\begin{eqnarray}
    |A_1| &=& |A_5| = |A_9| = \cdots = R_\mr{p}, \\
    |A_2| &=& |A_4| = |A_6| = \cdots = R, \\
    |A_3| &=& |A_7| = |A_{11}| = \cdots = 0.
\end{eqnarray}
We relabel the indices of the domains $\{A_\alpha\}$ by using
\begin{eqnarray}
    B_{3\alpha-2} &=& \{ (\alpha-1)(R_\mr{p}+2R)+1, \nonumber \\
    && \qquad \cdots, (\alpha-1)(R_\mr{p}+2R) + R_\mr{p} \}, \label{Eq_Gen:B_partition_1} \\
    B_{3\alpha-1} &=& \{ (\alpha-1)(R_\mr{p}+2R) + R_\mr{p} +1, \nonumber \\
    && \quad \cdots, (\alpha-1)(R_\mr{p}+2R) + R_\mr{p} + R \}, \label{Eq_Gen:B_partition_2}\\
    B_{3\alpha} &=& \{ (\alpha-1)(R_\mr{p}+2R) + R_\mr{p} + R +1, \nonumber \\
    && \qquad \qquad\qquad\qquad \cdots, \alpha (R_\mr{p}+2R) \}. \label{Eq_Gen:B_partition_3}
\end{eqnarray}
The domains $B_{3\alpha-2}$, $B_{3\alpha-1}$, $B_{3\alpha}$ are respectively interpreted as $A_{4\alpha-3}$, $A_{4\alpha-2}$, and $A_{4\alpha}$, whose sizes are $R_\mr{p}$, $R$, and $R$.
We denote the number of the blocks by $N_\mr{p}$.
The map $\mcl{U}_\mr{Patch}(\tau)$ in Eq. (\ref{Eq_Sparse:Patch_op}) is rewritten as
\begin{eqnarray}
    \mcl{U}_\mr{Patch}(\tau) &=& \prod_{\alpha} e^{\mcl{L}_{B_{3\alpha-1}B_{3\alpha}}\tau} \prod_{\alpha} e^{-(\mcl{L}_{B_{3\alpha-1}}+\mcl{L}_{B_{3\alpha}})\tau} \nonumber \\
    && \quad \qquad \qquad \times \prod_{\alpha} e^{\mcl{L}_{B_{3\alpha}B_{3\alpha+1}B_{3\alpha+2}}\tau}, \label{Eq_Gen:Patch_op}
\end{eqnarray}
and has the error bound by Eq. (\ref{Eq_Sparse:error_strategy}).
We show the schematic picture of the decomposition in Fig. \ref{Fig_algorithm_generic} (a).

In the algorithm, we keep the block size $R \in \Theta (\log (Nt/\varepsilon))$ like Algorithm \ref{Algorithm_Sparse}, while we adjust the size $R_\mr{p} \in \omega (\log (Nt/\varepsilon))$.
Suppose that the CP maps $e^{\mcl{L}_{B_{3\alpha-1}B_{3\alpha}}\tau}$ and $e^{\mcl{L}_{B_{3\alpha}B_{3\alpha+1}B_{3\alpha+2}}\tau}$ are implemented by the LCU-based Lindbladian simulation, and the non-CP map $e^{-(\mcl{L}_{B_{3\alpha-1}}+\mcl{L}_{B_{3\alpha}})\tau}$ is reproduced by the quasi-probabilistic sampling, though we note that this implementation is actually imprecise due to the following merging strategy.
The change in the size $R_\mr{p}$ brings the tradeoff between the gate count per sample and the sampling overhead.
Let us consider the case where $R_\mr{p}$ is as small as $R \in \Theta (\log (Nt/\varepsilon))$.
This reduces to the naive extension of the HHKL algorithm \cite{Haah2021-hhkl} for Lindbladian dynamics.
Although the gate count per sample amounts to $\Otilde{Nt}$, the sampling complexity becomes exponentially large as discussed in Section \ref{Sec:Patching}.
On the other hand, when we make the size $R_\mr{p}$ larger, the number of the non-CP maps proportional to $N_\mr{p} \in \Theta (N/R_\mr{p})$ becomes smaller.
As a result, the sampling overhead for reproducing the non-CP maps becomes small instead of consuming more gates for implementing the other components.
In summary, there seems to be an intermediate scale suitable for $R_\mr{p}$, with which we can achieve preferable scalings in both of the gate count per sample and the sampling complexity.
The central idea in the patching strategy is to find such an optimal choice of the size $R_\mr{p}$.
We will determine the size $R_\mr{p}$ after identifying the dependence of the gate count and the sampling overhead on it in Section \ref{Subsec:Cost_algo}.

\subsubsection{Merging strategy}

We next develop the merging strategy as a technique suppressing the sampling overhead, in which we absorb some of the non-CP maps into other parts.
Let us first discuss the implementation of the non-CP maps in $\mcl{U}_\mr{Patch}(\tau)$.
We repeat the evolution over $\tau = t/r_t$ with setting a large splitting number $r_t$, and hence, we assume that $\tau$ is small enough below.
When we reproduce all the non-CP maps in Eq. (\ref{Eq_Gen:Patch_op}) by quasi-probabilistic sampling, the overhead becomes as large as
\begin{eqnarray}
    \prod_\alpha e^{\order{\left( \norm{\mcl{L}_{B_{3\alpha-1}}}_\text{Pauli}+\norm{\mcl{L}_{B_{3\alpha}}}_\text{Pauli} \right) \tau}} &\subset& e^{\order{2Rg\tau \times N_\mr{p}}} \nonumber \\
    &=& e^{\order{RNg \tau/ R_\mr{p}}}, \nonumber \\
    && \label{Eq_Gen:overhead_wo_merge}
\end{eqnarray}
in a similar manner to the discussion around Eq. (\ref{Eq_Patch:Overhead_HHKL}).
The total overhead in $r_t$ steps amounts to $(e^{RNg \tau/ R_\mr{p}})^{r_t} = e^{RNgt/R_\mr{p}}$.
We can suppress the sampling overhead up to $\order{1}$ by setting $R_\mr{p} \in \Omega (RNt)$ in contrast to the HHKL algorithm.
However, when the block size $R_\mr{p}$ is proportional to the system size $N$, the gate count does not decrease \footnote{Suppose that we use a quantum algorithm for Lindbladian dynamics, whose gate count for size $N$ amounts to $\order{N^\alpha}$, for each block. The gate count for $\mcl{U}_\mr{Patch}(\tau)$, composed of $N/R_\mr{p}$ blocks with the size $R_\mr{p}$ is as large as $(N/R_\mr{p}) \times (R_\mr{p})^\alpha$. When $R_\mr{p}$ is proportional to the whole size $N$, it becomes $\order{N^\alpha}$. This implies that the decomposition by $\mcl{U}_\mr{Patch}(\tau)$ has no benefit.}.
The sampling overhead for $\mcl{U}_\mr{Patch}(\tau)$ by Eq. (\ref{Eq_Gen:Patch_op}) is too large to improve the computational cost.

We develop a way to reduce the sampling overhead from $e^{\order{RNg\tau/R_\mr{p}}}$ to $e^{\order{N(g\tau)^3/R_\mr{p}}}$.
First, in a similar manner to $\mcl{U}_\mr{Patch}(\tau)$ in Eq. (\ref{Eq_Gen:Patch_op}), we introduce another approximation of the time-evolution $e^{\mcl{L}\tau}$, defined by
\begin{eqnarray}
    \overline{\mcl{U}}_\mr{Patch}(\tau) &=&\prod_\alpha e^{\mcl{L}_{B_{3\alpha}B_{3\alpha+1}B_{3\alpha+2}}\tau}  \nonumber \\
    && \quad \times \prod_\alpha e^{-(\mcl{L}_{B_{3\alpha-1}}+\mcl{L}_{B_{3\alpha}})\tau} \prod_\alpha e^{\mcl{L}_{B_{3\alpha-1}B_{3\alpha}}\tau}. \nonumber \\
    &&
\end{eqnarray}
It is obtained by reversing the order in $\mcl{U}_\mr{Patch}(\tau)$.
The same calculation in the proof of Corollary \ref{Cor_Sparse:Patching_strategy} guarantees its error bound,
\begin{equation}\label{Eq_Gen:error_Patch_bar}
    \norm{e^{\mcl{L}\tau}-\overline{\mcl{U}}_\mr{Patch}(\tau)}_\Diamond \leq N e^{-\frac{R}\xi},
\end{equation}
under the time $\tau$ such that $0 \leq \tau \leq 1/(6e\xi g)$.
We approximate the time evolution $e^{\mcl{L}\tau}$ by the product of $\mcl{U}_\mr{Patch}(\tau/2)$ and $\overline{\mcl{U}}_\mr{Patch}(\tau/2)$ as shown in Fig. \ref{Fig_algorithm_generic} (b).
It is clear that its error bound is given by the following corollary.

\begin{corollary}
\textbf{}

Suppose that the time $\tau$ is small enough to satisfy Eq. (\ref{Eq_Patch:time_assumption}).
When the block size $R$ is larger than $\xi \max\{1,\log N\}$, the time evolution $e^{\mcl{L}\tau}$ is approximated by the map $\overline{\mcl{U}}_\mr{Patch}(\tau/2) \, \mcl{U}_\mr{Patch}(\tau/2)$ with an error bounded by
\begin{equation}\label{Eq_Gen:error_merge_strategy}
    \norm{e^{\mcl{L}\tau}-\overline{\mcl{U}}_\mr{Patch}(\tau/2) \, \mcl{U}_\mr{Patch}(\tau/2)}_\Diamond \leq 3N e^{-\frac{R}\xi}.
\end{equation}
\end{corollary}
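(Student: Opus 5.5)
The plan is a telescoping (hybrid) argument built on Corollary \ref{Cor_Sparse:Patching_strategy} and its reversed-order analogue Eq. (\ref{Eq_Gen:error_Patch_bar}), together with contractivity of the exact evolution. First, note that $e^{\mcl{L}\tau} = e^{\mcl{L}\tau/2} e^{\mcl{L}\tau/2}$, and that $\tau/2$ also satisfies Eq. (\ref{Eq_Patch:time_assumption}) whenever $\tau$ does. Both patching bounds are therefore available at time $\tau/2$:
\begin{equation*}
\norm{e^{\mcl{L}\tau/2}-\mcl{U}_\mr{Patch}(\tau/2)}_\Diamond \leq N e^{-\frac{R}{\xi}}, \qquad \norm{e^{\mcl{L}\tau/2}-\overline{\mcl{U}}_\mr{Patch}(\tau/2)}_\Diamond \leq N e^{-\frac{R}{\xi}}.
\end{equation*}
The hypothesis $R\geq \xi\max\{1,\log N\}$ is exactly the one needed for these corollaries. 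The bound $Ne^{-R/\xi}\leq 1$ will be used below.

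Next I would write the difference as
\begin{equation*}
e^{\mcl{L}\tau}-\overline{\mcl{U}}\,\mcl{U} = \left(e^{\mcl{L}\tau/2}-\overline{\mcl{U}}\right)e^{\mcl{L}\tau/2} + \overline{\mcl{U}}\left(e^{\mcl{L}\tau/2}-\mcl{U}\right),
\end{equation*}
where all maps are evaluated at $\tau/2$, and then apply submultiplicativity of the diamond norm.

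For the first term I use $\norm{e^{\mcl{L}\tau/2}}_\Diamond = 1$, since it is a CPTP map, so that term contributes at most $Ne^{-R/\xi}$. For the second term I need $\norm{\overline{\mcl{U}}_\mr{Patch}(\tau/2)}_\Diamond$. Since $\overline{\mcl{U}}_\mr{Patch}$ contains non-CP backward factors, I would not bound it factor by factor. Instead I use the triangle inequality against the exact evolution:
\begin{equation*}
\norm{\overline{\mcl{U}}_\mr{Patch}(\tau/2)}_\Diamond \leq \norm{e^{\mcl{L}\tau/2}}_\Diamond + Ne^{-\frac{R}{\xi}} \leq 1 + Ne^{-\frac{R}{\xi}} \leq 2.
\end{equation*}
The last step uses $R\geq \xi\log N$, which gives $Ne^{-R/\xi}\leq 1$. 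Hence the second term is at most $2Ne^{-R/\xi}$, and the total is at most $3Ne^{-R/\xi}$, which is Eq. (\ref{Eq_Gen:error_merge_strategy}).

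The only step needing care is this norm bound on the non-CP product $\overline{\mcl{U}}_\mr{Patch}$. Naive factor-wise estimates would give factors of order $e^{\order{R\tau}}$ per backward block, and their product over blocks grows with $N$. The triangle-inequality trick against the exact evolution avoids this entirely, and the factor $3$ in the statement is exactly what it produces. Everything else is routine.
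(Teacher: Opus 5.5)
Your proposal is correct and is essentially the paper's own proof: the same hybrid decomposition $(e^{\mcl{L}\tau/2}-\overline{\mcl{U}}_\mr{Patch}(\tau/2))\,e^{\mcl{L}\tau/2}+\overline{\mcl{U}}_\mr{Patch}(\tau/2)\,(e^{\mcl{L}\tau/2}-\mcl{U}_\mr{Patch}(\tau/2))$ and contractivity of the CPTP evolution. You also bound $\norm{\overline{\mcl{U}}_\mr{Patch}}_\Diamond\leq 1+Ne^{-R/\xi}\leq 2$ by the triangle inequality against the exact evolution rather than factor by factor, which yields $Ne^{-R/\xi}+2Ne^{-R/\xi}=3Ne^{-R/\xi}$ exactly as in the paper.
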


\textbf{Proof.---}
Equation (\ref{Eq_Gen:error_Patch_bar}) implies $\norm{\overline{\mcl{U}}_\mr{Patch}(\tau)}_\Diamond \leq \norm{e^{\mcl{L}\tau}}_\Diamond + N e^{-R/\xi} \leq 2$.

The inequality Eq. (\ref{Eq_Gen:error_merge_strategy}) immediately follows from
\begin{eqnarray}
    && [\text{l.h.s of Eq. (\ref{Eq_Gen:error_merge_strategy})}] \nonumber \\
    && \quad \leq \norm{e^{\mcl{L}\tau/2}-\overline{\mcl{U}}_\mr{Patch}(\tau/2)}_\Diamond \norm{e^{\mcl{L}\tau/2}}_\Diamond \nonumber \\
    && \qquad \qquad + \norm{\overline{\mcl{U}}_\mr{Patch}(\tau/2)}_\Diamond \norm{e^{\mcl{L}\tau/2}-\mcl{U}_\mr{Patch}(\tau/2)}_\Diamond  \nonumber \\
    && \quad \leq 3N e^{-\frac{R}\xi}. \quad \square
\end{eqnarray}

The central strategy of merging is to regard the maps in the inner layers of $\overline{\mcl{U}}_\mr{Patch}(\tau/2) \, \mcl{U}_\mr{Patch}(\tau/2)$ as a single HP map like Fig. \ref{Fig_algorithm_generic} (b) and reproduce its expansion by quasi-probabilistic sampling.
Let us define the merged operator by
\begin{eqnarray}
    \mcl{U}_\mr{Merge}^\alpha(\tau) &=& e^{-(\mcl{L}_{B_{3\alpha-1}}+\mcl{L}_{B_{3\alpha}})\tau/2} e^{\mcl{L}_{B_{3\alpha-1}B_{3\alpha}}\tau} \nonumber \\
    && \qquad \qquad \times e^{-(\mcl{L}_{B_{3\alpha-1}}+\mcl{L}_{B_{3\alpha}})\tau/2}, \label{Eq_Gen:Merge_op}
\end{eqnarray}
which enables us to express the approximate time evolution as
\begin{eqnarray}
    && \overline{\mcl{U}}_\mr{Patch}(\tau/2) \, \mcl{U}_\mr{Patch}(\tau/2) \nonumber \\
    && \quad = \prod_\alpha e^{\mcl{L}_{B_{3\alpha}B_{3\alpha+1}B_{3\alpha+2}}\tau/2} \nonumber \\
    && \qquad \times \prod_\alpha \mcl{U}_\mr{Merge}^\alpha (\tau) \prod_\alpha e^{\mcl{L}_{B_{3\alpha}B_{3\alpha+1}B_{3\alpha+2}}\tau/2}. \label{Eq_Gen:Patch_w_Merge}
\end{eqnarray}
We execute quasi-probabilistic sampling for each of $\mcl{U}_\mr{Merge}^\alpha(\tau)$, which contains the non-CP components of $\overline{\mcl{U}}_\mr{Patch}(\tau/2) \, \mcl{U}_\mr{Patch}(\tau/2)$.
Suppression of the sampling overhead compared to Eq. (\ref{Eq_Gen:overhead_wo_merge}) is attributed to the suppression of its norm, $\norm{\mcl{U}_\mr{Merge}^\alpha(\tau)}_\text{Pauli}$ as follows.
The form of the merged operator $\mcl{U}_\mr{Merge}^\alpha(\tau)$ given by Eq. (\ref{Eq_Gen:Merge_op}) coincides with the second-order PF, Eq. (\ref{Eq_Gen:2nd_PF}), in which we set $\mcl{L}_1 = - (\mcl{L}_{B_{3\alpha-1}}+\mcl{L}_{B_{3\alpha}})$ and $\mcl{L}_2 = \mcl{L}_{B_{3\alpha-1}B_{3\alpha}}$.
Thus, the merged operator $\mcl{U}_\mr{Merge}^\alpha(\tau)$ approximates the time evolution under $\mcl{L}_{B_{3\alpha-1}:B_{3\alpha}}= \mcl{L}_{B_{3\alpha-1}B_{3\alpha}} - \mcl{L}_{B_{3\alpha-1}}-\mcl{L}_{B_{3\alpha}}$ by
\begin{equation}\label{Eq_Gen:Merge_op_PF}
    \mcl{U}_\mr{Merge}^\alpha(\tau) = e^{\mcl{L}_{B_{3\alpha-1}:B_{3\alpha}}\tau} \left[ 1 + \order{\tau^3} \right].
\end{equation}
To be precise, we prove the following theorem:

\begin{theorem}\label{Thm:Merge_expansion}
\textbf{}

Suppose that the time $\tau$ satisfies Eq. (\ref{Eq_Patch:time_assumption}).
The merged operator $\mcl{U}_\mr{Merge}^\alpha(\tau)$, defined by Eq. (\ref{Eq_Gen:Merge_op}), is expressed by
\begin{equation}\label{Eq_Gen:Merge_form}
    \mcl{U}_\mr{Merge}^\alpha(\tau) =  e^{\mcl{L}_{B_{3\alpha-1}:B_{3\alpha}}\tau} \left[ 1 + \mcl{A}_\mr{Merge}^\alpha (\tau)\right],
\end{equation}
where the HP map $\mcl{A}_\mr{Merge}^\alpha (\tau)$ is bounded by
\begin{equation}
    \norm{\mcl{A}_\mr{Merge}^\alpha (\tau)}_\text{Pauli} \leq 43 (\xi g\tau)^3 \in \order{\tau^3}.
\end{equation}
\end{theorem}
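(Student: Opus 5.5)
We will establish Theorem \ref{Thm:Merge_expansion} by the standard variation-of-parameters (Duhamel) argument for the second-order product formula, carried out with the Pauli norm and exploiting that only the boundary term is non-local across the cut. Write $\mcl{L}_1 = -(\mcl{L}_{B_{3\alpha-1}}+\mcl{L}_{B_{3\alpha}})$, $\mcl{L}_2 = \mcl{L}_{B_{3\alpha-1}B_{3\alpha}}$, and $\mcl{L}_{\partial} = \mcl{L}_1+\mcl{L}_2 = \mcl{L}_{B_{3\alpha-1}:B_{3\alpha}}$; only terms $\hat{\mcl{L}}_X$ straddling the cut survive in $\mcl{L}_\partial$, so $|\supp(\mcl{L}_\partial)| \leq 2\xi$ and $\norm{\mcl{L}_\partial}_\mr{Pauli} \leq 2\xi g$. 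We then define $\mcl{A}_\mr{Merge}^\alpha(s) = e^{-\mcl{L}_\partial s}\mcl{U}_\mr{Merge}^\alpha(s) - 1$. It is HP since all factors are HP. Note that $e^{-\mcl{L}_\partial s}$ is itself non-CP, but $\mcl{L}_\partial$ has $\order{\xi}$ support, so its exponential has Pauli norm $e^{\order{\xi g s}}=\order{1}$.

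The first step is to differentiate. We have $\dv{s}\mcl{U}_\mr{Merge}^\alpha(s) = \mcl{L}_\partial\,\mcl{U}^\alpha_\mr{Merge}(s) + \mcl{E}(s)\,\mcl{U}_\mr{Merge}^\alpha(s)$, where the error generator is
\[
\mcl{E}(s) = \tfrac12\mcl{L}_1 + e^{\mcl{L}_1 s/2}\mcl{L}_2 e^{-\mcl{L}_1 s/2} + \tfrac12 e^{\mcl{L}_1 s/2}e^{\mcl{L}_2 s}\mcl{L}_1 e^{-\mcl{L}_2 s}e^{-\mcl{L}_1 s/2} - \mcl{L}_\partial .
\]
Expanding with $e^{s\,\ad}$, the orders $s^0$ and $s^1$ cancel by the symmetric structure. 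Taylor's theorem with integral remainder then writes $\mcl{E}(s)$ as an $s^2$-weighted integral of doubly nested commutators such as $\ad_{\mcl{L}_1}^2\mcl{L}_2$ and $\ad_{\mcl{L}_2}\ad_{\mcl{L}_1}\mcl{L}_2$, conjugated by exponentials of $\mcl{L}_1,\mcl{L}_2$.

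The key locality observation is that $[\mcl{L}_1,\mcl{L}_2] = [\mcl{L}_1,\mcl{L}_\partial]$, and likewise any nested commutator with at least one $\ad$ reduces to commutators seeded by $\mcl{L}_\partial$. Lemma \ref{Lem_Patch:Commutator_bound} then gives $\norm{\ad_{\mcl{A}}\ad_{\mcl{B}}\mcl{L}_\partial}_\mr{Pauli} \leq 2!\,(2\xi g)^2\, 2\xi g$ times $\order{1}$ constants. Hence $\norm{\mcl{E}(s)}_\mr{Pauli} \leq c\,\xi^3 g^3 s^2$, and this bound is independent of $R$.

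Next we interpret the conjugations as series of nested commutators rather than bounding $e^{\pm\mcl{L}_1 s}$ globally, since those have norms exponential in $R$. Each conjugated term is $\sum_q s^q \ad^q(\cdot)/q!$. Lemma \ref{Lem_Patch:Commutator_bound} bounds the $q$-th term by $q!(2\xi g s)^q$ times the seed, and the series converges geometrically under Eq.~(\ref{Eq_Patch:time_assumption}), since $6e\xi g s \leq 1$. This is the same mechanism as in Eq.~(\ref{Eq_Patch:K_s_bound}), with the supremum over $\tau'$ bounded by the same geometric sum.

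To finish, $\mcl{A}(s)$ satisfies $\dv{s}\mcl{A}(s) = \tilde{\mcl{E}}(s)(1+\mcl{A}(s))$ with $\tilde{\mcl{E}}(s) = e^{-s\ad_{\mcl{L}_\partial}}\mcl{E}(s)$, which again carries only a localized conjugation. Its Dyson series gives $\norm{\mcl{A}(\tau)}_\mr{Pauli} \leq \exp(\int_0^\tau \norm{\tilde{\mcl{E}}}_\mr{Pauli}) - 1 \leq C(\xi g\tau)^3$. Here we use submultiplicativity of the Pauli norm, which follows since products of Paulis are Paulis up to phase. Tracking the constants through $6e\xi g\tau\leq1$ should produce the stated $43$.

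The main obstacle is this last step: ensuring that no factor $e^{\norm{\mcl{L}_{1}}\tau}$, which is exponential in $R$, enters the bound. This requires that every appearance of $e^{\pm\mcl{L}_j s}$ occurs only in conjugated form and is expanded via nested commutators seeded by $\mcl{L}_\partial$. It also requires the careful combinatorial bookkeeping that yields the explicit constant $43$.
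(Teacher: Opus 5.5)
Your proposal is correct and is essentially the paper's proof. Both use a Dyson (variation-of-parameters) expansion of $e^{-\mcl{L}_{B:B'}\tau}\mcl{U}_\mr{Merge}(\tau)$. After the order-$s^0$ and $s^1$ cancellations, its generator is a series of nested commutators all seeded by the boundary term $\mcl{L}_{B:B'}$, bounded via Lemma~\ref{Lem_Patch:Commutator_bound} with $|\supp(\mcl{L}_{B:B'})|\leq 2\xi$ and summed under $6e\xi g\tau\leq 1$.

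The difference is minor. You use the left error generator conjugated by the localized $e^{\mcl{L}_{B:B'}s}$ and integrate its $s^2$ bound. The paper uses the right generator $\Delta_\mr{Merge}=\mcl{U}_\mr{Merge}^{-1}\dot{\mcl{U}}_\mr{Merge}-\mcl{U}_\mr{Merge}^{-1}\mcl{L}_{B:B'}\mcl{U}_\mr{Merge}$ with a $\sup_{\tau'}\|\Delta_\mr{Merge}(\tau')\|_\mr{Pauli}\,\tau$ estimate. The constant bookkeeping you left open does close: on my estimate your route yields a constant of roughly $5$, well below $43$.
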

We provide the proof of the above theorem in Appendix \ref{SecA:Merged_block}.
We implement the Lindbladian dynamics $e^{\mcl{L}_{B_{3\alpha-1}:B_{3\alpha}}\tau}$ by the LCU-based approach \cite{cleve-2016-open,Li-Wang-2022-open}, and reproduce the remaining non-CP part $1+\order{\tau^3}$ by quasi-probabilistic sampling following Lemma \ref{LemmaA:quasiprobabilistic}. 
The sampling overhead for reproducing the dynamics over the evolution time $\tau$ via $\overline{\mcl{U}}_\mr{Patch}(\tau/2) \, \mcl{U}_\mr{Patch}(\tau/2)$ amounts to
\begin{equation}
    \order{\left[ 1+ 43(\xi g\tau)^3 \right]^{2N_\mr{p}}} \subset e^{\order{N(g\tau)^3/R_\mr{p}}}.
\end{equation}
It is clearly improved in the time $\tau$ compared to Eq. (\ref{Eq_Gen:overhead_wo_merge}).
The total sampling overhead for simulating the time $t = r_t \tau$ is as large as 
\begin{equation}\label{Eq_Gen:total_overhead}
\left[ e^{\order{N(g\tau)^3/R_\mr{p}}}\right]^{r_t} = \exp \left( \order{\frac{N(gt)^3}{R_\mr{p}(r_t)^2}}\right).
\end{equation}
In contrast to the case without merging, the sampling overhead can be reduced by increasing the number of steps $r_t$ while keeping the block size $R_\mr{p}$ small.
This leads to the substantial reduction in the gate count as we will discuss in Section \ref{Subsec:Cost_algo}.

We next discuss what kind of quantum gates should be sampled and implemented for reproducing the merged operator.
We first note that details of the sampled quantum circuits are provided in Appendix \ref{SecA:Merged_block} and hereby provide its brief description because the resulting gate count in this part is at most $\order{\polylog{Nt/\varepsilon}}$ and negligibly small compared to the other parts. 
The quantum gates that need to be sampled come from $\mcl{A}_\mr{Merge}^\alpha(\tau) \in \order{\tau^3}$ in Eq. (\ref{Eq_Gen:Merge_form}).
Its explicit form can be derived by the theory of Trotter errors \cite{childs2021-trotter}, since it is the second-order PF.
Applying the Dyson series expansion, we obtain
\begin{eqnarray}
    && \mcl{A}_\mr{Merge}^\alpha (\tau) \nonumber \\
    && \quad = e^{-\mcl{L}_{B_{3\alpha-1}:B_{3\alpha}}\tau} \mcl{U}_\mr{Merge}^\alpha(\tau) -1 \nonumber \\
    && \quad = \sum_{m=1}^\infty \int_0^\tau \dd \tau_m \cdots \int_0^{\tau_2} \dd \tau_1 \prod_{m'=m}^1 \Delta_\mr{Merge}^\alpha (\tau_{m'}), \nonumber \\
    && \label{Eq_Gen:Merge_Dyson}
\end{eqnarray}
where the operator $\Delta_\mr{Merge}^\alpha(\tau)$ is defined by
\begin{eqnarray}
    \Delta_\mr{Merge}^\alpha(\tau) &=& \mcl{U}_\mr{Merge}^\alpha(\tau)^{-1} \dv{\tau} \mcl{U}_\mr{Merge}^\alpha(\tau) \nonumber \\
    && \, - \mcl{U}_\mr{Merge}^\alpha(\tau)^{-1} \mcl{L}_{B_{3\alpha-1}:B_{3\alpha}} \mcl{U}_\mr{Merge}^\alpha(\tau). \nonumber \\
    && \label{Eq_Gen:Delta_alpha}
\end{eqnarray}
For the sampled quantum gates to be finite, we have to set a truncation order on the Dyson series expansion, Eq. (\ref{Eq_Gen:Merge_Dyson}).
We also introduce the truncation to the matrix exponentials in $\mcl{U}_\mr{Merge}^\alpha(\tau)$, Eq. (\ref{Eq_Gen:Merge_op}), in Eq. (\ref{Eq_Gen:Delta_alpha}).
The truncation orders for them are determined so that the error caused by them can be bounded by a preferable value $\epsilon \in (0,1)$.
We find the truncated version of the merged operator, which is suitable for sampling, as follows.

\begin{theorem}\label{Thm:Merge_truncation}
\textbf{}

Suppose that the time $\tau \in \order{1}$ satisfies Eq. (\ref{Eq_Patch:time_assumption}).
For any fixed $\epsilon \in (0,1)$, there exists a map $\tilde{\mcl{U}}_\mr{Merge}^\alpha(\tau)$ satisfying the following conditions:
\begin{enumerate}
    \item The merged operator is approximated by $\tilde{\mcl{U}}_\mr{Merge}^\alpha(\tau)$ with an error bounded by
    \begin{equation}\label{Eq_Gen:Merge_error_epsilon}
        \norm{\tilde{\mcl{U}}_\mr{Merge}^\alpha(\tau)-\mcl{U}_\mr{Merge}^\alpha(\tau)}_\Diamond \leq \epsilon.
    \end{equation}
    \item The map $\tilde{\mcl{U}}_\mr{Merge}^\alpha(\tau)$ is written in the form of
    \begin{equation}\label{Eq_Gen:Merge_truncate}
        \qquad \tilde{\mcl{U}}_\mr{Merge}^\alpha(\tau) = e^{\mcl{L}_{B_{3\alpha-1}:B_{3\alpha}}\tau} \left( 1 + \tilde{\mcl{A}}_\mr{Merge}^\alpha(\tau) \right),
    \end{equation}
    The non-CP part $\tilde{\mcl{A}}_\mr{Merge}^\alpha(\tau)$ is an HP map whose Pauli norm is bounded by
    \begin{equation}\label{Eq_Gen:Merge_truncate_norm}
        \qquad \norm{\tilde{\mcl{A}}_\mr{Merge}^\alpha(\tau)}_\text{Pauli} \leq 43 (\xi g\tau)^3 \in \order{\tau^3},
    \end{equation}
    and has the locality $\order{[\log (1/\epsilon)]^2}$.
\end{enumerate}
\end{theorem}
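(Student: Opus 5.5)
The plan is to build $\tilde{\mcl{A}}_\mr{Merge}^\alpha(\tau)$ by truncating the exact expansion behind Theorem \ref{Thm:Merge_expansion} in two places: the Dyson series (\ref{Eq_Gen:Merge_Dyson}), and the matrix exponentials inside $\Delta_\mr{Merge}^\alpha$. Write $\mcl{L}_1 = -(\mcl{L}_{B_{3\alpha-1}}+\mcl{L}_{B_{3\alpha}})$, $\mcl{L}_2=\mcl{L}_{B_{3\alpha-1}B_{3\alpha}}$ and $\mcl{L}_{\partial} = \mcl{L}_1+\mcl{L}_2 = \mcl{L}_{B_{3\alpha-1}:B_{3\alpha}}$.

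\textbf{Step 1: nested-commutator form of $\Delta_\mr{Merge}^\alpha$.} Following the Trotter-error theory of \cite{childs2021-trotter}, I rewrite $\Delta_\mr{Merge}^\alpha(s)$ as a sum of conjugations of the form
\begin{equation}
e^{\pm s\,\ad_{\mcl{L}_i}/2}\, e^{\pm s\,\ad_{\mcl{L}_j}}\,\mcl{L}_k .
\end{equation}
Expanding these exponentials, all terms of order $s^0$ and $s^1$ cancel, because $\mcl{U}_\mr{Merge}^\alpha$ is a symmetric second-order formula. What remains is a sum of nested commutators $\ad_{\mcl{L}_{i_q}}\cdots\ad_{\mcl{L}_{i_1}}\mcl{L}_k$ with $q\ge 2$.

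The key locality point, taken from the proof of Theorem \ref{Thm_Patch:Patching_lemma}, concerns commutators involving $\mcl{L}_1$ and $\mcl{L}_2$. A nonvanishing such commutator must contain a term of $\mcl{L}_\partial$, or arise from $[\mcl{L}_1,\mcl{L}_2]=[\mcl{L}_1,\mcl{L}_\partial]$. Either way its support is pinned within distance $q\xi$ of the single boundary between $B_{3\alpha-1}$ and $B_{3\alpha}$. Its Pauli norm is therefore bounded, via Lemma \ref{Lem_Patch:Commutator_bound}, by $q!(2kg)^q\, g\,\xi$, with no factor of $R$.

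Summing these bounds gives
\begin{equation}
\norm{\Delta_\mr{Merge}^\alpha(s)}_\mr{Pauli} \le c\,\xi g (\xi g s)^2 .
\end{equation}
Here the geometric series converges because $6e\xi g s\le 1$, as in Eq. (\ref{Eq_Patch:K_s_bound}). Inserting this into (\ref{Eq_Gen:Merge_Dyson}) and integrating gives the $\tau^3$ bound, with the constant $43$. This is the same computation as in Theorem \ref{Thm:Merge_expansion}, which I take as given for the untruncated $\mcl{A}$.

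\textbf{Step 2: truncation.} I define $\tilde{\Delta}$ by keeping only the terms of the commutator expansion with total order $q\le Q$. Then $\tilde{\mcl{A}}$ is the Dyson series in $\tilde{\Delta}$ truncated at order $m\le M$. Both tails are geometric.
\begin{itemize}
\item The commutator tail is bounded by $\sum_{q>Q}(6\xi g s)^q \cdot \xi g$, which is at most $\epsilon/2$ once $Q=\order{\log(1/\epsilon)}$.
\item The Dyson tail is bounded by $\sum_{m>M}(\cdot)^m/m!$ with argument $\le 1$, which is at most $\epsilon/2$ once $M=\order{\log(1/\epsilon)}$.
\end{itemize}
Each retained term of $\tilde{\Delta}$ is a nested commutator of at most $Q+1$ local Lindbladian terms, hence $\order{Q}$-local. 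A product of at most $M$ such terms is then $\order{MQ}=\order{[\log(1/\epsilon)]^2}$-local, which is the locality claimed in the theorem.

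\textbf{Step 3: the diamond-norm error.} Write
\begin{equation}
\tilde{\mcl{U}}-\mcl{U} = e^{\mcl{L}_\partial\tau}\left(\tilde{\mcl{A}}-\mcl{A}\right).
\end{equation}
Here $\norm{e^{\mcl{L}_\partial\tau}}_\Diamond$ needs care, since $\mcl{L}_\partial$ is a sum of genuine local Lindbladians and hence generates a CPTP semigroup, giving norm $\le 1$. The error is then bounded by $\norm{\tilde{\mcl{A}}-\mcl{A}}_\mr{Pauli}$ through Eq. (\ref{Eq_Setup:Norm_relation}), and is at most $\epsilon$ by Step 2 after adjusting constants. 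Hermiticity preservation of $\tilde{\mcl{A}}$ follows because $\tilde{\Delta}$ is built from commutators of HP maps with real coefficients, and products of HP maps are HP.

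\textbf{Main obstacle.} The hardest step is the norm bound for $\tilde{\mcl{A}}$: it must satisfy the same constant $43(\xi g\tau)^3$, not merely $\order{\tau^3}$. Truncation only removes terms from a series of nonnegative-majorant bounds, so I will show that the majorant used in Theorem \ref{Thm:Merge_expansion} dominates the truncated series term by term. This gives $\norm{\tilde{\mcl{A}}}_\mr{Pauli}\le 43(\xi g\tau)^3$ automatically. To make this work, I need to organize the proof of Theorem \ref{Thm:Merge_expansion} (Appendix \ref{SecA:Merged_block}) so that its bound is literally a sum of absolute values over the same expansion terms.
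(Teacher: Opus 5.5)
Your proposal is correct and takes essentially the same route as the paper. The paper truncates the commutator expansion of $\Delta_{\mr{Merge}}^{\alpha}$ at order $Q_{\mr{d}}$ and the Dyson series at order $M_{\mr{d}}$, both $\Theta(\log(1/\epsilon))$, reads off the locality $M_{\mr{d}}(Q_{\mr{d}}+1)k$, and bounds $\norm{\tilde{\mcl{A}}_{\mr{Merge}}^{\alpha}}_{\mr{Pauli}}$ by the same majorant used for Theorem \ref{Thm:Merge_expansion}. One step needs spelling out: for the retained orders $m\le M$, replacing $\Delta$ by $\tilde{\Delta}$ inside the Dyson products falls under neither of your two ``tails'', and requires the standard telescoping bound $\sum_{m\le M}\frac{\max(D,\tilde{D})^{m-1}}{(m-1)!}\,\tau\sup\norm{\Delta-\tilde{\Delta}}$, exactly as the paper does. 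Conversely, your explicit use of $\norm{e^{\mcl{L}_{B_{3\alpha-1}:B_{3\alpha}}\tau}}_{\Diamond}\le 1$ makes precise a point the paper's proof leaves tacit.
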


We briefly explain its proof, while we give the detailed proof in Appendix \ref{SecA:Merged_block}.
In the proof, we evaluate the error by the truncation based on Lemma \ref{Lem_Patch:Commutator_bound} and Theorem \ref{Thm:Merge_expansion}, and show that choosing the truncation orders $\order{\log (1/\epsilon)}$ respectively for Eqs. (\ref{Eq_Gen:Merge_Dyson}) and (\ref{Eq_Gen:Delta_alpha}) suffices to achieve the error $\epsilon$ as Eq. (\ref{Eq_Gen:Merge_error_epsilon}).
The Pauli norm of the non-CP part $\tilde{\mcl{A}}_\mr{Merge}^\alpha(\tau)$ can be bounded in the same way as Theorem \ref{Thm:Merge_expansion}.
The truncation orders $\order{\log (1/\epsilon)}$ give the locality $\order{[\log (1/\epsilon)]^2}$.

The above constructive proof tells us how we can efficiently execute the quasi-probabilistic sampling for reproducing the merged operator.
We set $\epsilon=\varepsilon/\poly{Nt}$, as we will do so in the algorithm. 
Due to the small truncation order up to $\order{\log (1/\epsilon)} = \order{\log(Nt/\varepsilon)}$, we can efficiently expand the truncated version of $\Delta_\mr{Merge}^\alpha(\tau)$ in Pauli operators by classical computation.
Since the form Eq. (\ref{Eq_Gen:Merge_Dyson}) resembles the Dyson series, the quasi-probabilistic sampling can be efficiently done in a similar manner to the randomized simulation of time-dependent Hamiltonians \cite{Zhang-2022-time_dep}.
To be concrete, we randomly sample the order $m \in \order{\log(Nt/\varepsilon)}$ and the time $\tau_1,\cdots,\tau_m \in [0,\tau]$. 
Then, we pick up one $\order{\log(Nt/\varepsilon)}$-local operation respectively from each of $\Delta_\mr{Merge}^\alpha(\tau_1)$, $\Delta_\mr{Merge}^\alpha(\tau_2)$, and $\Delta_\mr{Merge}^\alpha(\tau_m)$, and sequentially apply the operations (See Appendix \ref{SecA:Merged_block} for details).
As a result, we need at most $\order{[\log(Nt/\varepsilon)]^2}$ quantum gates at each sample for reproducing the non-CP part of $\mcl{U}_\mr{Merge}^\alpha(\tau)$.

\subsection{Algorithm and cost}\label{Subsec:Cost_algo}

\begin{figure}[t]
\begin{algorithm}[H] 
	\caption{Observable estimation for generic lattice Lindbladian dynamics}
	\label{Algorithm_Generic}
	\begin{algorithmic}[1]
	\REQUIRE Initial state $\rho$, evolution time $t$, target error $\varepsilon$, observable $O=\sum_x o_x \ket{o_x}\bra{o_x}$, the set of local operators that make up the Lindbladian $\mcl{L}$.
\ENSURE Quantity $O_\rho(t)$ satisfying $|O_\rho(t)-\mr{Tr}[Oe^{\mcl{L}t}\rho]| \le \varepsilon$ with constant probability.
\STATE Set block size $R_\mr{p} \leftarrow \lceil \mr{Const.} \times (Ngt)^{1/3} \rceil$, step count $r_t \leftarrow \lceil  \sqrt{N (gt)^3/R_\mr{p}} \rceil$, step size $\tau \leftarrow t / r_t$, block size $R \leftarrow \lceil \xi \log (4Nr_t/\varepsilon) \rceil$, sample complexity $S \leftarrow \Theta (\varepsilon^{-2})$.
\STATE Construct partition $\{B_\alpha\}$ according to Eqs.~(\ref{Eq_Gen:B_partition_1})--(\ref{Eq_Gen:B_partition_3}).
\STATE Set $O_\rho(t)$ to $0$.
\FOR{step $s = 1$ \TO $S$}
\STATE Prepare state $\rho$.
\FOR{step $r' = 1$ \TO $r_t$}
    \STATE Apply dissipative channel $\bigotimes_\alpha e^{\mathcal{L}_{B_{3\alpha}B_{3\alpha+1}B_{3\alpha+2}}\tau/2}$ via the LCU-based Lindbladian simulation.
    \STATE Sample quantum circuit $Q_\alpha$ having a gate count of $\order{[\log (Nt/\varepsilon)]^2}$ for each merged block $\mcl{U}_\mr{Merge}^\alpha(\tau)$, and apply the quantum circuit $\bigotimes_\alpha Q_\alpha$.
    
    \STATE Apply dissipative channel $\bigotimes_\alpha e^{\mcl{L}_{B_{3\alpha-1}:B_{3\alpha}} \tau}$ via the LCU-based Lindbladian simulation.
    \STATE Apply dissipative channel $\bigotimes_\alpha e^{\mathcal{L}_{B_{3\alpha}B_{3\alpha+1}B_{3\alpha+2}}\tau/2}$ via the LCU-based Lindbladian simulation.
\ENDFOR
\STATE Measure the state with the basis $\{ \ket{o_x} \}_x$ and add $S^{-1}o_x$ multiplied by the prefactor dependent on the sampled circuits $\{Q_\alpha\}$ to $O_\rho(t)$.
\ENDFOR
\STATE \textbf{return} Final quantity $O_\rho(t)$.
	\end{algorithmic}
\end{algorithm}
\end{figure}

In this section, we describe the algorithm and analyze its cost.
The algorithm runs with quasi-probabilistic sampling of quantum circuits and returns the time-evolved observable $\mr{Tr}[O e^{\mcl{L}t}\rho]$ for $O=\sum_i o_i \ket{o_i}\bra{o_i}$.
We approximate the target time evolution $e^{\mcl{L}t}$ by $[\overline{\mcl{U}}_\mr{Patch}(\tau/2) \mcl{U}_\mr{Patch}(\tau/2)]^{r_t}$ with $\tau = t/r_t$.
The map $\overline{\mcl{U}}_\mr{Patch}(\tau/2) \mcl{U}_\mr{Patch}(\tau/2)$ is composed of $e^{\mcl{L}_{B_{3\alpha}B_{3\alpha+1}B_{3\alpha+2}}\tau/2}$ and $\mcl{U}_\mr{Merge}^\alpha(\tau)$ as shown in Eq. (\ref{Eq_Gen:Patch_w_Merge}).
As shown in Fig. \ref{Fig_algorithm_generic} (c), we construct a quantum circuit, a part of which is randomly sampled, in the following way:
\begin{itemize} 
    \item Implementation of $e^{\mcl{L}_{B_{3\alpha}B_{3\alpha+1}B_{3\alpha+2}}\tau/2}$:
    We run the LCU-based algorithm for Lindbladian dynamics \cite{Li-Wang-2022-open} within an error $\order{\varepsilon/(Nr_t)}$.   
    
    \item Implementation of $\mcl{U}_\mr{Merge}^\alpha(\tau)$:
    We approximate it by $\tilde{\mcl{U}}_\mr{Merge}^\alpha(\tau)$ within an error $\order{\varepsilon/(Nr_t)}$ based on Theorem \ref{Thm:Merge_truncation}.
    We run the LCU-based algorithm for $e^{\mcl{L}_{B_{3\alpha-1}:B_{3\alpha}}\tau}$ within an error $\order{\varepsilon/(Nr_t)}$ .
    In order to reproduce the remaining part in Eq. (\ref{Eq_Gen:Merge_truncate}), we sample and apply a quantum circuit while recording it for the prefactor used in classical post-processing.
\end{itemize}
We apply the above sampled quantum circuit to the initial state $\rho$, and make projective measurement with the basis $\{ \ket{o_i}\bra{o_i} \}$.
We compute the average of the observable after classically post-processing the observed value $o_i$ with the recorded indices for the sampled maps based on the way of quasi-probabilistic sampling.
We obtain $\mr{Tr}[O e^{\mcl{L}t}\rho]$ under sufficient sampling complexity.
The algorithm is formally provided as Algorithm \ref{Algorithm_Generic}.
When we properly choose the block sizes $R$, $R_\mr{p}$, and the number of time steps $r_t$, Algorithm \ref{Algorithm_Generic} enables us to efficiently obtain the time-evolved observable with the reasonable sampling complexity as follows.

\begin{theorem*}
\textbf{(Restatement of Theorem \ref{Thm_Sparse:generic})}

Consider a one-dimensional lattice Lindbladian with finite-range interactions and dissipation.
We set the block sizes $R$ and $R_\mr{p}$ respectively by
\begin{equation}\label{Eq_Gen:block_size}
    R \in \Theta (\log (Nt/\varepsilon)), \quad R_\mr{p} \in \Theta \left( (Nt)^{\frac13} \right).
\end{equation}
We set the number of time steps $r_t$ by
\begin{equation}\label{Eq_Gen:Trotter_number}
    r_t \in \Theta \left( t \left( Nt\right)^{\frac13} \right).
\end{equation}
Then, Algorithm \ref{Algorithm_Generic} gives an estimate of the time-evolved observable  $\mr{Tr}[O e^{\mcl{L}t}\rho]$ within an error $\varepsilon$, running with the following computational resources:
\begin{itemize}
    \item Number of $\order{1}$-qubit gates per sample: It amounts to
    \begin{equation}\label{Eq_Gen:Gate_counts}
        \order{(Nt)^{\frac43} \, \polylog{Nt/\varepsilon}}.
    \end{equation}

    \item Ancilla qubit number and circuit depth: 
    The algorithm requires $\Theta (\polylog{Nt/\varepsilon})$ ancilla qubits, and then, the circuit depth amounts to
    \begin{equation}\label{Eq_Gen:complexity_wo_parallel}
       \order{(Nt)^{\frac43} \, \polylog{Nt/\varepsilon}}.
    \end{equation}
    When $\tilde{\Theta} (N^{2/3})$ ancilla qubits are available, the circuit depth amounts to
    \begin{equation}\label{Eq_Gen:complexity_parallel}
        \order{t (Nt)^{\frac23} \, \polylog{Nt/\varepsilon}}.
    \end{equation}
    \item Sampling complexity: It amounts to $\order{\varepsilon^{-2}}$.
\end{itemize}

\end{theorem*}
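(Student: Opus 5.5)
The plan has three parts: bound the total error, bound the sampling overhead, and count gates, then balance $R_\mr{p}$ against $r_t$.

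\textbf{Error.} The algorithm approximates $e^{\mcl{L}t}$ by the $r_t$-fold product of $\overline{\mcl{U}}_\mr{Patch}(\tau/2)\,\mcl{U}_\mr{Patch}(\tau/2)$, written via Eq.~(\ref{Eq_Gen:Patch_w_Merge}). Since $r_t \geq 6e\xi g t$ for the chosen parameters, $\tau=t/r_t$ satisfies Eq.~(\ref{Eq_Patch:time_assumption}). The error then splits into three pieces.
\begin{enumerate}
\item The per-step patching error is $3Ne^{-R/\xi}$ by Eq.~(\ref{Eq_Gen:error_merge_strategy}). With $R=\lceil \xi\log(4Nr_t/\varepsilon)\rceil$ this is $\order{\varepsilon/r_t}$.
\item Each of the $\order{N/R_\mr{p}}$ LCU blocks per step is implemented within $\order{\varepsilon/(Nr_t)}$.
\item Each merged block is replaced by $\tilde{\mcl{U}}^\alpha_\mr{Merge}(\tau)$ from Theorem~\ref{Thm:Merge_truncation} with $\epsilon=\varepsilon/\poly{Nt}$.
\end{enumerate}
The delicate point is the telescoping. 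The intermediate maps are not contractions: $\overline{\mcl{U}}_\mr{Patch}$ and $\tilde{\mcl{U}}_\mr{Merge}$ have diamond norm up to $1+\order{\tau^3}$ per block. I would bound every partial product by the Pauli-norm product $\prod(1+43(\xi g\tau)^3)^{2N_\mr{p} r_t}=e^{\order{N(g\tau)^3 r_t/R_\mr{p}}}$, which is $\order{1}$ by the parameter choice. Telescoping then gives a total error of at most a constant times the sum of the per-block errors, which is $\order{\varepsilon}$.

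\textbf{Sampling overhead.} Each $1+\tilde{\mcl{A}}^\alpha_\mr{Merge}(\tau)$ is sampled quasi-probabilistically, while the CP parts are implemented exactly by LCU. By Lemma~\ref{LemmaA:quasiprobabilistic} applied to the whole product, the overhead factor is $\prod(1+\|\tilde{\mcl{A}}\|_\mr{Pauli})^2$. Following Eq.~(\ref{Eq_Gen:total_overhead}), this equals $\exp(\order{N(gt)^3/(R_\mr{p}r_t^2)})$. With $r_t=\lceil\sqrt{N(gt)^3/R_\mr{p}}\rceil$, the exponent is $\order{1}$, so $S=\Theta(\varepsilon^{-2})$ samples suffice by Hoeffding (or median-of-means) for constant success probability.

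\textbf{Gate count.} Per step there are $\order{N/R_\mr{p}}$ LCU blocks of size $\order{R_\mr{p}}$ for $e^{\mcl{L}_{B_{3\alpha}B_{3\alpha+1}B_{3\alpha+2}}\tau/2}$. By Eq.~(\ref{Eq_Setup:Gate_LCU}) each costs $\tilde{\mcl{O}}(R_\mr{p}^2\tau)$, giving $\tilde{\mcl{O}}(NR_\mr{p}\tau)$ per step, plus $\order{1}\cdot\polylog{Nt/\varepsilon}$ per block. The merged blocks add $\order{N/R_\mr{p}}$ blocks per step, each of cost $\polylog{Nt/\varepsilon}$ for $e^{\mcl{L}_{B_{3\alpha-1}:B_{3\alpha}}\tau}$ on $\order{R}$ sites plus the sampled circuits. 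Over $r_t$ steps the total is
\[
\tilde{\mcl{O}}\left(NR_\mr{p}t + \frac{Nr_t}{R_\mr{p}}\right) = \tilde{\mcl{O}}\left(NR_\mr{p}t + \frac{N^{3/2}t^{3/2}}{R_\mr{p}^{3/2}}\right).
\]
Balancing the two terms gives $R_\mr{p}\sim (Nt)^{1/5}$. Since the theorem asserts $(Nt)^{1/3}$, I must be undercounting a cost. The likely culprit is the LCU polylog overhead, or the fact that LCU is executed sequentially over each $\order{1}$ step, so per-block cost is $\tilde{\mcl{O}}(R_\mr{p}^2)$ whenever $\tau\lesssim 1$. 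In that case the total is $r_t N R_\mr{p}\sim N^{3/2}t^{3/2}R_\mr{p}^{1/2}$ plus the constraint $r_t\gtrsim t$. Matching the theorem's choice $R_\mr{p}=(Nt)^{1/3}$ and $r_t=t(Nt)^{1/3}$ yields $r_t\cdot(N/R_\mr{p})\cdot R_\mr{p}^2=r_tNR_\mr{p}=(Nt)^{4/3}\cdot\ldots$ The work in this step is to pin down the per-step LCU cost correctly and verify that the total is $\order{(Nt)^{4/3}\polylog{Nt/\varepsilon}}$.

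\textbf{Depth and ancillas.} Ancillas are reused across blocks, giving $\Theta(\polylog{Nt/\varepsilon})$. Parallelizing the $\order{N/R_\mr{p}}=\order{N^{2/3}}$ blocks with separate ancilla registers divides the depth by $N/R_\mr{p}$, giving $t(Nt)^{2/3}\polylog{Nt/\varepsilon}$.

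I expect the main obstacle to be the norm-growth control in the telescoping error bound, together with making the exponent in the overhead genuinely $\order{1}$ while keeping the gate count at $(Nt)^{4/3}$.
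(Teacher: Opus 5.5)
Your error, sampling-overhead and depth arguments match the paper, and so does your choice $r_t=\lceil\sqrt{N(gt)^3/R_\mr{p}}\rceil$. Your telescoping, which bounds partial products by $e^{\mathcal{O}(N(g\tau)^3 r_t/R_\mr{p})}=\mathcal{O}(1)$, is more careful than the paper's one-line appeal to Corollary~\ref{Cor_Sparse:Patching_strategy}.

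\textbf{The gap is in the gate count.} You leave that step open, and neither of your two cost models is right. The missing ingredient is a per-invocation floor. Each call to the LCU-based Lindbladian simulation on a block of $\Theta(R_\mr{p})$ sites makes at least one query: its query complexity is $\tilde\Theta(\max(1,R_\mr{p}g\tau))$, not $\tilde\Theta(R_\mr{p}g\tau)$. Each query costs $\tilde\Theta(R_\mr{p})$ gates, namely the block-encoding of $\Theta(R_\mr{p})$ local terms plus the additional gates for $M=\Theta(R_\mr{p})$ jump operators. The per-block, per-step cost is therefore $\tilde{\mathcal{O}}(R_\mr{p}+R_\mr{p}^2 g\tau)$. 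Eq.~(\ref{Eq_Setup:Gate_LCU}), rescaled to the block, silently drops the additive $R_\mr{p}$ term.
\begin{itemize}
\item Your first model, $\tilde{\mathcal{O}}(R_\mr{p}^2\tau)$ plus polylog per block, misses this floor. It therefore produces the unjustified $R_\mr{p}\sim(Nt)^{1/5}$ and an apparent $(Nt)^{6/5}$.
\item Your fallback, $\tilde{\mathcal{O}}(R_\mr{p}^2)$ per block, overcounts. At the theorem's parameters, $r_tNR_\mr{p}=t(Nt)^{1/3}\cdot N(Nt)^{1/3}=(Nt)^{5/3}$, not $(Nt)^{4/3}$, so your ``$\cdots$'' hides a mismatch. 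Also, $N^{3/2}t^{3/2}R_\mr{p}^{1/2}$ is increasing in $R_\mr{p}$, so this model would never select $R_\mr{p}\sim(Nt)^{1/3}$.
\end{itemize}

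\textbf{Corrected count.} With the floor included, the total over $r_t$ steps is
\[
\tilde{\mathcal{O}}\!\left(r_t\cdot\tfrac{N}{R_\mr{p}}\cdot\bigl[R_\mr{p}+R_\mr{p}^2\tfrac{gt}{r_t}\bigr]\right)=\tilde{\mathcal{O}}\!\left(Nr_t+NR_\mr{p}gt\right)=\tilde{\mathcal{O}}\!\left(Ngt\Bigl[\sqrt{Ngt/R_\mr{p}}+R_\mr{p}\Bigr]\right).
\]
Balancing the two terms gives $R_\mr{p}\sim(Ngt)^{1/3}$ and the claimed $(Ngt)^{4/3}$. At this optimum $R_\mr{p}g\tau=\Theta(1)$, so the two block-cost terms are of the same order. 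The real tradeoff is between the floor term $Nr_t$ and the sampling constraint that forces $r_t$ to grow as $R_\mr{p}$ shrinks. The merged-block pieces contribute only $\tilde{\mathcal{O}}(Nr_t/R_\mr{p})$ and are subdominant: $e^{\mcl{L}_{B_{3\alpha-1}:B_{3\alpha}}\tau}$ acts on $\mathcal{O}(\xi)$ sites, and the sampled circuits have $\mathcal{O}([\log(Nt/\varepsilon)]^2)$ gates.
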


\textbf{Remark.---}
The gate count $\Otilde{(Nt)^{4/3}}$ becomes as large as or smaller than that of the LCU-based algorithm, $\Otilde{N^2t}$, under the time scale $t \in \order{N^2}$.
Thus, we focus on the time scale $t \in \order{N^2}$.
The block size $R_\mr{p}$ should satisfy $R_\mr{p} \in \order{N}$ so that Eq. (\ref{Eq_Gen:Patch_op}) can work as decomposition into smaller blocks.
The choice of $R_\mr{p}$ by Eq. (\ref{Eq_Gen:block_size}) is always available under $t \in \order{N^2}$.

\textbf{Proof.---}
The block size $R$ is determined so that
\begin{equation}
    \norm{e^{\mcl{L}t}- \left[ \overline{\mcl{U}}_\mr{Patch}(\tau/2) \, \mcl{U}_\mr{Patch}(\tau/2) \right]^{r_t}}_\Diamond \in \order{\varepsilon}
\end{equation}
can be satisfied.
Corollary \ref{Cor_Sparse:Patching_strategy} immediately implies that the choice by $R = \mr{Const.} \times \xi \log (Nr_t/\varepsilon)$ is sufficient.
Since the complexity $r_t$ is at most polynomial in $N$, $t$, and $1/\varepsilon$ [as we will confirm it as Eq. (\ref{Eq_Gen:Trotter_number})], we have $R \in \Theta (\log (Nt/\varepsilon))$.

We next determine the number of time steps $r_t=t/\tau$ based on the sampling overhead.
As discussed in Eq. (\ref{Eq_Gen:total_overhead}), the sampling overhead due to the quasi-probabilistic sampling for $\mcl{U}_\mr{Merge}^\alpha(\tau)$ amounts to $e^{\order{N(gt)^3/[R_\mr{p}(r_t)^2]}}$.
We choose the number of time steps $r_t$ by
\begin{equation}\label{Eq_Gen:Trotter_number_Np}
    r_t = \left\lceil \sqrt{\frac{N (gt)^3}{R_\mr{p}}} \right\rceil \in \Theta \left( gt \sqrt{\frac{Ngt}{R_\mr{p}}}\right),
\end{equation}
so that the overhead can be $\order{1}$.
The sampling complexity with this choice reduces to $\order{\varepsilon^{-2}}$.

We consider the gate count per sample.
Each component in $\overline{\mcl{U}}_\mr{Patch}(\tau/2) \, \mcl{U}_\mr{Patch}(\tau/2)$, expressed by Eq. (\ref{Eq_Gen:Patch_w_Merge}), is accurately implemented by the following cost.
\begin{itemize}
    \item Implementation of $e^{\mcl{L}_{B_{3\alpha}B_{3\alpha+1}B_{3\alpha+2}}\tau/2}$:
    We run the LCU-based algorithm for Lindbladian dynamics \cite{Li-Wang-2022-open} within an error $\varepsilon/(Nr_t)$.
    Applying Eq. (\ref{Eq_Setup:Gate_LCU}) for the block size $R_\mr{p}+2R \in \Theta (R_\mr{p})$ simply results in the scaling $\order{(R_\mr{p})^2 g\tau \, \polylog{R_\mr{p}g\tau Nr_t/\varepsilon}} = \Otilde{(R_\mr{p})^2 g\tau}$.
    However, we note that the query complexity $\tilde{\Theta}(\norm{\mcl{L}_{B_{3\alpha}B_{3\alpha+1}B_{3\alpha+2}}}_\mr{Pauli}\tau/2) = \tilde{\Theta}(R_\mr{p}\tau)$ cannot be smaller than $1$ even for the small time $\tau=t/r_t$.
    The gate count for the block should be at least $\tilde{\Theta}(R_\mr{p})$, which corresponds to the cost of the block encoding and the additional local gates per query.
    Therefore, we use
    \begin{equation}
        \qquad \quad \Otilde{\max \left( R_\mr{p}, (R_\mr{p})^2\tau \right)} \subset  \Otilde{R_\mr{p}+(R_\mr{p})^2\frac{gt}{r_t}} \label{Eq_Gen:dominant_blocks}
    \end{equation}
    quantum gates in this step.

    \item Implementation of $e^{\mcl{L}_{B_{3\alpha-1}:B_{3\alpha}}\tau}$ in $\mcl{U}_\mr{Merge}^\alpha(\tau)$:
    We run the LCU-based algorithm for Lindbladian dynamics.
    The support size of the inter-block terms $\mcl{L}_{B_{3\alpha-1}:B_{3\alpha}}$ is at most $2\xi \in \order{1}$.
    In a similar manner to Eq. (\ref{Eq_Gen:dominant_blocks}), the gate count is as large as
    \begin{eqnarray}
        && \Otilde{\max (\xi, (\xi)^2 g\tau)} \subset \order{\polylog {Ngt/\varepsilon}},
    \end{eqnarray}
    where we use the assumption on $\tau$, Eq. (\ref{Eq_Patch:time_assumption}).

    \item Implementation of  $\mcl{A}_\mr{Merge}^\alpha(\tau)$ in $\mcl{U}_\mr{Merge}^\alpha(\tau)$ by quasi-probabilistic sampling:
    We randomly sample quantum circuits on each size-$R$ block.
    The gate count of each sampled circuit is at most $\order{[\log (Nt/\varepsilon)]^2}$, reflecting its locality by Theorem \ref{Thm:Merge_truncation} [See also Appendix \ref{SubsecA:Merge_implement}].
    
\end{itemize}

The map $\overline{\mcl{U}}_\mr{Patch}(\tau/2) \mcl{U}_\mr{Patch}(\tau/2)$ expressed by Eq. (\ref{Eq_Gen:Patch_w_Merge}) contains $\Theta (N_\mr{p}) = \Theta (N/R_\mr{p})$ copies of $e^{\mcl{L}_{B_{3\alpha}B_{3\alpha+1}B_{3\alpha+2}}\tau/2}$ and $\mcl{U}_\mr{Merge}^\alpha(\tau)$.
Among them, the gate count for the former map given by Eq. (\ref{Eq_Gen:dominant_blocks}) is dominant.
The total gate count for $r_t$ steps scales as
\begin{eqnarray}
    && \Otilde{r_t \times \frac{N}{R_\mr{p}}\times \left[ R_\mr{p}+(R_\mr{p})^2\frac{gt}{r_t}\right]} \nonumber \\
    && \quad \qquad \qquad = \Otilde{Ngt \left[ \sqrt{\frac{Ngt}{R_\mr{p}}} + R_\mr{p} \right]}, \label{Eq_Gen:Gate_count_derivation}
\end{eqnarray}
where we substitute Eq. (\ref{Eq_Gen:Trotter_number_Np}) for $r_t$.
We minimize the above scaling by adjusting the block size $R_\mr{p}$.
This is achieved by setting
\begin{equation}\label{Eq_Gen:block_size_optimized}
    R_\mr{p} = \left\lceil (Ngt)^{\frac13} \right\rceil \in \Theta \left( (Ngt)^{\frac13}\right).
\end{equation}
The number of the blocks $N_\mr{p}$ scales as $N_\mr{p} \in \order{N/R_\mr{p}} \subset \order{N^{2/3}(gt)^{-1/3}}$.
Equation (\ref{Eq_Gen:Gate_count_derivation}) is equal to $\Otilde{(Ngt)^{4/3}}$, which gives the gate count in Eq. (\ref{Eq_Gen:Gate_counts}).

The relation between the number of ancilla qubits $n_a$ and the circuit depth $G$ can be evaluated in a similar manner to Theorem \ref{Thm_Setup:sparse}.
The number of ancilla qubits for each block time evolution $e^{\mcl{L}_{B_{3\alpha}B_{3\alpha+1}B_{3\alpha+2}}\tau/2}$ is at most $\order{\polylog{R r_tN/\varepsilon}} = \order{\polylog{Nt/\varepsilon}}$.
The one for each $e^{\mcl{L}_{B_{3\alpha-1}:B_{3\alpha}}\tau}$ is also $\order{\polylog{Nt/\varepsilon}}$.
The quasi-probabilistic sampling uses one ancilla qubit for each block.
When we run the algorithm using minimal ancilla qubits without parallelization, the circuit depth is as large as the gate count, which results in Eq. (\ref{Eq_Gen:complexity_wo_parallel}).
On the other hand, when we have $\Otilde{N_\mr{p}} \subset \Otilde{N^{2/3}}$ ancilla qubits, all the block time evolutions can be implemented in parallel.
The circuit depth under parallelization is as large as
\begin{equation}
    \Otilde{\frac{(Ngt)^{\frac43}}{N_\mr{p}}} = \order{gt (Ngt)^{\frac23} \, \polylog{Nt/\varepsilon}},
\end{equation}
which corresponds to Eq. (\ref{Eq_Gen:complexity_parallel}).

Before completing the proof, we have to check whether the time $\tau$ satisfies Eq. (\ref{Eq_Patch:time_assumption}). It is necessary for confirming that Corollary \ref{Cor_Sparse:Patching_strategy} is available for the algorithm construction.
Under the choice of $r_t$ and $R_\mr{p}$ respectively by Eqs. (\ref{Eq_Gen:Trotter_number_Np}) and (\ref{Eq_Gen:block_size_optimized}), the renormalized time $g\tau=gt/r_t$ scales as
\begin{equation}
    g\tau \in \Theta \left( (Ngt)^{-\frac13}\right).
\end{equation}
Thus, the assumption Eq. (\ref{Eq_Patch:time_assumption}) is satisfied for the sufficiently large size $N$ or time $t$. $\quad \square$

\subsection{Comparison with the existing algorithms}

We compare the computational cost in Theorem \ref{Thm_Sparse:generic} with those of the existing algorithms for generic lattice Lindbladians with finite-range interactions and dissipation.
We evaluate the costs of other algorithms by setting the number of local terms in the Hamiltonian $H$, the number of Lindblad operators $M$ respectively to $\order{N}$.
The number of Pauli operators in each Lindblad operator $L_{X,m}$ in Eq. (\ref{Eq_Setup:Lm_Pauli_expansion}) is an $\order{1}$ constant.
We also note that the definitions of the norm $\norm{\mcl{L}}$ are different among the references.
We replace them by $\norm{\mcl{L}}_\mr{Pauli} \in \order{N}$ since their scalings are common for generic finite-range interactions and dissipation.
The algorithms discussed below have the sampling complexity $\Theta (\varepsilon^{-2})$ or at most $\tilde{\Theta}(\varepsilon^{-2})$ to estimate time-evolved observables within an error $\varepsilon$.

We summarize the comparison with the standard algorithms in Table \ref{Table:Gate_counts}.
First, the second-order PF \cite{Kliesch-prl2011-open,ChildsLi2017Sparse,Wang-2026-open} achieves the scaling $\order{(Nt)^{3/2}/\varepsilon^{1/2}}$, which is preferable in the size $N$.
The randomized compiling of PF, known as qDRIFT, yields a gate count of $\order{(\norm{\mcl{L}}_\mr{Pauli}t)^2/\varepsilon}\subset \order{(Nt)^2/\varepsilon}$ \cite{Campbell-prl2019-qdrift,Chen2025RandomizedLindblad,David_2026_qdrift}.
Our algorithm outperforms in any of $N$, $t$, and $\varepsilon$, particularly achieving exponential improvement in $1/\varepsilon$.
Concerning the LCU-based approach, Cleve and Wang (2016) \cite{cleve-2016-open} established an algorithm based on the Taylor expansion of $e^{\mcl{L}t}$, whose gate count is $\Otilde{M^2 N^2 \norm{\mcl{L}}_\mr{Pauli} t} \subset \Otilde{N^5t}$.
Later, Li and Wang (2022) \cite{Li-Wang-2022-open} developed an alternative LCU-based algorithm using higher-order expansion by Duhamel's principle, which is used as a subroutine in our algorithms.
Its gate count amounts to $\Otilde{(N+M) \norm{\mcl{L}}_\mr{Pauli} t} = \Otilde{N^2 t}$ as shown in Table \ref{Table:Gate_counts}.
Our algorithm has the smaller exponent of the size $N$ by $2/3$, while having the larger exponent of the time $t$ by $1/3$.
Our algorithm shares the same scaling up to a polylogarithmic factor for $t \in \Theta(N^2)$, and becomes advantageous in $t \in o(N^2)$.
Recently, Ding et al. (2024) \cite{Ding-Li-Lin-2024-open} have established an algorithm that reproduces Lindbladian dynamics from a certain parent Hamiltonian dynamics.
This algorithm yields a gate count of 
\begin{eqnarray}
    && \order{ (N+M^p) \left[ \left( \norm{ \mcl{L}}_\mr{Pauli}t\right)^{1+\frac1p}\varepsilon^{-\frac1p}\right]} \nonumber \\
    && \qquad \qquad \qquad \quad = \order{N^{1+p+\frac1p}t^{1+\frac1p}\varepsilon^{-\frac1p}},
\end{eqnarray}
which consists of the number of the terms in the parent Hamiltonian, and the queries to the parent Hamiltonian dynamics.
While our algorithm can be outperformed in its dependence on time by this algorithm with $p > 3$, our algorithm significantly improves the dependency both in the size $N$ and the error $\varepsilon$.
We note that the above algorithms allow us to simulate both time-evolved states and observables, and that our algorithm can outperform them in the latter task.

We also discuss some quantum algorithms for estimating time-evolved observables.
Kato et al. (2026) \cite{kato-wada-2024_open} have constructed an algorithm using a linear combination of super-operators (LCS) and randomized sampling of dissipation, yielding a gate count of $\Otilde{(\norm{\mcl{L}}_\mr{Pauli}t)^2} \subset \Otilde{(Nt)^2}$.
Yu et al. (2025) \cite{Yu-Yuan-2025-open} have independently developed an approach based on LCS, which works with a gate count of $\Otilde{M (\norm{\mcl{L}}_\mr{Pauli}t)^2} \subset \Otilde{N^3t^2}$ and fewer ancilla qubits.
Our algorithm outperforms these algorithms both in $N$ and $t$ while retaining the polylogarithmic dependence in $1/\varepsilon$.
This comes from the fact that our algorithm fully exploits the locality, while they are not limited to lattice Lindbladians with finite-range interactions and dissipation.
Recently, Wang et al. (2026) \cite{Wang-2026-open} have proven that the extrapolation of the second-order PF achieves a gate count of $\Otilde{(Nt)^{3/2}}$, which has been the best among the known algorithms so far to the best of our knowledge.
Our algorithm based on patching and merging outperforms the above algorithms, achieving a gate count of $\Otilde{(Nt)^{4/3}}$.

\section{Conclusion and discussion}\label{Sec:Conclusion}

In this paper, we consider quantum algorithms for simulating the dynamics under lattice Lindbladians with finite-ranged interactions and dissipation.
With the locality-based techniques called ``patching" and ``merging", we decompose the Lindbladian dynamics into those of small blocks and substantially suppress the sampling overhead.
Our algorithms achieve the near-optimal gate count for sparsely-dissipative systems, and also achieve the one with the best size-dependency among the existing algorithms for simulating time-evolved observables under generic lattice Lindbladians.
Our approach will shed light on simulation of novel nonequilibrium phenomena and various algorithms using dissipative state preparation.

Finally, we provide possible extensions of our algorithms at present, and leave some open problems to be addressed as follows.

\subsubsection{Possible extensions}

\textit{Time-dependent systems.---}
Time-dependent systems are also of interest as a series of algorithms have been developed for Hamiltonian simulation \cite{Suzuki1993-general,Wiebe2010-mu,mizuta2024arxiv-time-dep-PF,Low2018-dyson,Kieferova2019-dyson,Mizuta_Quantum_2023,Mizuta_2023_multi,Watkins-2024-time-dep,chen2026-time_dep}.
While we focus on time-independent Lindbladians $\mcl{L}$ here, the extensions to time-dependent Lindbladians $\mcl{L}(t)$ are straightforward.
When we define the local properties of time-dependent lattice Lindbladians like Section \ref{Subsec:Setup}, the time-dependent analogue of the patching Lemma, Theorem \ref{Thm_Patch:Patching_lemma}, is proven completely in a similar manner.
The decompositions of the time-evolution operator in Algorithm \ref{Algorithm_Sparse} and Algorithm \ref{Algorithm_Generic} are valid.
The HHKL algorithm for Hamiltonian simulation \cite{Haah2021-hhkl} and the LCU-based approach for Lindbladian simulation \cite{Li-Wang-2022-open} are both available for time-dependent systems when we properly assume the smoothness.
As a result, Theorems \ref{Thm_Setup:sparse} and \ref{Thm_Sparse:generic} hold also for time-dependent Lindbladians.
Namely, when the time-dependent dissipation is sparsely located, we can achieve the optimal gate count $\Otilde{Nt}$.
We can also achieve the gate count $\Otilde{(Nt)^{4/3}}$ for time-dependent Lindbladians with generic finite-ranged interactions and dissipation.
Time-dependent Lindbladians are of central interest as dissipative quantum many-body systems under quantum control.
In addition, they may also be a clue to establishing an efficient algorithm for simulating time-independent Lindbladians via the interaction picture.

\textit{Quasi-local Lindbladians.---}
Similarly, our algorithms can be extended to generic lattice Lindbladians with quasi-local interactions and dissipation, where their strength decays exponentially in the distance.
The patching lemma can be applied also to quasi-local Lindbladians, reminiscent of the Lieb-Robinson bound.
This also implies that we can introduce the cutoff $\order{\log (Nt/\varepsilon)}$ on the range of the interactions and dissipation.
As a result, both Algorithm \ref{Algorithm_Sparse} and Algorithm \ref{Algorithm_Generic} work well also for quasi-local Lindbladians when the access to each coefficient can be efficiently done.
Quasi-local Lindbladians are of importance in the context of preparing quantum Gibbs states \cite{Chen2025Efficient,Lin-2025-open-review}.
Our algorithms may be useful for this purpose.

\textit{High-dimensional systems.---}
Another important direction is the extension to high-dimensional systems.
We mainly focus on one-dimensional lattice systems, and the decompositions in Algorithm \ref{Algorithm_Sparse} and Algorithm \ref{Algorithm_Generic} seem to strongly rely on the one-dimensionality.
We discuss the extensions of our algorithms to high-dimensional systems in Appendix \ref{SecA:High_dim}.
For high-dimensional systems in which the dissipation is sparsely located as Definition \ref{Def_Setup:sparse_dissipation}, a decomposition similar to Fig. \ref{Fig_patch_sparse} (b) gives a near-optimal algorithm running with $\Otilde{Nt}$ gates as well.
For Lindbladians with generic finite-ranged interactions and dissipation, the same strategy as Algorithm \ref{Algorithm_Generic} is valid.
After repeating the decomposition like Fig. \ref{Fig_algorithm_generic} (b), we merge the time-evolution operators of the boundary blocks.
We can achieve the gate counts, $\Otilde{(Nt)^{3/2-1/(4d+2)}}$, for $d$-dimensional systems, as shown in Theorem \ref{ThmA_High:generic} in Appendix \ref{SecA:High_dim}.
Although this cost is worse than that of one-dimensional systems due to the growing size of the boundaries, it achieves the best size-$N$ dependency among the known algorithms (i.e., better than $\Otilde{(Nt)^{3/2}}$, achieved by the extrapolation of the second-order PF \cite{Wang-2026-open}).

\subsubsection{Some open problems}

\textit{Algorithm with the optimal gate count}.---
The near-optimal gate count $\Otilde{Nt}$ is achieved only for some classes of Lindbladians.
One is a system with mutually-commuting Hermitian dissipation under an assumption on efficient access to QRAM \cite{Yu-Cirac-2025-open}.
The other is the sparsely-dissipative system, which includes boundary driven systems, in our results.
For generic lattice Lindbladians with finite-ranged interactions and dissipation, our algorithm achieves the gate count $\Otilde{(Nt)^{4/3}}$, which is the best size-$N$ dependency among the known algorithms while retaining the polylogarithmic dependency in the inverse error $1/\varepsilon$.
In Hamiltonian simulation, the HHKL algorithm achieves the gate count $\Otilde{Nt}$ for generic finite-ranged interactions \cite{Haah2021-hhkl}.
It is important to clarify whether the near-optimal gate count $\Otilde{Nt}$ can be achieved only for some limited Lindbladians, or how we can construct the near-optimal algorithm for generic Lindbladians if one exists.
     
\textit{Lindbladians with long-ranged interactions and dissipation}.---
Our algorithms rely on the assumption that each of the interactions and dissipation is finite-ranged (or geometrically-local), which is natural in many models in condensed matter physics.
On the other hand, the long-range interactions whose strength decays polynomially in distance are also of interest.
For instance, the HHKL algorithm using patching can be partially extended to Hamiltonians with long-range interactions \cite{Tran-PRX2019-hhkl}, although the dependence on $1/\varepsilon$ becomes polynomial.
It is natural to ask how we can efficiently simulate long-ranged Lindbladians, as well as Hamiltonian simulation.

\textit{Relation to fast-forwarding}.---
In general, Hamiltonian simulation for the evolution time $t$ requires the computational cost at least proportional to $t$, which is known as no fast forwarding.
However, some specific cases in non-unitary dynamics \cite{Jennings_2024-forward,An_2026-forward} or Lindbladian dynamics \cite{Shang2025-forward,Gao2025-forward,Shang2026-forward} allow the fast forwarding, i.e., the simulation with $o(t)$ cost.
It will be important to investigate whether the decomposition by locality plays a central role in faster algorithms for simulating such fast-forwardable dynamics.

\section*{Note added}

Very recently, a query-optimal algorithm for Lindbladian simulation has been developed \cite{wang2026-open-optimal, chen2026-open-optimal}.
Using the transducer approach, it achieves the optimal additive query complexity $\order{\norm{\mcl{L}}_\mr{BE}t+\frac{\log (1/\varepsilon)}{\log(e+(\norm{\mcl{L}}_\mr{BE}t)^{-1}\log(1/\varepsilon))}}$ in the block-encodings of the Hamiltonian and Lindblad operators, where the norm $\norm{\mcl{L}}_\mr{BE}$ is as large as $\norm{\mcl{L}}_\mr{Pauli} \in \order{N}$ for generic lattice Lindbladians.
However, it does not mean the optimality in gate counts for lattice Lindbladians.
With a standard explicit implementation of the lattice block encodings,
using $\order{N}$ elementary gates per oracle call and normalization $\order{N}$, the resulting gate-count bound is $\Otilde{N^2t}$.
It is as large as the LCU-based approach \cite{Li-Wang-2022-open} for lattice Lindbladians.
Algorithm \ref{Algorithm_Sparse}, achieving the near-optimal gate count $\Otilde{Nt}$, outperforms it for sparsely dissipative systems.
For simulating time-evolved observables under generic one-dimensional lattice Lindbladians, Algorithm \ref{Algorithm_Generic} requires $\Otilde{(Nt)^{4/3}}$, achieving the better dependence on the size $N$.
This holds also for high-dimensional systems (See Appendix \ref{SecA:High_dim}).

\section*{Statement of AI use}
The research ideas, original proofs, and initial manuscript were developed entirely by the author. AI tools were used to assist with proofreading, checking references to prior work, and identifying potential issues in the mathematical definitions and proofs. All revisions were reviewed and finalized by the author, who takes full responsibility for the content of the manuscript.

\textbf{}

\section*{Acknowledgment}

K. M. thanks Kazuki Sakamoto and Yuki Ito for fruitful discussions.
K. M. is supported by JST PRESTO Grant No. JPMJPR235A and JSPS KAKENHI Grant No. JP24K16974.
This work was supported by JST [Moonshot R\&D] [Grant Number JPMJMS256J].
\bibliography{bibliography.bib}

\clearpage
\onecolumngrid
\appendix

\section{Some basic facts used for the proof}\label{SecA:Basic}

\renewcommand{\thetheorem}{\thesection\arabic{theorem}}
\setcounter{theorem}{0}

\subsection{Quasi-probabilistic sampling}\label{SubsecA:Quasi_probabilistic}

Here, we briefly review the exact protocol of the quasi-probabilistic sampling used in our algorithms.
A generic linear super-operator $\mcl{A}$ on $N$-qubit states can be expanded by 
\begin{equation}
    \mcl{A}\rho = \sum_{\mu,\nu} \gamma_{\mu\nu} P_\mu \rho P_\nu, \quad \gamma_{\mu\nu} \in \bbC
\end{equation}
with $N$-qubit Pauli matrices $\{P_\mu\}$.
Here, we suppose that $\mcl{A}$ is HP, i.e., $\gamma_{\nu\mu}=\gamma_{\mu\nu}^\ast$ for every $\mu,\nu$.
We define the Pauli norm by
\begin{equation}
    \norm{\mcl{A}}_\mr{Pauli} = \sum_{\mu,\nu} |\gamma_{\mu\nu}|.
\end{equation}
Then, we consider the problem of calculating the expectation value $\mr{Tr}[O(1+\mcl{A})\rho]$ within an additive error $\varepsilon$, while the map $1+\mcl{A}$ is not necessarily CP.
This can be executed by quasi-probabilistic sampling of quantum circuits with some sampling overhead as follows:

\begin{lemma}\label{LemmaA:quasiprobabilistic}
\textbf{(Quasi-probabilistic sampling)}

Let $\mcl{A}$ be an HP map and $O$ be an observable such that $\norm{O} \leq 1$.
We prepare one ancilla qubit labeled by $a$, and consider a block-encoding $Q_{\mu\nu}$, defined by
\begin{equation}\label{EqA_Basic:sample_block_encoding}
    Q_{\mu\nu} = (\mr{Had}_a \otimes I) \left( \ket{0}\bra{0}_a \otimes e^{i \arg (\gamma_{\mu\nu})} P_\mu + \ket{1}\bra{1}_a\otimes P_\nu\right) (\mr{Had}_a \otimes I)
\end{equation}
In each experiment, we prepare the state $\ket{0}\bra{0}_a \otimes \rho$, apply a unitary gate randomly sampled from $\{ I_a \otimes I, Q_{\mu\nu}, I_a \otimes P_\mu\}$.
We make projective measurement respectively on the ancilla system in the basis $\{ \ket{z}\bra{z} \}_{z=0,1}$ and on the target system in $\{\ket{o_x}\bra{o_x} \}_{x}$, where the latter is given by the spectral decomposition $O=\sum_x o_x \ket{o_x}\bra{o_x}$.
Then, we can obtain an estimate $\hat{o} \in \bbR$ such that $|\hat{o}-\mr{Tr}[O(1+\mcl{A})\rho]| \leq \varepsilon$  with constant probability by $\order{\frac{ (1+3\norm{\mcl{A}}_\mr{Pauli})^2}{\varepsilon^2}}$-times experiments.
\end{lemma}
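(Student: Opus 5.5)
The plan is to build an unbiased estimator whose magnitude is bounded by $1+3\norm{\mcl{A}}_\mr{Pauli}$ and then apply Hoeffding's inequality. First I decompose the map. Since $\mcl{A}$ is HP, $\gamma_{\nu\mu}=\gamma_{\mu\nu}^\ast$, so I can pair the $(\mu,\nu)$ and $(\nu,\mu)$ terms:
\begin{equation*}
\gamma_{\mu\nu}P_\mu\rho P_\nu+\gamma_{\nu\mu}P_\nu\rho P_\mu = 2\,\mr{Re}\left(|\gamma_{\mu\nu}|\, e^{i\arg\gamma_{\mu\nu}}P_\mu\rho P_\nu\right),
\end{equation*}
and the diagonal terms $\gamma_{\mu\mu}P_\mu\rho P_\mu$ have real $\gamma_{\mu\mu}$ and are implemented by the unitary $I_a\otimes P_\mu$, carrying a sign $\mr{sgn}(\gamma_{\mu\mu})$.

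The key computation is the Hadamard-test identity for $Q_{\mu\nu}$. Apply it to $\ket{0}\bra{0}_a\otimes\rho$, measure the ancilla with outcome $z$, and the system with outcome $o_x$. Expanding $(\mr{Had}\otimes I)(\ket{0}\bra0\otimes U_0+\ket1\bra1\otimes U_1)(\mr{Had}\otimes I)$ with $U_0=e^{i\arg\gamma_{\mu\nu}}P_\mu$ and $U_1=P_\nu$ gives
\begin{equation*}
\bbE[(-1)^z o_x]=\mr{Tr}\left[O\,\tfrac14(U_0+U_1)\rho(U_0+U_1)^\dagger\right]-\mr{Tr}\left[O\,\tfrac14(U_0-U_1)\rho(U_0-U_1)^\dagger\right]=\mr{Re}\,\mr{Tr}[O\,U_0\rho U_1^\dagger].
\end{equation*}
Since $P_\nu^\dagger=P_\nu$, this equals $\mr{Re}\left(e^{i\arg\gamma_{\mu\nu}}\mr{Tr}[OP_\mu\rho P_\nu]\right)$, which is exactly the normalized off-diagonal contribution.

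Next I define the sampling distribution. With total weight $W=1+\sum_{\mu}|\gamma_{\mu\mu}|+2\sum_{\mu<\nu}|\gamma_{\mu\nu}|\le 1+\norm{\mcl{A}}_\mr{Pauli}$, the sampling rule is:
\begin{itemize}
\item pick $I_a\otimes I$ with probability $1/W$, with estimator $W o_x$;
\item pick $I_a\otimes P_\mu$ with probability $|\gamma_{\mu\mu}|/W$, with estimator $W\,\mr{sgn}(\gamma_{\mu\mu})\,o_x$;
\item pick $Q_{\mu\nu}$ ($\mu<\nu$) with probability $2|\gamma_{\mu\nu}|/W$, with estimator $W(-1)^z o_x$.
\end{itemize}
Linearity then gives expectation $\mr{Tr}[O\rho]+\mr{Tr}[O\mcl{A}\rho]$, so the estimator is unbiased. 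Each sample is bounded in magnitude by $W\norm{O}\le 1+\norm{\mcl{A}}_\mr{Pauli}\le 1+3\norm{\mcl{A}}_\mr{Pauli}$. The looser constant $3$ leaves slack for alternative normalizations, for example treating $(\mu,\nu)$ and $(\nu,\mu)$ separately or handling the diagonal terms through $Q_{\mu\mu}$.

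Hoeffding's inequality then yields additive error $\varepsilon$ with constant probability from $\order{W^2/\varepsilon^2}$ repetitions. The only delicate step is getting the phase convention in $Q_{\mu\nu}$ right, so that the real part picks up $\gamma_{\mu\nu}$ rather than its conjugate. If the sign comes out reversed, I would sample $Q_{\nu\mu}$ instead, which is harmless by the Hermitian symmetry $\gamma_{\nu\mu}=\gamma_{\mu\nu}^\ast$. Everything else is routine.
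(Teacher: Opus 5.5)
Your argument is correct, but it scores the $Q_{\mu\nu}$ samples differently from the paper.

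\textbf{The paper's route.} It keeps only the $z=0$ branch. Using the polarization identity it writes
$(1+\mcl{A})\rho = \rho + \sum_{\mu,\nu} 2|\gamma_{\mu\nu}|\braket{0|Q_{\mu\nu}|0}_a\rho\braket{0|Q_{\mu\nu}|0}_a^\dagger - \sum_\mu\bigl(\sum_\nu|\gamma_{\mu\nu}|\bigr)P_\mu\rho P_\mu$.
It samples $Q_{\mu\nu}$ over all ordered pairs, diagonal included, and scores $(1-z)o_x$. The spurious $P_\mu\rho P_\mu$ and $P_\nu\rho P_\nu$ pieces produced by the postselected branch are then cancelled with negatively weighted $I_a\otimes P_\mu$ samples. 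The $Q$ samples carry total weight $2\norm{\mcl{A}}_\mr{Pauli}$ and the compensating samples a further $\norm{\mcl{A}}_\mr{Pauli}$, which is exactly where the normalization $1+3\norm{\mcl{A}}_\mr{Pauli}$ comes from. The proof finishes with a variance bound and Chebyshev's inequality.

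\textbf{Your route.} You score both ancilla branches with $(-1)^z$, so the Hadamard test returns the interference term $\mr{Re}\,\mr{Tr}[OU_0\rho U_1^\dagger]$ directly. You handle the real diagonal coefficients $\gamma_{\mu\mu}$ with signed $I_a\otimes P_\mu$ samples and sample only unordered pairs $\mu<\nu$, so no compensating terms are needed.

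\textbf{What each buys.}
\begin{itemize}
\item Your normalization is $W=1+\norm{\mcl{A}}_\mr{Pauli}$, in fact with equality since $|\gamma_{\mu\nu}|=|\gamma_{\nu\mu}|$. The per-sample bound is therefore up to a factor $3$ smaller than the paper's, and the sample count up to a factor $9$ smaller when $\norm{\mcl{A}}_\mr{Pauli}$ is large.
\item Because your samples are bounded, Hoeffding directly gives the $\log(1/\delta)$ dependence, which the paper only states as a remark.
\item The paper's version has the advantage of the standard quasi-probability picture: it exhibits $1+\mcl{A}$ as a signed combination of trace-nonincreasing CP maps realized by postselection. Yours realizes the HP map $\rho\mapsto\frac12(U_0\rho U_1^\dagger+U_1\rho U_0^\dagger)$ as a signed difference of the two ancilla branches.
\end{itemize}

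\textbf{Phase convention.} Your check already closes. With $T=\mr{Tr}[OP_\mu\rho P_\nu]$,
$2|\gamma_{\mu\nu}|\,\mr{Re}(e^{i\arg\gamma_{\mu\nu}}T)=2\,\mr{Re}(\gamma_{\mu\nu}T)$,
which is exactly the paired $(\mu,\nu)$, $(\nu,\mu)$ contribution. The contingency of switching to $Q_{\nu\mu}$ is therefore unnecessary.
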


\textbf{Remark.---} Using Hoeffding's inequality, the success probability can be larger than $1-\delta$ for arbitrary $\delta \in (0,1)$ with the sampling complexity $\order{\frac{ (1+3\norm{\mcl{A}}_\mr{Pauli})^2}{\varepsilon^2} \log (1/\delta)}$.

\textbf{Proof.---}
With the definition of $\mcl{A}$, we calculate $(1+\mcl{A})\rho$ as follows,
\begin{eqnarray}
    (1+\mcl{A})\rho &=& \rho + \frac12 \sum_{\mu,\nu} (\gamma_{\mu\nu} P_\mu \rho P_\nu + \gamma_{\mu\nu}^\ast P_\nu \rho P_\mu )\nonumber \\
    &=& \rho + \frac12 \sum_{\mu,\nu}|\gamma_{\mu\nu}|\left[ \left( e^{i \arg (\gamma_{\mu\nu})} P_\mu + P_\nu \right) \rho \left( e^{-i \arg (\gamma_{\mu\nu})} P_\mu + P_\nu \right) - P_\mu \rho P_\mu - P_\nu \rho P_\nu  \right] \nonumber \\
    &=& \rho + \sum_{\mu,\nu} 2 |\gamma_{\mu,\nu}| \braket{0|Q_{\mu\nu}|0}_a \rho \braket{0|Q_{\mu\nu}|0}_a^\dagger - \sum_\mu \left( \sum_\nu |\gamma_{\mu\nu}|\right) P_\mu \rho P_\mu.
\end{eqnarray}
We use the relation $\gamma_{\nu\mu}=\gamma_{\mu\nu}^\ast$, which comes from the HP property of $\mcl{A}$, in the first line.
We execute $M$ experiments.
In the $m$th experiment ($m =1,2,\cdots,M$), we prepare the state $\ket{0}\bra{0}_a \otimes \rho$, and randomly apply the unitary gates in the following way;
\begin{equation}\label{EqA_Basic:sampled_circuit}
    \begin{cases}
        I_a \otimes I & \left( \text{with probability $\frac{1}{1+3 \norm{A}_\mr{Pauli}}$} \right) \\
         Q_{\mu\nu} & \left( \text{with probability $\frac{2|\gamma_{\mu\nu}|}{1+3 \norm{A}_\mr{Pauli}}$} \right) \\
          I_a \otimes P_\mu & \left( \text{with probability $\sum_\nu \frac{|\gamma_{\mu\nu}|}{1+3 \norm{A}_\mr{Pauli}}$} \right)
    \end{cases}.
\end{equation}
After that, we make projective measurement respectively on the ancilla and target systems in the basis $\{\ket{z}\bra{z}_a \otimes \ket{o_x}\bra{o_x} \}$.
When we obtain the measurement outcomes $z_m =0,1$ and $(o_x)_m$, we return
\begin{equation}\label{EqA_Basic:estimator}
    \hat{o}_m = \begin{cases}
        (1+3\norm{\mcl{A}}_\mr{Pauli})(o_x)_m & (\text{if $I_a \otimes I$ is sampled}) \\
        (1+3\norm{\mcl{A}}_\mr{Pauli}) (1-z_m) (o_x)_m & (\text{if $Q_{\mu\nu}$ is sampled}) \\
        - (1+3\norm{\mcl{A}}_\mr{Pauli}) (o_x)_m & (\text{if $I_a \otimes P_\mu$ is sampled})
    \end{cases}.
\end{equation}

We give an estimate of $\mr{Tr}[O(1+\mcl{A})\rho]$ by $\hat{o}=\frac1M \sum_{m=1}^M \hat{o}_m$.
Equation (\ref{EqA_Basic:estimator}) immediately implies that its expectation value $\bbE[\hat{o}]$ is equal to $\mr{Tr}[O(1+\mcl{A})\rho]$.
In addition, its variance is bounded by
\begin{equation}
    \bbE [\hat{o}^2] - \bbE [\hat{o}]^2 = \frac1M \max_m(\bbE [\hat{o}_m^2] - \bbE [\hat{o}_m]^2)
    \leq \frac{(1+3\norm{\mcl{A}}_\mr{Pauli})^2}{M}.
\end{equation}
Using the Chebyshev inequality, we conclude that $M \in \order{\frac{ (1+3\norm{\mcl{A}}_\mr{Pauli})^2}{\varepsilon^2}}$ experiments are sufficient for determining $\mr{Tr}[O(1+\mcl{A})\rho]$ within an additive error $\varepsilon$. $\quad \square$

\textit{Gate count of the sampled quantum circuit.---}
Each sampled circuit has nontrivial part in either $Q_{\mu\nu}$ or $P_\mu$ according to Eq. (\ref{EqA_Basic:sampled_circuit}).
The latter is trivially implemented by $|\supp(P_\mu)|$ single-qubit gates.
The former one $Q_{\mu\nu}$ is the block-encoding given by Eq. (\ref{EqA_Basic:sample_block_encoding}).
It can be implemented by $2$ Hadamard gates and $(|\supp(P_\mu)|+|\supp(P_\nu)|)$ two-qubit gates.
Since both $|\supp(P_\mu)|$ and $|\supp(P_\nu)|$ are smaller than the locality $k(\mcl{A})$ defined by Eq. (\ref{Eq_Setup:k_def}), sampled quantum circuits have at most $\order{k(\mcl{A})}$ local gates, as discussed in Section \ref{Subsec:Setup}.

\textit{Sampling overhead when reproducing multiple non-CP maps.---}
In the algorithms, we calculate the observables after applying multiple non-CP maps to the state.
Consider the case where we wish to estimate $\mr{Tr}[O(1+\mcl{A}')(1+\mcl{A})\rho]$ for some HP maps $\mcl{A}$, $\mcl{A}'$ within an error $\varepsilon$.
When we sample quantum gates respectively from Pauli basis appearing in $\mcl{A}$, $\mcl{A}'$ and do classical postprocessing, the variance of the estimator is as large as $\order{(1+\norm{\mcl{A}}_\text{Pauli})^2(1+\norm{\mcl{A}'}_\text{Pauli})^2}$.
The sampling complexity amounts to $\order{(1+\norm{\mcl{A}}_\text{Pauli})^2(1+\norm{\mcl{A}'}_\text{Pauli})^2 \varepsilon^{-2}}$.
The sampling overhead when considering multiple non-CP maps is given by the product of the overheads for the non-CP maps as well.

\textit{Sampling overhead for reproducing backward time evolution.---}
Let us consider reproducing observables under the non-CP map $e^{-\mcl{L}\tau}$ for some time $\tau>0$ and Lindbladian $\mcl{L}$.
It is expanded by $e^{-\mcl{L}\tau}= 1 + \sum_{q=1}^\infty (-\mcl{L}\tau)^q/q!$.
The sampling overhead for reproducing $\mr{Tr}[O e^{-\mcl{L}_{A_\alpha}\tau} \rho]$ amounts to
\begin{eqnarray}
    \left( 1 + 3 \norm{\sum_{q=1}^\infty \frac{
    (-\mcl{L}_{A_\alpha}\tau)^q}{q!}}_\text{Pauli} \right)^2 &\leq& \left( 1+ 3 \sum_{q=1}^\infty \frac1{q!}(\norm{\mcl{L}_{A_\alpha}}_\text{Pauli} \tau)^q \right)^2 \leq e^{6 \norm{\mcl{L}_{A_\alpha}}_\text{Pauli} \tau} \in e^{\order{\norm{\mcl{L}_{A_\alpha}}_\text{Pauli} \tau}}.
\end{eqnarray}
When we simply apply the HHKL algorithm to Lindbladian simulation and independently execute quasi-probabilistic sampling for the backward evolutions, the sampling overhead becomes exponentially large in spacetime.
This is why the HHKL algorithm, which is near-optimal for Hamiltonian simulation, is invalid for Lindbladian simulation, as discussed in Section \ref{Sec:Patching}.

\subsection{Corollaries of the patching lemma}\label{SubsecA:Patching_corollary}

In this section, we prove Corollary \ref{Cor_Patch:HHKL_Lindbladian} and Corollary \ref{Cor_Sparse:Patching_strategy} in the main text.
Corollary \ref{Cor_Patch:HHKL_Lindbladian} gives the dissipative counterpart of the HHKL algorithms \cite{Haah2021-hhkl}, though it fails to be efficient in contrast to Hamiltonian simulation (See Section \ref{Sec:Patching}).
Corollary \ref{Cor_Sparse:Patching_strategy} is of particular importance, which is directly used for constructing the algorithms.
It allows us to decompose the time evolution into those of flexible-size blocks.
As a result, we can delete or suppress the increase of the sampling overhead by the non-CP maps respectively for sparsely-dissipative or generic lattice Lindbladians.
Corollary \ref{Cor_Sparse:Patching_strategy} is a generalized version of Corollary \ref{Cor_Patch:HHKL_Lindbladian}.
Namely, when we set $A_3 = A_5 = \cdots = \emptyset$ and set $|A_\alpha|=R$ for the other domains $\{A_\alpha\}$, it reduces to the latter.
Thus, it is sufficient to prove Corollary \ref{Cor_Sparse:Patching_strategy}.

Before giving the proof, we note that the repetition of the patching lemma is rather nontrivial for Lindbladian dynamics in contrast to Hamiltonian dynamics.
Let us consider the case where we further split the subsystem $B \cup C$ into $A'$, $B'$, and $C'$ after applying the patching lemma with $\Lambda = A \cup B \cup C$.
The sizes of the blocks $B$ and $B'$ are at least $R$.
For Hamiltonian dynamics, the repetition of the patching lemma, Eq. (\ref{Eq_Patch:Patching_Lemma_Hamiltonian}), immediately allows the decomposition of $e^{-iH\tau}$ into the time evolutions under $H_{AB}, H_B, H_{A'B'}, H_{B'}, H_{B'C'}$ as follows.
\begin{eqnarray}
    && \norm{e^{-iH\tau}-e^{-iH_{AB}\tau} e^{iH_B\tau} e^{-iH_{A'B'}\tau}e^{iH_{B'}\tau}e^{-iH_{B'C'}\tau}} \nonumber \\
    && \quad \leq \norm{e^{-iH\tau}-e^{-iH_{AB}\tau} e^{iH_B\tau} e^{-iH_{BC}\tau}} + \norm{e^{-iH_{AB}\tau} e^{iH_B\tau}} \times \norm{e^{-iH_{BC}\tau}-e^{-iH_{A'B'}\tau}e^{iH_{B'}\tau}e^{-iH_{B'C'}\tau}} \nonumber \\
    && \quad \leq 2 c e^{-\frac{R}\xi}.
\end{eqnarray}
We use the unitarity $\norm{e^{-iH_{AB}\tau} e^{iH_B\tau}} = 1 $ in the last line.
In contrast, similar calculation does not apply to Lindbladian dynamics since $\norm{e^{\mcl{L}_{AB}\tau}e^{-\mcl{L}_B\tau}}_\Diamond$ is not necessarily bounded by $1$.
Thus, instead of repeating Eq. (\ref{Eq_Patch:error}), we directly prove Corollary \ref{Cor_Sparse:Patching_strategy} like Theorem \ref{Thm_Patch:Patching_lemma} as follows.

\begin{corollary*}
\textbf{(Restatement of Corollary \ref{Cor_Sparse:Patching_strategy})}

We split the lattice $\Lambda$ into the blocks $\{ A_\alpha \}$ by
\begin{equation}
    A_\alpha = \Set{\sum_{\alpha' < \alpha} R_{\alpha'} + 1, \sum_{\alpha' < \alpha} R_{\alpha'} + 2, \cdots, \sum_{\alpha' \leq \alpha} R_{\alpha'} }.
\end{equation}
We assume that the size of the even-indexed blocks $R_{2\alpha}$ is at least $R$, satisfying
\begin{equation}\label{EqA_Basic:Range_Block_assumption}
    R \geq \xi \max \{ 1, \log N \}.
\end{equation}
We define the patch operator by
\begin{equation}
    \mcl{U}_\mr{Patch}(\tau) = \prod_{\alpha} e^{\mcl{L}_{A_{4\alpha-2}A_{4\alpha-1}A_{4\alpha}} \tau } \prod_{\alpha} e^{-\mcl{L}_{A_{2\alpha}}\tau} \prod_{\alpha} e^{\mcl{L}_{A_{4\alpha}A_{4\alpha+1}A_{4\alpha+2}}\tau}. \label{EqA_Basic:Patch_op}
\end{equation}
When the time $\tau$ is small enough to satisfy
\begin{equation}\label{EqA_Basic:tau_assumption_strategy}
    0 \leq \tau \leq \frac{1}{6e\xi g} \in \order{1},
\end{equation}
the exact time evolution $e^{\mcl{L}\tau}$ is approximated by $\mcl{U}_\mr{Patch}(\tau)$ with an error bounded by
\begin{equation}\label{Eq_Sparse:error_strategy}
    \norm{e^{\mcl{L}\tau} - \mcl{U}_\mr{Patch}(\tau)}_\Diamond \leq N e^{-\frac{R}\xi}.
\end{equation}

\end{corollary*}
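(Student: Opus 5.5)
We mirror the proof of Theorem \ref{Thm_Patch:Patching_lemma} rather than iterating it. The paper notes that iteration fails because $\norm{e^{\mcl{L}_{AB}\tau}e^{-\mcl{L}_B\tau}}_\Diamond$ need not be at most one.

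Write $\mcl{U}_\mr{Patch}(\tau)=\mcl{V}_3(\tau)\mcl{V}_2(\tau)\mcl{V}_1(\tau)$ for the three layers in Eq. (\ref{EqA_Basic:Patch_op}), and set
\begin{equation*}
\mcl{N}(\tau)=e^{-\mcl{L}\tau}\mcl{U}_\mr{Patch}(\tau).
\end{equation*}
As in Eq. (\ref{Eq_Patch:N_tau_1_bound}), contractivity of $e^{\mcl{L}\tau}$ gives $\norm{e^{\mcl{L}\tau}-\mcl{U}_\mr{Patch}(\tau)}_\Diamond\le\norm{\mcl{N}(\tau)-1}_\Diamond$.

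Let $\mcl{G}_1,\mcl{G}_2,\mcl{G}_3$ denote the generators of the three layers. Each layer is a product of commuting exponentials on disjoint blocks, so, for instance, $\mcl{G}_2=-\sum_\alpha\mcl{L}_{A_{2\alpha}}$. Differentiating $\mcl{N}$ gives
\begin{equation*}
\frac{d}{d\tau}\mcl{N}=e^{-\mcl{L}\tau}\Bigl[-\mcl{L}\mcl{U}+\mcl{G}_3\mcl{U}+\mcl{V}_3\mcl{G}_2\mcl{V}_2\mcl{V}_1+\mcl{V}_3\mcl{V}_2\mcl{G}_1\mcl{V}_1\Bigr].
\end{equation*}
The key algebraic identity is the generalization of Eqs. (\ref{Eq_Patch:AB_boundary_1})--(\ref{Eq_Patch:AB_boundary_2}). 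Let $\mcl{L}_{\partial}$ be the sum of the boundary terms crossing a cut inside some $A_{2\alpha}$. Which cuts are involved differs between the layers, but in every case the terms split into two groups. One group is $\mcl{L}-\mcl{G}_3-(\text{terms straddling the }A_{4\alpha-1}\text{-side boundaries of }A_{4\alpha},A_{4\alpha-2})$. The other is $\mcl{G}_1+\mcl{G}_2$, which equals the remaining straddling terms. Because $R>\xi$, no term crosses a whole even block.

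Using these identities, and moving terms supported away from a layer through the exponentials of that layer (they commute), we obtain
\begin{equation*}
\mcl{N}(\tau)=1+\int_0^\tau d\tau'\,\mcl{K}(\tau')\mcl{N}(\tau').
\end{equation*}
Here $\mcl{K}(\tau')$ is a sum over the $\order{N/R}\le N$ cuts of terms
\begin{equation*}
e^{-\tau'\ad_{\mcl{L}}}\left(e^{\tau'\ad_{\mcl{G}_3}}e^{\tau'\ad_{\mcl{G}_2}}-1\right)\mcl{L}_{\partial,\alpha},
\end{equation*}
one per boundary piece $\mcl{L}_{\partial,\alpha}$, exactly analogous to Eq. (\ref{Eq_Patch:K_s}). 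The coefficients of the lower-order parts are fixed by requiring that the terms cancel.

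For each piece, the cancellation argument of Eq. (\ref{Eq_Patch:commutator_boundary}) applies locally. Nested commutators of $\mcl{L}_{\partial,\alpha}$ with $\mcl{G}_3$ and $\mcl{G}_2$ of total order below $R/\xi-1$ only involve terms inside the even block $A_{2\alpha}$ of size $\ge R$. On that block $\mcl{G}_3$ and $-\mcl{G}_2$ agree, so the low orders cancel as in Eq. (\ref{Eq_Patch:commutator_delete}).

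The surviving tail is bounded with Lemma \ref{Lem_Patch:Commutator_bound} and the $3^q$ multinomial sum exactly as in Eq. (\ref{Eq_Patch:K_s_bound}). This gives, per cut, $\norm{\cdot}_\Diamond\le\frac{2e^2}{e-1}\xi g\,e^{-R/\xi}$. Summing over at most $N/R$ cuts (there are at most $N/(2R)$ even blocks, each with two sides) yields
\begin{equation*}
\tau\sup_{\tau'}\norm{\mcl{K}(\tau')}_\Diamond\le\frac{e}{3(e-1)}\frac{N}{R}e^{-R/\xi}.
\end{equation*}
The assumption $R\ge\xi\log N$ makes the right-hand side at most $1$. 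The Dyson series bound of Eq. (\ref{Eq_Patch:Ns_1_bound}) then gives $\norm{\mcl{N}-1}_\Diamond\le(e-1)x$ for $x\le1$, which is at most $Ne^{-R/\xi}$.

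The main obstacle is the algebraic step: getting the integral equation into the form $\mcl{K}\mcl{N}$ with $\mcl{K}$ built only from adjoint actions on boundary terms. Unlike the single-cut case, the three layers have staggered boundaries. We will handle this by noting that the derivative's non-cancelling contributions pair one cut of $\mcl{V}_3$ with one cut of $\mcl{V}_1$ through the intermediate $A_{2\alpha}$. We then expand $\mcl{G}_1=\mcl{L}-\mcl{G}_2-\mcl{G}_3+(\text{boundary})$ block by block, checking carefully that each leftover is commuted through $\mcl{V}_2\mcl{V}_1$ into conjugation form without picking up norm factors from non-contractive maps. This works because $\mcl{K}$ is bounded via the Pauli norm, not via products of diamond norms.
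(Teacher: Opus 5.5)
Your architecture matches the paper's proof. You use the same $\mcl{N}(\tau)=e^{-\mcl{L}\tau}\mcl{U}_\mr{Patch}(\tau)$ and the same generator $\mcl{K}(\tau')=e^{-\tau'\ad_{\mcl{L}}}(e^{\tau'\ad_{\mcl{G}_3}}e^{\tau'\ad_{\mcl{G}_2}}-1)\sum_\alpha\mcl{J}_\alpha$. The cancellation below order $R/\xi-1$, the Lemma~\ref{Lem_Patch:Commutator_bound}/$3^q$ tail estimate and the Dyson bound are also the same, and your final arithmetic closes.

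The gap is the step you yourself call the main obstacle: you assert the form of $\mcl{K}$ but never derive it. The identity you state is wrong. Terms straddling the $A_{4\alpha-1}$-side edges of $A_{4\alpha-2}$ and $A_{4\alpha}$ already lie in $\mcl{G}_3$, so there is nothing to remove from $\mcl{L}-\mcl{G}_3$. Also, $\mcl{G}_1+\mcl{G}_2$ contains the interior terms $\mcl{L}_{A_{4\alpha+1}}$, not only straddling terms. Your plan is not a derivation either: a ``$+(\text{boundary})$'' correction to $\mcl{G}_1$, pairing cuts of $\mcl{V}_3$ with cuts of $\mcl{V}_1$, commuting leftovers through $\mcl{V}_2\mcl{V}_1$, and coefficients ``fixed by requiring cancellation'' do not add up to one.

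The missing fact is the exact identity $\mcl{L}=\mcl{G}_1+\mcl{G}_2+\mcl{G}_3$. Since no term spans an even block, each term lies in exactly one triple of $\mcl{V}_1$ or $\mcl{V}_3$. The exception is terms inside an even block, which lie in one triple of each and are subtracted once by $\mcl{G}_2$. Together with $[\mcl{G}_2,\mcl{V}_2]=0$ this gives
\begin{equation*}
\frac{d}{d\tau}\mcl{N}=-e^{-\mcl{L}\tau}\left[\mcl{G}_1+\mcl{G}_2,\,\mcl{V}_3\mcl{V}_2\right]\mcl{V}_1,\qquad \mcl{G}_1+\mcl{G}_2=\sum_\alpha\mcl{L}_{A_{4\alpha+1}}+\sum_\alpha\mcl{J}_\alpha,
\end{equation*}
with $\mcl{J}_\alpha=\mcl{L}_{A_{4\alpha}A_{4\alpha+1}A_{4\alpha+2}}-\mcl{L}_{A_{4\alpha}}-\mcl{L}_{A_{4\alpha+1}}-\mcl{L}_{A_{4\alpha+2}}$. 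The $\mcl{L}_{A_{4\alpha+1}}$ commute with $\mcl{V}_3\mcl{V}_2$ and drop out. Writing $[\mcl{V}_3\mcl{V}_2,\mcl{J}]\mcl{V}_1=(\mcl{V}_3\mcl{V}_2\mcl{J}(\mcl{V}_3\mcl{V}_2)^{-1}-\mcl{J})\,\mcl{U}_\mr{Patch}$ then yields $\mcl{K}\mcl{N}$ at once, and nothing is moved through $\mcl{V}_1$.

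These details matter for the estimate. Only the cuts of $\mcl{V}_3$, the two edges of each $A_{4\alpha+1}$, carry boundary pieces. Suppose a piece sat at a cut of $\mcl{V}_1$, i.e.\ an $A_{4\alpha-1}$-side edge of an even block. Your claim that $\mcl{G}_3$ and $-\mcl{G}_2$ agree locally would fail there, because $\mcl{G}_3$ contains the terms crossing that cut and those in $A_{4\alpha-1}$. Already the first-order term $\tau'[\mcl{G}_3+\mcl{G}_2,\cdot\,]$ would survive, and the $e^{-R/\xi}$ suppression would be lost. Likewise, a genuine boundary remainder in $\mcl{G}_1$ would contribute a non-commutator term $e^{-\mcl{L}\tau}\mcl{V}_3\mcl{V}_2(\text{boundary})\mcl{V}_1$ with no $e^{-R/\xi}$ suppression at all.

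With the exact identity there is one relevant cut per even block, so your count of at most $N/R$ cuts is right. It is not right for the reason you give, though: odd blocks may be empty, so there can be up to $N/R$ even blocks, not $N/(2R)$.
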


\textbf{Proof.---}
The proof is essentially the same as that for the patching lemma, Theorem \ref{Thm_Patch:Patching_lemma}.
We define a map $\mcl{N}_\mr{Patch}(\tau)=e^{-\mcl{L}\tau} \mcl{U}_\mr{Patch}(\tau)$, and it is computed as follows,
\begin{eqnarray}
    \mcl{N}_\mr{Patch}(\tau) &=& \mcl{N}_\mr{Patch}(0) + \int_0^\tau \dd \tau' \dv{\tau'} \left( e^{-\mcl{L}\tau'}\prod_{\alpha} e^{\mcl{L}_{A_{4\alpha-2}A_{4\alpha-1}A_{4\alpha}} \tau'} \prod_{\alpha} e^{-\mcl{L}_{A_{2\alpha}}\tau'} \prod_{\alpha} e^{\mcl{L}_{A_{4\alpha}A_{4\alpha+1}A_{4\alpha+2}}\tau'}\right) \nonumber \\
    &=& 1 - \int_0^\tau \dd \tau' e^{-\mcl{L}\tau'} \left( \mcl{L}-\sum_\alpha \mcl{L}_{A_{4\alpha-2}A_{4\alpha-1}A_{4\alpha} } \right) \prod_{\alpha} e^{\mcl{L}_{A_{4\alpha-2}A_{4\alpha-1}A_{4\alpha}} \tau'} \prod_{\alpha} e^{-\mcl{L}_{A_{2\alpha}}\tau'} \prod_{\alpha} e^{\mcl{L}_{A_{4\alpha}A_{4\alpha+1}A_{4\alpha+2}}\tau'} \nonumber \\
    && \quad + \int_0^\tau \dd \tau' e^{-\mcl{L}\tau'} \prod_{\alpha} e^{\mcl{L}_{A_{4\alpha-2}A_{4\alpha-1}A_{4\alpha}} \tau'} \nonumber \\
    && \qquad \qquad \qquad \times \prod_{\alpha} e^{-\mcl{L}_{A_{2\alpha}}\tau'} \left( \sum_\alpha \mcl{L}_{A_{4\alpha}A_{4\alpha+1}A_{4\alpha+2}} - \sum_\alpha \mcl{L}_{A_{2\alpha}}\right) \prod_{\alpha} e^{\mcl{L}_{A_{4\alpha}A_{4\alpha+1}A_{4\alpha+2}}\tau'} . 
\end{eqnarray}
In the second equality, we use the fact that each pair of $\{\mcl{L}_{A_{4\alpha-2}A_{4\alpha-1}A_{4\alpha}}\}$, $\{\mcl{L}_{A_{2\alpha}}\}$, or $\{ \mcl{L}_{A_{4\alpha}A_{4\alpha+1}A_{4\alpha+2}}\}$ commutes with one another.
Since the size of the even-indexed blocks $A_\alpha$ is at least $\xi$ by Eq. (\ref{EqA_Basic:Range_Block_assumption}), we have
\begin{eqnarray}
    \mcl{L}-\sum_\alpha \mcl{L}_{A_{4\alpha-2}A_{4\alpha-1}A_{4\alpha} } &=& \sum_\alpha \mcl{L}_{A_{4\alpha}A_{4\alpha+1}A_{4\alpha+2}} - \sum_\alpha \mcl{L}_{A_{2\alpha}} \nonumber \\
    &=& \sum_\alpha \mcl{L}_{A_{4\alpha+1}} + \sum_\alpha \left( \mcl{L}_{A_{4\alpha}A_{4\alpha+1}A_{4\alpha+2}} -  \mcl{L}_{A_{4\alpha}} -  \mcl{L}_{A_{4\alpha+1}} -  \mcl{L}_{A_{4\alpha+2}} \right), 
\end{eqnarray}
according to Fig. \ref{Fig_patch_sparse} (a).
Let us define a map $\mcl{J}_\alpha$ by
\begin{equation}
    \mcl{J}_\alpha = \mcl{L}_{A_{4\alpha}A_{4\alpha+1}A_{4\alpha+2}} -  \mcl{L}_{A_{4\alpha}} -  \mcl{L}_{A_{4\alpha+1}} -  \mcl{L}_{A_{4\alpha+2}}.
\end{equation}
The support of $\mcl{J}_\alpha$ lies in the boundary of $A_{4\alpha}$ and $A_{4\alpha+1}$ or the one of $A_{4\alpha+1}$ and $A_{4\alpha+2}$.
Considering the range $\xi$, we have the relations,
\begin{equation}\label{EqA_Basic:support_boundaries}
    \supp (\mcl{J}_\alpha) \subset \bigcup_{\alpha'=0,1} \Set{ j \in \Lambda |   \min_{j' \in A_{4\alpha+\alpha'+1}} \left( j' \right) - \xi \leq j \leq \max_{j' \in A_{4\alpha+\alpha'}} \left( j' \right) + \xi},
\end{equation}
and $|\supp(\mcl{J}_\alpha)| \leq 4\xi$.
Since $\mcl{L}_{A_{4\alpha+1}}$ commutes with any of $\{\mcl{L}_{A_{4\alpha-2}A_{4\alpha-1}A_{4\alpha}}\}$ and $\{\mcl{L}_{A_{2\alpha}}\}$, the map $\mcl{N}_\mr{Patch}(\tau)$ can be expressed as 
\begin{eqnarray}
    \mcl{N}_\mr{Patch}(\tau) &=& 1 - \int_0^\tau \dd \tau' e^{-\mcl{L}\tau'} \left[ \sum_\alpha \mcl{J}_\alpha ,  \prod_{\alpha} e^{\mcl{L}_{A_{4\alpha-2}A_{4\alpha-1}A_{4\alpha}} \tau'} \prod_{\alpha} e^{-\mcl{L}_{A_{2\alpha}}\tau'} \right] \prod_{\alpha} e^{\mcl{L}_{A_{4\alpha}A_{4\alpha+1}A_{4\alpha+2}}\tau'} \nonumber \\
    &=& 1 + \int_0^\tau \dd \tau' \mcl{K}_\mr{Patch}(\tau') \mcl{N}_\mr{Patch}(\tau') \nonumber \\
    &=& \sum_{n=0}^\infty \int_0^\tau \dd \tau_n \cdots \int_0^{\tau_2} \dd \tau_1 \mcl{K}_\mr{Patch}(\tau_n) \cdots \mcl{K}_\mr{Patch}(\tau_1).
\end{eqnarray}
The generator $\mcl{K}_\mr{Patch}(\tau')$ is defined by
\begin{eqnarray}
    \mcl{K}_\mr{Patch}(\tau') &=& e^{-\tau'\ad_{\mcl{L}}} \left( \prod_{\alpha} e^{\tau' \ad_{\mcl{L}_{A_{4\alpha-2}A_{4\alpha-1}A_{4\alpha}}}} \prod_{\alpha} e^{-\tau'\ad_{\mcl{L}_{A_{2\alpha}}}} -1 \right) \sum_\alpha \mcl{J}_\alpha \nonumber \\
    &=& \sum_\alpha e^{-\tau'\ad_{\mcl{L}}} \left( e^{\tau' \ad_{\mcl{L}_{A_{4\alpha-2}A_{4\alpha-1}A_{4\alpha}} + \mcl{L}_{A_{4\alpha+2}A_{4\alpha+3}A_{4\alpha+4}}}}  e^{-\tau'\ad_{\mcl{L}_{A_{4\alpha}} + \mcl{L}_{A_{4\alpha+2}}}} -1 \right) \mcl{J}_\alpha. \label{EqA_Basic:K_s_Patch}
\end{eqnarray}
Using this map, the error can be bounded by
\begin{eqnarray}
    \norm{e^{\mcl{L}\tau}-\mcl{U}_\mr{Patch}(\tau)}_\Diamond \leq \norm{\mcl{N}_\mr{Patch}(\tau)-1}_\Diamond \leq \sum_{n=1}^\infty \frac{1}{n!} \left[ \sup_{\tau' \in [0,\tau]} \left( \norm{\mcl{K}_\mr{Patch}(\tau')}_\Diamond \right) \tau \right]^n, 
\end{eqnarray}
and hence it is sufficient to evaluate an upper bound on $\norm{\mcl{K}_\mr{Patch}(\tau')}_\Diamond$.

We evaluate Eq. (\ref{EqA_Basic:K_s_Patch}) in a similar manner to Eq. (\ref{Eq_Patch:K_s}).
The support of $\mcl{J}_\alpha$ lies around the boundary of $A_{4\alpha}$ and $A_{4\alpha+1}$ or the one of $A_{4\alpha+1}$ and $A_{4\alpha+2}$ within the distance $\xi$ as discussed in Eq. (\ref{EqA_Basic:support_boundaries}).
Under the assumption $R \geq \xi$, it is not connected to the domains $A_{4\alpha-2}$, $A_{4\alpha-1}$, $A_{4\alpha+3}$, or $A_{4\alpha+4}$.
Commutators with local terms are irrelevant until these domains can be connected to $A_{4\alpha}$ or $A_{4\alpha+2}$, which results in
\begin{equation}
    \left( \ad_{\mcl{L}_{A_{4\alpha-2}A_{4\alpha-1}A_{4\alpha}} + \mcl{L}_{A_{4\alpha+2}A_{4\alpha+3}A_{4\alpha+4}}} \right)^m  \left(\ad_{\mcl{L}_{A_{4\alpha}}+\mcl{L}_{A_{4\alpha+2}}} \right)^n \mcl{J}_\alpha = \left(\ad_{\mcl{L}_{A_{4\alpha}} + \mcl{L}_{A_{4\alpha+2}}} \right)^{m+n} \mcl{J}_\alpha,
\end{equation}
under $m+n < R/\xi -1 $ like Eq. (\ref{Eq_Patch:commutator_boundary}).
The corresponding terms appearing in the expansion of Eq. (\ref{EqA_Basic:K_s_Patch}) vanish as
\begin{equation}
    \sum_{\substack{m,n \geq 0: \\ 1 \leq m+n < R/\xi-1}} \frac{1}{m!n!} \left( \ad_{\mcl{L}_{A_{4\alpha-2}A_{4\alpha-1}A_{4\alpha}} + \mcl{L}_{A_{4\alpha+2}A_{4\alpha+3}A_{4\alpha+4}}}\right)^m \left( -\ad_{\mcl{L}_{A_{4\alpha}} + \mcl{L}_{A_{4\alpha+2}}}\right)^n \mcl{J}_\alpha = 0.
\end{equation}
For any $\tau' \in [0,\tau]$, we arrive at the relation,
\begin{eqnarray}
    \norm{\mcl{K}_\mr{Patch}(\tau')}_\Diamond \tau &\leq& \sum_\alpha \sum_{l=0}^\infty \sum_{\substack{m,n \geq 0: \\ \lceil \frac{R}\xi-1 \rceil \leq m+n}} \frac{(\tau')^{l+m+n}}{l!m!n!} \nonumber \\
    && \qquad \times \norm{(\ad_{\mcl{L}})^l\left(\ad_{\mcl{L}_{A_{4\alpha-2}A_{4\alpha-1}A_{4\alpha}} + \mcl{L}_{A_{4\alpha+2}A_{4\alpha+3}A_{4\alpha+4}}} \right)^m \left( \ad_{\mcl{L}_{A_{4\alpha}} + \mcl{L}_{A_{4\alpha+2}}} \right)^n \mcl{J}_\alpha}_\Diamond \tau \nonumber \\
    &\leq& \sum_\alpha  \sum_{l=0}^\infty \sum_{\substack{m,n \geq 0: \\ \lceil \frac{R}\xi-1 \rceil \leq m+n}} \frac{(\tau')^{l+m+n}}{l!m!n!}(l+m+n)! (2kg)^{l+m+n}\norm{\mcl{J}_\alpha}_\mr{Pauli} \tau \nonumber \\
    &\leq& \frac{N}4 \sum_{q=\lceil \frac{R}\xi -1 \rceil}^\infty \sum_{\substack{l,m,n \geq 0: \\ l+m+n = q}} \frac{q!}{l!m!n!} (2kg\tau')^q \times (4 \xi g \tau) \leq \frac{e}{6(e-1)} N e^{-\frac{R}\xi}.
\end{eqnarray}
The second inequality comes from the fact that $\mcl{L}_{A_{4\alpha-2}A_{4\alpha-1}A_{4\alpha}} + \mcl{L}_{A_{4\alpha+2}A_{4\alpha+3}A_{4\alpha+4}}$, $\mcl{L}_{A_{4\alpha}} + \mcl{L}_{A_{4\alpha+2}}$, and $\mcl{J}_\alpha$ are all at-most $k$-local and $g$-extensive, which enables us to use Lemma \ref{Lem_Patch:Commutator_bound}.
The third inequality comes from the fact that the number of $\{\mcl{J}_\alpha\}$ is at most $N/4$.
When we set the block size $R \geq \xi \log N$ as shown in Eq. (\ref{EqA_Basic:Range_Block_assumption}), this upper bound is smaller than $1$.
Finally, we obtain the error bound,
\begin{equation}
    \norm{e^{\mcl{L}\tau}-\mcl{U}_\mr{Patch}(\tau)}_\Diamond \leq \sum_{n=1}^\infty \frac1{n!} \left( \frac{e}{6(e-1)} N e^{-R/\xi}\right)^n \leq N e^{-R/\xi},
\end{equation}
which completes the proof. $\quad \square$
\section{Implementation of merged block}\label{SecA:Merged_block}

In this section, we provide the proofs of Theorem \ref{Thm:Merge_expansion} and Theorem \ref{Thm:Merge_truncation} on the properties of the merged block, and discuss how it can be reproduced by the quasi-probabilistic sampling.
Theorem \ref{Thm:Merge_expansion} states that the merged operator $\mcl{U}_\mr{Merge}^\alpha(\tau)$ defined by Eq. (\ref{Eq_Gen:Merge_op}) has a sufficiently small norm to avoid large overhead by quasi-probabilistic sampling.
This comes from the fact that $\mcl{U}_\mr{Merge}^\alpha(\tau)$ has the form of a second-order PF.
Theorem \ref{Thm:Merge_truncation} further ensures that it can be approximated by local operators, making the cost of each sampled circuit small.

We prove the theorems as follows.
First, we discuss PF errors for generic local and extensive linear maps in Appendix \ref{SubsecA:PF_error}.
We obtain the accurate description of generic PFs, which improves the result of Childs et al. (2021) \cite{childs2021-trotter} exponentially in spacetime.
We then apply it to the second-order PF for lattice Lindbladians and prove Theorem \ref{Thm:Merge_expansion}.
In Appendix \ref{SubsecA:Merge_truncation}, we consider the truncation of the merged operator for the sampled quantum circuit to be efficient.
We prove Theorem \ref{Thm:Merge_truncation} stating that $\order{[\log(1/\epsilon)]^2}$ local gates are sufficient to reproduce it within a sufficiently small error $\epsilon$.
We finally discuss the explicit implementation of the merged operator by quasiprobabilistic sampling in Appendix \ref{SubsecA:Merge_implement}.
For brevity, we omit the superscript $\alpha$ from $\mcl{U}_\mr{Merge}^\alpha(\tau)$ and other operators, and simply denote $B_{3\alpha-1}$, $B_{3\alpha}$ respectively as $B$, $B'$.
Namely, the merged operator defined by Eq. (\ref{Eq_Gen:Merge_op}) is written by
\begin{equation}\label{EqA_Merge:Merge_op}
    \mcl{U}_\mr{Merge}(\tau) = e^{-(\mcl{L}_B+\mcl{L}_{B'})\tau/2} e^{\mcl{L}_{BB'}\tau} e^{-(\mcl{L}_B+\mcl{L}_{B'})\tau/2}
\end{equation}
throughout this appendix.

\renewcommand{\thetheorem}{\thesection\arabic{theorem}}
\setcounter{theorem}{0}

\subsection{The product formula error for non-unitary dynamics and Theorem \ref{Thm:Merge_expansion}}\label{SubsecA:PF_error}

Here, we aim at proving Theorem \ref{Thm:Merge_expansion}. 
Since it forms the second-order PF, we begin with discussing the errors of generic PFs for local and extensive linear maps.

Suppose that a bounded linear operator $\mcl{A}$ is decomposed into 
\begin{equation}
    \mcl{A} = \sum_{\gamma=1}^\Gamma \mcl{A}_\gamma,
\end{equation}
where each $\mcl{A}_\gamma$ is also a bounded operator.
We consider the problem of approximating the non-unitary time evolution $e^{\mcl{A}\tau}$ with those under $\{\mcl{A}_\gamma\}$.
The $p$th-order product formula $\mcl{T}_p(\tau)$ is defined by a formula such that
\begin{equation}\label{EqA_Merge:generic_PF}
    \mcl{T}_p(\tau) = \prod_{v=1}^{V_p} \prod_{\gamma=1}^\Gamma e^{a_{v\gamma} \mcl{A}_{\pi_v(\gamma)}\tau}, \quad a_{v\gamma} \in \bbR,
\end{equation}
satisfying the order condition $e^{\mcl{A}\tau} = \mcl{T}_p(\tau) + \order{\tau^{p+1}}$ under $\tau \to 0$.
The number $V_p$ denotes a constant determined by $p$, and the symbol $\pi_v$ : $\{1,\cdots,\Gamma\} \to \{1,\cdots,\Gamma\}$ is the re-ordering of the index $\gamma$.
We set $|a_{v\gamma}| \leq 1$ without loss of generality.
For instance, the first- and second-order formulas are given by
\begin{equation}
    \mcl{T}_1(\tau) = \prod_{\gamma=1}^\Gamma e^{\mcl{A}_\gamma \tau}, \quad \mcl{T}_2(\tau) = \left(\prod_{\gamma=\Gamma}^{1} e^{\mcl{A}_\gamma \tau/2} \right) \left(\prod_{\gamma=1}^{\Gamma} e^{\mcl{A}_\gamma \tau/2} \right).
\end{equation}
To prove Theorem \ref{Thm:Merge_expansion}, we have to evaluate the difference $e^{\mcl{L}_{B:B'}\tau} - \mcl{U}_\mr{Merge} (\tau) \in \order{\tau^3}$, where the map $\mcl{U}_\mr{Merge} (\tau)$ gives the second-order PF.
This corresponds to the PF error $e^{\mcl{A}\tau}-\mcl{T}_p(\tau)$, and hence we discuss its expression.

The existing PF error for generic non-unitary dynamics is known to be bounded by
\begin{equation}\label{EqA_Merge:PF_error_Childs}
    \norm{e^{\mcl{A}\tau}-\mcl{T}_p(\tau)} \leq \mr{Const.} \times \alpha_{\mr{com},p} \tau^{p+1} e^{4 V_p \tau \sum_\gamma \norm{\mcl{A}_{\gamma}}}, \quad \alpha_{\mr{com},p} = \sum_{\gamma_0,\gamma_1,\cdots,\gamma_p=1}^\Gamma \norm{[\mcl{A}_{\gamma_p}, \cdots, [\mcl{A}_{\gamma_1}, \mcl{A}_{\gamma_0}]]}, 
\end{equation}
where the constant depends solely on $p$ and $\Gamma$ (See Theorem 6 in Ref. \cite{childs2021-trotter}).
The norm $\norm{\cdot}$ can be arbitrary.
However, this upper bound contains an exponentially-large factor in the $1$-norm $\sum_\gamma \norm{\mcl{A}_{\gamma}}$, and becomes exponentially large in the system size $N$.
We derive the alternative expression of the PF error for generic non-unitary dynamics and its upper bound, which exponentially improves Eq. (\ref{EqA_Merge:PF_error_Childs}) for dissipative dynamics as follows.

\begin{theorem}
\textbf{}

The $p$th-order PF error $e^{\mcl{A}\tau}-\mcl{T}_p(\tau) \in \order{\tau^{p+1}}$ is expanded by
\begin{equation}\label{EqA_Merge:PF_error_Dyson}
    e^{\mcl{A}\tau} - \mcl{T}_p(\tau) = - e^{\mcl{A}\tau} \sum_{m=1}^\infty \int_0^\tau \dd \tau_m \int_0^{\tau_m} \dd \tau_{m-1} \cdots \int_0^{\tau_2} \dd \tau_1 \Delta_p (\tau_1) \Delta_p (\tau_2) \cdots \Delta_p(\tau_m),
\end{equation}
where the operator $\Delta_p(\tau)$ is given by
\begin{eqnarray}
    \Delta_p(\tau) &=& \sum_{v=1}^{V_p} \sum_{\gamma=1}^\Gamma a_{v\gamma} \left( \prod_{v'=v-1}^1 \prod_{\gamma'=\Gamma}^1 e^{-a_{v'\gamma'}\tau \ad_{\mcl{A}_{\pi_{v'}(\gamma')}}} \right) \left(\prod_{\gamma'=\gamma-1}^1 e^{-a_{v\gamma'}\tau \ad_{\mcl{A}_{\pi_{v}(\gamma')}}} \right) \mcl{A}_{\pi_v(\gamma)} \nonumber \\
    && \qquad  \qquad \qquad\qquad \qquad\qquad \qquad\qquad \qquad \qquad \qquad- \sum_{\gamma=1}^\Gamma \left( \prod_{v'=V_p}^1 \prod_{\gamma'=\Gamma}^1 e^{-a_{v'\gamma'}\tau \ad_{\mcl{A}_{\pi_{v'}(\gamma')}}}  \right) \mcl{A}_\gamma.
    \label{EqA_Merge:generic_Delta_p}
\end{eqnarray}

\end{theorem}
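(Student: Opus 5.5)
The plan is to write the PF error in the interaction-picture (Duhamel/Dyson) form. Let $\mcl{E}(\tau) = e^{-\mcl{A}\tau}\mcl{T}_p(\tau)$. Then $\mcl{E}(0)=1$ and $e^{\mcl{A}\tau}-\mcl{T}_p(\tau) = e^{\mcl{A}\tau}(1-\mcl{E}(\tau))$. This reduces the statement to solving a linear ODE for $\mcl{E}$.

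\textbf{Step 1: the ODE for $\mcl{E}$.} Differentiate $\mcl{T}_p(\tau)$ by the product rule. The factor $e^{a_{v\gamma}\mcl{A}_{\pi_v(\gamma)}\tau}$ contributes $a_{v\gamma}\mcl{A}_{\pi_v(\gamma)}$ inserted at its own position. Using $\mcl{T}_p^{-1}$ (all operators are bounded, so the inverses are just the backward exponentials), move each inserted generator to the right end of the product. Conjugating past the factors to its right turns $\mcl{A}_{\pi_v(\gamma)}$ into
\begin{equation*}
\left(\prod_{v'}\prod_{\gamma'} e^{-a_{v'\gamma'}\tau\ad_{\mcl{A}_{\pi_{v'}(\gamma')}}}\right)\mcl{A}_{\pi_v(\gamma)},
\end{equation*}
where the product runs over exactly those factors applied before $(v,\gamma)$. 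This gives $\mcl{T}_p^{-1}\tfrac{d}{d\tau}\mcl{T}_p$ equal to the first sum in Eq. (\ref{EqA_Merge:generic_Delta_p}). Similarly,
\begin{equation*}
\mcl{T}_p^{-1}\mcl{A}\mcl{T}_p=\Big(\prod_{\text{all}} e^{-a\tau\ad}\Big)\sum_\gamma\mcl{A}_\gamma .
\end{equation*}
Therefore
\begin{equation*}
\tfrac{d}{d\tau}\mcl{E}(\tau)=e^{-\mcl{A}\tau}(-\mcl{A}\mcl{T}_p+\tfrac{d}{d\tau}\mcl{T}_p)=\mcl{E}(\tau)\Delta_p(\tau),
\end{equation*}
with $\Delta_p$ exactly as in Eq. (\ref{EqA_Merge:generic_Delta_p}). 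The main care needed is the index bookkeeping: the products must be ordered consistently with the paper's convention $\prod_{n'=1}^n A_{n'} = A_n\cdots A_1$, so that the conjugating factors are those lying to the right of the inserted generator in $\mcl{T}_p$.

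\textbf{Step 2: solving the ODE.} The equation $\mcl{E}'=\mcl{E}\Delta_p$ is a right-multiplication ODE with bounded, continuous generator. Iterating the integral equation $\mcl{E}(\tau)=1+\int_0^\tau\mcl{E}(\tau')\Delta_p(\tau')\,d\tau'$ gives the convergent series
\begin{equation*}
\mcl{E}(\tau)=1+\sum_{m\ge1}\int_{0\le\tau_1\le\cdots\le\tau_m\le\tau}\Delta_p(\tau_1)\cdots\Delta_p(\tau_m).
\end{equation*}
Convergence follows because $\Delta_p$ is bounded on $[0,\tau]$, by the same argument as for $\mcl{N}(\tau)$ in the proof of Theorem \ref{Thm_Patch:Patching_lemma}. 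The ordering with the earliest time on the left comes from the right multiplication. Substituting this series into $e^{\mcl{A}\tau}(1-\mcl{E}(\tau))$ yields Eq. (\ref{EqA_Merge:PF_error_Dyson}).

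\textbf{Step 3: the $\order{\tau^{p+1}}$ claim.} It remains to check that the error is $\order{\tau^{p+1}}$, which is consistent with the expansion. The order condition forces $\mcl{E}(\tau)-1=\order{\tau^{p+1}}$. Equivalently, by the standard argument of Childs et al., $\Delta_p(\tau)=\order{\tau^p}$, since its Taylor coefficients up to order $p-1$ must vanish. This step is not needed for the identity itself, only for the remark.
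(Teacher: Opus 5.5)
Your proposal is correct and follows essentially the same route as the paper. Both arguments differentiate $e^{-\mcl{A}\tau}\mcl{T}_p(\tau)$ to obtain the right-multiplication equation with generator $\Delta_p(\tau)=\mcl{T}_p(\tau)^{-1}\frac{d}{d\tau}\mcl{T}_p(\tau)-\mcl{T}_p(\tau)^{-1}\mcl{A}\,\mcl{T}_p(\tau)$, and then solve it by a time-ordered Dyson series with the earliest time on the left. Your explicit check that the conjugating factors are exactly those to the right of the inserted generator (consistent with the paper's product convention) spells out the bookkeeping the paper compresses into ``substituting'' the product form of $\mcl{T}_p$, and your Step 3 rightly treats the $\order{\tau^{p+1}}$ claim as a consequence of the order condition rather than part of the identity.
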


\textbf{Proof.---}
We have
\begin{eqnarray}
    \dv{\tau} \left( e^{-\mcl{A}\tau} \mcl{T}_p(\tau) \right) &=& - e^{-\mcl{A}\tau} \mcl{A} \mcl{T}_p(\tau) + e^{-\mcl{A}\tau} \dv{\tau} \mcl{T}_p(\tau) \nonumber \\
    &=& e^{-\mcl{A}\tau} \mcl{T}_p(\tau) \left( \mcl{T}_p(\tau)^{-1}\dv{\tau} \mcl{T}_p(\tau) - \mcl{T}_p(\tau)^{-1} \mcl{A} \mcl{T}_p(\tau) \right) \nonumber \\
    &=& e^{-\mcl{A}\tau} \mcl{T}_p(\tau) \Delta_p(\tau),
\end{eqnarray}
where we define the operator $\Delta_p(\tau)$ by
\begin{equation}
    \Delta_p(\tau) = \mcl{T}_p(\tau)^{-1}\dv{\tau} \mcl{T}_p(\tau) - \mcl{T}_p(\tau)^{-1} \mcl{A} \mcl{T}_p(\tau).
\end{equation}
Substituting Eq. (\ref{EqA_Merge:generic_PF}) as $\mcl{T}_p(\tau)$, we obtain the expression, Eq. (\ref{EqA_Merge:generic_Delta_p}).
Since the operator $\Delta_p(\tau)$ is bounded, $e^{-\mcl{A}\tau} \mcl{T}_p(\tau)$ can be given by a Dyson series expansion,
\begin{equation}
    e^{-\mcl{A}\tau} \mcl{T}_p(\tau) = \sum_{m=0}^\infty \int_0^\tau \dd \tau_m \int_0^{\tau_m} \dd \tau_{m-1} \cdots  \int_0^{\tau_2} \dd \tau_1 \Delta_p (\tau_1) \Delta_p (\tau_2) \cdots \Delta_p(\tau_m).
\end{equation}
This immediately implies Eq. (\ref{EqA_Merge:PF_error_Dyson}). $\quad \square$

The form of the operator $\Delta_p(\tau)$ by Eq. (\ref{EqA_Merge:generic_Delta_p}) is essentially the same as an operator appearing in the PF error analyzed by Childs et al. (2021) \cite{childs2021-trotter}.
The order condition $e^{\mcl{A}\tau}-\mcl{T}_p(\tau) \in \order{\tau^{p+1}}$ indicates $\Delta_p(\tau) \in \order{\tau^p}$, i.e., the cancellation of the low-order terms $\Theta(1),\Theta(\tau), \cdots, \Theta (\tau^{p-1})$.
Namely, the operator $\Delta_p(\tau)$ can be expanded by a set of $(q+1)$-fold nested commutators with $q \geq p$ as follows,
\begin{eqnarray}
    \Delta_p(\tau) &=& \sum_{v=1}^{V_p} \sum_{\gamma=1}^\Gamma a_{v\gamma} \sum_{\substack{\{l_{v'\gamma'},l_{\gamma'}\}: \\ \sum l_{v'\gamma'}+\sum l_{\gamma'} \geq p}}\left( \prod_{v'=v-1}^1 \prod_{\gamma'=\Gamma}^1 \frac{\left(-a_{v'\gamma'}\tau \ad_{\mcl{A}_{\pi_{v'}(\gamma')}}\right)^{l_{v'\gamma'}}}{l_{v'\gamma'}!} \right) \left(\prod_{\gamma'=\gamma-1}^1 \frac{\left(-a_{v\gamma'}\tau \ad_{\mcl{A}_{\pi_{v}(\gamma')}}\right)^{l_{\gamma'}} }{l_{\gamma'}!}\right) \mcl{A}_{\pi_v(\gamma)} \nonumber \\
    && \qquad \qquad \qquad \qquad \qquad\qquad \qquad \qquad \qquad- \sum_{\gamma=1}^\Gamma \sum_{\substack{\{l_{v'\gamma'}\}: \\ \sum l_{v'\gamma'} \geq p}} \left( \prod_{v'=V_p}^1 \prod_{\gamma'=\Gamma}^1 \frac{\left(-a_{v'\gamma'}\tau \ad_{\mcl{A}_{\pi_{v'}(\gamma')}}\right)^{l_{v'\gamma'}}}{l_{v'\gamma'}!}  \right) \mcl{A}_\gamma.
\end{eqnarray}
We can use the commutator bound by the locality, Lemma \ref{Lem_Patch:Commutator_bound}, and obtain an exponentially better error bound compared to Eq. (\ref{EqA_Merge:PF_error_Childs}) as follows. 

\begin{lemma}\label{LemmaA:nonuniaty_PF_error}
\textbf{}

We adopt the norm satisfying submultiplicativity (i.e., $\norm{\mcl{A} \mcl{A}'} \leq \norm{\mcl{A}} \, \norm{\mcl{A}'}$ is satisfied for any $\mcl{A},\mcl{A}'$) and suppose that the exact time evolution $e^{\mcl{A}\tau}$ is dissipative in the sense $\norm{e^{\mcl{A}\tau}} \leq 1$.
We also assume that $\mcl{A}=\sum_{\gamma=1}^\Gamma \mcl{A}_\gamma$ is a $k$-local and $g$-extensive map such that
\begin{equation}\label{EqA_Merge:Commutator_bound}
    \norm{[\mcl{A}_{\gamma_q}, \cdots, [\mcl{A}_{\gamma_1}, \mcl{A}_{\gamma_0}]]} \leq q! (2kg)^q g \, |\supp(\mcl{A}_{\gamma_0})|
\end{equation}
is satisfied for every $\gamma_0,\cdots,\gamma_q \in \{1,\cdots,\Gamma\}$.
When the time $\tau$ is small enough to satisfy
\begin{equation}\label{EqA_Merge:tau_assumption_PF}
    0 \leq \tau \leq \frac{1}{4 V_p \Gamma kg}, \qquad 2 (2V_p\Gamma kg \tau)^{p+1} \, \max_\gamma \left( |\supp(\mcl{A}_\gamma)| \right) \leq 1,
\end{equation}
the PF error can be bounded by
\begin{equation}\label{EqA_Merge:generic_PF_error_bound}
    \norm{e^{\mcl{A}\tau}-\mcl{T}_p(\tau)} \leq 2(e-1) (2V_p\Gamma kg \tau)^{p+1} \, \max_\gamma \left( |\supp(\mcl{A}_\gamma)| \right) .
\end{equation}
\end{lemma}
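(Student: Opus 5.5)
We start from the Dyson-type representation of the preceding theorem, Eq. (\ref{EqA_Merge:PF_error_Dyson}). Since $\norm{e^{\mcl{A}\tau}} \leq 1$ and the norm is submultiplicative, it yields
\begin{equation*}
    \norm{e^{\mcl{A}\tau}-\mcl{T}_p(\tau)} \leq \sum_{m=1}^\infty \frac{1}{m!} \left( \tau \sup_{\tau' \in [0,\tau]} \norm{\Delta_p(\tau')} \right)^m \leq (e-1)\, \tau \sup_{\tau'\in[0,\tau]} \norm{\Delta_p(\tau')},
\end{equation*}
where the last step holds as soon as $\tau \sup \norm{\Delta_p} \leq 1$. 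This is the key advantage over Eq. (\ref{EqA_Merge:PF_error_Childs}). The dissipativity assumption absorbs the leftmost $e^{\mcl{A}\tau}$, and no exponential factor in $\sum_\gamma \norm{\mcl{A}_\gamma}$ ever appears, because we never bound $\mcl{T}_p$ or inverse exponentials by norms. Such objects enter only through conjugations $e^{-a\tau\,\ad}$ acting on a single local term, and these conjugations are expanded in nested commutators. The whole task therefore reduces to showing
\begin{equation*}
    \tau \norm{\Delta_p(\tau')} \leq 2 (2V_p\Gamma kg\tau)^{p+1} \max_\gamma |\supp(\mcl{A}_\gamma)|
\end{equation*}
for $\tau'\leq\tau$. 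The second condition in Eq. (\ref{EqA_Merge:tau_assumption_PF}) then guarantees that this quantity is at most $1$, which gives Eq. (\ref{EqA_Merge:generic_PF_error_bound}).

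To bound $\Delta_p(\tau')$, we use the nested-commutator expansion displayed just before the lemma. The order condition removes all terms with total commutator order $q=\sum l < p$, so only $q \geq p$ survives. Consider a fixed term with $q$ commutators. Its base operator $\mcl{A}_{\pi_v(\gamma)}$ (or $\mcl{A}_\gamma$) is nested inside $q$ adjoint actions drawn from the list of at most $V_p\Gamma$ exponentials. We group the multi-index $\{l_{v'\gamma'}\}$ by its total $q$ and expand each $(\ad_{\mcl{A}_{\pi}})^{l}$ as a sum over nested commutators of individual $\mcl{A}_{\gamma_i}$. Assumption Eq. (\ref{EqA_Merge:Commutator_bound}) then bounds each resulting nested commutator by $q!(2kg)^q g\,|\supp(\mcl{A}_{\gamma_0})|$. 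Using $|a_{v\gamma}|\leq 1$, the multinomial identity $\sum_{\sum l = q} q!/\prod l! = (\#\text{exponentials})^q \leq (V_p\Gamma)^q$ cancels the $1/\prod l!$ factors against $q!$.

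After these steps, the contribution of order $q$ is at most $2 V_p\Gamma\,(V_p\Gamma\, 2kg\tau')^q\, g \max_\gamma|\supp(\mcl{A}_\gamma)|$. The factor $2V_p\Gamma$ counts the outer sums over $(v,\gamma)$ and the subtracted term. Summing the geometric series over $q\geq p$ with ratio $2V_p\Gamma kg\tau \leq 1/2$, which follows from the first condition in Eq. (\ref{EqA_Merge:tau_assumption_PF}), at most doubles the leading term. Multiplying by $\tau$ and absorbing $g \leq 2kg$ (valid since $k\geq 1$) together with the prefactor $V_p\Gamma$ then gives the claimed form $2(2V_p\Gamma kg\tau)^{p+1}\max_\gamma|\supp(\mcl{A}_\gamma)|$.

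The main obstacle is this constant bookkeeping. We must check that the commutator bound Eq. (\ref{EqA_Merge:Commutator_bound}) for individual $\mcl{A}_\gamma$'s really applies after the $\ad$-powers of grouped exponentials are expanded, and that the factor $V_p\Gamma$ from the outer sum and the extra $g$ fit into one additional power of $2V_p\Gamma kg\tau$. If the constants do not close exactly, our fallback is to sacrifice slack in the geometric-series ratio, i.e., use $\leq 1/2$ rather than a sharper value. The precise matching of the factor $2$ is not essential for the subsequent application to Theorem \ref{Thm:Merge_expansion}.
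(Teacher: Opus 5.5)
Your proposal is correct and follows essentially the same route as the paper: the Dyson representation with $\norm{e^{\mcl{A}\tau}}\leq 1$ absorbed, cancellation of the commutator orders $q<p$ via the order condition, the multinomial identity giving $(V_p\Gamma)^q$, the geometric series with ratio $\leq 1/2$, and the final $(e-1)$ factor from $\tau\sup\norm{\Delta_p}\leq 1$. The constants close exactly without any fallback. Your count $(V_p+1)\Gamma\leq 2V_p\Gamma$ gives $4V_p\Gamma g\tau\,(2V_p\Gamma kg\tau)^p\max_\gamma|\supp(\mcl{A}_\gamma)|$, and to reach the claimed form you need only $g\leq kg$ rather than $g\leq 2kg$. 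Your worry about ``grouped exponentials'' is moot, because each exponential in $\Delta_p$ involves a single $\mcl{A}_{\pi_{v'}(\gamma')}$, so the assumed commutator bound applies directly.
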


\textbf{Remark.---}
The relation Eq. (\ref{EqA_Merge:Commutator_bound}) holds for generic local and extensive operators \cite{Kuwahara2016-yn}, while the definitions of the locality $k$ and the extensiveness $g$ are slightly different from those for lattice Lindbladians, Eqs. (\ref{Eq_Setup:k_def}) and (\ref{Eq_Setup:g_def}).
We also note that the assumption on the time $\tau$ by Eq. (\ref{EqA_Merge:tau_assumption_PF}) is reasonable.
When employing the PF, we set the time $\tau$ such that the PF error becomes smaller than the allowable error $\varepsilon \in (0,1)$.
We require the right hand side of Eq. (\ref{EqA_Merge:generic_PF_error_bound}) to be $\order{\varepsilon}$, and hence the assumption Eq. (\ref{EqA_Merge:tau_assumption_PF}) is satisfied.

\textbf{Proof.---}
Using the expression of the error by Eq. (\ref{EqA_Merge:PF_error_Dyson}), the error can be bounded by
\begin{eqnarray}
    \norm{e^{\mcl{A}\tau}-\mcl{T}_p(\tau)} &\leq& \norm{e^{\mcl{A}\tau}} \sum_{m=1}^\infty \frac1{m!} \left[ \sup_{\tau' \in [0,\tau]} \left( \norm{\Delta_p(\tau')}\right) \tau\right]^m.
\end{eqnarray}
The second term in Eq. (\ref{EqA_Merge:generic_Delta_p}) is bounded by
\begin{eqnarray}
    \norm{\sum_{\gamma=1}^\Gamma \sum_{\substack{\{l_{v'\gamma'}\}: \\ \sum l_{v'\gamma'} \geq p}} \left( \prod_{v'=V_p}^1 \prod_{\gamma'=\Gamma}^1 \frac{\left(-a_{v'\gamma'}\tau \ad_{\mcl{A}_{\pi_{v'}(\gamma')}}\right)^{l_{v'\gamma'}}}{l_{v'\gamma'}!}  \right) \mcl{A}_\gamma} &\leq& \sum_{q=p}^\infty \tau^q \sum_{\gamma=1}^\Gamma \sum_{\substack{\{l_{v'\gamma'}\}: \\ \sum l_{v'\gamma'} = q}} \norm{\left( \prod_{v'=V_p}^1 \prod_{\gamma'=\Gamma}^1 \frac{\left(\ad_{\mcl{A}_{\pi_{v'}(\gamma')}}\right)^{l_{v'\gamma'}}}{l_{v'\gamma'}!}  \right) \mcl{A}_\gamma} \nonumber \\
     &\leq& \sum_{q=p}^\infty \tau^q \sum_{\gamma=1}^\Gamma  \sum_{\substack{\{l_{v'\gamma'}\}: \\ \sum l_{v'\gamma'} = q}} \frac{q!(2kg)^q}{\prod_{v',\gamma'} l_{v'\gamma'} !} g \, |\supp(\mcl{A}_\gamma)| \nonumber \\
     &\leq& \sum_{q=p}^\infty (2V_p\Gamma kg \tau)^q g \, \sum_{\gamma=1}^\Gamma |\supp(\mcl{A}_\gamma)| \nonumber \\
     &\leq& 2 (2V_p\Gamma kg \tau)^p g \, \sum_{\gamma=1}^\Gamma |\supp(\mcl{A}_\gamma)|. \label{EqA_Merge:Delta_p_2nd_term}
\end{eqnarray}
We use the relation Eq. (\ref{EqA_Merge:Commutator_bound}) in the second line and use $2V_p\Gamma kg \tau \leq 1/2$ from the assumption Eq. (\ref{EqA_Merge:tau_assumption_PF}) in the last line.
The same calculation goes also for the first term in Eq. (\ref{EqA_Merge:generic_Delta_p}), giving $2 V_p(2V_p\Gamma kg \tau)^p g \, \sum_{\gamma=1}^\Gamma |\supp(\mcl{A}_\gamma)|$ as its upper bound.
Thus, we obtain an upper bound,
\begin{equation}
    \sup_{\tau' \in [0,\tau]} \left( \norm{\Delta_p(\tau')} \right) \, \tau \leq 2 (2V_p\Gamma kg \tau)^{p+1} \, \max_\gamma \left( |\supp(\mcl{A}_\gamma)| \right) \leq 1.
\end{equation}
As a result, we arrive at the error bound,
\begin{eqnarray}
    \norm{e^{\mcl{A}\tau}-\mcl{T}_p(\tau)} &\leq& \sup_{\tau' \in [0,\tau]} \left( \norm{\Delta_p(\tau')}\right) \tau \sum_{m=1}^\infty \frac1{m!} \nonumber \\
    &\leq& 2(e-1) (2V_p\Gamma kg \tau)^{p+1} \, \max_\gamma \left( |\supp(\mcl{A}_\gamma)| \right),
\end{eqnarray}
which completes the proof. $\quad \square$

We note that Lemma \ref{LemmaA:nonuniaty_PF_error} provides an exponentially better error bound for dissipative dynamics than Eq. (\ref{EqA_Merge:PF_error_Childs}), proven by Childs et al. (2021) \cite{childs2021-trotter}.
The error bound in Eq. (\ref{EqA_Merge:PF_error_Childs}) suffers from the exponentially large factor $e^{\Theta (V_p \tau \sum_\gamma \norm{\mcl{A}_{\gamma}})}$ for generic non-unitary dynamics, which comes from the norm of the exact time evolution $e^{\mcl{A}\tau}$ and the one of the approximate time evolution $\mcl{T}_p(\tau)$.
Although the former one can be deleted for dissipative dynamics satisfying $\norm{e^{\mcl{A}\tau}} \leq 1$, the contribution from the latter is still exponentially large.
By contrast, the error bound by Eq. (\ref{EqA_Merge:generic_PF_error_bound}) is simply proportional to the commutator bound Eq. (\ref{EqA_Merge:Commutator_bound}) without any exponentially large factor, under the reasonable short time assumption, Eq. (\ref{EqA_Merge:tau_assumption_PF}).

We go back to the discussion on the merged operator $\mcl{U}_\mr{Merge}(\tau)$ for Lindbladian dynamics.
We apply the above error analysis to express the difference $e^{\mcl{L}_{B:B'}\tau} - \mcl{U}_\mr{Merge} (\tau) \in \order{\tau^3}$, and obtain the following expression.

\begin{corollary*}
\textbf{(Restatement of Theorem \ref{Thm:Merge_expansion})}

Suppose that the time $\tau$ satisfies
\begin{equation}\label{EqA_Merge:time_assumption}
    0 \leq \tau \leq \frac1{6e \xi g} \in \order{1}.
\end{equation}
The merged operator $\mcl{U}_\mr{Merge}(\tau)$, defined by Eq. (\ref{EqA_Merge:Merge_op}), is expressed by
\begin{equation}\label{Eq_Gen:Merge_form}
    \mcl{U}_\mr{Merge}(\tau) =  e^{\mcl{L}_{B:B'}\tau} \left[ 1 + \mcl{A}_\mr{Merge} (\tau)\right],
\end{equation}
where an HP map $\mcl{A}_\mr{Merge} (\tau)$ is bounded by
\begin{equation}
    \norm{\mcl{A}_\mr{Merge} (\tau)}_\text{Pauli} \leq 43 (\xi g\tau)^3 \in \order{\tau^3}.
\end{equation}
\end{corollary*}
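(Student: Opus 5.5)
The plan is to apply the generic product-formula machinery of the preceding theorem and Lemma \ref{LemmaA:nonuniaty_PF_error} to the second-order formula
\begin{equation*}
\mcl{U}_\mr{Merge}(\tau)=e^{\mcl{A}_1\tau/2}e^{\mcl{A}_2\tau}e^{\mcl{A}_1\tau/2},
\qquad \mcl{A}_1=-(\mcl{L}_B+\mcl{L}_{B'}),\quad \mcl{A}_2=\mcl{L}_{BB'},
\end{equation*}
so that $\mcl{A}_1+\mcl{A}_2=\mcl{L}_{B:B'}$. We work in the Pauli norm rather than the diamond norm. The Pauli norm is submultiplicative: composing $P_\mu(\cdot)P_\nu$ with $P_{\mu'}(\cdot)P_{\nu'}$ gives a single Pauli term up to a phase, so the coefficient sums multiply. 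We set $\mcl{A}_\mr{Merge}(\tau):=e^{-\mcl{L}_{B:B'}\tau}\mcl{U}_\mr{Merge}(\tau)-1$. By the derivative computation in the proof of the preceding theorem, it equals the Dyson series $\sum_{m\ge1}\int\cdots\int \Delta_2(\tau_1)\cdots\Delta_2(\tau_m)$, where
\begin{equation*}
\Delta_2(\tau)=\mcl{U}_\mr{Merge}^{-1}\dv{\tau}\mcl{U}_\mr{Merge}-\mcl{U}_\mr{Merge}^{-1}\mcl{L}_{B:B'}\mcl{U}_\mr{Merge}
\end{equation*}
is written as in Eq.~(\ref{EqA_Merge:generic_Delta_p}). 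Submultiplicativity then gives $\norm{\mcl{A}_\mr{Merge}}_\mr{Pauli}\le \exp(\tau\sup_{\tau'\le\tau}\norm{\Delta_2(\tau')}_\mr{Pauli})-1$. Hermiticity preservation is immediate, because every factor ($e^{\pm\mcl{L}\tau}$, $\ad_{\mcl{L}}$ of HP maps, and their products) is HP.

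The core step is to bound $\norm{\Delta_2(\tau')}_\mr{Pauli}$ by $C\,\xi g(\xi g\tau')^2$, with no factor of the block size $R$. Applying Lemma \ref{LemmaA:nonuniaty_PF_error} naively would produce $\max_\gamma|\supp(\mcl{A}_\gamma)|\sim R$, which is too weak. To avoid this, I would rewrite everything in terms of $\mcl{X}:=\mcl{L}_B+\mcl{L}_{B'}$ and $\mcl{Y}:=\mcl{L}_{B:B'}$, so that $\mcl{A}_1=-\mcl{X}$ and $\mcl{A}_2=\mcl{X}+\mcl{Y}$. The conjugations $e^{\mp\tau\ad_{\mcl{X}}}$ are exact inverses, so the pure-$\mcl{X}$ parts cancel. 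Explicitly, $\mcl{U}_\mr{Merge}=e^{-\mcl{X}\tau/2}e^{(\mcl{X}+\mcl{Y})\tau}e^{-\mcl{X}\tau/2}$ is the interaction-picture evolution generated by $\mcl{Y}$ conjugated by $\mcl{X}$. After expanding the exponentials of $\ad$ in Eq.~(\ref{EqA_Merge:generic_Delta_p}), every surviving term should be a nested commutator whose innermost element is $\mcl{Y}$, or a commutator of $\mcl{X}$ with $\mcl{Y}$; terms built only from $\mcl{X}$ vanish because $\ad_{\mcl{X}}\mcl{X}=0$. The second-order condition cancels the orders $\tau^0$ and $\tau^1$, so only nested commutators $(\ad_{\mcl{Z}_q})\cdots(\ad_{\mcl{Z}_1})\mcl{Y}$ with $q\ge2$ and $\mcl{Z}_i\in\{\mcl{X},\mcl{Y}\}$ remain. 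Each has Pauli norm at most $q!(2kg)^q g\,|\supp(\mcl{Y})|$ by Lemma \ref{Lem_Patch:Commutator_bound}, with $|\supp(\mcl{Y})|\le2\xi$.

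It remains to sum these terms. The multinomial count is absorbed exactly as in Eq.~(\ref{Eq_Patch:K_s_bound}): with $V_2=2$ and $\Gamma=2$ the geometric ratio is of order $\xi g\tau\le 1/(6e)$. This gives $\tau\norm{\Delta_2}_\mr{Pauli}\le c(\xi g\tau)^3$ for an explicit numerical $c$. Since this quantity is at most a small constant, $e^x-1\le (e-1)x$ for $x\le1$ closes the bound, and tracking the constants should yield $43$. The main obstacle is the middle step: organizing Eq.~(\ref{EqA_Merge:generic_Delta_p}) so that the $\tau^0$ and $\tau^1$ cancellations are explicit and every remaining term has $\mcl{Y}$ innermost. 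If the generic formula resists this reorganization, I would instead differentiate $e^{-\mcl{Y}\tau}\mcl{U}_\mr{Merge}(\tau)$ directly in the $\mcl{X},\mcl{Y}$ variables and Taylor-expand with integral remainder to second order.
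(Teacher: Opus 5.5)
Your proposal is correct and is essentially the paper's proof. The paper also instantiates the generic product-formula Dyson expansion with $\mcl{A}_1=-(\mcl{L}_B+\mcl{L}_{B'})$ and $\mcl{A}_2=\mcl{L}_{BB'}$. It writes $\Delta_\mr{Merge}$ explicitly as $q\ge 2$ nested commutators of $\ad_{\mcl{L}_B+\mcl{L}_{B'}}$ and $\ad_{\mcl{L}_{BB'}}$ acting on $\ad_{\mcl{L}_B+\mcl{L}_{B'}}\mcl{L}_{B:B'}$, which puts $\mcl{L}_{B:B'}$ innermost. It then applies Lemma \ref{Lem_Patch:Commutator_bound} with $|\supp(\mcl{L}_{B:B'})|\le 2\xi$ and sums the Dyson series in the Pauli norm.

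The only caution concerns reaching the explicit constant $43$, and it takes three pieces of bookkeeping.
\begin{itemize}
\item Keep $\ad_{\mcl{L}_{BB'}}$ unsplit rather than writing it as $\ad_{\mcl{X}}+\ad_{\mcl{Y}}$, and use the explicit weights $1/2$ rather than the generic $V_p\Gamma$ counting. The series then has ratio $4\xi g\tau$ and gives $\tau\sup_{\tau'}\norm{\Delta_\mr{Merge}(\tau')}_\mr{Pauli}\le \frac{32}{1-2/(3e)}(\xi g\tau)^3\approx 42.4\,(\xi g\tau)^3$.
\item Splitting $\ad_{\mcl{L}_{BB'}}$ instead raises the ratio to $6\xi g\tau$, and the generic $V_p\Gamma$ counting raises it to $8\xi g\tau$.
\item Close with $e^x-1\le x e^x$, which is sharp here because $x\lesssim 10^{-2}$. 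Your closing inequality $e^x-1\le (e-1)x$ would only give a constant of roughly $73$.
\end{itemize}
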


\textbf{Proof.---}
This immediately follows from the proof of Lemma \ref{LemmaA:nonuniaty_PF_error}, though we slightly modify it for our setup.
The merged operator $\mcl{U}_\mr{Merge}(\tau)$ is the second-order PF, where we have
\begin{equation}
    \mcl{A} = \mcl{L}_{B:B'}, \quad \mcl{A}_1 = - \mcl{L}_{B} - \mcl{L}_{B'}, \quad \mcl{A}_2 = \mcl{L}_{BB'}, \quad p = 2, \quad \Gamma =2, \quad V_p =2, \quad a_{v\gamma} = \frac12. 
\end{equation}
The operator $\mcl{A}_\mr{Merge}(\tau)$ is immediately given by
\begin{equation}\label{EqA_Merge:A_Merge_Dyson}
    \mcl{A}_\mr{Merge}(\tau) = \sum_{m=1}^\infty \int_0^\tau \dd \tau_m \int_0^{\tau_m} \dd \tau_{m-1} \cdots \int_0^{\tau_2} \dd \tau_1 \Delta_\mr{Merge} (\tau_1) \Delta_\mr{Merge} (\tau_2) \cdots \Delta_\mr{Merge}(\tau_m),
\end{equation}
where $\Delta_\mr{Merge}(\tau)$ is defined by
\begin{equation}
    \Delta_\mr{Merge}(\tau) = \mcl{U}_\mr{Merge}(\tau)^{-1} \dv{\tau} \mcl{U}_\mr{Merge}(\tau) - \mcl{U}_\mr{Merge}(\tau)^{-1} \mcl{L}_{B:B'} \mcl{U}_\mr{Merge}(\tau).
\end{equation}
Expanding this operator in $\tau$, we obtain the expression,
\begin{eqnarray}
    \Delta_\mr{Merge}(\tau) &=& \sum_{q=2}^\infty \frac{(-\tau)^q}{2}
\sum_{\substack{l_1,l_2,l_3 \geq 0: \\ l_1+l_2+l_3=q-1}}
\frac{(\ad_{\mcl{L}_{B}+\mcl{L}_{B'}})^{l_1} (\ad_{\mcl{L}_{BB'}})^{l_2} (\ad_{\mcl{L}_{B}+\mcl{L}_{B'}})^{l_3}}
{(-2)^{l_1+l_3}l_1!\,l_2!\,(l_3+1)!} \ad_{\mcl{L}_{B}+\mcl{L}_{B'}} \mcl{L}_{B:B'} \nonumber \\
&& \qquad \qquad\qquad\qquad - \sum_{q=2}^\infty \frac{(-\tau)^q}{2}\sum_{\substack{l_1,l_2 \geq 0: \\ l_1+l_2=q-1}}
\frac{(\ad_{\mcl{L}_{B}+\mcl{L}_{B'}})^{l_1} (\ad_{\mcl{L}_{BB'}})^{l_2}}
{(-2)^{l_1} l_1!\,(l_2+1)!} \ad_{\mcl{L}_{B}+\mcl{L}_{B'}} \mcl{L}_{B:B'}, \label{EqA_Merge:Delta_Merge_expansion}
\end{eqnarray}
which corresponds to Eq. (\ref{EqA_Merge:generic_Delta_p}).
We can use the commutator bound by Lemma \ref{Lem_Patch:Commutator_bound} as a counterpart of Eq. (\ref{EqA_Merge:Commutator_bound}).
We can slightly improve the calculation compared to Eq. (\ref{EqA_Merge:Delta_p_2nd_term}) for the merged operator.
Since the nested commutators always appear with $\mcl{L}_{B:B'}$, the quantity $|\supp (\mcl{A}_\gamma)|$ can be replaced by $|\supp(\mcl{L}_{B:B'})| \leq 2\xi$.
We also reflect the value $|a_{v\gamma}|=1/2$.
This results in the upper bound,
\begin{eqnarray}
    \sup_{\tau' \in [0,\tau]} \left( \norm{\Delta_\mr{Merge}(\tau')}_\text{Pauli} \right) \, \tau &\leq& \sum_{q=2}^\infty \frac{\tau^{q+1}}2 \left[ \sum_{\substack{l_1,l_2,l_3 \geq 0: \\ l_1+l_2+l_3=q-1}} \frac{q! 2^{-l_1-l_3}}{l_1!l_2!(l_3+1)!} + \sum_{\substack{l_1,l_2 \geq 0: \\ l_1+l_2=q-1}} \frac{q!2^{-l_1}}{l_1!(l_2+1)!}\right] (2kg)^qg \, |\supp(\mcl{L}_{B:B'})| \nonumber \\
    &\leq& \sum_{q=2}^\infty \frac{\tau^{q+1}}2 \left[ 2 \sum_{\substack{l_1,l_2,l_3 \geq 0: \\ l_1+l_2+l_3=q}} \frac{q! 2^{-l_1-l_3}}{l_1! l_2!l_3!} - \sum_{\substack{l_1,l_2 \geq 0: \\ l_1+l_2=q}} \frac{q!2^{-l_1}}{l_1!l_2!}\right] (2\xi g)^{q+1} \nonumber\\
    &\leq& \frac12 \sum_{q=2}^\infty (4 \xi g\tau)^{q+1} \leq \frac12 \, \frac{1}{1-\frac2{3e}} (4\xi g \tau)^3, \label{EqA_Merge:Delta_Merge_bound}
\end{eqnarray}
where we employ $6\xi g\tau \leq e^{-1}$ from the assumption.
We note that the above quantity is bounded from above by $1$ under the same assumption.
Using the relation, Eq. (\ref{EqA_Merge:A_Merge_Dyson}), we arrive at the upper bound,
\begin{eqnarray}
    \norm{\mcl{A}_\mr{Merge}(\tau)}_\text{Pauli} &\leq& \sum_{m=1}^\infty \frac1{m!} \left[ \sup_{\tau' \in [0,\tau]} \left( \norm{\Delta_\mr{Merge}(\tau')}_\text{Pauli}\right) \tau\right]^m\nonumber \\
    &\leq& \sup_{\tau' \in [0,\tau]} \left( \norm{\Delta_\mr{Merge}(\tau')}_\text{Pauli}\right) \tau \sum_{m=1}^{\infty} \frac1{m!}\left[\sup_{\tau' \in [0,\tau]} \left( \norm{\Delta_\mr{Merge}(\tau')}_\text{Pauli}\right) \tau  \right]^{m-1} \nonumber \\
    &\leq& \frac12 \, \frac{1}{1-\frac2{3e}} (4\xi g \tau)^3  \exp \left( \frac12 \, \frac{1}{1-\frac2{3e}} \left( \frac{2}{3e} \right)^3   \right) \nonumber \\
    &\leq& 43 (\xi g\tau)^3 \in \order{\tau^3}, \label{EqA_Merge:A_bound_proof}
\end{eqnarray}
which completes the proof. $\quad \square$

Theorem \ref{Thm:Merge_expansion} indicates that the non-CP part of the merged operator, expressed by $1+\mcl{A}_\mr{Merge}(\tau)$, can be reproduced by quasi-probabilistic sampling with the sampling overhead $1+\order{\tau^3}=e^{\order{\tau^3}}$.
This is smaller than the overhead for reproducing backward evolutions without merging, $e^{\order{R\tau}}$.

\subsection{Truncation of the merged block and Theorem \ref{Thm:Merge_truncation}}\label{SubsecA:Merge_truncation}

Next, we discuss the sampled quantum circuits required to reproduce the merged operator.
For this purpose, we have to truncate the infinite series in $\mcl{A}_\mr{Merge}(\tau)$ so that the sampled circuits can be identified efficiently by classical computation and implemented by local quantum gates.
In this section, we provide the proof of Theorem \ref{Thm:Merge_truncation}, which ensures the approximation of the merged operator with the truncated one composed of local operators.

The non-CP part of the merged operator, $\mcl{A}_\mr{Merge}(\tau)$, contains infinite series in Eqs. (\ref{EqA_Merge:A_Merge_Dyson}) and (\ref{EqA_Merge:Delta_Merge_expansion}).
We define its truncated version by introducing the truncation orders $M_\mr{d} \geq 1$ and $Q_\mr{d} \geq 2$ as follows,
\begin{eqnarray}
    \mcl{A}_\mr{Merge}^{M_\mr{d},Q_\mr{d}}(\tau) &=& \sum_{m=1}^{M_\mr{d}} \int_0^\tau \dd \tau_m \int_0^{\tau_m} \dd \tau_{m-1} \cdots \int_0^{\tau_2} \dd \tau_1 \Delta_\mr{Merge}^{Q_\mr{d}} (\tau_1) \Delta_\mr{Merge}^{Q_\mr{d}} (\tau_2) \cdots \Delta_\mr{Merge}^{Q_\mr{d}}(\tau_m), \label{EqA_Merge:A_Md_Qd}\\
    \Delta_\mr{Merge}^{Q_\mr{d}}(\tau) &=& \sum_{q=2}^{Q_\mr{d}} \frac{(-\tau)^q}{2}
\sum_{\substack{l_1,l_2,l_3 \geq 0: \\ l_1+l_2+l_3=q-1}}
\frac{(\ad_{\mcl{L}_{B}+\mcl{L}_{B'}})^{l_1} (\ad_{\mcl{L}_{BB'}})^{l_2} (\ad_{\mcl{L}_{B}+\mcl{L}_{B'}})^{l_3}}
{(-2)^{l_1+l_3}l_1!\,l_2!\,(l_3+1)!} \ad_{\mcl{L}_{B}+\mcl{L}_{B'}} \mcl{L}_{B:B'} \nonumber \\
&& \qquad \qquad\qquad\qquad - \sum_{q=2}^{Q_\mr{d}} \frac{(-\tau)^q}{2}\sum_{\substack{l_1,l_2 \geq 0: \\ l_1+l_2=q-1}}
\frac{(\ad_{\mcl{L}_{B}+\mcl{L}_{B'}})^{l_1} (\ad_{\mcl{L}_{BB'}})^{l_2}}
{(-2)^{l_1} l_1!\,(l_2+1)!} \ad_{\mcl{L}_{B}+\mcl{L}_{B'}} \mcl{L}_{B:B'}. \label{EqA_Merge:Delta_Qd}
\end{eqnarray}
The following theorem ensures the approximation by the truncated version and its locality.

\begin{theorem*}
\textbf{(Restatement of Theorem \ref{Thm:Merge_truncation})}

Suppose that the time $\tau \in \order{1}$ satisfies Eq. (\ref{EqA_Merge:time_assumption}).
We define the truncated merged operator by
\begin{equation}
    \mcl{U}_\mr{Merge}^{M_\mr{d},Q_\mr{d}}(\tau) = e^{\mcl{L}_{B:B'}\tau} \left( 1+\mcl{A}_\mr{Merge}^{M_\mr{d},Q_\mr{d}}(\tau)\right).
\end{equation}
For any fixed $\epsilon \in (0,1)$, it satisfies the following conditions when we properly choose the truncation orders $M_\mr{d}, Q_\mr{d} \in \Theta (\log (1/\epsilon))$:
\begin{enumerate}
    \item The merged operator is approximated by $\mcl{U}_\mr{Merge}^{M_\mr{d},Q_\mr{d}}(\tau)$ with an error bounded by
    \begin{equation}\label{EqA_Merge:Merge_allowable_error}
    \norm{\mcl{U}_\mr{Merge}(\tau)-\mcl{U}_\mr{Merge}^{M_\mr{d},Q_\mr{d}}(\tau)}_\Diamond \leq \epsilon.
\end{equation}
    \item The map $\mcl{A}_\mr{Merge}^{M_\mr{d},Q_\mr{d}}(\tau)$ is an HP map, whose Pauli norm is bounded by
     \begin{equation}\label{EqA_Merge:A_Md_Qd_bound}
        \norm{\mcl{A}_\mr{Merge}^{M_\mr{d},Q_\mr{d}}(\tau)}_\text{Pauli} \leq 43 (\xi g\tau)^3 \in \order{\tau^3}.
    \end{equation}
    In addition, the locality of $\mcl{A}_\mr{Merge}^{M_\mr{d},Q_\mr{d}}(\tau)$
    is at most $\order{[\log (1/\epsilon)]^2}$.
\end{enumerate}
\end{theorem*}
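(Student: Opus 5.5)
The plan is to handle the two truncations separately. The error $\mcl{U}_\mr{Merge}(\tau)-\mcl{U}_\mr{Merge}^{M_\mr{d},Q_\mr{d}}(\tau)=e^{\mcl{L}_{B:B'}\tau}\bigl(\mcl{A}_\mr{Merge}(\tau)-\mcl{A}_\mr{Merge}^{M_\mr{d},Q_\mr{d}}(\tau)\bigr)$ is bounded in diamond norm by $\norm{\mcl{A}_\mr{Merge}(\tau)-\mcl{A}_\mr{Merge}^{M_\mr{d},Q_\mr{d}}(\tau)}_\mr{Pauli}$. This uses $\norm{e^{\mcl{L}_{B:B'}\tau}}_\Diamond\leq 1$, since $\mcl{L}_{B:B'}$ is itself a Lindbladian, together with Eq. (\ref{Eq_Setup:Norm_relation}). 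Write $\delta=\sup_{\tau'\in[0,\tau]}\norm{\Delta_\mr{Merge}(\tau')}_\mr{Pauli}\tau$ and $\delta_Q$ for the same quantity built from $\Delta^{Q_\mr{d}}_\mr{Merge}$. The estimate (\ref{EqA_Merge:Delta_Merge_bound}) bounds $\delta$ termwise by the geometric series $\frac12\sum_{q\geq 2}(4\xi g\tau)^{q+1}$. Because $\Delta^{Q_\mr{d}}_\mr{Merge}$ keeps a subset of these terms, the same series bounds $\delta_Q\leq\delta$, and also gives
\[
\sup_{\tau'}\norm{\Delta_\mr{Merge}(\tau')-\Delta^{Q_\mr{d}}_\mr{Merge}(\tau')}_\mr{Pauli}\tau\leq \tfrac12\sum_{q>Q_\mr{d}}(4\xi g\tau)^{q+1}\leq c\,(2/(3e))^{Q_\mr{d}+1}.
\]
Here $4\xi g\tau\leq 2/(3e)<1$ comes from Eq. (\ref{EqA_Merge:time_assumption}).

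Next split $\mcl{A}_\mr{Merge}-\mcl{A}_\mr{Merge}^{M_\mr{d},Q_\mr{d}}$ into two pieces. The Dyson tail $m>M_\mr{d}$ of the untruncated series has Pauli norm at most $\sum_{m>M_\mr{d}}\delta^m/m!\leq \delta^{M_\mr{d}+1}e^{\delta}/(M_\mr{d}+1)!$. Since $\delta\leq 1$, this is below $\epsilon/2$ once $M_\mr{d}\in\Theta(\log(1/\epsilon))$; even $M_\mr{d}\sim\log(1/\epsilon)/\log\log(1/\epsilon)$ would suffice. The second piece is the first $M_\mr{d}$ terms with $\Delta_\mr{Merge}$ replaced by $\Delta^{Q_\mr{d}}_\mr{Merge}$. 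Telescope the $m$-fold products, using submultiplicativity of the Pauli norm (it is submultiplicative under composition of Pauli expansions). Each $m$-th term contributes at most $m\,\delta^{m-1}\eta/m!$, where $\eta$ is the tail bound above. Summing over $m$ gives at most $e\,\eta$, and this is below $\epsilon/2$ for $Q_\mr{d}\in\Theta(\log(1/\epsilon))$. The Pauli-norm bound (\ref{EqA_Merge:A_Md_Qd_bound}) then follows from the same chain as (\ref{EqA_Merge:A_bound_proof}): since $\delta_Q\leq\delta$ and the Dyson sum is truncated, every inequality there still holds, giving $43(\xi g\tau)^3$.

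HP-ness comes term by term. $\mcl{L}_B+\mcl{L}_{B'}$, $\mcl{L}_{BB'}$ and $\mcl{L}_{B:B'}$ are real-coefficient combinations of Lindbladians and hence HP. Commutators and compositions of HP maps are HP, the coefficients $(-\tau)^q/(\cdots)$ are real, and integrals of HP maps over real times are HP. The Dyson products and truncations preserve all of this.

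Locality follows from Lemma \ref{Lem_Patch:Commutator_bound}. In $\Delta^{Q_\mr{d}}_\mr{Merge}$, each term is a nested commutator of depth at most $Q_\mr{d}$ of $k$-local maps applied to $\mcl{L}_{B:B'}$, so it is $\order{Q_\mr{d}k}$-local. A product of $m\leq M_\mr{d}$ such maps acts nontrivially on a union of at most $m$ such supports, so $\mcl{A}_\mr{Merge}^{M_\mr{d},Q_\mr{d}}$ is $\order{M_\mr{d}Q_\mr{d}k}=\order{[\log(1/\epsilon)]^2}$-local. The Pauli expansion of a composition $\mcl{A}\mcl{A}'$ has left and right Pauli supports contained in the unions of the supports of its factors, so the $m$ factors do add in this way. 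The main obstacle is the telescoping step: one must check that the Pauli norm is submultiplicative for compositions of maps $\rho\mapsto\sum\gamma_{\mu\nu}P_\mu\rho P_\nu$, which it is because products of Paulis are Paulis up to phases. One must also make sure that $\delta_Q\leq\delta\leq 1$ holds uniformly in $\tau'$, which is needed to keep the constants free of any growth with $m$.
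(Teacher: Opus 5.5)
Your proposal is correct and follows essentially the paper's route. You split the truncation error into the Dyson tail $m>M_\mr{d}$ and a telescoped $\Delta_\mr{Merge}\to\Delta_\mr{Merge}^{Q_\mr{d}}$ error, controlled by the termwise geometric series of Eq.~(\ref{EqA_Merge:Delta_Merge_bound}), and you read off HP-ness and the $M_\mr{d}(Q_\mr{d}+1)k$ locality from the nested-commutator structure; your explicit use of $\|e^{\mcl{L}_{B:B'}\tau}\|_\Diamond\leq 1$ is left implicit in the paper. The one slip is the claim $\delta_Q\leq\delta$, which does not follow from dropping terms, since the discarded tail could cancel part of the kept head; what you actually need (and what the paper uses through $\max(D(\tau),D^{Q_\mr{d}}(\tau))$) is only that both $\delta$ and $\delta_Q$ are bounded by the same series, which your argument already establishes.
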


\textbf{Proof.---}
We evaluate the truncation errors.
We obtain the upper bound on each order-$q$ term of $\Delta_\mr{Merge}(\tau)$ in Eq. (\ref{EqA_Merge:Delta_Merge_bound}).
This calculation goes also for the truncated one $\Delta_\mr{Merge}^{Q_\mr{d}}(\tau)$ and the difference $\Delta_\mr{Merge}(\tau)-\Delta_\mr{Merge}^{Q_\mr{d}}(\tau)$, which respectively contain the contributions from $q \in [2,Q_\mr{d}]$ and those from $q \in [Q_\mr{d}+1,\infty)$.
In a similar manner to Eq. (\ref{EqA_Merge:Delta_Merge_bound}), we obtain
\begin{equation}\label{EqA_Merge:Delta_Qd_bound}
    \sup_{\tau'\in [0,\tau]} \left( \norm{\Delta_\mr{Merge}^{Q_\mr{d}}(\tau')}_\text{Pauli} \right) \, \tau \leq \frac12 \sum_{q=2}^{Q_\mr{d}} (4 \xi g\tau)^{q+1}  \leq \frac12 \, \frac{1}{1-\frac2{3e}} (4\xi g \tau)^3 < e^{-1},
\end{equation}
and
\begin{equation}
    \sup_{\tau'\in [0,\tau]} \left( \norm{\Delta_\mr{Merge}(\tau')- \Delta_\mr{Merge}^{Q_\mr{d}}(\tau')}_\text{Pauli} \right) \, \tau \leq \frac12 \sum_{q=Q_\mr{d}+1}^\infty (4 \xi g\tau)^{q+1} \leq \frac12 \sum_{q=Q_\mr{d}+1}^\infty e^{-q-1} < \frac{e^{-Q_\mr{d}-1}}{e-1}.
\end{equation}
We denote $\sup_{\tau'\in [0,\tau]} ( \|\Delta_\mr{Merge}(\tau')\|_\text{Pauli} ) \, \tau$ and $\sup_{\tau'\in [0,\tau]} ( \|\Delta_\mr{Merge}^{Q_\mr{d}}(\tau')\|_\text{Pauli} ) \, \tau$ respectively as $D(\tau)$ and $D^{Q_\mr{d}}(\tau)$ below for brevity.
Then, the truncation error by introducing $M_\mr{d}$ and $Q_\mr{d}$ is bounded by
\begin{eqnarray}
    && \norm{\mcl{U}_\mr{Merge}(\tau)-\mcl{U}_\mr{Merge}^{M_\mr{d},Q_\mr{d}}(\tau)}_\Diamond \nonumber \\
    && \qquad \leq \sum_{m=M_\mr{d}+1}^\infty \int_0^\tau \dd \tau_m \cdots \int_0^{\tau_2} \dd \tau_1 \norm{\Delta_\mr{Merge} (\tau_1) \cdots \Delta_\mr{Merge}(\tau_m)}_\Diamond \nonumber \\
    && \qquad \qquad\qquad \qquad\qquad  + \sum_{m=1}^{M_\mr{d}} \int_0^\tau \dd \tau_m \cdots \int_0^{\tau_2} \dd \tau_1 \norm{\Delta_\mr{Merge} (\tau_1) \cdots \Delta_\mr{Merge} (\tau_m) -  \Delta_\mr{Merge}^{Q_\mr{d}} (\tau_1) \cdots \Delta_\mr{Merge}^{Q_\mr{d}} (\tau_m)}_\Diamond \nonumber \\
    && \qquad \leq  \sum_{m=M_\mr{d}+1}^\infty \frac{\left[ D(\tau) \right]^m}{m!} + \sum_{m=1}^{M_\mr{d}} \frac{\left[\max \left( D(\tau), D^{Q_\mr{d}}(\tau)\right)\right]^{m-1}}{(m-1)!} \sup_{\tau' \in [0,\tau]} \left(\norm{\Delta_\mr{Merge}(\tau')-\Delta_\mr{Merge}^{Q_\mr{d}}(\tau')}_\Diamond \right) \, \tau. \nonumber \\
    && \qquad \leq  \sum_{m=M_\mr{d}+1}^\infty e^{-m} + \sum_{m=1}^\infty \frac{1}{(m-1)!} \, \frac{e^{-Q_\mr{d}-1}}{e-1} = \frac{e^{-M_\mr{d}}+e^{-Q_\mr{d}}}{e-1}. \label{EqA_Merge:Merge_truncation_error}
\end{eqnarray}
We use the relation between the diamond norm and the Pauli norm, Eq. (\ref{Eq_Setup:Norm_relation}), in the second inequality.
We use $D(\tau) \leq e^{-1}$ for the first term in the last line and use $D(\tau), D^{Q_\mr{d}}(\tau) \leq 1$ for the second term, which are confirmed by Eqs. (\ref{EqA_Merge:Delta_Merge_bound}) and (\ref{EqA_Merge:Delta_Qd_bound}) in combination with the assumption Eq. (\ref{EqA_Merge:time_assumption}).
Let us set the truncation orders $M_\mr{d}$ and $Q_\mr{d}$ by
\begin{equation}
    M_\mr{d} = \max \left( \left\lceil \log \frac{2}{(e-1)\epsilon} \right\rceil, 1 \right) \in \Theta (\log (1/\epsilon)) , \quad Q_\mr{d} = \max \left( \left\lceil \log \frac{2}{(e-1)\epsilon} \right\rceil ,2 \right) \in \Theta (\log (1/\epsilon)).
\end{equation}
The error given by Eq. (\ref{EqA_Merge:Merge_truncation_error}) is smaller than $\epsilon$, indicating the satisfaction of Eq. (\ref{EqA_Merge:Merge_allowable_error}).

We next examine the properties of $\mcl{A}_\mr{Merge}^{M_\mr{d},Q_\mr{d}}(\tau)$.
The map $\mcl{A}_\mr{Merge}^{M_\mr{d},Q_\mr{d}}(\tau)$ is clearly HP, since it is composed of nested commutators among HP maps as Eqs. (\ref{EqA_Merge:A_Md_Qd}) and (\ref{EqA_Merge:Delta_Qd}).
The upper bound on its Pauli norm is evaluated in the same way as Eq. (\ref{EqA_Merge:A_bound_proof}), which results in
\begin{equation}
    \norm{\mcl{A}_\mr{Merge}^{M_\mr{d},Q_\mr{d}}(\tau)}_\mr{Pauli} \leq \sum_{m=1}^{M_\mr{d}} \frac{[D^{Q_\mr{d}}(\tau)]^m}{m!} \leq\frac12 \, \frac{1}{1-\frac2{3e}} (4\xi g \tau)^3  \exp \left( \frac12 \, \frac{1}{1-\frac2{3e}} \left( \frac{2}{3e} \right)^3   \right) \leq 43 (\xi g\tau)^3.
\end{equation}
This confirms the satisfaction of Eq. (\ref{EqA_Merge:A_Md_Qd_bound}).
Finally, the locality of $\mcl{A}_\mr{Merge}^{M_\mr{d},Q_\mr{d}}(\tau)$ is immediately obtained as follows.
Each truncated operator $\Delta_\mr{Merge}^{Q_\mr{d}}(\tau)$ is at most $(Q_\mr{d}+1)k$-local.
As a result, the locality of $\mcl{A}_\mr{Merge}^{M_\mr{d},Q_\mr{d}}(\tau)$ is bounded by $M_\mr{d}(Q_\mr{d}+1)k \in \order{[\log (1/\epsilon)]^2}$.
This completes the proof. $\quad \square$

\subsection{Implementation of the quasi-probabilistic sampling}\label{SubsecA:Merge_implement}

We discuss how to sample the quantum circuits for reproducing the operator $\mcl{U}_\mr{Merge}(\tau)$.
We set $\epsilon = \varepsilon/\poly{N,t}$ following the algorithm.
The operator $\Delta_\mr{Merge}^{Q_\mr{d}}(\tau)$ defined by Eq. (\ref{EqA_Merge:Delta_Qd}) can be expanded by
\begin{equation}\label{EqA_Merge:Delta_Qd_Pauli}
    \Delta_\mr{Merge}^{Q_\mr{d}}(\tau) \rho = \sum_{\mu,\nu} \gamma_{\mu\nu}^{Q_\mr{d}}(\tau) P_\mu \rho P_\nu.
\end{equation}
We set $\gamma_{\mu\nu}^{Q_\mr{d}}(\tau) \geq 0$ without loss of generality: Although it can be a complex number, we can reproduce such a map by quasi-probabilistic sampling, in which we sample a quantum circuit for $\exp (i \arg [\gamma_{\mu\nu}^{Q_\mr{d}}(\tau)])  P_\mu$ and $P_\nu$ with the probability dependent on $| \gamma_{\mu\nu}^{Q_\mr{d}}(\tau)| \geq 0$.

Since it is composed of $\order{Q_\mr{d}}$-fold nested commutators among finite-ranged local terms as Eq. (\ref{EqA_Merge:Delta_Qd}), we can calculate all nonzero $\{\gamma_{\mu\nu}^{Q_\mr{d}}(\tau)\}$ with $\poly{N,t,1/\varepsilon}$-time classical computation as follows.
First, let us focus on a $q$-fold nested commutator, $(\ad_{\mcl{L}_{B}+\mcl{L}_{B'}})^{l_1} (\ad_{\mcl{L}_{BB'}})^{l_2} (\ad_{\mcl{L}_{B}+\mcl{L}_{B'}})^{l_3} \ad_{\mcl{L}_{B}+\mcl{L}_{B'}} \mcl{L}_{B:B'} $, with $l_1+l_2+l_3=q-1$ in Eq. (\ref{EqA_Merge:Delta_Qd}).
When we independently count all the nested commutators in the Pauli basis, the time to classically determine its coefficients is as large as the number of the numerous branches, $\order{q!}$.
However, each $q$-fold nested commutator can have at most $\poly{q} e^{\order{q}} \, |\supp (\mcl{L}_{B:B'})|$ terms in the Pauli basis, reflecting the number of connected domains giving the supports of nested commutators.
When we recursively compute $q'$-fold nested commutators from $q'=1$ to $q'=q$ and collect identical Pauli terms at each $q'$, we can efficiently determine all the coefficients of the $q$-fold nested commutator in the Pauli basis with $\poly{q} e^{\order{q}}$ time \cite{Haah-2022-learning,Bakshi-2024-learning}.
Since we have at most $3^q$ combinations for the choice of $l_1,l_2,l_3$ (and also $2^q$ combinations for the second term) in Eq. (\ref{EqA_Merge:Delta_Qd}), the time to classically calculate all the coefficients $\{\gamma_{\mu\nu}^{Q_\mr{d}}(\tau) \}$ for $\Delta_\mr{Merge}^{Q_\mr{d}}(\tau)$ in Eq. (\ref{EqA_Merge:Delta_Qd_Pauli}) can be
\begin{equation}
    \sum_{q=2}^{Q_\mr{d}} (3^q+2^q) \times \poly{q} \, e^{\order{q}} \subset e^{\order{Q_\mr{d}}} \subset \order{\poly{N,t,1/\varepsilon}},
\end{equation}
for each $\tau$.

Substituting the expansion by Eq. (\ref{EqA_Merge:Delta_Qd_Pauli}), we obtain
\begin{equation}
    \left[ 1+ \mcl{A}_\mr{Merge}^{M_\mr{d},Q_\mr{d}}(\tau)\right]  \rho = \sum_{m=0}^{M_\mr{d}} \sum_{\mu_1,\nu_1,\cdots,\mu_m,\nu_m} \int_0^{\tau} \dd \tau_m \cdots \int_0^{\tau_2} \dd \tau_1 \left( \prod_{m'=m}^1 \gamma_{\mu_{m'}\nu_{m'}}^{Q_\mr{d}}(\tau_{m'}) \right) \left( \prod_{m'=m}^1 P_{\mu_{m'}}\right) \rho \left( \prod_{m'=1}^m P_{\nu_{m'}}\right).
\end{equation}
The quasi-probabilistic sampling based on Lemma \ref{LemmaA:quasiprobabilistic} requires the sampling of $m,\mu_1,\nu_1,\cdots,\mu_m,\nu_m$ with the joint probability distribution,
\begin{eqnarray}
    p(m,\mu_1,\nu_1,\cdots,\mu_m,\nu_m) &\propto& \int_0^{\tau} \dd \tau_m \cdots \int_0^{\tau_2} \dd \tau_1 \gamma_{\mu_1\nu_1}^{Q_\mr{d}}(\tau_1) \cdots \gamma_{\mu_m\nu_m}^{Q_\mr{d}}(\tau_m).
\end{eqnarray}
It can be efficiently sampled in a similar manner to the sampling-based simulation of time-dependent Hamiltonians \cite{Zhang-2022-time_dep}, as the operator $\mcl{A}_\mr{Merge}^{M_\mr{d},Q_\mr{d}}(\tau)$ is expressed by the Dyson series expansion, Eq. (\ref{EqA_Merge:A_Md_Qd}).
To be concrete, we first randomly choose the order $m \in \{0,1,2,\cdots,M_\mr{d} \}$ with the probability,
\begin{equation}
    p(m) \propto \frac1{m!} \left( \sum_{\mu,\nu} \int_0^\tau \dd \tau' \gamma_{\mu\nu}^{Q_\mr{d}}(\tau') \right)^m,
\end{equation}
which is the truncated version of the Poisson distribution.
Then, we pick up $m$ real numbers from $[0,\tau]$, where each value $\tau'$ is chosen with the probability density,
\begin{equation}
    p(\tau') \propto \sum_{\mu,\nu} \gamma_{\mu\nu}^{Q_\mr{d}}(\tau').
\end{equation}
We denote them as $\tau_1,\tau_2,\cdots,\tau_m$ after re-ordering them in the ascending order.
For each $m'=1,2,\cdots,m$, we sample the pair of indices $(\mu_{m'},\nu_{m'})$ with the probability distribution,
\begin{equation}
    p(\mu_{m'},\nu_{m'}) = \frac{\gamma_{\mu_{m'}\nu_{m'}}^{Q_\mr{d}}(\tau_{m'})}{\sum_{\mu,\nu} \gamma_{\mu\nu}^{Q_\mr{d}}(\tau_{m'})}.
\end{equation}
The sampled Pauli products $\prod_{m'=m}^1 P_{\mu_{m'}}$ and $\prod_{m'=1}^m P_{\nu_{m'}}$ are both at most $\order{[\log (Nt/\varepsilon)]^2}$-local, and hence the number of $\order{1}$-qubit gates required for implementing the quantum circuits in Eq. (\ref{EqA_Basic:sampled_circuit}) is at most $\order{[\log (Nt/\varepsilon)]^2}$.
It is small compared to those for the other parts in the algorithm.
We note that the overhead in this sampling is given by
\begin{equation}
    \order{\left[ \sum_{m=0}^{M_\mr{d}} \sum_{\mu_1,\nu_1,\cdots,\mu_m,\nu_m} \int_0^{\tau} \dd \tau_m \cdots \int_0^{\tau_2} \dd \tau_1  \prod_{m'=m}^1 \gamma_{\mu_{m'}\nu_{m'}}^{Q_\mr{d}}(\tau_{m'}) \right]^2} \subset e^{\order{\int_0^\tau \dd \tau'\norm{\Delta_\mr{Merge}(\tau')}_\mr{Pauli}}} \subset e^{\order{(\xi g \tau)^3}},
\end{equation}
which retains the discussion based on Lemma \ref{LemmaA:quasiprobabilistic}.
The quasi-probabilistic sampling affects the computational cost by the sampling overhead as we discuss in Section \ref{Subsec:Outline_algo}.

\textbf{Remark.---}
The above construction of the sampled quantum circuits based on Theorem \ref{Thm:Merge_truncation} is redundant for one-dimensional systems.
To be precise, the support of the merged operator $\mcl{U}_\mr{Merge}(\tau)$ has the size as large as $R \in \Theta (\log (Nt/\varepsilon))$.
The merged operator can be expanded by $\Theta (\log (Nt/\varepsilon))$-local Pauli operators, and sampling $\order{[\log (Nt/\varepsilon)]^2}$ local gates is redundant.
In practice, the matrix dimension of the merged operator is at most $2^{\order{R}} \subset \order{\poly{N,t,1/\varepsilon}}$, and hence it is easy to calculate all of its matrix elements by classical computation based on Eq. (\ref{EqA_Merge:Merge_op}).
Expanding it in the Pauli basis, the sampling can be much simpler for one-dimensional systems.

In contrast, the above construction becomes significant for high-dimensional systems with the dimension $d \geq 2$.
As discussed in Appendix \ref{SecA:High_dim}, the support size of the merged operator is as large as $\order{(R_\mr{p})^{d-1}R}$ with some optimized block size $R_\mr{p} \in \order{\poly{Nt}}$, corresponding to the boundary size in $d$ dimension.
The merged operator has the exponentially large matrix dimension in $N$ and $t$, and cannot be efficiently calculated by classical computers.
However, the counterparts of Theorem \ref{Thm:Merge_expansion} and Theorem \ref{Thm:Merge_truncation} are still valid.
They support that it is sufficient to sample $\order{[\log(Nt/\varepsilon)]^2}$ local gates with sampling probability that can be efficiently determined by classical computation also for high-dimensional systems.
See Appendix \ref{SecA:High_dim} for its detail.

\section{Extension to high-dimensional systems}\label{SecA:High_dim}

\renewcommand{\thetheorem}{\thesection\arabic{theorem}}
\setcounter{theorem}{0}

In this appendix, we show the extension of the algorithms to Lindbladian simulation for high-dimensional systems.

\subsection{Setup and the patching lemma for high-dimensional systems}

We first clarify the setup.
We consider a $d$-dimensional lattice, and suppose that the dimension $d$ is constant, i.e., independent of $N$, $t$, and $\varepsilon$.
The lattice $\Lambda$ is assumed to be a hypercubic lattice given by $\Lambda = \{1,2,\cdots,L\}^d$ with the number of sites $N = L^d$ for simplicity, but we note that its geometry is not essential for the scaling of the computational cost.
We also define the locality and the range for high-dimensional systems.
Let us expand a generic lattice Lindbladian $\mcl{L}$ in the Pauli basis as Eq. (\ref{Eq_Setup:Lindbladian_expansion}).
The locality $k$ is defined in the same way as Eq. (\ref{Eq_Setup:k_def}), which means that each term of interactions or dissipation involves at most $k$ sites.
We assume that the interactions and dissipation are finite-ranged in a sense that there exists a quantity $\xi > 0$ satisfying Eq. (\ref{Eq_Basic:range_xi}).
We note that the distance measure $\mr{dist}(i,j)$ in the range $r(X)$ [See Eq. (\ref{Eq_Setup:domain_size_def})] is replaced by
\begin{equation}\label{EqA_High:distance}
    \mr{dist}(i,j) = \sqrt{\sum_{d'=1}^d (i_{d'}-j_{d'})^2}
\end{equation}
for lattice sites $i =(i_1,\cdots,i_d) \in \Lambda$ and $j=(j_1,\cdots,j_d) \in \Lambda$.
We suppose that the locality $k$ and the range $\xi$ are $\order{1}$ constants.
It is easy to see that we have $k \leq \mr{Const.} \times \xi^d$, while we do not use it explicitly here.
We define the extensiveness $g$ in the same way as Eq. (\ref{Eq_Setup:g_def}).
It means the maximum energy scale per site, and we have $g \in \order{1}$ as well as one-dimensional systems.

As the first step for the extension, we discuss the patching lemma for high-dimensional lattice Lindbladians, corresponding to Theorem \ref{Thm_Patch:Patching_lemma}.
We split the lattice $\Lambda$ into the subsystems $A$, $B$, and $C$, and suppose that $A$ and $C$ are separated by the intermediate region $B$.
We assume $\mr{dist}(A,C) \geq R$.
The error of the patching lemma, i.e., the right hand side of Eq. (\ref{Eq_Patch:error}), comes from the boundaries of the subsystems.
Thus, the difference from the one-dimensional case arises due to the boundary sizes among the subsystems.
We define the boundary domain between $A$ and $B$ with the size $\xi$ by
\begin{equation*}
    \partial_{AB}(\xi) = \{ i \in A \, | \, \text{$^\exists j \in B$ s.t. $\mr{dist}(i,j) \leq \xi$}\} \, \cup \, \{ j \in B \, | \, \text{$^\exists i \in A$ s.t. $\mr{dist}(i,j) \leq \xi$}\},
\end{equation*}
and also define $\partial_{BC}(\xi)$ in a similar manner.
The error bound on the patching lemma depends on the boundary sizes as follows.

\begin{theorem}
\textbf{(Patching lemma for $d$-dimensional systems)}

Suppose that the distance between the subsystems $A$ and $C$, denoted by $R=\mr{dist}(A,C)$, satisfies
\begin{equation}\label{EqA_High:R_condition}
    R > \xi \, \max \{ 1, \log (|\partial_{BC}(\xi)|) \}.
\end{equation}
When the time $\tau$ is small enough to satisfy
\begin{equation}\label{EqA_High:time_assumption}
    0 \leq \tau \leq \frac{1}{6ekg} \in \order{1},
\end{equation}
the time evolution $e^{\mcl{L}\tau}$ is approximated by $e^{\mcl{L}_{AB}\tau} e^{-\mcl{L}_{B}\tau} e^{\mcl{L}_{BC}\tau}$ with an error bound
\begin{equation}\label{EqA_High:error_Patching}
    \norm{e^{\mcl{L}\tau} - e^{\mcl{L}_{AB}\tau} e^{-\mcl{L}_{B}\tau} e^{\mcl{L}_{BC}\tau}}_\Diamond \leq \frac{|\partial_{BC}(\xi)|}{2} \, e^{-\frac{R}{\xi}},
\end{equation}
where the subsystem Lindbladians $\mcl{L}_{AB}$, $\mcl{L}_B$, and $\mcl{L}_{BC}$ are defined based on Eq. (\ref{Eq_Patch:Subsys_Lindbladian}).

\end{theorem}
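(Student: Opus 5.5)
The plan is to repeat the proof of Theorem \ref{Thm_Patch:Patching_lemma} line by line. Only two ingredients change: the geometric argument that kills low-order commutators, and the counting of the boundary term $\mcl{L}_{B:C}$, whose support now scales with the surface $|\partial_{BC}(\xi)|$ instead of being $\order{\xi}$.

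\textbf{Integral equation.} Define $\mcl{N}(\tau)=e^{-\mcl{L}\tau}e^{\mcl{L}_{AB}\tau}e^{-\mcl{L}_B\tau}e^{\mcl{L}_{BC}\tau}$. Since $\norm{e^{\mcl{L}\tau}}_\Diamond\leq 1$, the error is at most $\norm{\mcl{N}(\tau)-1}_\Diamond$.

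Introduce $\mcl{L}_{B:C}$ as in Eq.~(\ref{Eq_Patch:Lindbladian_boundary}). Because $\mr{dist}(A,C)\geq R>\xi$, no term touches both $A$ and $C$. Hence the identities $\mcl{L}-\mcl{L}_{AB}-\mcl{L}_C=\mcl{L}_{B:C}=\mcl{L}_{BC}-\mcl{L}_B-\mcl{L}_C$ still hold. Also $\mcl{L}_C$ commutes with $e^{\mcl{L}_{AB}\tau'}$ and $e^{-\mcl{L}_B\tau'}$.

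Therefore the same Dyson-type equation $\mcl{N}=1+\int_0^\tau\mcl{K}\mcl{N}$ follows, with $\mcl{K}(\tau')$ given by Eq.~(\ref{Eq_Patch:K_s}). This gives $\norm{\mcl{N}(\tau)-1}_\Diamond\leq e^{x}-1$, where $x=\tau\sup_{\tau'\in[0,\tau]}\norm{\mcl{K}(\tau')}_\Diamond$.

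\textbf{Cancellation of low orders.} Expand $\mcl{K}$ in nested commutators $(\ad_{\mcl{L}})^l(\ad_{\mcl{L}_{AB}})^m(\ad_{\mcl{L}_B})^n\mcl{L}_{B:C}$ and decompose into local terms $\hat{\mcl{L}}_Z$ with $Z\cap B\neq\emptyset$ and $Z\cap C\neq\emptyset$.

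A nonvanishing nested commutator has a connected support, a chain of sets each of Euclidean diameter below $\xi$. A term $\hat{\mcl{L}}_X$ with $X\cap A\neq\emptyset$ can act nontrivially only if the chain starting at $Z$, which lies within distance $\xi$ of $C$, reaches within $\xi$ of $A$. Since $\mr{dist}(A,C)\geq R$, this requires $(m+n+1)\xi\geq R$.

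So for $m+n<R/\xi-1$ we may replace $\ad_{\mcl{L}_{AB}}$ by $\ad_{\mcl{L}_B}$, and Eq.~(\ref{Eq_Patch:commutator_delete}) holds verbatim. Only orders $q=l+m+n\geq\lceil R/\xi-1\rceil$ survive.

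\textbf{Norm bound.} Apply Lemma \ref{Lem_Patch:Commutator_bound}, which is dimension independent, with locality $k$. Combined with the multinomial identity $\sum q!/(l!m!n!)=3^q$, this gives
\begin{equation*}
\norm{\mcl{K}(\tau')}_\Diamond\leq g\,|\supp(\mcl{L}_{B:C})|\sum_{q\geq\lceil R/\xi-1\rceil}(6kg\tau')^q .
\end{equation*}
Every term of $\mcl{L}_{B:C}$ has diameter below $\xi$ and meets both $B$ and $C$, so its support lies in $\partial_{BC}(\xi)$. Hence $|\supp(\mcl{L}_{B:C})|\leq|\partial_{BC}(\xi)|$.

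With $6kg\tau\leq e^{-1}$, which is the new assumption Eq.~(\ref{EqA_High:time_assumption}), the geometric sum is at most $\frac{e^2}{e-1}e^{-R/\xi}$. Multiplying by $\tau\leq 1/(6ekg)$ yields
\begin{equation*}
x\leq\frac{e}{6k(e-1)}\,|\partial_{BC}(\xi)|\,e^{-R/\xi}.
\end{equation*}

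\textbf{Closing the estimate.} The condition Eq.~(\ref{EqA_High:R_condition}) gives $|\partial_{BC}(\xi)|e^{-R/\xi}<1$, so $x\leq\frac{e}{6(e-1)}<1$. Then $e^x-1\leq(e-1)x\leq\frac{e}{6}|\partial_{BC}(\xi)|e^{-R/\xi}$, which is below the claimed $\frac12|\partial_{BC}(\xi)|e^{-R/\xi}$.

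\textbf{Expected obstacle.} The main care point is the cancellation step. In $d$ dimensions the chain of supports may wander along the boundary, but only its Euclidean extent toward $A$ matters. One must check that "connected support of $q+1$ terms, each of range below $\xi$" really implies a reach of at most $(q+1)\xi$ from $C$ under the distance Eq.~(\ref{EqA_High:distance}). This follows from the triangle inequality along the chain.

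Once the constants are tracked, the factor $1/k$ and the absorption of $(e-1)$ leave slack for the stated constant $\tfrac12$.
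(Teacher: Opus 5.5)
Your proposal is correct and follows the paper's proof essentially line by line. It uses the same map $\mcl{N}(\tau)$ and integral equation, the same low-order cancellation, and Lemma~\ref{Lem_Patch:Commutator_bound} with $|\supp(\mcl{L}_{B:C})|\le|\partial_{BC}(\xi)|$ in place of the one-dimensional $2\xi$, and it closes with the same estimate $\tfrac{e}{6}|\partial_{BC}(\xi)|e^{-R/\xi}\le\tfrac12|\partial_{BC}(\xi)|e^{-R/\xi}$. Your explicit triangle-inequality justification of the chain argument, and keeping the harmless factor $1/k$, spell out details the paper leaves implicit.
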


\textbf{Proof.---}
The proof is essentially the same as the one for Theorem \ref{Thm_Patch:Patching_lemma}.
We follow the calculation from Eq. (\ref{Eq_Patch:N_tau}) to Eq. (\ref{Eq_Patch:N_tau_1_bound_result}).
We note that $R > \xi$ in Eq. (\ref{EqA_High:R_condition}) is required for the relations Eqs. (\ref{Eq_Patch:AB_boundary_1}) and (\ref{Eq_Patch:AB_boundary_2}) to be valid.
In the third inequality in Eq. (\ref{Eq_Patch:K_s_bound}), we replace the relation $|\supp (\mcl{L}_{B:C})| \leq 2 \xi$, which is valid for one-dimensional systems, by $|\supp (\mcl{L}_{B:C})| \leq |\partial_{BC}(\xi)|$.
The map $\mcl{K}(\tau')$ defined by Eq. (\ref{Eq_Patch:K_s}) is bounded by
\begin{eqnarray}
    \norm{\mcl{K}(\tau')}_\Diamond &\leq& |\partial_{BC}(\xi)| \sum_{q=\lceil \frac{R}\xi -1 \rceil}^\infty \sum_{\substack{l,m,n \geq 0: \\ l+m+n = q}} \frac{q!}{l!m!n!} (2kg\tau')^q g \nonumber \\
    &\leq& |\partial_{BC}(\xi)| g \sum_{q=\lceil \frac{R}\xi -1 \rceil}^\infty (6kg\tau')^q \nonumber \\
    &\leq& \frac{e^2}{e-1} e^{-\frac{R}\xi} |\partial_{BC}(\xi)| g.
\end{eqnarray}
Since we have $\tau \times \sup_{\tau' \in [0,\tau]} (\norm{\mcl{K}(\tau')}_\Diamond) \leq \frac{e}{6(e-1)} e^{-\frac{R}{\xi}} |\partial_{BC}(\xi)| \leq 1$ under the assumptions Eq. (\ref{EqA_High:R_condition}), the error is bounded by
\begin{equation}
    \norm{e^{\mcl{L}\tau} - e^{\mcl{L}_{AB}\tau} e^{-\mcl{L}_{B}\tau} e^{\mcl{L}_{BC}\tau}}_\Diamond \leq \sum_{n=1}^\infty \frac{1}{n!} \left( \tau \, \sup_{\tau' \in [0,\tau]} (\norm{\mcl{K}(\tau')}_\Diamond)\right)^n \leq  \frac{e}6 |\partial_{BC}(\xi)| e^{-\frac{R}\xi}.
\end{equation}
This completes the proof of Eq. (\ref{EqA_High:error_Patching}). $\quad \square$

The error bound of the patching lemma grows linearly in the boundary size $|\partial_{BC}(\xi)|$, but this does not matter in the algorithms.
For an arbitrarily small value $\epsilon \in (0,1)$, we can suppress the error bound Eq. (\ref{EqA_High:error_Patching}) up to $\epsilon$ by setting $R = \xi \log (|\partial_{BC}(\xi)|/\epsilon)$.
Since the boundary size $|\partial_{BC}(\xi)|$ is smaller than the system size $N$ and the quantity $\epsilon$ will be set to $\order{\varepsilon/\poly{N,t}}$, the block size $R \in \order{\xi \log (Nt/\varepsilon)}$ is sufficient.
The choice of the intermediate block size $R$ is the same as the one-dimensional case.
As a result, the algorithms based on the patching lemma work also for high-dimensional systems even with the additional factor by the boundary size in the patching lemma.

In Algorithm \ref{Algorithm_Sparse} and Algorithm \ref{Algorithm_Generic}, we do not directly use the patching lemma in the form of Theorem \ref{Thm_Patch:Patching_lemma}, but instead use Corollary \ref{Cor_Sparse:Patching_strategy}.
Its counterpart for high-dimensional systems immediately follows from the same discussion as Theorem \ref{Thm_Patch:Patching_lemma}.
We split the lattice $\Lambda$ into the subsystems, $A_1,A_2, \cdots, A_{N_\mr{p}}$, and define their boundaries with the width $R$ by
\begin{equation}\label{EqA_High:internal}
    \partial_{A_\alpha} (R) = \{ j \in A_\alpha \, | \, \text{$^\exists j' \in \Lambda \backslash A_\alpha$ s.t. $\mr{dist}(j,j') < R$} \}, \quad \alpha =1,2,\cdots, N_\mr{p}.
\end{equation}
We also define the outside boundaries by
\begin{equation}\label{EqA_High:external}
    \overline{\partial}_{A_\alpha} (R) = \{ j \in (\Lambda \backslash A_\alpha) \, | \, \text{$^\exists j' \in A_\alpha$ s.t. $\mr{dist}(j,j') < R$} \}, \quad \alpha =1,2,\cdots, N_\mr{p}.
\end{equation}
We denote the union of these boundaries by $\partial(R) = \partial_{A_1} (R) \cup \cdots \cup \partial_{A_{N_\mr{p}}} (R)$.
Figure \ref{FigA_high_dim_sparse} (a) briefly shows the splitting and the boundaries.
We give the counterpart of Corollary \ref{Cor_Sparse:Patching_strategy} as follows.

\begin{corollary}\label{CorA_High:Patching_strategy}
\textbf{}

Suppose that the distance $R$ satisfies $R \geq \xi \log N$.
When the time $\tau$ is small enough to satisfy $0 \leq \tau \leq (6ek g)^{-1} \in \order{1}$, the following inequality is satisfied,
\begin{equation}\label{EqA_High:Paching_strategy_error}
    \norm{e^{\mcl{L}\tau}- e^{\mcl{L}_{\partial (R)}\tau} \left( \prod_{\alpha =1}^{N_\mr{p}} e^{-\mcl{L}_{\partial_{A_\alpha} (R)} \tau} \right) \left( \prod_{\alpha =1}^{N_\mr{p}} e^{\mcl{L}_{A_\alpha} \tau} \right)}_\Diamond \leq N e^{-\frac{R}{\xi}}.
\end{equation}
\end{corollary}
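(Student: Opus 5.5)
We follow the same route as the proofs of Theorem \ref{Thm_Patch:Patching_lemma} and Corollary \ref{Cor_Sparse:Patching_strategy}. Set $\mcl{U}(\tau)=e^{\mcl{L}_{\partial(R)}\tau}\prod_\alpha e^{-\mcl{L}_{\partial_{A_\alpha}(R)}\tau}\prod_\alpha e^{\mcl{L}_{A_\alpha}\tau}$ and $\mcl{N}(\tau)=e^{-\mcl{L}\tau}\mcl{U}(\tau)$. Since $\norm{e^{\mcl{L}\tau}}_\Diamond\le 1$, the error is at most $\norm{\mcl{N}(\tau)-1}_\Diamond$.

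Differentiating $\mcl{N}(\tau')$ produces the generators $-\mcl{L}+\mcl{L}_{\partial(R)}$ (conjugated appropriately) and $\sum_\alpha(\mcl{L}_{A_\alpha}-\mcl{L}_{\partial_{A_\alpha}(R)})$ inserted between the layers. Several commutation facts hold because $R>\xi$:
\begin{itemize}
\item The blocks $\mcl{L}_{A_\alpha}$ commute with each other, as do the $\mcl{L}_{\partial_{A_\alpha}(R)}$.
\item The inter-block terms $\mcl{J}=\mcl{L}-\sum_\alpha\mcl{L}_{A_\alpha}$ are supported within distance $\xi$ of the block interfaces, hence inside $\partial(R)$ together with the outer boundaries.
\item The bulk parts $\mcl{L}_{A_\alpha}-\mcl{L}_{\partial_{A_\alpha}(R)}$ contain terms touching the interior, and these commute with $\mcl{L}_{\partial(R)}$ except within the layer of width $\xi$ at depth $R$.
\end{itemize}
I would regroup $\mcl{L}-\mcl{L}_{\partial(R)}=\sum_\alpha(\mcl{L}_{A_\alpha}-\mcl{L}_{\partial_{A_\alpha}(R)})$ up to terms straddling the inner edge of $\partial(R)$. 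Exactly as in Eq.~(\ref{Eq_Patch:N_tau_integral_eq}), this should give $\mcl{N}(\tau)=1+\int_0^\tau \mcl{K}(\tau')\mcl{N}(\tau')\,d\tau'$. Here $\mcl{K}(\tau')$ is a sum over local seeds $\mcl{J}_x$ supported near the inner edges $\{j\in A_\alpha:\mr{dist}(j,\Lambda\setminus A_\alpha)\approx R\}$, each acted on by $e^{-\tau'\ad_\mcl{L}}(e^{\tau'\ad_{\mcl{L}_{\partial(R)}}}e^{-\tau'\ad_{\sum_\alpha\mcl{L}_{\partial_{A_\alpha}(R)}}}-1)$. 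Getting these supports right in $d$ dimensions, so that every seed lies at distance $\ge R$ from the region where $\mcl{L}_{\partial(R)}$ and $\sum_\alpha\mcl{L}_{\partial_{A_\alpha}(R)}$ differ (the block interfaces), is the main bookkeeping step. I expect it to be the main obstacle, and I would handle it by decomposing the seeds site by site.

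With that structure in place, the cancellation argument of Eq.~(\ref{Eq_Patch:commutator_delete}) applies verbatim. Nested commutators with fewer than $R/\xi-1$ factors cannot reach an interface, since their support grows by at most $\xi$ per commutator. On that support the two generators agree, so all terms with $1\le m+n<R/\xi-1$ cancel. For the remaining terms I apply Lemma~\ref{Lem_Patch:Commutator_bound} together with $\sum q!/(l!m!n!)=3^q$ and the time assumption $6ekg\tau\le1$. This gives $\norm{\mcl{K}(\tau')}_\Diamond\le \frac{e^2}{e-1}e^{-R/\xi}\,g\sum_x|\supp\mcl{J}_x|$, with $\sum_x|\supp\mcl{J}_x|\le N$ for $k\le\xi$-type constants; the prefactor is absorbed as in the $d$-dimensional patching theorem.

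Finally, $\tau\sup\norm{\mcl{K}}_\Diamond\le \frac{e}{6(e-1)}Ne^{-R/\xi}\le1$, using $R\ge\xi\log N$. Resumming the Dyson series $\sum_n x^n/n!$ with $x\le1$ then gives $(e-1)x\le Ne^{-R/\xi}$, which is Eq.~(\ref{EqA_High:Paching_strategy_error}).
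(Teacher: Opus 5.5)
Your proposal is correct and is essentially the paper's argument. The paper proves this corollary in one line by repeating the computation of Corollary~\ref{Cor_Sparse:Patching_strategy}: the integral equation for $e^{-\mcl{L}\tau}\mcl{U}(\tau)$, seeds at the inner edges of the backward layers, cancellation of the low-order nested commutators, Lemma~\ref{Lem_Patch:Commutator_bound}, and resummation using $R\ge\xi\log N$, which is exactly what you outline.

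Two small tidy-ups:
\begin{enumerate}
\item The identity $\mcl{L}-\mcl{L}_{\partial(R)}=\sum_\alpha(\mcl{L}_{A_\alpha}-\mcl{L}_{\partial_{A_\alpha}(R)})$ holds exactly once $R>\xi$, so no ``up to'' is needed. Exactness is what lets the two insertions combine into a single commutator.
\item The seed prefactor is best bounded directly by $\sum_x\norm{\mcl{J}_x}_\mr{Pauli}\le gN$ via extensiveness, rather than by summed support sizes. This yields $\frac{e}{6k}Ne^{-R/\xi}\le Ne^{-R/\xi}$ with no constant to absorb.
\end{enumerate}
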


\textbf{Proof.---} This immediately follows from the same calculation for the proof of Corollary \ref{Cor_Sparse:Patching_strategy} [See Appendix \ref{SubsecA:Patching_corollary}]. $\quad \square$

\begin{figure*}
    \centering
    \includegraphics[width=0.95\linewidth]{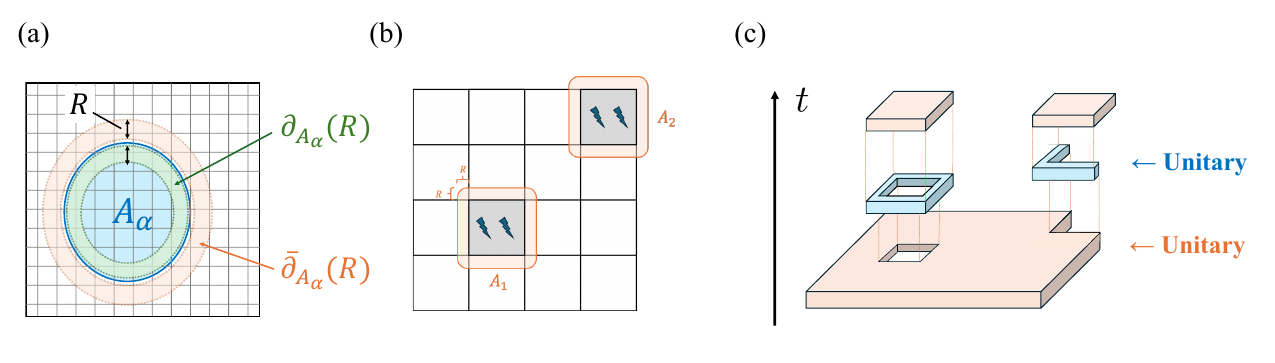}
    \caption{(a) The definition of the internal and external boundaries of each block $A_\alpha$, defined by Eqs. (\ref{EqA_High:internal}) and (\ref{EqA_High:external}). (b) The way of splitting the lattice into blocks for high-dimensional sparsely dissipative systems. (c) The approximate time-evolution operator $\mcl{U}_\mr{Patch}(\tau)$ for high-dimensional sparsely dissipative systems, where we set the blocks $\{A_\alpha\}$ by Eq. (\ref{EqA_High:block_sparse}). }
    \label{FigA_high_dim_sparse}
\end{figure*}

\subsection{Near-optimal algorithm for sparsely dissipative systems}

We extend Algorithm \ref{Algorithm_Sparse} for sparsely dissipative systems.
The definition of sparsely dissipative systems for high-dimensional cases is given in the same way as Definition \ref{Def_Setup:sparse_dissipation}, in which the distance measure $\mr{dist}(i,j)$ for the domain size and the domain distance is replaced by the $d$-dimensional one, Eq. (\ref{EqA_High:distance}).
We split the lattice by
\begin{equation}\label{EqA_High:block_sparse}
        A_\alpha = \Lambda_\alpha^\mr{diss} \cup \overline{\partial}_{ \Lambda_\alpha^\mr{diss}} (R) \quad  (\alpha =1,2,\cdots,N_\mr{d}), \qquad A_{N_\mr{d}+1} = \Lambda \backslash \left( \bigcup_{\alpha =1}^{N_{\mr{d}}} A_\alpha \right),
\end{equation}
where the number of the patches $N_\mr{p}$ is equal to $N_\mr{d} + 1$.
We show the schematic picture of this partition in Fig. \ref{FigA_high_dim_sparse} (b).
Owing to the sparsity of the dissipation, Eq. (\ref{Eq_Setup:dissipative_domain_dist}), each boundary domain $\partial_{A_\alpha}(R)$ or $\partial (R)$ has no intersection with the domains under dissipation, $\{ \Lambda_\alpha^\mr{diss} \}$ when we set the block size $R \in \order{\log N}$.
The time evolutions $e^{-\mcl{L}_{\partial_{A_\alpha}(R)}\tau}$ and $e^{\mcl{L}_{\partial (R)}\tau}$ become unitary, and hence we can construct an algorithm analogous to Algorithm \ref{Algorithm_Sparse}.
We obtain the following computational cost as a result.

\begin{theorem}
\textbf{}

Suppose that $t,1/\varepsilon \in \order{\poly{N}}$ is satisfied.
There exists a quantum algorithm that simulates the time-evolved state $e^{\mcl{L}t} \rho$ of sparsely dissipative Lindbladians in $d$ dimensions within an error $\varepsilon$ with the following cost:
\begin{itemize}
    \item The number of $\order{1}$-qubit gates: $\order{Nt \, \polylog{Nt/\varepsilon}}$ (near-optimal).
    \item The number of ancilla qubits and the circuit depth:
    The algorithm runs with $\order{\polylog {Nt/\varepsilon}})$ ancilla qubits, and then it yields the circuit depth $\order{Nt \, \polylog{Nt/\varepsilon}}$.
    When $\tilde{\Theta}(N)$ ancilla qubits are available, the circuit depth can be $\order{t \, \polylog{Nt/\varepsilon}}$.
\end{itemize}

\end{theorem}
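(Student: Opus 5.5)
The proof mirrors that of Theorem \ref{Thm_Setup:sparse}, with Corollary \ref{CorA_High:Patching_strategy} used in place of Corollary \ref{Cor_Sparse:Patching_strategy}.

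\emph{Parameters and error.} I set $r_t=\lceil 6ekgt\rceil$, $\tau=t/r_t$ and $R=\lceil \xi\log(4Nr_t/\varepsilon)\rceil\in\Theta(\log(Nt/\varepsilon))$. I use the partition of Eq. (\ref{EqA_High:block_sparse}). With these choices, Corollary \ref{CorA_High:Patching_strategy} gives a one-step error of at most $\varepsilon/(2r_t)$ for
\begin{equation}
\mcl{U}_\mr{Patch}(\tau)=e^{\mcl{L}_{\partial(R)}\tau}\prod_\alpha e^{-\mcl{L}_{\partial_{A_\alpha}(R)}\tau}\prod_\alpha e^{\mcl{L}_{A_\alpha}\tau}.
\end{equation}
The channel $\mcl{C}(\tau)$ will implement each factor within error $\varepsilon/(6Nr_t)$. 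Since all exact factors in the telescoping sum are CPTP or unitary, the total error is at most $\varepsilon$, exactly as before.

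\emph{Structure of the factors.} The key point is that every backward factor and the outer factor are unitary. Definition \ref{Def_Setup:sparse_dissipation} in $d$ dimensions says the dissipative domains have diameter $o(\log N)$ and pairwise distance $\omega(\log N)$. Since $t,1/\varepsilon\in\poly{N}$, we have $R\in\order{\log N}$. Hence the $R$-neighbourhoods $A_\alpha$ ($\alpha\le N_\mr{d}$) are disjoint and far apart, and each has $|A_\alpha|\in\order{(\log N)^d}$. Their inner boundaries $\partial_{A_\alpha}(R)$ lie inside $\overline{\partial}_{\Lambda^\mr{diss}_\alpha}(R)$. I still have to check that $\partial_{A_{N_\mr{d}+1}}(R)$ and $\partial(R)$ avoid $\Lambda^\mr{diss}$ together with its $\xi$-neighbourhood. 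This holds because the inner boundary of $A_{N_\mr{d}+1}$ lies at distance at least roughly $R$ from every dissipative domain; if needed I will enlarge the buffer to $2R$. Consequently $\mcl{L}_{\partial_{A_\alpha}(R)}$ and $\mcl{L}_{\partial(R)}$ contain only Hamiltonian terms.

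\emph{Implementation and cost.} For $\alpha\le N_\mr{d}$, each $e^{\mcl{L}_{A_\alpha}\tau}$ acts on $\polylog{Nt/\varepsilon}$ qubits, so I implement it with the LCU algorithm of Ref. \cite{Li-Wang-2022-open} at cost $\polylog{Nt/\varepsilon}$. The big block $A_{N_\mr{d}+1}$ is purely Hamiltonian; I simulate $e^{\mcl{L}_{A_{N_\mr{d}+1}}\tau}$ with HHKL in $d$ dimensions at cost $\order{N\,\polylog{}}$. The factors $e^{-\mcl{L}_{\partial_{A_\alpha}(R)}\tau}$ are Hamiltonian evolutions. For small $\alpha$ they use QSVT at $\polylog{}$ cost. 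For $\alpha=N_\mr{d}+1$, and likewise for $e^{\mcl{L}_{\partial(R)}\tau}$, which is a union of disjoint shells of size $\order{(\log N)^d}$ plus the outer region, I use HHKL with cost linear in the support. This totals $\order{N\,\polylog{}}$ per step, and hence $\order{Nt\,\polylog{Nt/\varepsilon}}$ over the $r_t$ steps.

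\emph{Ancillas, depth, and the main obstacle.} The ancilla count and depth follow as in Theorem \ref{Thm_Setup:sparse}: either reuse $\polylog{}$ ancillas sequentially, or parallelize across the $\order{N}$ disjoint blocks with $\tilde\Theta(N)$ ancillas. I expect the main obstacle to be the geometric verification in the second step. Specifically, I must show that the inner boundary of the complement region, as defined in Eq. (\ref{EqA_High:internal}) with strict inequality $<R$, stays at distance at least $\xi$ from every $\Lambda^\mr{diss}_\alpha$. I would handle this by setting the outer padding of each $A_\alpha$ to width $R+\xi$ rather than $R$, which changes only constants.
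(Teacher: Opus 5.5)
Your proposal is correct and follows essentially the same route as the paper's proof: Corollary~\ref{CorA_High:Patching_strategy} with the partition of Eq.~(\ref{EqA_High:block_sparse}), the LCU method on the $\order{(\log N)^d}$-size dissipative blocks, HHKL (or QSVT) on the purely Hamiltonian complement and boundary layers, and a telescoping error budget over $r_t\in\Theta(kgt)$ steps. The geometric check you flag as the main obstacle is immediate and needs no enlarged padding: every site of $\Lambda^\mr{diss}_\alpha$ is at distance at least $R$ from $\Lambda\backslash A_\alpha$, and $A_{N_\mr{d}+1}$ contains no dissipative site, so $\partial(R)\cap\Lambda^\mr{diss}=\emptyset$; moreover any term $\hat{\mcl{L}}_X$ with $X\cap\Lambda^\mr{diss}=\emptyset$ is already purely Hamiltonian, so no $\xi$-buffer is required.
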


\textbf{Proof.---}
The algorithm is essentially the same as Algorithm \ref{Algorithm_Sparse}.
We split the time $t$ into $r_t$ parts, setting $\tau = t/r_t$.
We set $r_t \in \Theta (kgt)$ so that Eq. (\ref{EqA_High:time_assumption}) is satisfied and set $R \in \Theta ( \xi \log (Nr_t/\varepsilon)) = \Theta ( \xi \log (Nt/\varepsilon))$ so that the error in Eq. (\ref{EqA_High:Paching_strategy_error}) can be bounded by $\order{\varepsilon/r_t}$.
It is sufficient to implement each component of the decomposed time evolution within an error $\order{\varepsilon/(Nr_t)}=\order{\varepsilon/(Nt)}$ as follows.

\begin{itemize}
    \item Implementation of $e^{\mcl{L}_{A_\alpha}\tau}$ ($\alpha =1,\cdots,N_\mr{d}$):
    We use the LCU-based approach for Lindbladian simulation \cite{Li-Wang-2022-open}.
    Since the domain size $A_\alpha$ is at most $\order{R^d} \subset \order{[\log (Nt/\varepsilon)]^d}$, the gate count for this part is at most
    \begin{equation}
        \Otilde{R^{2d} \tau} \subset \order{\polylog {Nt/\varepsilon}}.
    \end{equation}
    
    \item Implementation of $e^{\mcl{L}_{A_{N_\mr{d}+1}}\tau}$: 
    The domain $A_{N_\mr{d}+1}$ does not contain the dissipative terms, and we run the HHKL algorithm for Hamiltonian simulation \cite{Haah2021-hhkl}.
    The gate count is at most $\Otilde{|A_{N_\mr{d}+1}|\tau} \subset \Otilde{N}$.

    \item Implementation of $e^{-\mcl{L}_{\partial_{A_\alpha}(R)} \tau}$ ($\alpha =1,\cdots,N_\mr{d}+1$) and $e^{\mcl{L}_{\partial (R)} \tau}$:
    Every domain $\partial_{A_\alpha}(R)$ does not contain the dissipative terms due to the sparsity of dissipation.
    We can apply the HHKL algorithm for Hamiltonian simulation, whose gate count results in $\Otilde{|\partial_{A_\alpha}(R)|\tau} \subset \Otilde{|\partial_{A_\alpha}(R)|}$.
    The same goes also for $e^{\mcl{L}_{\partial (R)} \tau}$, and the gate count for it amounts to $\Otilde{|\partial (R)|\tau} \subset \Otilde{N}$.
\end{itemize}
Summing the above gate counts in $r_t$ steps, we obtain the total gate count $\order{Nt \, \polylog{Nt/\varepsilon}}$.
The relation between the number of ancilla qubits and the circuit depth is obtained by the parallel discussion in Theorem \ref{Thm_Setup:sparse}. $\quad \square$

In conclusion, we can achieve the near-optimal gate count $\order{Nt \, \polylog{Nt/\varepsilon}}$ also for sparsely dissipative systems in high dimension.

\subsection{Efficient algorithms by patching and merging for generic dissipative systems}

\begin{figure*}
    \centering
    \includegraphics[width=0.95\linewidth]{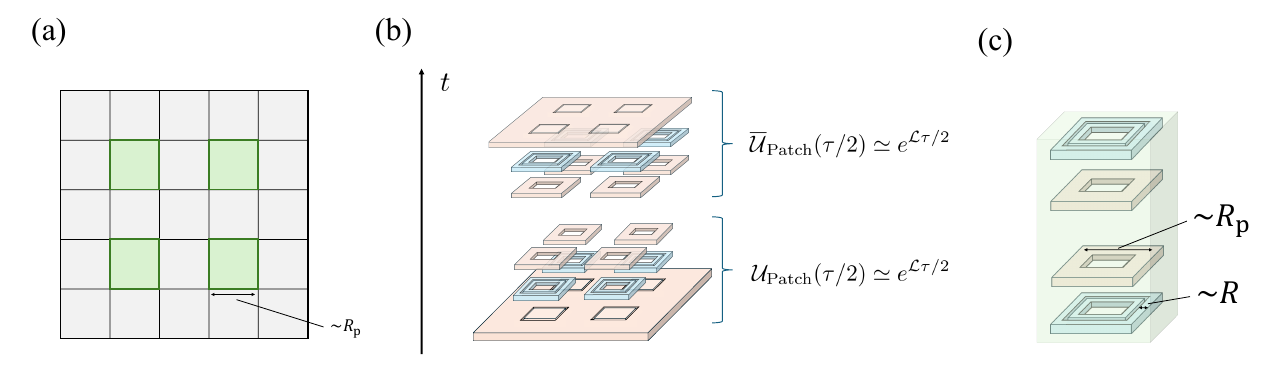}
    \caption{(a) The way of splitting the lattice based on Eq. (\ref{EqA_High:domains_generic}). Green domains, whose length scale is $R_\mr{p}$, are chosen as the set of $\{ B_\alpha \}_{\alpha \in (2\bbN)^d}$. (b) The approximation of the time-evolution operator $e^{\mcl{L}\tau}$ by the patching lemma. (c) The merged block. It is composed of the internal layers of $\overline{\mcl{U}}_\mr{Patch}(\tau/2) \, \mcl{U}_\mr{Patch}(\tau/2)$.}
    \label{Fig_2d}
\end{figure*}

We consider $d$-dimensional lattice systems, in which the Lindbladian has generic finite-ranged interactions and dissipation, and extend Algorithm \ref{Algorithm_Generic} to them.
The strategy is essentially the same as Section \ref{Subsec:Outline_algo}.
As the patching strategy, we introduce the flexible block size $R_\mr{p}$ in addition to the boundary one $R \in \Theta (\log (Nt/\varepsilon))$, and optimize $R_\mr{p}$ later.
As the merging strategy, we decompose the half time-evolution $e^{\mcl{L}\tau/2}$ by the patching lemma in two ways, and merge some of the block time-evolution operators in the boundaries so that the sampling complexity can be suppressed.

We split the lattice $\Lambda$ into the subsystems $\{B_\alpha\}$ with the length $R_\mr{p}$, each of which is defined by
\begin{equation}\label{EqA_High:domains_generic}
    B_\alpha = \{ (j_1,\cdots, j_d) \in \Lambda \, | \, (\alpha_{d'}-1) R_\mr{p} + 1 \leq j_{d'} \leq \alpha_{d'} R_\mr{p}, \, d' = 1,2,\cdots, d\},
\end{equation}
for $\alpha = (\alpha_1,\cdots,\alpha_d) \in \{1,2,\cdots, L/R_\mr{p} \}^d$.
We will set the block size at the boundaries $R$ so that $R \in \Theta (\log (Nt/\varepsilon))$ and $R \in o(R_\mr{p})$ can be satisfied.
We repeat the patching and merging steps like Section \ref{Subsec:Outline_algo} as follows.
See also Figures \ref{Fig_2d} and \ref{Fig_2d_repetition} for the schematic picture of the protocol in the two-dimensional case.

\begin{enumerate}
    \item (Patching) We cut out the patches for the even indices, i.e., $\{ B_\alpha \}$ for $\alpha \in (2\bbN)^d$.
    We denote the other domain in the lattice by 
    \begin{equation}
        \overline{B}_1 = \Lambda \backslash \left(\bigcup_{\alpha \in (2\bbN)^d} B_\alpha\right).
    \end{equation}
    Applying the patching lemma by Corollary \ref{CorA_High:Patching_strategy}, we obtain two approximations,
    \begin{eqnarray}
        \mcl{U}_\mr{Patch}(\tau/2) &=& \left( \prod_{\alpha \in (2\bbN)^d} e^{\mcl{L}_{\partial_{B_\alpha}(R) \overline{\partial}_{B_\alpha}(R)} \frac{\tau}{2}} \right) \left( \prod_{\alpha \in (2\bbN)^d} e^{-\mcl{L}_{\partial_{B_\alpha}(R)} \frac{\tau}{2}} e^{ -\mcl{L}_{\overline{\partial}_{B_\alpha}(R)}  \frac{\tau}{2}} \right) \left( \prod_{\alpha \in (2\bbN)^d} e^{\mcl{L}_{B_\alpha} \frac{\tau}{2}}\right) e^{\mcl{L}_{\overline{B}_1}\frac{\tau}{2}}, \nonumber \\
        && \label{EqA_High:U_Patch/2}\\
        \overline{\mcl{U}} _\mr{Patch}(\tau/2) &=& \left( \prod_{\alpha \in (2\bbN)^d} e^{\mcl{L}_{B_\alpha} \frac{\tau}{2}}\right) e^{\mcl{L}_{\overline{B}_1}\frac{\tau}{2}} \left( \prod_{\alpha \in (2\bbN)^d} e^{-\mcl{L}_{\partial_{B_\alpha}(R)} \frac{\tau}{2}} e^{ -\mcl{L}_{\overline{\partial}_{B_\alpha}(R)}  \frac{\tau}{2}} \right) \left( \prod_{\alpha \in (2\bbN)^d} e^{\mcl{L}_{\partial_{B_\alpha}(R) \overline{\partial}_{B_\alpha}(R)} \frac{\tau}{2}} \right) , \nonumber \\
        && \label{EqA_High:U_bar_Patch/2}
    \end{eqnarray}
    whose errors are smaller than $\order{N e^{-R/\xi}}$.
    We note that the domains $\partial_{B_\alpha}(R)$ and $\overline{\partial}_{B_\alpha}(R)$ are respectively the set of the boundary sites inside or outside $B_\alpha$ within the distance $R$, defined by Eqs. (\ref{EqA_High:internal}) and (\ref{EqA_High:external}).
    See Figure \ref{FigA_high_dim_sparse} (a).

    \item (Merging)
    We merge some of the non-CP terms in $\overline{\mcl{U}}_\mr{Patch}(\tau/2) \mcl{U}_\mr{Patch}(\tau/2) = e^{\mcl{L}\tau}+\order{Ne^{-R/\xi}}$.
    We define the following merged operator,
    \begin{equation}\label{EqA_High:Merge_high_Step1}
        \mcl{U}_\mr{Merge}^\alpha(\tau) = e^{-\mcl{L}_{\partial_{B_\alpha}(R)} \tau/2} e^{ -\mcl{L}_{\overline{\partial}_{B_\alpha}(R)}  \tau/2} e^{\mcl{L}_{\partial_{B_\alpha}(R) \overline{\partial}_{B_\alpha}(R)} \tau} e^{-\mcl{L}_{\partial_{B_\alpha}(R)} \tau/2}  e^{ -\mcl{L}_{\overline{\partial}_{B_\alpha}(R)}  \tau/2},
    \end{equation}
    which nontrivially acts on the boundary of $B_\alpha$, i.e., $\partial_{B_\alpha}(R) \cup \overline{\partial}_{B_\alpha}(R)$.
    Figure \ref{Fig_2d} (c) shows its schematic picture.
    It gives a second-order PF for the boundary Lindbladian,
    \begin{equation}
        \mcl{L}_{\partial_{B_\alpha}(R) :\overline{\partial}_{B_\alpha}(R)} = \mcl{L}_{\partial_{B_\alpha}(R) \overline{\partial}_{B_\alpha}(R)}-\mcl{L}_{\partial_{B_\alpha}(R)}-\mcl{L}_{\overline{\partial}_{B_\alpha}(R)},
    \end{equation}
    which is composed of inter-block terms between $\partial_{B_\alpha}(R)$ and $\overline{\partial}_{B_\alpha}(R)$ like Eq. (\ref{Eq_Gen:Merge_op_PF}).
    Its support size is $\order{\xi (R_\mr{p})^{d-1}}$.
    The merged operator can be expressed as
    \begin{equation}\label{EqA_High:Merge_op_high_dim_expansion}
        \mcl{U}_\mr{Merge}^\alpha(\tau) = e^{ \mcl{L}_{\partial_{B_\alpha}(R) :\overline{\partial}_{B_\alpha}(R)} \tau } \left[ 1 + \mcl{A}_\mr{Merge}^\alpha(\tau) \right], \quad \norm{\mcl{A}_\mr{Merge}^\alpha(\tau)}_\text{Pauli} \in \order{(g\tau)^3 (R_\mr{p})^{d-1}}
    \end{equation}
    in a similar manner to Theorem \ref{Thm:Merge_expansion}, as we will confirm as Corollary \ref{CorA_High:Merge}.
    
    \item (Patching)
    We further decompose the time evolution $e^{\mcl{L}_{\overline{B}_1}\tau/2}$ in Eqs. (\ref{EqA_High:U_Patch/2}) and (\ref{EqA_High:U_bar_Patch/2}) by the patching lemma.
    This corresponds to the decomposition of the top and bottom layers in the left panel of Fig. \ref{Fig_2d_repetition} into those for smaller blocks in the central panel.
    Let $\overline{B}_2$ be the subsystem defined by
    \begin{equation}
        \overline{B}_2 = \overline{B}_1 \backslash \left(\bigcup_{\substack{\alpha = (\alpha_1,\cdots,\alpha_d): \\ \alpha_1 \in 2 \bbN -1, \, \alpha_2,\cdots,\alpha_d \in 2\bbN}} B_\alpha\right).
    \end{equation}
    We also define the boundary sites when regarding $\overline{B}_1$ as a whole system by
    \begin{eqnarray}
        \partial_{B_\alpha} (R; \overline{B}_1) &=& \{ j \in B_\alpha \, | \, \text{$^\exists j' \in \overline{B}_1 \backslash B_\alpha$ s.t. $\mr{dist}(j,j') < R$} \}, \\
        \overline{\partial}_{B_\alpha} (R; \overline{B}_1) &=& \{ j \in (\overline{B}_1 \backslash B_\alpha) \, | \, \text{$^\exists j' \in B_\alpha$ s.t. $\mr{dist}(j,j') < R$} \}.
    \end{eqnarray}
    Based on Corollary \ref{CorA_High:Patching_strategy}, the time evolution $e^{\mcl{L}_{\overline{B}_1}\tau/2}$ can be approximated by $\overline{\mcl{U}} _\mr{Patch}^{\overline{B}_1}(\tau/4) \, \mcl{U}_\mr{Patch}^{\overline{B}_1}(\tau/4)$ within an error $\order{Ne^{-R/\xi}}$, where the two operators are respectively given by
    \begin{eqnarray}
        \mcl{U}_\mr{Patch}^{\overline{B}_1}(\tau/4) &=& \left( \prod_{\substack{\alpha=(\alpha_1,\cdots,\alpha_d): \\ \alpha_1 \in 2\bbN-1, \\ \alpha_2,\cdots,\alpha_d \in 2\bbN}} e^{\mcl{L}_{\partial_{B_\alpha}(R; \overline{B}_1) \overline{\partial}_{B_\alpha}(R; \overline{B}_1)} \tau/4} \right)  \nonumber \\
        && \qquad \times \left( \prod_{\substack{\alpha=(\alpha_1,\cdots,\alpha_d): \\ \alpha_1 \in 2\bbN-1, \\ \alpha_2,\cdots,\alpha_d \in 2\bbN}} e^{-\mcl{L}_{\partial_{B_\alpha}(R; \overline{B}_1)} \tau/4} e^{ -\mcl{L}_{\overline{\partial}_{B_\alpha}(R; \overline{B}_1)}  \tau/4} \right) \left( \prod_{\substack{\alpha=(\alpha_1,\cdots,\alpha_d): \\ \alpha_1 \in 2\bbN-1, \\ \alpha_2,\cdots,\alpha_d \in 2\bbN}} e^{\mcl{L}_{B_\alpha} \tau/4}\right) e^{\mcl{L}_{\overline{B}_2}\tau/4}, \\
        \overline{\mcl{U}}_\mr{Patch}^{\overline{B}_1}(\tau/4) &=& \left( \prod_{\substack{\alpha=(\alpha_1,\cdots,\alpha_d): \\ \alpha_1 \in 2\bbN-1, \\ \alpha_2,\cdots,\alpha_d \in 2\bbN}} e^{\mcl{L}_{B_\alpha} \tau/4}\right) e^{\mcl{L}_{\overline{B}_2}\tau/4}  \left( \prod_{\substack{\alpha=(\alpha_1,\cdots,\alpha_d): \\ \alpha_1 \in 2\bbN-1, \\ \alpha_2,\cdots,\alpha_d \in 2\bbN}} e^{-\mcl{L}_{\partial_{B_\alpha}(R; \overline{B}_1)} \tau/4} e^{ -\mcl{L}_{\overline{\partial}_{B_\alpha}(R; \overline{B}_1)}  \tau/4} \right)\nonumber \\
        && \quad \qquad \qquad \qquad \qquad \qquad\qquad \qquad \qquad \qquad \times  \left( \prod_{\substack{\alpha=(\alpha_1,\cdots,\alpha_d): \\ \alpha_1 \in 2\bbN-1, \\ \alpha_2,\cdots,\alpha_d \in 2\bbN}} e^{\mcl{L}_{\partial_{B_\alpha}(R; \overline{B}_1) \overline{\partial}_{B_\alpha}(R; \overline{B}_1)} \tau/4} \right).
    \end{eqnarray}
    
    \item (Merging)
    As shown in the central panel of Fig. \ref{Fig_2d_repetition}, we define the merged operator 
    \begin{eqnarray}
        \mcl{U}_\mr{Merge}^{\alpha, \overline{B}_1}(\tau/2) &=& e^{-\mcl{L}_{\partial_{B_\alpha}(R; \overline{B}_1)} \tau/4} e^{ -\mcl{L}_{\overline{\partial}_{B_\alpha}(R;\overline{B}_1)}  \tau/4} e^{\mcl{L}_{\partial_{B_\alpha}(R; \overline{B}_1) \overline{\partial}_{B_\alpha}(R; \overline{B}_1)} \tau/2} e^{-\mcl{L}_{\partial_{B_\alpha}(R; \overline{B}_1)} \tau/4} e^{ -\mcl{L}_{\overline{\partial}_{B_\alpha}(R;\overline{B}_1)}  \tau/4} \nonumber \\
        &=& e^{\mcl{L}_{\partial_{B_\alpha}(R;\overline{B}_1):\overline{\partial}_{B_\alpha}(R; \overline{B}_1)}\frac\tau{2} } \left[ 1 + \order{(g\tau)^3(R_\mr{p})^{d-1}} \right] \label{EqA_High:Merge_op_2nd}
    \end{eqnarray}
    for each index $\alpha =(\alpha_1,\cdots,\alpha_d)$ such that $\alpha_1 \in 2\bbN-1$ and $\alpha_2,\cdots,\alpha_d \in 2\bbN$.
    The norm of the non-CP part is evaluated in the same way as Eq. (\ref{EqA_High:Merge_op_high_dim_expansion}) [See Corollary \ref{CorA_High:Merge}].

    \item (Repeat patching and merging)
    We execute the above protocols for the time evolution $e^{\mcl{L}_{\overline{B}_2}\tau/4}$.
    We set the domain
    \begin{equation}
        \overline{B}_3 = \overline{B}_2 \backslash \left(\bigcup_{\substack{\alpha = (\alpha_1,\cdots,\alpha_d): \\ \alpha_2 \in 2 \bbN -1, \, \alpha_1,\alpha_3,\cdots,\alpha_d \in 2\bbN}} B_\alpha\right),
    \end{equation}
    and further decompose the time evolution like the right panel of Fig. \ref{Fig_2d_repetition}.
    We repeat the above decomposition until all the blocks have at most $\order{(R_\mr{p})^d}$ sites.
    In the two-dimensional case, the right panel of Fig. \ref{Fig_2d_repetition} shows the situation after finishing the protocol.
    In generic $d$-dimensional systems, we classify the blocks $\{ B_\alpha \}$ based on the parities of $\alpha =(\alpha_1,\alpha_2,\cdots,\alpha_d)$ and cut out each group of $\{B_\alpha\}$ from $\Lambda$ at each step like Steps 1-4.
    As a result, the repetition number of the patching and merging protocol is equal to $2^d-1$, which is a constant independent of $N$, $t$, or $\varepsilon$.
    
\end{enumerate}

After the above steps, the time evolution operator for the whole system $e^{\mcl{L}\tau}$ is approximated by those for the blocks whose sizes are at most $\order{(R_\mr{p})^d}$.
There are two types of forward time evolution operators.
One is for the size-$\order{(R_\mr{p})^d}$ blocks $\{ B_\alpha \}$.
The other type is for the size-$\order{(R_\mr{p})^{d-1}}$ blocks at the boundaries of $\{B_\alpha\}$, like $e^{ \mcl{L}_{\partial_{B_\alpha}(R) :\overline{\partial}_{B_\alpha}(R)} \tau }$ in Eq. (\ref{EqA_High:Merge_op_high_dim_expansion}).
All the backward time evolutions are absorbed in the merged blocks.
The algorithm is constructed in a similar manner to Algorithm \ref{Algorithm_Generic}.
We employ the LCU-based approach for Lindbladian dynamics \cite{Li-Wang-2022-open} to implement the time evolution operators for the block Lindbladians $\{ \mcl{L}_{B_\alpha} \}$.
We use the same approach for the Lindbladians composed of the boundary terms like $\mcl{L}_{\partial_{B_\alpha}(R) :\overline{\partial}_{B_\alpha}(R)}$ [See Eq. (\ref{EqA_High:Merge_op_high_dim_expansion})] in the merged operators.
We execute quasi-probabilistic sampling for reproducing the non-CP parts in the merged operators.
The cost of the algorithm is given by the following theorem.

\begin{figure*}
    \centering
    \includegraphics[width=0.95\linewidth]{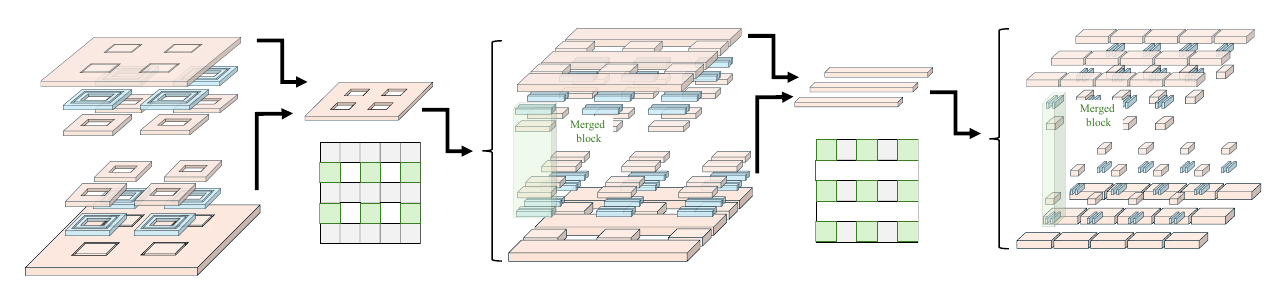}
    \caption{Repetition of patching and merging for high-dimensional systems. The left panel shows the decomposition by Steps 1 and 2. The central panel shows the one by Steps 3 and 4. We repeat the decomposition until all the blocks become as large as or smaller than $\order{(R_\mr{p})^d}$ like the right panel.}
    \label{Fig_2d_repetition}
\end{figure*}

\begin{theorem}\label{ThmA_High:generic}
\textbf{}

Let $\mcl{L}$ be a lattice Lindbladian composed of finite-ranged interactions and dissipation in $d \in \order{1}$ dimension with the system size $N$.
There exists a quantum algorithm that outputs the time-evolved observable $\mr{Tr}[O e^{\mcl{L}t} \rho]$ within an error $\varepsilon$, running with the following cost:
\begin{itemize}
    \item The number of $\order{1}$-qubit gates per sample:
    \begin{equation}
        \order{(Nt)^{\frac32- \frac1{4d+2}} \, \polylog{Nt/\varepsilon}}.
    \end{equation}
    
    \item Sampling complexity: $\Theta (\varepsilon^{-2})$.

    \item The number of ancilla qubits and the circuit depth:
    It requires $\Theta (\polylog{Nt/\varepsilon})$ ancilla qubits, and then the circuit depth amounts to $ \order{(Nt)^{3/2- 1/(4d+2)} \, \polylog{Nt/\varepsilon}}$.
    When we can use $\tilde{\Theta} \left(N^{(d+1)/(2d+1)}\right)$ ancilla qubits, the circuit depth can be as large as
    \begin{equation}
       \order{t (Nt)^{\frac{2d}{2d+1}} \, \polylog{Nt/\varepsilon}}.
    \end{equation}
\end{itemize}

\end{theorem}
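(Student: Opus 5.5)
The plan is to follow the proof of Theorem \ref{Thm_Sparse:generic} step by step. Only three quantities change in $d$ dimensions: the number of blocks, the block sizes, and the support size of the merged boundary operators. I will run the $2^d-1$ rounds of patching and merging described above, with boundary width $R=\Theta(\xi\log(Nt/\varepsilon))$ and bulk block length $R_\mr{p}$. This approximates each step $e^{\mcl{L}\tau}$ by a product of three kinds of factors. The first are forward CPTP maps on the $\Theta(N/R_\mr{p}^d)$ bulk blocks, each with $\order{R_\mr{p}^d}$ sites. The second are forward CPTP maps $e^{\mcl{L}_{\partial:\overline{\partial}}\tau}$ on boundary shells with $\order{\xi R_\mr{p}^{d-1}}$ sites. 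The third are non-CP factors $1+\mcl{A}^\alpha_\mr{Merge}$, one per boundary shell.

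The first ingredient is Corollary \ref{CorA_High:Merge}, the $d$-dimensional analogue of Theorem \ref{Thm:Merge_expansion}. Its proof should be that of Theorem \ref{Thm:Merge_expansion} with one substitution: $|\supp(\mcl{L}_{B:B'})|\le 2\xi$ is replaced by $|\supp(\mcl{L}_{\partial:\overline\partial})|=\order{\xi R_\mr{p}^{d-1}}$, which gives $\norm{\mcl{A}^\alpha_\mr{Merge}}_\mr{Pauli}=\order{(g\tau)^3R_\mr{p}^{d-1}}$. Since each shell is merged once per round, a single step has $\order{2^d N/R_\mr{p}^d}$ merged factors. Hence the sampling overhead per step is $\exp\bigl(\order{N(g\tau)^3/R_\mr{p}}\bigr)$, the same expression as in one dimension, and over $r_t$ steps it is $\exp\bigl(\order{N(gt)^3/(R_\mr{p}r_t^2)}\bigr)$. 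I will choose $r_t=\lceil\sqrt{N(gt)^3/R_\mr{p}}\rceil$, which makes this overhead $\order{1}$ and therefore the sampling complexity $\Theta(\varepsilon^{-2})$. Theorem \ref{Thm:Merge_truncation} carries over unchanged and keeps each sampled circuit at $\order{[\log(Nt/\varepsilon)]^2}$ gates.

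The gate count per step comes from running the LCU-based simulation on each block. A bulk block costs $\Otilde{R_\mr{p}^d+R_\mr{p}^{2d}\tau}$ and a shell costs $\Otilde{R_\mr{p}^{d-1}+R_\mr{p}^{2(d-1)}\tau}$, where the shell cost is dominated by the bulk cost. Summing over blocks and steps gives
\begin{equation*}
\Otilde{r_t\tfrac{N}{R_\mr{p}^d}\bigl(R_\mr{p}^d+R_\mr{p}^{2d}\tfrac{t}{r_t}\bigr)}=\Otilde{Nt\Bigl(\sqrt{Nt/R_\mr{p}}+R_\mr{p}^d\Bigr)}.
\end{equation*}
The two terms balance at $R_\mr{p}=\Theta((Nt)^{1/(2d+1)})$, and the total becomes $\Otilde{(Nt)^{1+d/(2d+1)}}=\Otilde{(Nt)^{3/2-1/(4d+2)}}$. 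For the depth claim, all blocks in a layer can run in parallel with one set of $\order{\polylog{Nt/\varepsilon}}$ ancillas per block. That requires $\tilde\Theta(N/R_\mr{p}^d)=\tilde\Theta(N^{(d+1)/(2d+1)})$ ancillas in total (treating the $t$-dependence as subleading), and the depth becomes the total gate count divided by the number of blocks, namely $t(Nt)^{2d/(2d+1)}$ up to polylogarithmic factors. Finally I will check that $g\tau=\Theta((Nt)^{-1/(2d+1)})$ is small enough for Corollary \ref{CorA_High:Patching_strategy} and the merge expansion to apply.

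The main obstacle is controlling the error of the nested decomposition. After the first round, the patching of $e^{\mcl{L}_{\overline{B}_j}\tau/2^j}$ sits between non-CP layers. A naive telescoping argument then picks up diamond-norm factors from those layers, which is exactly why Corollary \ref{Cor_Sparse:Patching_strategy} was proved directly rather than by iterating the patching lemma. I will handle this level by level. At level $j$, I will use that the exact factor $e^{\mcl{L}_{\overline{B}_j}\tau/2^j}$ is CPTP and substitute its approximation only after all other factors at that level are still exact, so that each substitution costs $\order{Ne^{-R/\xi}}$ multiplied by norms of already-approximated factors. Each such factor has diamond norm at most $2$ by the same argument as in the corollary bounding $\norm{\overline{\mcl{U}}_\mr{Patch}}_\Diamond$. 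There are only $2^d-1=\order{1}$ levels, so this costs a constant factor of at most $2^{\order{2^d}}$, which is absorbed into the logarithmic choice of $R$. Summing over the $r_t$ steps then requires a per-step error of $\varepsilon/(\mathrm{poly}\cdot r_t)$. Here the $\order{1}$ total sampling overhead bounds the quasi-probabilistic weight of the estimator, so the accumulated bias stays $\order{\varepsilon}$.
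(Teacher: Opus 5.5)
Your proposal is essentially the paper's proof: the same choice $r_t=\lceil t\sqrt{Nt/R_\mr{p}}\,\rceil$ (up to factors of $g$), the same balance $Nt\bigl(\sqrt{Nt/R_\mr{p}}+R_\mr{p}^d\bigr)$ optimized at $R_\mr{p}=\Theta((Nt)^{1/(2d+1)})$, and the same use of Corollary~\ref{CorA_High:Merge} and the truncation theorem. Your paragraph on nested error accumulation is more careful than the paper, which only asserts that each patching error is $\order{\varepsilon/(Nr_t)}$. One correction there: the operator sitting between the two copies of $e^{\mcl{L}_{\overline{B}_j}\tau/2^{j}}$ that you replace is the merged middle layer. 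Its diamond norm is controlled by $\prod_\alpha\bigl(1+\norm{\mcl{A}^\alpha_\mr{Merge}}_\mr{Pauli}\bigr)$ from the merging corollary, not by the closeness of $\mcl{U}_\mr{Patch}$ to a channel; the unmerged partial products that still contain backward layers have exponentially large norm.

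There is also a slip in your final check. From $\tau=t/r_t=\sqrt{R_\mr{p}/(Nt)}$ you get $g\tau=\Theta\bigl((Nt)^{-d/(2d+1)}\bigr)$, not $(Nt)^{-1/(2d+1)}$; you carried over the one-dimensional exponent. The hypothesis you must verify for Corollary~\ref{CorA_High:Merge} is $(g\tau)^3R_\mr{p}^{d-1}\lesssim 1$. It holds because this quantity equals $\Theta((Nt)^{-1})$, whereas with your stated $\tau$ it would scale as $(Nt)^{(d-4)/(2d+1)}$ and fail for large $d$.
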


Before proving the above theorem, we consider the counterparts of Theorems \ref{Thm:Merge_expansion} and \ref{Thm:Merge_truncation} in high-dimensional systems, which ensure the efficient implementation of quasi-probabilistic sampling.

\begin{corollary}\label{CorA_High:Merge}
\textbf{}

Suppose that the time $\tau$ is small enough to satisfy Eq. (\ref{EqA_High:time_assumption}) and 
\begin{equation}\label{EqA_High:time_assumption_corollary}
    \mr{Const.} \times (g\tau)^3 (R_\mr{p})^{d-1} \leq 1,
\end{equation}
where the constant is independent of $N$, $t$, or $\varepsilon$.
Then, the merged operator defined by Eq. (\ref{EqA_High:Merge_high_Step1}) is written in the form of Eq. (\ref{EqA_High:Merge_op_high_dim_expansion}) with an HP map $\mcl{A}_\mr{Merge}^\alpha(\tau)$ bounded by $\norm{\mcl{A}_\mr{Merge}^\alpha(\tau)}_\text{Pauli} \in \order{(g\tau)^3 (R_\mr{p})^{d-1}}$.
In addition, for $\epsilon \in (0,1)$, there exists an $\order{[\log(1/\epsilon)]^2}$-local HP map $\tilde{\mcl{A}}_\mr{Merge}^\alpha(\tau)$ such that the relations,
\begin{equation}
    \norm{\mcl{U}_\mr{Merge}^\alpha(\tau) - e^{ \mcl{L}_{\partial_{B_\alpha}(R) :\overline{\partial}_{B_\alpha}(R)} \tau } \left( 1 + \tilde{\mcl{A}}_\mr{Merge}^\alpha(\tau) \right)}_\Diamond \leq \epsilon, \quad \norm{\tilde{\mcl{A}}_\mr{Merge}^\alpha(\tau)}_\text{Pauli} \in \order{(g\tau)^3 (R_\mr{p})^{d-1}},
\end{equation}
are satisfied.

\end{corollary}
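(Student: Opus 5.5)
The plan is to repeat the argument for Theorem \ref{Thm:Merge_expansion} and Theorem \ref{Thm:Merge_truncation} (Appendix \ref{SubsecA:PF_error} and \ref{SubsecA:Merge_truncation}), changing only the support-size factor. Set $\mcl{A}_1 = -(\mcl{L}_{\partial_{B_\alpha}(R)}+\mcl{L}_{\overline{\partial}_{B_\alpha}(R)})$, $\mcl{A}_2 = \mcl{L}_{\partial_{B_\alpha}(R)\overline{\partial}_{B_\alpha}(R)}$ and $\mcl{A} = \mcl{L}_{\partial_{B_\alpha}(R):\overline{\partial}_{B_\alpha}(R)}$. The two backward factors $e^{-\mcl{L}_{\partial_{B_\alpha}(R)}\tau/2}$ and $e^{-\mcl{L}_{\overline{\partial}_{B_\alpha}(R)}\tau/2}$ act on disjoint supports and commute, so they combine into $e^{\mcl{A}_1\tau/2}$. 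Then Eq. (\ref{EqA_High:Merge_high_Step1}) is exactly the symmetric second-order product formula $e^{\mcl{A}_1\tau/2}e^{\mcl{A}_2\tau}e^{\mcl{A}_1\tau/2}$ for $\mcl{A}_1+\mcl{A}_2=\mcl{A}$. The PF-error expansion Eq. (\ref{EqA_Merge:PF_error_Dyson}), with the same $\Delta$ computed as in Eq. (\ref{EqA_Merge:Delta_Merge_expansion}), gives the representation $\mcl{U}_\mr{Merge}^\alpha(\tau) = e^{\mcl{A}\tau}[1+\mcl{A}_\mr{Merge}^\alpha(\tau)]$ with $\mcl{A}_\mr{Merge}^\alpha$ a Dyson series in $\Delta_\mr{Merge}^\alpha$. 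This representation is purely algebraic and does not depend on the dimension. The map $\mcl{A}_\mr{Merge}^\alpha$ is HP because it is built from products and nested commutators of HP maps.

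Next I bound $\Delta_\mr{Merge}^\alpha$. Each term in Eq. (\ref{EqA_Merge:Delta_Merge_expansion}) is a $q$-fold nested commutator ($q\ge 2$) ending in $\ad_{\mcl{A}_1}\mcl{A}$. Lemma \ref{Lem_Patch:Commutator_bound} holds in any dimension, since it uses only locality and extensiveness. Applied with locality $k$ and extensiveness $g$, it bounds each term's Pauli norm by $q!(2kg)^q g\,|\supp(\mcl{A})|$. The only change from 1D is that $|\supp(\mcl{A})| \le 2\xi$ is replaced by the boundary size $|\supp(\mcl{A})| \le |\partial_{B_\alpha}(\xi)\cup\overline{\partial}_{B_\alpha}(\xi)| \in \order{\xi (R_\mr{p})^{d-1}}$. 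Summing the combinatorial coefficients exactly as in Eq. (\ref{EqA_Merge:Delta_Merge_bound}), with $6kg\tau\le e^{-1}$ from Eq. (\ref{EqA_High:time_assumption}), gives
\begin{equation*}
D(\tau) := \sup_{\tau'\in[0,\tau]}\norm{\Delta_\mr{Merge}^\alpha(\tau')}_\mr{Pauli}\,\tau \in \order{(kg\tau)^3 (R_\mr{p})^{d-1}}.
\end{equation*}

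The main obstacle is the Dyson resummation. In 1D, $D(\tau)\le 1$ followed automatically from the short-time condition. Here $D(\tau)$ carries the growing factor $(R_\mr{p})^{d-1}$, so the bound $\sum_m D^m/m! \le D\,e^{D}$ stays $\order{D}$ only if $D \in \order{1}$. This is exactly what the extra assumption Eq. (\ref{EqA_High:time_assumption_corollary}) provides: choosing its constant large enough gives $D(\tau)\le 1$. Then $\norm{\mcl{A}_\mr{Merge}^\alpha(\tau)}_\mr{Pauli} \le (e-1)D(\tau) \in \order{(g\tau)^3(R_\mr{p})^{d-1}}$, with $k,\xi\in\order{1}$ absorbed into the constant.

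For the truncated map I follow the proof of Theorem \ref{Thm:Merge_truncation}. Truncate the $q$-series at $Q_\mr{d}$ and the Dyson series at $M_\mr{d}$. The $q$-tail is bounded by $\order{(R_\mr{p})^{d-1}} \sum_{q>Q_\mr{d}} (4kg\tau)^{q+1}$; since $4kg\tau<e^{-1}$, this tail is geometric. However, the prefactor $(R_\mr{p})^{d-1}$ need not be small, so I take $Q_\mr{d} \in \Theta(\log((R_\mr{p})^{d-1}/\epsilon))$. Because $R_\mr{p}\in\poly{N,t}$ and $\epsilon$ will be polynomially small, this is still $\order{\log(1/\epsilon)}$ in the regime of use. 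The Dyson tail is at most $\sum_{m>M_\mr{d}} D^m/m!$ with $D\le 1$; the factorial makes it smaller than $\epsilon$ once $M_\mr{d}\in\order{\log(1/\epsilon)}$. The truncated $\tilde{\mcl{A}}_\mr{Merge}^\alpha$ obeys the same Pauli-norm bound, because truncation only removes nonnegative contributions from the majorant series. It is HP and at most $M_\mr{d}(Q_\mr{d}+1)k \in \order{[\log(1/\epsilon)]^2}$-local. Finally, the diamond-norm error is obtained by left-multiplying by $e^{\mcl{A}\tau}$, which is CPTP and has diamond norm $1$, and using $\norm{\cdot}_\Diamond\le\norm{\cdot}_\mr{Pauli}$ from Eq. (\ref{Eq_Setup:Norm_relation}).
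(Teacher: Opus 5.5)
Your proposal is correct and follows the paper's own argument: rerun the proofs of Theorems \ref{Thm:Merge_expansion} and \ref{Thm:Merge_truncation} with $|\supp(\mcl{L}_{B:B'})|\le 2\xi$ replaced by the $\order{\xi (R_\mr{p})^{d-1}}$ boundary size, and use Eq. (\ref{EqA_High:time_assumption_corollary}) to force $D(\tau)\le 1$ so that the Dyson resummation stays $\order{D(\tau)}$. The one refinement is that the extra $\log (R_\mr{p})^{d-1}$ in your $Q_\mr{d}$ is unnecessary: the $q$-tail is at most a constant times $(R_\mr{p})^{d-1}(g\tau)^3\,(4kg\tau)^{Q_\mr{d}-1}$, and the same assumption absorbs the prefactor, so $Q_\mr{d}\in\order{\log(1/\epsilon)}$ and the locality $\order{[\log(1/\epsilon)]^2}$ holds uniformly in $R_\mr{p}$, as literally stated, not only in the regime of use.
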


\textbf{Proof.---}
The proof is essentially the same as those for Theorems \ref{Thm:Merge_expansion} and \ref{Thm:Merge_truncation} in Appendix \ref{SecA:Merged_block}.
To be precise, when we prove the upper bound on the map $\mcl{A}_\mr{Merge}^\alpha(\tau)$ defined by Eq. (\ref{EqA_High:Merge_op_high_dim_expansion}), we replace $|\supp (\mcl{L}_{B:B'})|$ in Eq. (\ref{EqA_Merge:Delta_Merge_bound}) by $|\supp (\mcl{L}_{\partial_{B_\alpha}(R) :\overline{\partial}_{B_\alpha}(R)})| \in \order{\xi (R_\mr{p})^{d-1}}$.
The upper bound on the quantity corresponding to $\sup_{\tau'\in [0,\tau]} (\norm{\Delta_\mr{Merge}(\tau')}_\mr{Pauli}) \tau$ becomes $\order{(g\tau)^3 (R_\mr{p})^{d-1}}$ instead of Eq. (\ref{EqA_Merge:Delta_Merge_bound}).
When this quantity is smaller than $1$ by the assumption Eq. (\ref{EqA_High:time_assumption_corollary}), we can obtain the upper bound, $\norm{\mcl{A}_\mr{Merge}^\alpha(\tau)}_\text{Pauli} \in \order{(g\tau)^3 (R_\mr{p})^{d-1}}$, which completes the proof of the counterpart of Theorem \ref{Thm:Merge_expansion}.
The latter part corresponding to Theorem \ref{Thm:Merge_truncation} is obtained in a similar manner. $\quad \square$

The above corollary indicates that the merged block can be reproduced by quasi-probabilistic sampling with the sampling overhead as large as
\begin{equation}\label{EqA_High:Overhead_each_block}
    \left( 1+ \order{(g\tau)^3 (R_\mr{p})^{d-1}} \right)^2 \subset e^{\order{(g\tau)^3(R_\mr{p})^{d-1}}}.
\end{equation}
In addition, it is expanded by $\order{[\log (1/\epsilon)]^2}$-local Pauli operators, and each sampled circuit has a gate count of $\order{[\log (1/\epsilon)]^2}$.
Identifying the probability distribution of the sampling and the sampled quantum circuits can be executed in the same way as the one-dimensional case.
Namely, we can efficiently calculate them by classical computation with truncating the Dyson series, as shown in Appendix \ref{SubsecA:Merge_implement}.
We note that the same statement clearly applies to all the merged operators in the above steps like Eq. (\ref{EqA_High:Merge_op_2nd}). 
Finally, using this fact, we prove Theorem \ref{ThmA_High:generic} as follows.

\textbf{Proof of Theorem \ref{ThmA_High:generic}.---}
The proof is essentially the same as the one for Theorem \ref{Thm_Sparse:generic}.
We split the time $t$ into $r_t$ parts with $\tau=t/r_t$.
We set the block size $R \in \Theta (\log (Nt/\varepsilon))$ so that every approximation error by the patching lemma (i.e., Corollary \ref{CorA_High:Patching_strategy}) can be bounded by $\order{\varepsilon/(Nr_t)}$, which is small enough to achieve the error $\order{\varepsilon}$ in total.
First, we consider the sampling overhead for implementing the quasi-probabilistic sampling of the non-CP parts involved in the merged operators like Eqs. (\ref{EqA_High:Merge_op_high_dim_expansion}) and (\ref{EqA_High:Merge_op_2nd}).
Since there are at most $\order{N/(R_\mr{p})^d}$ copies of the merged operator at each time step, the total sampling overhead is as large as
\begin{eqnarray}
    && \left( e^{\order{(g\tau)^3(R_\mr{p})^{d-1}}}\right)^{Nr_t /(R_\mr{p})^d} = \exp \left[ \order{\frac{Nt^3}{(r_t)^2 R_p}}\right],
\end{eqnarray}
where we use the overhead for each merged block by Eq. (\ref{EqA_High:Overhead_each_block}).
It is sufficient to choose the number $r_t$ by
\begin{equation}\label{EqA_High:r_t_choice}
    r_t = \left\lceil t\sqrt{\frac{Nt}{R_\mr{p}}}\, \right\rceil
\end{equation}
to suppress the sampling overhead up to $\order{1}$.
This leads to the sampling complexity $\Theta (\varepsilon^{-2})$ for estimating the observable within an error $\varepsilon$ with $\Theta(1)$ probability.

We next evaluate the gate count per sample.
In the decomposed lattice, there are $\Theta (N/(R_\mr{p})^d)$ copies of $d$-dimensional blocks with the system size $\order{(R_\mr{p})^d}$ and $(d-1)$-dimensional boundaries with the system size $\order{(R_\mr{p})^{d-1}}$.
We run the LCU-based approach for implementing the forward time evolutions of these blocks and the boundaries.
The time evolution of the size-$\order{ (R_\mr{p})^d}$ can be implemented by $\Otilde{\max \left[(R_\mr{p})^d,(R_\mr{p})^{2d} \tau\right]}$ gates, which comes from Eq. (\ref{Eq_Setup:Gate_LCU}) with considering the minimal gate count for the short time $\tau$ like Eq. (\ref{Eq_Gen:dominant_blocks}).
The time evolution of the size-$\order{(R_\mr{p})^{d-1}}$ boundary block trivially has cheaper cost.
We also run the quasi-probabilistic sampling for reproducing the merged blocks.
For each merged block, we use $\order{[\log (Nt/\varepsilon)]^2}$ quantum gates as we discuss in Corollary \ref{CorA_High:Merge}.
This cost is cheaper than that of the LCU-based approach as well as the one-dimensional case in the main text.
Since the system contains at most $\order{N/(R_\mr{p})^d}$ blocks and the above implementation is repeated $r_t$ times, the gate count in total amounts to
\begin{equation}\label{EqA_High:Gate_count_derivation}
    \Otilde{r_t \times \frac{N}{(R_\mr{p})^d} \times \max \left[(R_\mr{p})^d,(R_\mr{p})^{2d} \tau\right]} \subset \Otilde{Nt \left[ \sqrt{\frac{Nt}{R_\mr{p}}} + (R_\mr{p})^d \right]},
\end{equation}
like Eq. (\ref{Eq_Gen:Gate_count_derivation}).
We set the block size $R_\mr{p}$ by
\begin{equation}\label{Eq_High:block_choice}
    R_\mr{p} \in \Theta \left( (Nt)^{\frac{1}{2d+1}}\right),
\end{equation}
which minimizes the scaling, Eq. (\ref{EqA_High:Gate_count_derivation}).

Under the choice of $r_t$ and $R_\mr{p}$ respectively by Eqs. (\ref{EqA_High:r_t_choice}) and (\ref{Eq_High:block_choice}), the time $\tau = t/r_t$ is as large as
\begin{equation}
    \tau \in \Otilde{\sqrt{\frac{R_\mr{p}}{Nt}}} \subset \Otilde{(Nt)^{-\frac{d}{2d+1}}} \subset \order{1}.
\end{equation}
In addition, the left-hand side of Eq. (\ref{EqA_High:time_assumption_corollary}) scales as $(g\tau)^3 (R_\mr{p})^{d-1} \in \order{(Nt)^{-1}}$.
As a result, the assumptions on the time $\tau$, i.e., Eqs. (\ref{EqA_High:time_assumption}) and (\ref{EqA_High:time_assumption_corollary}), can be satisfied by properly choosing the constant in Eq. (\ref{Eq_High:block_choice}).
This ensures the validity of the above analysis based on Corollary \ref{CorA_High:Patching_strategy} and Corollary \ref{CorA_High:Merge}.
Finally, substituting the expression of $R_\mr{p}$ by Eq. (\ref{Eq_High:block_choice}) into Eq. (\ref{EqA_High:Gate_count_derivation}), we obtain a gate count of $\Otilde{(Nt)^{3/2-1/(4d+2)}}$.
The relation between the number of ancilla qubits and the circuit depth is obtained in a similar manner to Theorem \ref{Thm_Sparse:generic}, depending on parallelization.
This completes the proof of Theorem \ref{ThmA_High:generic}. $\quad \square$

Extrapolation of the second-order PF \cite{Wang-2026-open} achieves a gate count of $\Otilde{(Nt)^{3/2}}$ for simulating time-evolved observables, which has the best dependence on the system size $N$ among the previous algorithms (See Table \ref{Table:Gate_counts}).
The gate count $\Otilde{(Nt)^{3/2-1/(4d+2)}}$ is better than its cost in any dimension $d \in \order{1}$, while keeping the poly-logarithmic dependency in $1/\varepsilon$.

\end{document}